\documentclass[10pt,conference]{IEEEtran}

\pagestyle{plain}

\usepackage[dvipsnames]{xcolor}
\usepackage{totcount}
\usepackage{refcount}
\usepackage{amsmath, amssymb, amsthm}
\usepackage{mathtools}
\usepackage{complexity}
\usepackage{todonotes}
\usepackage{bm}
\usepackage{hyperref}
\usepackage{newfile}
\usepackage{thm-restate}
\usepackage[capitalize]{cleveref}
\usepackage[inline,shortlabels]{enumitem}
\usepackage{tikz}
\usepackage{pgffor}
\usetikzlibrary{snakes, positioning, arrows.meta, patterns.meta}

\newtheorem{theorem}{Theorem}[section]

\newtheorem{lemma}[theorem]{Lemma}

\newtheorem{claim}[theorem]{Claim}
\newtheorem{proposition}[theorem]{Proposition}
\newtheorem{definition}[theorem]{Definition}

\theoremstyle{remark}
\newtheorem{example}[theorem]{Example}
\newtheorem{remark}[theorem]{Remark}

\crefname{appsec}{Appendix}{Appendices}

\usepackage[
        bibstyle=numeric-comp,  %
        firstinits=true,        %
        maxnames=9,             %
        backend=bibtex,          %
        useprefix=true,         %
        sortcites=true,         %
        url=false               %
]{biblatex}

\bibliography{references.bib}

\newcommand{\tuple}[1]{\langle #1 \rangle}
\newcommand{\set}[1]{\{#1\}}

\newcommand{\sset}{\subseteq}

\newcommand{\N}{\mathbb{N}}
	\newcommand{\NN}{\N}
\newcommand{\Q}{\mathbb{Q}}
	\newcommand{\QQ}{\Q}
\newcommand{\Z}{\mathbb{Z}}
	\newcommand{\ZZ}{\Z}
\newcommand{\compP}{\ComplexityFont{P}}
\newcommand{\B}{\mathbb{B}}
	\newcommand{\BB}{\B}
\newcommand{\MM}{\mathbb{M}}

\newcommand{\calS}{\mathcal{S}}
\newcommand{\calF}{\mathcal{F}}

\newcommand{\Mat}{\mathrm{Mat}}
\newcommand{\downclose}[1]{\mathord{\downarrow}#1}

\renewcommand{\vec}[1]{\bm{#1}}

\newcommand{\VASS}{\mathrm{VASS}}
\newcommand{\cA}{\mathcal{A}}
\newcommand{\cV}{\mathcal{A}}
\newcommand{\Aa}{\mathcal{A}}
\newcommand{\Vv}{\mathcal{V}}

\DeclareMathOperator{\Adj}{Adj}

\DeclareMathOperator{\Reach}{Reach}
\DeclareMathOperator{\AcReach}{AcReach}
\newcommand{\vasreach}{\Reach_{\VASS}}
\newcommand{\intreach}{\Reach_{\ZZ}} 
\newcommand{\natreach}{\Reach_{\NN}}
\DeclareMathOperator{\acintreach}{\AcReach_\ZZ}

\DeclareMathOperator{\Eff}{Eff}

\newcommand{\RUN}[5]{#1\xrightarrow[#3]{#2}_{#4}#5}
\newcommand{\Run}[4]{\RUN{#1}{#2}{}{#3}{#4}}

\DeclareMathOperator{\Cycles}{Cyc}
\DeclareMathOperator{\Runs}{Runs}
\newcommand{\config}[2]{\tuple{#1, #2}}

\newcommand{\generalstep}[2]{\xrightarrow{#1}_{#2}}
\newcommand{\stepVASS}{\generalstep{}{\mathrm{VASS}}}
\newcommand{\stepint}{\generalstep{}{\ZZ}}
\newcommand{\stepnat}{\generalstep{}{\NN}}
\newcommand{\stepsVASS}{\generalstep{*}{\mathrm{VASS}}}
\newcommand{\stepsint}{\generalstep{*}{\ZZ}}
\newcommand{\stepsnat}{\generalstep{*}{\NN}}

\newcommand{\covertable}[2]{T(#1,#2)}
\newcommand{\impl}[2]{``\labelcref{#1}$\Rightarrow$\labelcref{#2}''}

\newcommand{\density}{D}
\newcommand{\densityplus}{D^+}

\newcommand{\mysubsubsection}[1]{\noindent\textbf{#1}}

\newcommand{\compL}{\ComplexityFont{L}}

\begin{document}

\title{A Complexity Dichotomy for Semilinear Target Sets in Automata with One Counter}

\author{
\IEEEauthorblockN{Yousef Shakiba}
\IEEEauthorblockA{\textit{Max Planck Institute for Software} \\
        \textit{Systems (MPI-SWS)}\\
        Kaiserslautern, Germany \\
        \textit{yshakiba@mpi-sws.org}}
\and
\IEEEauthorblockN{Henry Sinclair-Banks}
\IEEEauthorblockA{\textit{University of Warsaw} \\
        Warsaw, Poland \\
        \href{http://henry.sinclair-banks.com}{\texttt{henry.sinclair-banks.com}}\\
        \textit{hsb@mimuw.edu.pl}
}
\and
\IEEEauthorblockN{Georg Zetzsche}
\IEEEauthorblockA{\textit{Max Planck Institute for Software} \\
        \textit{Systems (MPI-SWS)}\\
        Kaiserslautern, Germany \\
        \textit{georg@mpi-sws.org}}
}

\maketitle

\begin{abstract}
	In many kinds of infinite-state systems, the coverability problem has
significantly lower complexity than the reachability problem. In order to
delineate the border of computational hardness between coverability and
reachability, we propose to place these problems in a more general context,
which makes it possible to prove complexity dichotomies.

The more general setting arises as follows. We note that for coverability, 
we are given a vector $\vec{t}$ and are asked if there is a reachable
vector $\vec{x}$ satisfying the relation $\vec{x}\ge\vec{t}$. For
reachability, we want to satisfy the relation $\vec{x}=\vec{t}$. In the
more general setting, there is a Presburger formula $\varphi(\vec{t},\vec{x})$,
and we are given $\vec{t}$ and are asked if there is a reachable $\vec{x}$ with
$\varphi(\vec{t},\vec{x})$.

We study this setting for systems with one counter and binary updates: (i)~integer VASS, (ii)~Parikh automata, and (i)~standard (non-negative) VASS.
In each of these cases, reachability is \linebreak$\NP$-complete, but coverability is 
known to be in polynomial time. Our main results are three dichotomy theorems, 
one for each of the cases (i)--(iii). In each case, we show that for every 
$\varphi$, the problem is either $\NP$-complete or belongs to $\AC^1$, a 
circuit complexity class within polynomial time. We also show that it is 
decidable on which side of the dichotomy a given formula falls.

For (i) and (ii), we introduce novel density measures for sets of integer
vectors, and show an $\AC^1$ upper bound if the respective density of the set
defined by $\varphi$ is positive; and $\NP$-completeness otherwise. 
For (iii), the complexity border is characterized by a new notion of \emph{uniform quasi-upward closedness}. 
In particular, we improve the best known upper bound for coverability in (binary encoded) $1$-VASS from $\NC^2$ (as shown by Almagor, Cohen, P\'{e}rez, Shirmohammadi, and Worrell in 2020) to $\AC^1$.

\end{abstract}

\section{Introduction}
\label{sec:introduction}
\mysubsubsection{Reachability and coverability problems.}
Deciding reachability is one of the most fundamental tasks in the verification
of infinite-state systems. Here, we are given a configuration $\vec{t}$ of some
infinite-state system and want to decide whether there is a run from an
initial configuration that arrives in $\vec{t}$. While reachability algorithms
are powerful tools, the complexity of reachability is often exceedingly high.
For example, in \emph{vector addition systems with states} (VASS), which
feature a set of non-negative counters that can be incremented and decremented, the reachability problem is
Ackermann-complete~\cite{DBLP:conf/focs/Leroux21,DBLP:conf/focs/CzerwinskiO21,DBLP:conf/lics/LerouxS19}.
Moreover, even in VASS with one counter (and binary-encoded updates),
reachability is already $\NP$-complete~\cite{DBLP:conf/concur/HaaseKOW09}.

In contrast to reachability, deciding \emph{coverability} is usually much more
efficient.  In this setting, there is an ordering $\le$ on the set of
configurations, and we want to know whether for the given configuration
$\vec{t}$, there is a reachable configuration $\vec{x}$ with
$\vec{x}\ge\vec{t}$. For example, in VASS, coverability is
$\EXPSPACE$-complete (thus considerably lower than the Ackermann complexity of
reachability)~\cite{lipton1976reachability,Rackoff78}. In well-structured transition systems, reachability is often
undecidable (e.g.\ in VASS with resets~\cite{DBLP:conf/icalp/DufourdFS98}),
whereas coverability is decidable under mild
assumptions~\cite{DBLP:journals/iandc/AbdullaCJT00,finkel2001well}. In
fixed-dimensional VASS---where the number $d$ of counters is fixed---the complexity of reachability is still far from understood, but known to
be non-elementary in dimension $8$~\cite{DBLP:conf/lics/CzerwinskiO22}.
Coverability, however, is $\PSPACE$-complete in every dimension $d\ge
2$~\cite{rosier1986multiparameter,DBLP:conf/lics/BlondinFGHM15,DBLP:conf/icalp/FearnleyJ13}. In $1$-VASS (i.e.\ VASS with one counter), coverability is known to
be in $\NC^2$~\cite{AlmagorCPSW20}, a circuit complexity class within polynomial
time.  In systems with one integer counter (i.e.\ that is
allowed to go below zero), a classic result is that coverability is even in
$\AC^1$ (see \cref{reference-coverability-one-z-vass}), a subclass of $\NC^2$~\cite{Vollmer1999}; this is in stark contrast to the \hbox{$\NP$-completeness} of reachability for both integer \hbox{$1$-VASS} and standard \hbox{$1$-VASS}.
Moreover, deciding
coverablity is usually sufficient for practical verification tasks. Thus,
coverability is both pratically relevant and admits significantly more
efficient algorithms.

\vspace{2mm}
\mysubsubsection{What makes coverability easier?} 
Reachability and coverability are similar decision problems: In each
setting, we are given a vector $\vec{t}\in\ZZ^d$ and are asked whether
we can reach a configuration $\vec{x}$ that relates to $\vec{t}$ in a
particular way.
For reachability, we have the relation ``$\vec{x}=\vec{t}$'', whereas for
coverability, we have ``$\vec{x}\ge\vec{t}$''. This viewpoint raises the question: What
makes the relation ``$\vec{x}\ge\vec{t}$'' so much more amenable to analysis
than ``$\vec{x}=\vec{t}$''? Therefore, we propose to study reachability
problems as above, where the relation between $\vec{t}$ and $\vec{x}$ is
specified by an arbitrary---but fixed!---formula in Presburger arithmetic, the
first-order theory of the structure $\tuple{\ZZ;+,\le,0,1}$.  Presburger arithmetic
can define the relations ``$\vec{x}=\vec{t}$'' and ``$\vec{x}\ge\vec{t}$'', and
its expressiveness is well-understood.  This makes it a promising
candidate to explore the space around reachability and coverability.

\vspace{2mm}
\mysubsubsection{Understanding the complexity of Presburger target sets.} More
precisely (but still informally), we want to understand the following
type of reachability problems.  For each fixed formula $\varphi$ of Presburger
arithmetic and a class $\mathcal{C}$ of counter systems, we define the
\emph{$\varphi$-reachability problem} as follows. We are given (i)~a system
from $\mathcal{C}$ with $d$ counters and (ii)~a vector $\vec{t}\in\ZZ^p$. And
we are asked whether there is a reachable configuration with
counter values $\vec{x}\in\ZZ^d$ such that $\varphi(\vec{t},\vec{x})$ holds.
Note that we allow $p$ to differ from $d$, which provides even more flexibility
in specifying the set of allowed $\vec{x}$.

In this setting, our goal is to understand how the complexity of
$\varphi$-reachability depends on the formula $\varphi$. 

\vspace{2mm}
\mysubsubsection{Our contributions.}
We initiate this line of research by studying $\varphi$-reachability in systems
with one counter (with binary encodings). As mentioned above, in this context,
reachability is usually $\NP$-complete, whereas for coverability,
polynomial-time algorithms are available.  Our main results are \emph{three
complexity dichotomies} for Presburger target sets, each regarding one model of
one-dimensional counter systems.  Specifically, we consider the following
models of computation\footnote{Note that we do not consider \emph{one-counter
automata}, where zero tests are available, since in this setting, there is a
trivial dichotomy (see \cref{app:one-counter-automata})}:
\begin{itemize}
\item 1-dim.\;$\Z$-VASS: Updates and configurations are integers.
\item 1-dim.\;Parikh automata: All updates (and hence also configurations) are natural numbers.
\item 1-dim.\;VASS: Updates are integers, but only natural numbers are valid in configurations.
\end{itemize}
Each of our three dichotomies shows that for every $\varphi$,
$\varphi$-reachability for the respective model is either in $\AC^1$ or is \hbox{$\NP$-complete}, 
and provides a characterization for the respective formulas. Since we
always have $d=1$, the formula $\varphi$ always defines a subset of
$S\subseteq\ZZ^{p+1}$; the first $p$ components refer to parameter vectors $\vec{t} \in \ZZ^p$, and the right-most component carries the counter values $x\in\ZZ$.

\vspace{2mm}
\mysubsubsection{One-dimensional $\Z$-VASS and Parikh automata.}
For \emph{$1$-dim.\ $\ZZ$-VASS} and for \emph{$1$-dim.\ Parikh automata}, the
resulting decision problems are denoted $\intreach(S)$ and $\natreach(S)$,
respectively. 
We obtain dichotomies based on \emph{density measures}. We
introduce two density measures for subsets $S\subseteq\ZZ^{p+1}$: the
\emph{local density of $S$}, denoted $\density(S)$, and the \emph{local${}^+$
density}, denoted $\densityplus(S)$. Here, $\density(S)$ and $\densityplus(S)$
are real numbers in $[0,1]$. We show that if $\density(S)>0$, then
$\intreach(S)$ is in $\AC^1$, and otherwise, $\intreach(S)$ is $\NP$-complete.
Similarly, if $\densityplus(S)>0$, then $\natreach(S)$ is in $\AC^1$, and
otherwise, $\natreach(S)$ is $\NP$-complete.

Our local density measures are similar in spirit to (but different from) the
\emph{Shnirel'man density} from Additive Number Theory~\cite{Nathanson1996} (see
\cref{shnirelman-density}). Intuitively, $\density(S)$ measures the probability
of hitting $S$ when randomly picking a point in an interval (or
arithmetic progression) centered at a point in $S$.

\vspace{2mm}
\mysubsubsection{One-dimensional VASS.}
For \emph{$1$-VASS} (we also refer to this setting as ``VASS semantics''), we
denote $\varphi$-reachability as $\vasreach(S)$. We show that if $S$ is
``uniformly quasi-upward closed'' (defined precisely in \cref{sec:main-results}),
then $\vasreach(S)$ is in $\AC^1$. Moreover, in all other cases, $\vasreach(S)$
is $\NP$-complete. In particular, we show that coverability in $1$-VASS is in
$\AC^1$ (with a surprisingly simple algorithm!), improving upon the
previously best-known upper bound of $\NC^2$. 

Here, the border
between tractability and $\NP$-hardness is more subtle than expected. For
example, for $S_1=\{(t,x)\in\ZZ^2 \mid x=0 \vee x\ge t\}$ (i.e.\ we have $p=1$
and $\vec{t}$ consists of a single number $t$), and for $S_2=\{(t,x) \mid
\text{$x=0$ or $x$ is odd}\}$, we have $\NP$-completeness, whereas for $S_3=\{(t,x)\in\ZZ^2 \mid \text{$x$ even or ($x$ odd and $x\ge t$)}\}$ and $S_4=\{(t,x)\in\ZZ^2 \mid x=t-5 \vee x\ge t\}$, the problem is in $\AC^1$.

Finally, we show \emph{decidability} of the dichotomies. 
For each dichotomy, we show that given a Presburger formula for
$S\subseteq\ZZ^{p+1}$, it is decidable on which side of the dichotomy $S$
falls.

\vspace{2mm}
\mysubsubsection{Key ingredients.} 
Our results rely on several novel concepts and new algorithmic techniques.  
Our contributions for \hbox{$\Z$-VASS} and Parikh automata are substantially more challenging than for $1$-VASS. 
The case of Parikh automata can be
deduced from the one for $\Z$-VASS, making the $\Z$-VASS setting the most
involved result of this work. The first difficulty lies within concrete sets
$S\subseteq\ZZ^{p+1}$ for which it is not clear whether it is solvable in
polynomial time. For example, consider
\begin{align*}
	S_1&=\{(t,x)\in\ZZ^{1+1} \mid t\ge 0,~x\in[t,2t] \}.
\end{align*}
Then the complexity of $\natreach(S)$ is not clear: The interval $[t,2t]$ is
finite, and thus coverability techniques are not applicable. The interval
is also ``too wide'' (compared to its distance from $0$) to reduce, for example, to subset
sum. One of our new algorithmic approaches is to construct a semiring (defined by a set of equations) for
$S$ such that reachability in $\natreach(S)$ can again be decided in polynomial time
(and even in $\AC^1$) via repeated matrix squaring. The semiring construction also applies to
more complicated settings where $S$ contains multiple intervals, such as
$S'\subseteq\ZZ^{2+1}$, the set of all $\tuple{s,t,x}$ where 
\[ s,t\ge 0,~x\in[s+t,s+2t]\cup[s+3t,2s+4t]. \]
More generally, it applies whenever $x$ is constrained to belong to a
``$\rho$-chain of intervals'', where each interval's distance to the next (or
to zero) is at most $\rho$ times its own length. In $S'$, the interval
$[s+t,s+2t]$ has length $t$, and distance $t$ to the next interval
$[s+3t,2s+4t]$. The latter, in turn, has length $s+t$, and has distance
$s+3t\le 3(s+t)$ from zero.

Second, using ideal decompositions from WQO theory, we show that whenever
$\density(S)>0$, then $S$ decomposes into \hbox{$\rho$-chains} of intervals.
Conversely, we show that if $\density(S)=0$, then subset sum reduces to
$\intreach(S)$.

Third, we show that for every semilinear $S\subseteq\ZZ^{p+d}$, both
$\intreach(S)$ and $\natreach(S)$ reduce in logspace to the case of acyclic
automata. This even applies to higher dimensions $d$, and is crucial for
applying matrix squaring logarithmically many times. The reduction relies on
Eisenbrand \& Shmonin's Carath\'{e}odory theorem for integer
cones~\cite{EisenbrandS06}.

For the VASS semantics, the $\AC^1$ upper bounds rely on a new $\AC^1$ procedure for coverablity in $1$-VASS, improving upon the previously best-known upper bound of $\NC^2$.
The procedure is surprisingly simple and although we claim that our contributions for $\Z$-VASS are the most technically challenging, the correctness of our dichotomy for 1-VASS is not obvious. 
It relies on viewing coverability as computing weights for a weighted automaton over a newly introduced semiring.  
Deciding coverability in one-dim.\ $\Z$-VASS via weighted automata over the tropical semiring $(\ZZ\cup\{-\infty\},\max,+,0,1)$ is a classical technique\footnote{\label{reference-coverability-one-z-vass}For example, for the closely related shortest path problem for graphs with positive (binary-encoded) weights, this was mentioned in~\cite[remarks after Corollary 4.9]{cook1985taxonomy}, relying on~\cite[Lemma~5.11]{bookAHU74}. Coverability in integer $1$-VASS can be done in essentially the same way.}: 
In this approach, the automaton is turned into a matrix, for which the $n$-th power is computed, where $n$ is the number of states. 
Via repeated squaring, this leads to circuits of logarithmic depth, i.e.\ $\AC^1$.  
In our setting, we introduce a semiring $\calF$ that permits the same for $1$-VASS. However, matrix squaring in $\calF$ seems to require $\TC^0$, meaning repeated squaring would only yield a $\TC^1$ upper bound (see~\cref{remark-vass-matrix-product}; recall that
$\AC^1\subseteq\TC^1\subseteq\NC^2$). 
Instead, we include an additional approximation step to achieve the $\AC^1$ upper bound.

\section{Preliminaries}
\label{sec:preliminaries}
We use boldface for vectors, and square brackets to index vectors. 
For example, if $\vec{v} \in \ZZ^k$, then $\vec{v} = \tuple{\vec{v}[1], \ldots, \vec{v}[k]}$.
Given $\vec{u} \in \ZZ^{k_1}$ and $\vec{v} \in \ZZ^{k_2}$, we use the shorthand $\tuple{\vec{u}, \vec{v}}$ to denote the vector $\tuple{\vec{u}[1], \ldots, \vec{u}[k_1], \vec{v}[1], \ldots, \vec{v}[k_2]} \in \ZZ^{k_1+k_2}$.

A \emph{$\ZZ$-weighted automaton} $\cA = \tuple{Q, T, q_0, q_1}$ is an automaton where $Q$ is a finite set of \emph{states}, $T \subseteq Q \times \ZZ \times Q$ is a finite set of \emph{transitions}, and $q_0, q_1 \in Q$ are the \emph{initial} and \emph{final state}, respectively. 
Here, for a transition $\tuple{q, w, r} \in T$, the number $w$ is the \emph{weight} of the transition. 
Transition weights are always encoded in binary.

We define three semantics. A \emph{configuration} is a pair $\tuple{q, x}$ for some $q \in Q$ and $x\in\ZZ$. 
For configurations $\tuple{q,x},\tuple{r,y}$, we write $\tuple{q,x} \stepint \tuple{r,y}$ if there a transition $\tuple{q,w,r} \in T$ with $y=x+w$. 
Moreover, $\tuple{q,x} \stepVASS \tuple{r,y}$ means $\tuple{q,x} \stepint \tuple{r,y}$ and also $x,y\ge 0$. 
Finally, $\tuple{q,x}\stepnat\tuple{r,y}$ means $\tuple{q,x}\stepVASS\tuple{r,y}$ and also $x\le y$.
These step relations are called \emph{integer semantics} ($\stepint$), \emph{natural semantics} ($\stepnat$), and \emph{VASS semantics} ($\stepVASS$). By $\stepsint$, we denote the reflexive, transitive closure of $\stepint$ (analogously for $\stepsVASS$ and $\stepsnat$).

The decision problems we study are parameterized by a set $S \in \ZZ^{p+1}$ of
vectors. The input will consist of a $\ZZ$-weighted automaton and a vector
$\vec{t} \in \ZZ^p$, and we want to know whether we can reach a number in
the set
\begin{equation*}
	S[\vec{t}] \coloneqq \set{x \in \ZZ : \tuple{ \vec{t}, x } \in S }.
\end{equation*}
With this, we can define the decision problems. Let $S\subseteq\ZZ^{p+1}$.
Then $\intreach(S)$ is the following problem.
\begin{description}
	\item[Given.] A $\Z$-weighted automaton $\cA = \tuple{Q, T, q_0, q_1}$ and a vector $\vec{t} \in \ZZ^p$ in binary.
	\item[Question.] Is there an $x \in S[\vec{t}]$ with $\tuple{q_0, 0} \stepsint \tuple{q_1, x}$?
\end{description}
The problems $\natreach(S)$ and $\vasreach(S)$ are defined in the same way, except
that $\stepsint$ is replaced with $\stepsnat$ and $\stepsVASS$, respectively.
Note that by definition of $\natreach(S)$, we can only use transitions with non-negative weights.
In addition, for $\stepsVASS$, this means for $\vasreach(S)$, we only consider runs where the counter value remains non-negative. 

\emph{Presburger arithmetic} is the first-order theory of the structure
$\tuple{\ZZ; +, <, 0, 1}$. Sets definable in Presburger arithmetic are
precisely the semilinear sets~\cite{ginsburg1966semigroups}, but we will not
need the definition of semilinearity (except when we will define it
explicitly), so we will use the terms ``Presburger-definable'' and
``semilinear'' interchangeably. Note that almost all algorithms in this work
will assume a \emph{fixed} Presburger-definable set $S$, the algorithmic
complexity of changing between representations of semilinear sets will not be
relevant.

We assume basic familiarity with circuit complexity (however, no circuit
complexity is needed to see that our $\AC^1$ procedures work in polynomial
time). We will work with $\AC^i$, the class of decision problems solvable by
polynomial-sized \emph{unbounded} fan-in Boolean circuits of depth $(\log n)^i$
(for input size $n$), and sometimes $\NC^i$, the class of problems decidable by
polynomial-sized \emph{bounded} fan-in circuits of depth $(\log n)^i$ (for
input size $n$). Finally, $\TC^i$ is the class of problems solvable by polynomial-sized unbounded fan-in circuits with Boolean gates \emph{and majority gates}, of depth $(\log n)^i$ (for input size $n$). Recall the inclusions~\cite[Fig 4.6]{Vollmer1999}:
\[ \AC^0\subseteq\TC^0\subseteq\NC^1\subseteq\compL\subseteq\NL\subseteq\AC^1\subseteq\TC^1\subseteq\NC^2. \]
We will not mention details explicitly, but our $\AC^1$ upper bounds refer to (deterministic) logspace uniform $\AC^1$.
In our algorithms, the only knowledge we use is that additions and comparisons of pairs of binary-encoded integers can be computed in $\AC^0$~\cite[Thm~1.15]{Vollmer1999}.

\section{Main results}
\label{sec:main-results}
\subsection{Integer semantics}
For a subset $A\subseteq\ZZ$ and $k\in\NN$, we define the \emph{$k$-density of $A$ at point $x\in\ZZ$} as
\begin{align*}
	\density_k(A,x) \coloneqq \inf_{n\ge 1} \frac{| A \cap (x+k\cdot[-n,n]) |}{2n+1}
\end{align*}
and the fraction inside the infimum is also denoted $\density_{k,n}(A,x)$.
Thus, $\density_1(A,x)$ is the infimum of the probabilities of hitting $A$ when choosing points in intervals centered at $x$. And $\density_k(A,x)$ is the variant where instead of intervals, we consider arithmetic progressions with period $k$, radiating outward from $x$.

\begin{remark}\label{shnirelman-density}
	For $A\subseteq\NN$, $d(A)\coloneqq\inf_{n\ge 1} |A\cap[1,n]|/n$ is
	the \emph{Shnirel'man density} of $A$, a tool in Additive
	Number Theory for showing that every natural can be written as a sum of
	boundedly many naturals of a particular set~\cite[Section 7.4]{Nathanson1996}. 
\end{remark}

For $A \neq \emptyset$ as a whole, we define its \emph{local density} $\density(A)$ as
\[ \density(A)\coloneqq\inf_{k\in\NN} \inf_{x\in A} \density_k(A,x). \]
Moreover, we set $\density(\emptyset)\coloneqq 1$.
More generally, for a subset $S\subseteq\ZZ^{p+1}$, we define its \emph{local density} $\density(S)$ as
\[ \density(S) \coloneqq \inf_{\vec{t}\in\ZZ^p} \density(S[\vec{t}]). \]
We say that $S$ \emph{has positive local density} if $\density(S)>0$.

\pagebreak

\begin{theorem}\label{main-result-integer}
	Let $S\subseteq\Z^{p+1}$ be semilinear. Then:
\begin{enumerate}[(1)]
	\item If $\density(S)>0$, then $\Reach_\Z(S)$ is in $\AC^1$.
	\item Otherwise, $\Reach_\Z(S)$ is $\NP$-complete.
\end{enumerate}
\end{theorem}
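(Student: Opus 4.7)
The plan is to handle the two cases with complementary techniques. $\NP$-membership is easy and uniform: since $\intreach$ for integer $1$-VASS is in $\NP$ and $S$ is semilinear, I would guess a linear component $L$ of $S[\vec{t}]$ together with a short $\NP$-witness for reaching some $x \in L$ from $\tuple{q_0, 0}$. So the real content of the theorem lies in (a) the $\AC^1$ upper bound when $\density(S) > 0$ and (b) $\NP$-hardness when $\density(S) = 0$.

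For (a), I would follow the strategy sketched in the introduction. First, reduce to acyclic automata in logspace using Eisenbrand--Shmonin's integer Carath\'eodory theorem; acyclicity is what makes $O(\log n)$ matrix squarings suffice. Next comes the structural step: show that $\density(S) > 0$ implies, uniformly in $\vec{t}$, that the slice $S[\vec{t}]$ decomposes into a ``$\rho$-chain of intervals'' for a constant $\rho$ depending only on $S$ (not on $\vec{t}$). I would obtain this by applying an ideal decomposition (in the spirit of WQO theory) to the semilinear representation of $S$; the crucial observation is that any interval whose distance to the next greatly exceeds $\rho$ times its own length would drive $\density_{k,n}(S[\vec{t}], x)$ to zero along a subsequence for some $x \in S[\vec{t}]$ and some period $k$, contradicting positivity of the local density. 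Finally, reachability into a $\rho$-chain of intervals is encoded as weights in an appropriate semiring (defined by equations depending on $S$); repeated squaring of the transition matrix over this semiring on the acyclic automaton yields a circuit of depth $O(\log n)$, i.e.\ an $\AC^1$ algorithm.

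For (b), suppose $\density(S) = 0$. Unwinding the definition provides $\vec{t}$, $x \in S[\vec{t}]$, a period $k$, and arbitrarily large $n$ with $|S[\vec{t}] \cap (x + k \cdot [-n,n])|/(2n+1)$ arbitrarily small. I would exploit such a sparse pattern to reduce subset sum to $\intreach(S)$: given an instance $a_1, \ldots, a_m, b$, build a $\Z$-weighted automaton whose integer runs realise precisely the sums $\sum_i \epsilon_i a_i$ with $\epsilon_i \in \set{0,1}$, scaled by $k$ and translated appropriately so that hitting $S[\vec{t}]$ becomes equivalent to hitting the intended target. The sparseness of $S[\vec{t}]$ inside the long arithmetic progression $x + k\cdot[-n,n]$ is used to guarantee that only the intended residue class survives, so spurious sums cannot accidentally land in $S[\vec{t}]$; semilinearity of $S$ will be needed to control the arithmetic structure (period, offsets) of this sparse pattern.

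The main obstacle is part (a): proving uniformly over $\vec{t}$ that positive local density forces a $\rho$-chain decomposition of $S[\vec{t}]$, and then keeping the whole pipeline within $\AC^1$---in particular, designing the semiring so that its matrix operations sit low enough in the circuit hierarchy that $\log n$ iterated squarings fit into $\AC^1$ rather than merely $\NC^2$. Getting the constants $\rho$ and $k$ under control in a $\vec{t}$-uniform way, and reconciling the WQO-style decomposition with the quantitative density bound, is where the real work will lie.
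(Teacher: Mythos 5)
You have the right overall architecture --- acyclification via Eisenbrand--Shmonin, a structural decomposition of positive-density sets into chain-like unions of intervals, semiring matrix squaring for $\AC^1$, and a subset-sum reduction for hardness --- and this matches the paper's route. But a few load-bearing steps are missing or would fail on concrete examples.

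Most importantly, you never remove modulo constraints, and without this both halves of the plan break. For part (b), $\density(S)=0$ does \emph{not} by itself yield any long gap that a subset-sum reduction could aim at. Take $S_2 = 2\ZZ\cup\{1\}$ (\cref{exa:z2}): $\density_2(S_2,1)=0$, yet every point of $S_2$ has a neighbour within distance $2$, so ``scaling by $k$ and using sparseness in $x + k\cdot[-n,n]$'' merely gives small average density along the progression; the few surviving points can be distributed so that spurious sums still land in $S[\vec{t}]$. The paper avoids this by first decomposing into residue classes modulo a fixed $B$ (\cref{lem:modulo-free}), and then showing that for \emph{modulo-free} sets, $\density=0$ is equivalent to an honest unbounded-isolation property --- a bounded-size window meeting $S[\vec{t}]$ flanked by arbitrarily long empty stretches on both sides (\cref{integer-dichotomy-equivalence}). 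That equivalence is what the hardness reduction needs, and your sketch does not supply it.

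For part (a), the structural claim is stated too strongly: positive density does not make each $S[\vec{t}]$ a single $\rho$-chain; what is proved is that $S$ is a \emph{finite union} of transformed building blocks, each of whose slices is a $\rho$-chain. Moreover the chain is not built by stepping to physically adjacent intervals: the harbor lemma (\cref{harbor-existence}) shows that each finite interval can hop to a strictly larger nearby interval, possibly skipping several smaller ones in between, and proving that such a harbor always exists when $\density(S)>0$ is where the ratio-ideal/$\omega$-constellation machinery earns its keep. Your ``crucial observation'' about a gap exceeding $\rho$ times an interval's length gives the easy implication \impl{equivalence-positive-local-density}{equivalence-no-unbounded-isolation}; the substance lies in the converse direction, and it needs the WQO ideal decomposition both to extract a $\vec{t}$-uniform constant $\rho$ and to certify that a harbor exists at every step.
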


Let us see some examples.

\begin{example}\label{exa:z1}
	Let $S_1 \coloneqq -2\cdot\NN \cup (2\cdot\NN+1)$. 
	In other words, $S_1$ contains the non-positive even integers and the positive odd integers. 
	It is not hard to see that $\intreach(S_1)$ belongs to $\AC^1$, because we can check whether a given automaton can reach a negative even counter value by slightly modifying the $\AC^1$ procedure for coverability. 
	Likewise, we can check if it can reach a positive odd counter value. 
	In~\cref{app:z-density-properties}, we show how this follows from~\cref{main-result-integer} (see~\cref{clm:z1}).
\end{example}

\begin{example}\label{exa:z2}
	Let $S_2 \coloneqq 2\cdot\ZZ \cup \set{1}$.
	In other words $S_2$ contains the even integers and $1$.
	Then $\density_2(S_2, 1) = 0$, and thus $\density(S_2)=0$.
	Here, $\NP$-hardness is not difficult to see.
\end{example}

\begin{example}\label{exa:z3}
	Let $S_3[t] = (-\infty,0] \cup [t,2t]$. 
	The density of $S_3$ is at least $\frac{1}{4}$ (see~\cref{fig:example-z3}); accordingly, $\intreach(S_3)$ is in $\AC^1$.
\end{example}

\begin{example}\label{exa:z4}
	Consider $S_4 \subseteq \ZZ^{2+1}$, where for $t,s\ge 0$, 
	\begin{equation*}
		S_4[t,s] \coloneqq (-\infty,0]~\cup~[t+s,2t+2s]~\cup~[3t+2s,4t+2s]
	\end{equation*}
	and $S_4[t,s] = \emptyset$ otherwise. 
	Thus, for $t,s\ge 0$, $S_4[t,s]$ contains all negative numbers, then there is a gap of size $t+s$, before an interval of size $t+s$, then a gap of size $t$, and then an interval of size $t$.
	The density of $S_4$ is at least $\frac{1}{4}$ (see~\cref{fig:example-z4}), which implies that $\intreach(S_4)$ is in $\AC^1$.
\end{example}

\begin{figure}
	\centering
	\begin{tikzpicture}
 	\foreach \x in {0,-1}
    {
    	\draw[gray, {Stealth[width=1.5mm, length=2mm, black]}-{Stealth[width=1.5mm, length=2mm, black]}, dashed, dash pattern={on 5pt off 1pt}, dash phase = 1.4, line width = 0.2mm] (-1.8, \x) -- (6.8, \x);
		\node at (-1.6, \x-0.3) {\scriptsize$-\infty$};
		\node at (6.7, \x-0.3) {\scriptsize$\infty$};

		\draw[line width = 0.4mm] (-1.7,\x) -- (0.8, \x);

		\draw[line width = 0.2mm] (0.8,\x+0.1) -- (0.8, \x-0.1);
		\node at (0.8, \x-0.3) {\scriptsize$0$};

		\draw[line width = 0.2mm] (2.25,\x+0.1) -- (2.25, \x-0.1);
		\node at (2.25, \x-0.3) {\scriptsize$t$};

		\draw[line width = 0.4mm] (2.25,\x) -- (3.7, \x);

		\draw[line width = 0.2mm] (3.7,\x+0.1) -- (3.7, \x-0.1);
		\node at (3.7, \x-0.3) {\scriptsize$2t$};

    }

	\node[red, fill, circle, inner sep = 1.5pt] at (0.7, 0) {};
	\node[red] at (0.7, 0.25) {\footnotesize$x_1$}; 
	\node[red] at (-0.6, 0.25) {\footnotesize$x_1\!-\!kn$}; 
	\node[red] at (2, 0.25) {\footnotesize$x_1\!+\!kn$}; 
	\fill[red, opacity = 0.3] (-0.7, -0.05) rectangle (2.1, 0.05);
	\draw[red, line width = 0.2mm] (-0.7, -0.07) -- (-0.7, 0.07);
	\draw[red, line width = 0.2mm] (-0.7, 0.06) -- (-0.67, 0.06);
	\draw[red, line width = 0.2mm] (-0.7, -0.06) -- (-0.67, -0.06);
	\draw[red, line width = 0.2mm] (2.1, -0.07) -- (2.1, 0.07);
	\draw[red, line width = 0.2mm] (2.1, 0.06) -- (2.07, 0.06);
	\draw[red, line width = 0.2mm] (2.1, -0.06) -- (2.07, -0.06);

	\node[blue, fill, circle, inner sep = 1.5pt] at (3.6, -1) {};
	\node[blue] at (3.6, -0.75) {\footnotesize$x_2$};
	\node[blue] at (1, -0.75) {\footnotesize$x_2\!-\!kn$};
	\node[blue] at (6.2, -0.75) {\footnotesize$x_2\!+\!kn$};
	\fill[blue, opacity = 0.3] (0.9, -0.95) rectangle (6.3, -1.05);
	\draw[blue, line width = 0.2mm] (0.9, -1.07) -- (0.9, -0.93);
	\draw[blue, line width = 0.2mm] (0.9, -0.94) -- (0.93, -0.94);
	\draw[blue, line width = 0.2mm] (0.9, -1.06) -- (0.93, -1.06);
	\draw[blue, line width = 0.2mm] (6.3, -1.07) -- (6.3, -0.93);
	\draw[blue, line width = 0.2mm] (6.3, -0.94) -- (6.27, -0.94);
	\draw[blue, line width = 0.2mm] (6.3, -1.06) -- (6.27, -1.06);
\end{tikzpicture}
	\vspace{-0.2in}
	\caption{
		An illustration for~\cref{exa:z3}.
		The solid black and dashed grey horizonal line segments represent the set $S_3[t]$ and its complement, respectively.
		The top picture shows that $\density_k(S_3[t], \textcolor{red}{x_1}) \geq \frac{1}{2}$ for any (\textcolor{red}{red}) point $\textcolor{red}{x_1} \in (-\infty, 0]$.
		The bottom picture shows that $\density_k(S_3[t], \textcolor{blue}{x_2}) \geq \frac{1}{4}$ for any (\textcolor{blue}{blue}) point $\textcolor{blue}{x_2} \in [t,2t]$.
	}
	\label{fig:example-z3}
\end{figure}

\begin{figure}
	\centering
	\begin{tikzpicture}
 	\foreach \x in {0,-1,-2}
    {
    	\draw[gray, {Stealth[width=1.5mm, length=2mm, black]}-{Stealth[width=1.5mm, length=2mm, black]}, dashed, dash pattern={on 5pt off 1pt}, dash phase = 1.4, line width = 0.2mm] (-1.8, \x) -- (6.8, \x);
		\node at (-1.6, \x-0.3) {\scriptsize$-\infty$};
		\node at (6.7, \x-0.3) {\scriptsize$\infty$};

		\draw[line width = 0.4mm] (-1.7,\x) -- (0, \x);

		\draw[line width = 0.2mm] (0,\x+0.1) -- (0, \x-0.1);
		\node at (0, \x-0.3) {\scriptsize$0$};

		\draw[line width = 0.2mm] (1.4,\x+0.1) -- (1.4, \x-0.1);
		\node at (1.4, \x-0.3) {\scriptsize$t\!+\!s$};

		\draw[line width = 0.4mm] (1.4,\x) -- (2.8, \x);

		\draw[line width = 0.2mm] (2.8,\x+0.1) -- (2.8, \x-0.1);
		\node at (2.8, \x-0.3) {\scriptsize$2t\!+\!2s$};

		\draw[line width = 0.2mm] (3.8,\x+0.1) -- (3.8, \x-0.1);
		\node at (3.8, \x-0.3) {\scriptsize$3t\!+\!2s$};

		\draw[line width = 0.4mm] (3.8,\x) -- (4.8, \x);

		\draw[line width = 0.2mm] (4.8,\x+0.1) -- (4.8, \x-0.1);
		\node at (4.8, \x-0.3) {\scriptsize$4t\!+\!2s$};

    }

	\node[red, fill, circle, inner sep = 1.5pt] at (-0.1, 0) {};
	\node[red] at (-0.1, 0.25) {\footnotesize$x_1$}; 
	\node[red] at (-1.4, 0.25) {\footnotesize$x_1\!-\!kn$}; 
	\node[red] at (1.2, 0.25) {\footnotesize$x_1\!+\!kn$}; 
	\fill[red, opacity = 0.3] (-1.5, -0.05) rectangle (1.3, 0.05);
	\draw[red, line width = 0.2mm] (-1.5, -0.07) -- (-1.5, 0.07);
	\draw[red, line width = 0.2mm] (-1.5, 0.06) -- (-1.47, 0.06);
	\draw[red, line width = 0.2mm] (-1.5, -0.06) -- (-1.47, -0.06);
	\draw[red, line width = 0.2mm] (1.3, -0.07) -- (1.3, 0.07);
	\draw[red, line width = 0.2mm] (1.3, 0.06) -- (1.27, 0.06);
	\draw[red, line width = 0.2mm] (1.3, -0.06) -- (1.27, -0.06);

	\node[ForestGreen, fill, circle, inner sep = 1.5pt] at (1.9, -1) {};
	\node[ForestGreen] at (1.9, -0.75) {\footnotesize$x_2$};
	\node[ForestGreen] at (0.2, -0.75) {\footnotesize$x_2\!-\!kn$};
	\node[ForestGreen] at (3.6, -0.75) {\footnotesize$x_2\!+\!kn$};
	\fill[ForestGreen, opacity = 0.3] (0.1, -0.95) rectangle (3.7, -1.05);
	\draw[ForestGreen, line width = 0.2mm] (0.1, -1.07) -- (0.1, -0.93);
	\draw[ForestGreen, line width = 0.2mm] (0.1, -0.94) -- (0.13, -0.94);
	\draw[ForestGreen, line width = 0.2mm] (0.1, -1.06) -- (0.13, -1.06);
	\draw[ForestGreen, line width = 0.2mm] (3.7, -1.07) -- (3.7, -0.93);
	\draw[ForestGreen, line width = 0.2mm] (3.7, -0.94) -- (3.67, -0.94);
	\draw[ForestGreen, line width = 0.2mm] (3.7, -1.06) -- (3.67, -1.06);

	\node[blue, fill, circle, inner sep = 1.5pt] at (4.7, -2) {};
	\node[blue] at (4.7, -1.75) {\footnotesize$x_3$};
	\node[blue] at (3.0, -1.7) {\footnotesize$x_3\!-\!kn$};
	\node[blue] at (6.4, -1.7) {\footnotesize$x_3\!+\!kn$};
	\fill[blue, opacity = 0.3] (2.9, -1.95) rectangle (6.5, -2.05);
	\draw[blue, line width = 0.2mm] (2.9, -2.07) -- (2.9, -1.93);
	\draw[blue, line width = 0.2mm] (2.9, -1.94) -- (2.93, -1.94);
	\draw[blue, line width = 0.2mm] (2.9, -2.06) -- (2.93, -2.06);
	\draw[blue, line width = 0.2mm] (6.5, -2.07) -- (6.5, -1.93);
	\draw[blue, line width = 0.2mm] (6.5, -1.94) -- (6.47, -1.94);
	\draw[blue, line width = 0.2mm] (6.5, -2.06) -- (6.47, -2.06);

\end{tikzpicture}
	\vspace{-0.2in}
	\caption{
		An illustration for~\cref{exa:z4}.
		The solid black and dashed grey horizonal line segments represent the set $S_4[t]$ and its complement, respectively.
		The top picture shows that $\density_k(S_4[t], \textcolor{red}{x_1}) \geq \frac{1}{2}$ for any (\textcolor{red}{red}) point $\textcolor{red}{x_1} \in (-\infty, 0]$.
		The middle picture shows that $\density_k(S_4[t], \textcolor{Green}{x_2}) \geq \frac{1}{3}$ for any (\textcolor{Green}{green}) point $\textcolor{Green}{x_2} \in [t+s, 2t+2s]$.
		The bottom picture shows that $\density_k(S_4[t], \textcolor{blue}{x_3}) \geq \frac{1}{4}$ for any (\textcolor{blue}{blue}) point $\textcolor{blue}{x_3} \in [3t+2s, 4t+2s]$.
	}
	\label{fig:example-z4}
\end{figure}

\begin{example}\label{exa:z5}
	If we remove the interval of length $(t+s)$ from $S_4[t, s]$ in~\cref{exa:z4}, we get $S_5 \subseteq \Z^{2+1}$, where for $t,s\ge 0, S_5[t,s] \coloneqq (-\infty,0]~\cup~[3t+2s,4t+2s]$	and $S_5[t,s] = \emptyset$ otherwise. 
	However, unlike $S_4$, the density of $S_5$ is $0$; since $\density_1(S_5[t,s], 3t+2s)\le \tfrac{t}{4t+2s} \le \tfrac{1}{4+2s/t}$. 
	Thus, if $s$ is large compared to $t$, the density of $S_5[t,s]$ becomes arbitrarily small. 
	Hence, $\intreach(S_5)$ is $\NP$-complete.
\end{example}

There are some additional noteworthy density properties.
The first is that positivity of the density of a set is not preserved by containment.
In other words, for two sets $S, S' \sset \ZZ^{p+1}$, if $\density(S) > 0$ and $S' \supseteq S$, then it is \emph{not} necessarily true that $\density(S') > 0$.
See~\cref{exa:z6} in~\cref{app:z-density-properties}.
Furthermore, intersection does not preserve positive density; even if $\density(S), \density(S') > 0$, it is \emph{not} necessarily true that $\density(S \cap S') > 0$.
However, union does preserve positive density.
See~\cref{exa:z7} in~\cref{app:z-density-properties}.

\begin{theorem}\label{main-result-integer-decidable}
	Given a semilinear $S\subseteq\ZZ^{p+1}$, it is decidable whether $\density(S)>0$.
\end{theorem}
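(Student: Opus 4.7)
The plan is to reduce ``$\density(S) > 0$'' to a decidable sentence in Presburger arithmetic. The structural characterization underlying Theorem~\ref{main-result-integer} is central: as sketched among the key ingredients, $\density(S) > 0$ holds if and only if $S$ admits (uniformly in $\vec{t}$) a decomposition into $\rho$-chains of intervals for some $\rho \in \NN$. The forward direction is obtained via WQO ideal decompositions; the converse is forced by the dichotomy itself, since the chain decomposition is what powers the $\AC^1$ algorithm, and failure of positive density yields $\NP$-hardness via subset-sum and therefore cannot coexist with a $\rho$-chain decomposition.

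The crux is to show that the witness $\rho$ can be effectively bounded: there is a computable $\rho_0$, depending only on the syntactic description of $S$, such that $S$ admits a $\rho$-chain decomposition for some $\rho$ if and only if it admits a $\rho_0$-chain decomposition. The reason is that the intervals arising from the decomposition have endpoints that are piecewise linear in $\vec{t}$ with coefficients drawn from the linear pieces of $S$. The ratios ``gap / length'' that must be bounded are therefore ratios of linear forms $f(\vec{t})/g(\vec{t})$ over Presburger-definable regions of $\vec{t}$-space; such a ratio is uniformly bounded over the region iff it is bounded by a constant computable from the coefficients of $f$ and $g$, for instance via a Farkas-style argument. With $\rho_0$ in hand, the statement ``for every $\vec{t}$, the slice $S[\vec{t}]$ admits a $\rho_0$-chain decomposition'' is expressible as a universal Presburger sentence, since all interval endpoints and ratio constraints are linear in the given data. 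Decidability of Presburger arithmetic then completes the algorithm.

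The main obstacle is the second step: turning the existential ``some $\rho$'' into an effective bound. While the intuition---ratios of linear forms over polyhedral regions are either unbounded or bounded by a function of their coefficients---is straightforward, carrying it out rigorously requires unpacking the WQO-based structural decomposition into a controlled finite family of linear forms extracted from $S$. I expect a Carath\'{e}odory-style argument in the spirit of the Eisenbrand--Shmonin theorem invoked elsewhere in the paper to deliver a $\rho_0$ polynomial in the bit-length of the semilinear description of $S$; an alternative, equivalent route is to bypass the chain decomposition and directly work with the finitely many ``combinatorial types'' of slices $S[\vec{t}]$ that a semilinear $S$ induces, verifying positivity of the density on each type via a Presburger formula and bounded linear arithmetic.
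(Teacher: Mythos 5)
There is a genuine gap, and you name it yourself: you never actually establish the effective bound $\rho_0$, you only sketch why one should exist and acknowledge that ``carrying it out rigorously'' is ``the main obstacle.'' Since deriving a concrete, computable bound on $\rho$ from the WQO ideal decomposition is precisely the technical heart you would need to fill in, the proposal as written does not constitute a proof. There is also a smaller omission: the $\rho$-chain characterization (Proposition~\ref{integer-dichotomy-equivalence}) applies only to \emph{modulo-free} sets, so you would first need the reduction to residue classes via Lemma~\ref{lem:modulo-free} and Lemma~\ref{dense-residue-classes}, which the proposal does not mention.

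More importantly, the difficulty you are fighting is avoidable. The paper does not need to bound $\rho$ at all. After passing to modulo-free residue classes, Proposition~\ref{integer-dichotomy-equivalence} gives a second, equivalent characterization of zero local density: \emph{unbounded isolation}. Unlike the $\rho$-chain criterion, unbounded isolation is a fixed-shape sentence
\[
\exists \ell\ge 1\ \forall \delta\ge 1\ \exists \vec{t}\ \exists x\colon
\ S[\vec{t}]\cap[x,x{+}\ell{-}1]\ne\emptyset\ \wedge\ S[\vec{t}]\cap[x{-}\delta,x{-}1]=\emptyset\ \wedge\ S[\vec{t}]\cap[x{+}\ell,x{+}\ell{+}\delta]=\emptyset,
\]
which is directly in Presburger arithmetic (each conjunct is a bounded quantification over a Presburger formula for $S$), so its truth is decidable with no a priori bound needed (Lemma~\ref{unbounded-isolation-decidable}). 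The paper's proof of Theorem~\ref{main-result-integer-decidable} is exactly: reduce to residue classes, reduce positive density to ``no unbounded isolation'' via Proposition~\ref{integer-dichotomy-equivalence}, and decide the resulting Presburger sentence. Your alternative sketched in the last sentence (working with combinatorial types of slices) gestures toward this, but the main argument you develop is the harder route of bounding $\rho$, which remains unfinished.
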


\subsection{Natural semantics}
For a subset $A\subseteq\NN$ and $k\ge 1$, we define the \emph{local${}^+$ \hbox{$k$-density}} of $A$ at point $x\in\NN$ as
\begin{equation*}
	\densityplus_k(A,x) \coloneqq 
	\inf_{n\in\NN,~x-kn\ge 0} \frac{| A \cap (x + k\cdot[-n,n]) |}{2n+1}
\end{equation*}
and the fraction inside the infimum is also denoted $\densityplus_{k,n}(A,x)$, if $x-kn\ge 0$.
Thus, in contrast to $\density_k(A,x)$, the infimum only considers bounded arithmetic progressions $x + k\cdot[-n,n]$ that stay within $\NN$.
For $A\subseteq\NN$, $A\ne\emptyset$, as a whole, we define its \emph{local${}^+$ density} $\densityplus(A)$ as
\[ \densityplus(A) \coloneqq \inf_{k\in\NN} \inf_{x\in A} \densityplus_k(A,x). \]
Moreover, we set $\densityplus(\emptyset)\coloneqq 1$.
More generally, for a subset $S\subseteq\ZZ^p\times\NN$, we define its \emph{local${}^+$ density} $\densityplus(S)$ as
\[ \densityplus(S) \coloneqq \inf_{\vec{t}\in\ZZ^p} \densityplus(S[\vec{t}]). \]
We say that $S$ \emph{has positive local${}^+$ density} if $\densityplus(S)>0$.

\begin{theorem}\label{main-result-natural}
	Let $S\subseteq\ZZ^{p}\times\N$ be semilinear. Then:
	\begin{enumerate}[(1)]
		\item If $\densityplus(S)>0$, then $\natreach(S)$ is in $\AC^1$.
		\item Otherwise, $\natreach(S)$ is $\NP$-complete.
	\end{enumerate}
\end{theorem}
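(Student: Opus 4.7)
The plan is to reduce both directions to \cref{main-result-integer}. Given $S \sset \ZZ^p \times \NN$, I set $S' \coloneqq S \cup (\ZZ^p \times \ZZ_{<0})$, which is semilinear. I will show (i) $\densityplus(S) > 0$ implies $\density(S') > 0$, and (ii) $\natreach(S)$ reduces to $\intreach(S')$ in $\AC^0$ via a preprocessing step that deletes every negative-weight transition from the input automaton. Combined with \cref{main-result-integer}(1), this yields the $\AC^1$ upper bound. For the lower bound, the inequality $\densityplus \ge \density$ holds pointwise (the $\densityplus$ infimum is taken over a subset of the $n$'s), so $\densityplus(S) = 0$ forces $\density(S) = 0$; I will argue that the subset-sum hardness reduction underlying \cref{main-result-integer}(2) uses only non-negative transition weights and therefore already witnesses $\NP$-hardness of $\natreach(S)$. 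The $\NP$ upper bound is routine: guess a polynomial-length run.

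The heart of the upper bound is the density estimate $\density(S') \ge c/2$, where $c = \densityplus(S)$. Fix $\vec{t} \in \ZZ^p$, $x \in S'[\vec{t}]$, and $k, n \ge 1$. If $x < 0$, then the $n+1$ points $x, x - k, \ldots, x - kn$ are all strictly negative and lie in $S'[\vec{t}]$, giving density at least $(n+1)/(2n+1) \ge 1/2$. If $x \ge 0$ and $x - kn \ge 0$, the window $x + k \cdot [-n, n]$ stays inside $\NN$, and the $\densityplus$ bound on $S$ gives density at least $c$. The delicate subcase is $x \ge 0$ with $x - kn < 0$: let $n' = \lfloor x/k \rfloor$, so $x - kn' \ge 0 > x - k(n'+1)$ and $n > n'$. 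Then the $n - n'$ points $x - k(n'+1), \ldots, x - kn$ are all negative and thus in $S'[\vec{t}]$, while $\densityplus$ applied to the symmetric inner window $x + k \cdot [-n', n']$ yields at least $c(2n' + 1)$ points of $S[\vec{t}]$. A short calculation using $c \le 1$ gives at least $c(n + n' + 1)$ points in $S'[\vec{t}] \cap (x + k\cdot[-n,n])$, hence density at least $c(n+1)/(2n+1) \ge c/2$. Taking the infimum over all $k, \vec{t}, x$ gives $\density(S') \ge c/2 > 0$.

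The reduction $\natreach(S) \to \intreach(S')$ leaves $\vec{t}$ unchanged and produces the automaton $\Aa'$ from the input $\Aa$ by removing each transition of negative weight; this is in $\AC^0$. Since $\stepnat$ is anyway disabled on such transitions, $\natreach(S,\Aa,\vec{t})$ agrees with $\natreach(S,\Aa',\vec{t})$; and in $\Aa'$, since only non-negative weights remain and the counter starts at $0$, integer runs and natural runs coincide and the counter stays in $\NN$, so reaching $S'$ in the integer semantics is equivalent to reaching $S$ in the natural semantics. For the lower bound, the hardness reduction for $\intreach(S)$ constructs a chain of states with transitions of weights $0$ and $a_i$ encoding a subset-sum instance with $a_i \ge 0$, targeting a value in the sparse witness region of $S$, which (under $\densityplus(S) = 0$) lies inside $\NN$; every such transition has non-negative weight, so the very same automaton is a valid instance of $\natreach$. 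The main obstacle is the case analysis of paragraph two: the reduction itself is trivial, but establishing a uniform constant $c/2$ that survives the four nested infima defining $\density$ requires carefully bookkeeping how the $\densityplus$ bound on a symmetric non-negative sub-window interacts with the trivial negatives-are-in-$S'$ bound on the asymmetric tail.
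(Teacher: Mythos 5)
Your upper bound is correct and matches the paper's approach: pass to $S' = S \cup (\ZZ^p \times \ZZ_{<0})$, establish $\density(S') \geq \tfrac{1}{2}\densityplus(S)$ by the same $n' = \lfloor x/k \rfloor$ case analysis the paper uses, and reduce $\natreach(S)$ to $\intreach(S')$ on the nose. Your arithmetic for the density lower bound checks out.

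The lower bound, however, has a genuine gap. You assert that the subset-sum reduction proving $\NP$-hardness of $\intreach(S)$ ``uses only non-negative transition weights.'' That is false: the automaton in \cref{fig:vas-np-hard} has a transition of weight $-y = -\ell b$, which is strictly negative for $b > 0$. You could try to repair this by folding the $-y$ into the final transition, replacing $v = x+k$ by $v' = x+k-\ell b$, but then you need $v' \geq 0$, i.e.\ $x \geq \ell b$. This does \emph{not} follow from $\density(S) = 0$: since $S \subseteq \ZZ^p \times \NN$, the target set never contains negatives, so the isolation condition $S[\vec{t}] \cap [x-\delta, x-1] = \emptyset$ places no lower bound on $x$ (it is vacuously satisfied below $0$). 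Concretely, $S[t] = \{0\} \cup [t,\infty)$ has unbounded isolation with $x = 0$ for every $\delta$, making $v' < 0$. Your remark that the witness ``lies inside $\NN$'' addresses only where the target value sits, not how far from $0$ it sits, and the latter is what makes the folded weight non-negative.

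The paper resolves this exactly through the $S'$ detour you already set up for the upper bound: from $\density(S') = 0$, one passes to a residue class $S'' = [S']_{B,\vec{b}}$ with unbounded isolation (\cref{lem:modulo-free}, \cref{dense-residue-classes}, \cref{integer-dichotomy-equivalence}), then uses the structural fact that $S''[\vec{t}]$ contains \emph{all} integers $\leq -B$. Since the isolating gap $[x-\delta, x-1]$ must avoid $S''[\vec{t}]$, it cannot reach below $-B$, forcing $x - \delta \geq -B$; choosing $\delta \geq \ell b + B + \sum \ell a_i$ then gives $x \geq \ell b$, so $v'$ is non-negative. Finally, \cref{complexity-transfer} carries the hardness of $\natreach(S'')$ back to $\natreach(S)$. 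Your proof should route the lower bound through $S'$ in the same way, rather than applying \cref{main-result-integer}(2) directly to $S$.
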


\begin{example} \label{exa:n}
	Consider $S\subseteq\NN^{p+1}$ with $p=1$ where \linebreak $S[t] \coloneqq [t,2t]$, then $\natreach(S)$ is in $\AC^1$ because, for every $x\in S[t], \densityplus_k(S[t],x)\ge\tfrac{1}{4}$.
\end{example}

In fact, when $S$ has no parameters, $\natreach(S)$ is always in $\AC^1$; see~\cref{clm:no-parameters-nat} which is proved in~\cref{app:n-density-properties}.
This is in contrast to the integer semantics: 
For \hbox{$S\subseteq\ZZ$} with $S=\set{0}$, the problem $\intreach(S)$ is \hbox{$\NP$-complete}.

\begin{restatable}{claim}{noParametersNat}\label{clm:no-parameters-nat}
	Let $S\subseteq\ZZ^{0}\times\N$ (i.e. when $p = 0$); then $\natreach(S)$ is in $\AC^1$.
\end{restatable}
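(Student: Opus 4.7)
My plan is to derive this claim from \cref{main-result-natural}. Since $p = 0$, the set $S \sset \Z^0 \times \N$ can be identified with a semilinear subset of $\N$, and $\densityplus(S)$ reduces to $\densityplus$ of that subset of $\N$. Once I show that $\densityplus(S) > 0$ for every semilinear $S \sset \N$, the first case of \cref{main-result-natural} immediately yields the $\AC^1$ upper bound.

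To establish positive density, I would invoke the classical fact that semilinear subsets of $\N$ are ultimately periodic: there exist a threshold $N \in \N$ and a period $P \geq 1$ such that for every $x \geq N$, $x \in S$ iff $x + P \in S$. I would then analyze $\densityplus_{k,n}(S, x)$ by splitting on whether $x$ lies in the finite prefix ($x < N$) or in the periodic tail ($x \geq N$). In the prefix regime, the admissibility constraint $x - kn \geq 0$ forces $n \leq N - 1$, so $\densityplus_{k,n}(S, x) \geq 1/(2N-1)$ trivially. In the periodic regime, fix any $k \geq 1$ and let $g = \gcd(k, P)$: the indices $j \in [-n, n]$ that are multiples of $P/g$ all satisfy $x + jk \equiv x \pmod P$, and hence lie in $S$ whenever $x + jk \geq N$; there are at least $2n/P$ such multiples, with only a constant number discarded for falling below $N$. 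Hence $\densityplus_{k,n}(S, x) = \Omega(1/P)$, and taking infima over $k$, $n$, and $x$ yields $\densityplus(S) > 0$.

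The step I expect to be the main obstacle is the bookkeeping in the periodic regime: one must verify that the constraint $x - kn \geq 0$ leaves enough margin for $x + k \cdot [-n, n]$ to retain $\Omega(n/P)$ elements of $S$ after boundary corrections, uniformly in $k$, $n$, and $x$, with corner cases such as $n = 0$ or $k$ much larger than $P$ handled separately. Both the reduction to \cref{main-result-natural} and the prefix bound are routine; once positive density is confirmed, the claim follows immediately from \cref{main-result-natural}.
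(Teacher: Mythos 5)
Your proposal is correct and follows the same top-level strategy as the paper — establish $\densityplus(S)>0$ for every semilinear $S\subseteq\NN$, then invoke \cref{main-result-natural} — but the way you certify positive density is a genuinely different route. The paper's proof avoids eventual periodicity and the $\gcd(k,P)$ analysis entirely: it observes that (i) union preserves positive local${}^+$ density, (ii) every semilinear $S\subseteq\NN$ is a finite union of arithmetic progressions $a+b\cdot\NN$, and (iii) for a single progression one can work only with strides that are multiples of $b$: trivially $\densityplus_{bk}(a+b\cdot\NN,x)\ge\tfrac12$ (all non-negative offsets hit), and then \cref{positive-density-multiply} rescales this to $\densityplus_k(a+b\cdot\NN,x)\ge\tfrac{1}{4b}$ for arbitrary $k$. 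That sidesteps exactly the bookkeeping you rightly flag as the main obstacle — boundary corrections near $N$, the case $n<P/g$, and uniformity in $k$ — because the progression decomposition localizes the argument to one residue class at a time, and \cref{positive-density-multiply} absorbs the mismatch between the stride $k$ and the period $b$ in a single inequality. Your gcd-based count does go through (using only $j\ge 0$ already yields $\Omega(n/P)$ hits, and the small-$n$ regime is saved by the trivial bound $\ge\tfrac{1}{2n+1}$), so the argument is salvageable; the paper's version is just the cleaner factorization of the same idea, trading the ultimate-periodicity normal form for the arithmetic-progression normal form plus a reusable density-rescaling lemma.
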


\begin{theorem}\label{main-result-natural-decidable}
	Given a semilinear set $S\subseteq\ZZ^{p+1}$, it is decidable whether
	$\densityplus(S)>0$.
\end{theorem}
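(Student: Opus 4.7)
The plan is to express the condition $\densityplus(S)>0$ as a Presburger-definable property and then invoke Presburger's classical decidability result. Starting from the definition, $\densityplus(S)>0$ holds if and only if there exists a positive integer $N$ such that for all $\vec{t}\in\ZZ^p$, $k\geq 1$, $x\in S[\vec{t}]$, and $n\in\NN$ with $kn\leq x$, the inequality
\[
N\cdot \bigl|S[\vec{t}]\cap (x+k\cdot[-n,n])\bigr|\;\geq\;2n+1
\]
holds. Thus the task reduces to deciding the existence of such an $N$. The obvious obstacle is that the cardinality function $|S[\vec{t}]\cap(x+k\cdot[-n,n])|$ involves counting the solutions of a Presburger formula, which in general is not Presburger-expressible.

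To circumvent this, the idea is to exploit the semilinear structure of $S$. Writing $S$ as a finite union of linear sets, each slice $S[\vec{t}]\sset\NN$ is itself a semilinear set in one variable, hence a finite union of arithmetic progressions whose offsets and periods depend Presburger-definably on $\vec{t}$. A finite case analysis on (i) the residues of $x$ and $kn$ modulo the least common multiple of these periods and (ii) whether the endpoints $x\pm kn$ lie past certain parameter-dependent thresholds, partitions the parameter space $(\vec{t},k,x,n)$ into finitely many Presburger-definable \emph{types}. Within each type, the count $|S[\vec{t}]\cap(x+k\cdot[-n,n])|$ equals a linear function of $n$ and the other parameters, up to an additive correction bounded by a constant independent of $n$. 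Hence the inequality is Presburger-expressible within each type, and the conjunction across types yields a single Presburger sentence whose satisfiability decides the existence of $N$.

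The main obstacle lies in carrying out this case analysis cleanly: one must verify that only finitely many types arise, that they are effectively computable from any Presburger representation of $S$, and that the bounded additive correction can be uniformly absorbed into the choice of $N$ without invalidating the reformulation. A perhaps cleaner alternative is to mirror the proof of Theorem~\ref{main-result-integer-decidable}: its structural invariants (such as the $\rho$-chain decomposition of $S$ alluded to in the introduction) should admit an analogue for the natural semantics, where the boundary constraint $kn\leq x$ becomes an additional Presburger-expressible requirement on the decomposition. Under this route, decidability of positive local${}^+$ density follows from an adaptation of the integer decidability algorithm, with the extra bookkeeping for the $\NN$-boundary being Presburger-expressible and therefore harmless for decidability.
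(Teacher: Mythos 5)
The paper's proof is considerably shorter and takes a different route: it sets $S' \coloneqq S \cup (\ZZ^p \times \ZZ_{<0})$, shows in \cref{dense-natural-vs-integer} that $\densityplus(S) > 0$ if and only if $\density(S') > 0$, and then applies the already-established decidability of positive local density in the integer setting (\cref{main-result-integer-decidable}). Since $S'$ is semilinear whenever $S$ is, this is a one-line reduction, and no fresh analysis of $\densityplus$ is needed.

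Your first route has a genuine gap. The condition you want to decide has the shape $\exists N\,\forall \vec{t},k,x,n\colon N\cdot\bigl|S[\vec{t}]\cap(x+k\cdot[-n,n])\bigr|\ge 2n+1$, and $N$ multiplies a quantity that varies with the universally quantified parameters. Even granting the finite case analysis that turns the cardinality into a linear expression in $(\vec{t},k,x,n)$ within each type, the term $N\cdot(\text{linear expression})$ is a product of two unknowns and is not a Presburger atom, so the ``single Presburger sentence whose satisfiability decides the existence of $N$'' does not exist as stated. Closing this hole requires a separate argument about when the ratio of two Presburger-definable linear quantities over a semilinear domain can be uniformly bounded; the paper packages exactly this into \cref{ratio-lemma} and the unbounded-isolation machinery for the integer case, and you would need an analogue. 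Your second alternative, mirroring \cref{main-result-integer-decidable} by adapting the $\rho$-chain analysis and treating the constraint $kn\le x$ as extra Presburger bookkeeping, is closer in spirit and would plausibly work, but it would repeat a large part of \cref{sec:integer-semantics}; the paper's $S'$ reduction sidesteps all of that.
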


\subsection{VASS Semantics}

To state our dichotomy for when VASS semantics are used, we need some terminology.  
For $\delta\ge 1$, we say that $A\sset\NN$ is \emph{$\delta$-upward closed} if
$A+\delta\sset A$, where $A+\delta\coloneqq \{x+\delta \mid x\in A\}$.  
For $M\ge 0$ and $\delta\ge 1$, we say that $A\sset\NN$ is
\emph{$\tuple{\delta,M}$-upward closed} if there is a set $F\sset\NN$ with $|F|\le M$
such that $A\cup F$ is $\delta$-upward closed.  Further, $A$ is
\emph{quasi-upward closed} if there are $\delta\ge 1$ and $M\ge 0$ such that
$A$ is $\tuple{\delta,M}$-upward closed.  A set $S\subseteq\ZZ^{p}\times\N$ is
\emph{uniformly quasi-upward closed} if there are $\delta\ge 1$ and $M\ge 0$
such that for every $\vec{t}\in\ZZ^p$, the set $S[\vec{t}]$ is
$\tuple{\delta,M}$-upward closed.
\begin{theorem} \label{main-result-vass}
	Let $S\subseteq\ZZ^{p}\times\N$ be semilinear.
	\begin{enumerate}[(1)]
		\item \label{itm:vas-p-new}
			If $S$ is uniformly quasi-upward closed, then $\vasreach(S)$ is in $\AC^1$.
		\item \label{itm:vas-np-new}
		Otherwise, $\vasreach(S)$ is \NP-complete.
	\end{enumerate}
\end{theorem}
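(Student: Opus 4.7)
The plan is to prove both directions of the dichotomy. The $\NP$ upper bound in part~\labelcref{itm:vas-np-new} is inherited from reachability in $1$-VASS, which is in $\NP$~\cite{DBLP:conf/concur/HaaseKOW09}, combined with polynomial-time membership testing for the fixed semilinear set $S$; so the only work there is matching hardness. The nontrivial upper bound is part~\labelcref{itm:vas-p-new}.

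For \labelcref{itm:vas-p-new}, fix witnesses $\delta\ge 1$ and $M\ge 0$ of uniform quasi-upward closedness. The structural observation is that for every $\vec{t}$ there is a $\delta$-upward closed set $B[\vec{t}]$ and a set $G[\vec{t}]$ with $|G[\vec{t}]|\le M$ such that $S[\vec{t}]=B[\vec{t}]\setminus G[\vec{t}]$. Moreover $B[\vec{t}]$ decomposes cleanly by residues mod $\delta$: for each $r\in\{0,\dots,\delta-1\}$, either $B[\vec{t}]$ contains no element of residue $r$, or it consists of exactly $\{x\ge m_r(\vec{t})\mid x\equiv r\pmod\delta\}$ for some threshold $m_r(\vec{t})$. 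Because $\delta$ and $M$ are absolute constants (depending only on $S$), the thresholds $m_r(\vec{t})$ and the finite exception set $G[\vec{t}]$ can be read off from the Presburger formula defining $S$ applied to $\vec{t}$, in $\AC^0$. The $\AC^1$ algorithm then reduces $\vasreach(S)$ to a constant number of calls to a \emph{modular coverability} subroutine: for each residue $r$, decide whether the input $1$-VASS can reach $\tuple{q_1,x}$ in VASS semantics with $x\ge m_r(\vec{t})$ and $x\equiv r\pmod\delta$. I would obtain this subroutine by a residue-product construction on top of the new $\AC^1$ coverability procedure for $1$-VASS (the one based on the semiring $\calF$ with the approximation step); tracking residues modulo the constant $\delta$ only inflates the state space by a constant factor. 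Avoiding $G[\vec{t}]$ is then handled by one additional short step: detect, via cycle analysis on $\Aa$, whether some cycle has nonzero effect mod $\delta$. If yes, the reachable counter values at $q_1$ of residue $r$ are cofinite in $[m_r(\vec{t}),\infty)\cap(r+\delta\Z)$ and automatically miss the bounded $G[\vec{t}]$; otherwise the reachable set in that residue class is polynomially bounded and the at most $M$ forbidden values are checked directly.

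For the hardness direction \labelcref{itm:vas-np-new}, my plan is to reduce subset sum. The structural lemma I need to establish is: if semilinear $S$ is not uniformly quasi-upward closed, then for every $(\delta,M)$ some witness $\vec{t}$ makes $S[\vec{t}]$ have more than $M$ ``boundary points'' on some residue class mod $\delta$. Since each linear component of $S$ has a fixed period on the counter coordinate, the only way such boundaries can accumulate unboundedly is if $S$ contains a parameter-dependent isolated piece whose distance to the upward-closed bulk of $S[\vec{t}]$ grows with $\vec{t}$, as illustrated by $S_1=\{(t,x)\mid x=0\vee x\ge t\}$ and $S_2$ in the introduction. Given a subset sum instance $(a_1,\dots,a_n,T)$, I would set $\vec{t}$ so that the upward-closed bulk of $S[\vec{t}]$ is pushed above $\sum_i a_i + T$ and hence unreachable by the reduction gadget, while a remaining isolated target value of $S[\vec{t}]$ encodes $T$; then I would use the standard addition-guessing $1$-VASS whose accepting runs at $q_1$ arrive at that isolated target iff some subset of $\{a_i\}$ sums to $T$.

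The main obstacle is the structural step on the hardness side: one must argue that the \emph{only} way a semilinear $S$ can fail uniform quasi-upward closedness is via such a parameter-drifting isolated pattern, and then extract this pattern effectively from the Presburger description of $S$. A secondary difficulty on the algorithmic side is ensuring that the $G[\vec{t}]$-avoidance test fits within $\AC^1$; I expect this to follow from standard cycle analysis combined with the semiring $\calF$ computation, exploiting once more that $\delta$ and $M$ depend only on $S$ and not on the input.
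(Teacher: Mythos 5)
Your high-level plan for part~(1) matches the paper's: compute in logspace (via Chan--Ibarra) the thresholds per residue class mod $\delta$ and the bounded exception set, reduce to $\delta$-tracking coverability, and invoke the new $\AC^1$ coverability procedure. But your mechanism for avoiding the exception set $G[\vec{t}]=\{f_1,\ldots,f_m\}$ has a genuine gap. You claim a dichotomy: either some cycle has effect $\not\equiv 0\pmod\delta$, in which case the reachable counter values of residue $r$ at $q_1$ are cofinite in $[m_r(\vec{t}),\infty)\cap(r+\delta\ZZ)$; or all cycles are $\equiv 0\pmod\delta$, in which case the reachable set in that residue class is polynomially bounded. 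Both halves are false. For the first: a single self-loop of weight $\delta+1$ at $q_0=q_1$ yields reachable set $\{k(\delta+1):k\ge 0\}$, whose intersection with any fixed residue class mod $\delta$ is an arithmetic progression of period $\delta(\delta+1)$, not cofinite. For the second: if every cycle has effect a multiple of $\delta$ but those multiples are large (binary-encoded weights), the reachable set can contain exponentially large values, and there is no polynomial bound. The paper sidesteps this entirely by \emph{not} reasoning about the structure of the reachable set: it nondeterministically guesses small primes $p_1,\ldots,p_m$ (of polynomially bounded magnitude, which exist by the Prime Number Theorem since each $f_i$ has polynomially many bits) and tracks the counter value modulo each $p_i$ in the control state, accepting only when $x\not\equiv f_i\pmod{p_i}$ for all $i$. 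Since $m$ is a constant, the state blow-up is polynomial and the whole thing remains a $1$-VASS coverability instance. This is the idea you are missing.

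On the hardness side, you correctly identify that the obstacle is structural --- one must show that failure of uniform quasi-upward closedness always manifests as parameter-dependent gaps that grow unboundedly, and that such a witness is extractable from the Presburger formula --- but you do not supply the argument. The paper's route is: decompose $S$ into residue classes $[S]_{B,\vec{b}}$ modulo a common multiple $B$ of the moduli in a formula for $S$; each residue class is then modulo-free, hence its slices $[S]_{B,\vec{b}}[\vec{t}]$ are finite unions of intervals with Presburger-definable endpoints; a modulo-free set is uniformly quasi-upward closed iff it does \emph{not} have ``unbounded gaps'' (for every $g$ there is $\vec{t}$, $u$ with $[u,u+g]\cap S[\vec{t}]=\{u\}$), because a bounded maximum gap size yields $\tuple{1,mM}$-upward closedness while unbounded gaps defeat any $\tuple{\delta,M}$; and $S$ is uniformly quasi-upward closed iff every residue class is. Once unbounded gaps in some residue class are established, Chan--Ibarra gives a logspace-computable witness $(\vec{t},u)$ for any requested gap size, and the subset-sum gadget you sketch goes through (using $\cref{complexity-transfer}$-style transfer back from the residue class to $S$ itself). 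Your ``more than $M$ boundary points on some residue class'' formulation is in the right spirit but is not the criterion the argument actually hinges on; the clean and extractable invariant is a single unbounded gap above a point of $S[\vec{t}]$, within a modulo-free residue class.

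Your $\NP$ upper bound sketch is also looser than the paper's: a polynomial-time membership test for $S$ does not by itself give membership in $\NP$, since the witness $x$ and the run reaching it may be exponential; the paper instead uses the fact that $1$-VASS reachability relations admit polynomial-size existential Presburger formulas, intersects with $\varphi(\vec{t},x)$, and appeals to $\NP$-completeness of existential Presburger satisfiability.
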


It may not seem surprising that a variant of upward closedness describes the polynomial-time cases. 
Therefore, the following examples illustrate some
subtleties in delineating the exact complexity border.

\begin{example} \label{exa:v1}
	Consider the set $S_1\subseteq\N^{p+1}$ with $p=1$ and $S_1[t]:=\{0\}\cup\{x\in\NN \mid x\ge t\}$. 
	Then $\vasreach(S_1)$ is $\NP$-complete (see~\cref{ex:vas-np-hard} on~\cpageref{ex:vas-np-hard}).
\end{example}

Note that $S_1[t]$ is $t$-upward closed for each $t\in\NN$. 
In particular, to achieve polynomial time, it does not suffice to require that for every $\vec{t} \in \ZZ^p$, the set $S[\vec{t}]$ is $\delta$-upward closed for some $\delta$. This is why we require one $\delta$ for all $\vec{t}$.

\Cref{exa:v1} also shows that it does not suffice to require a single $\delta\ge 1$ such that for every $\vec{t} \in \ZZ^p$, there is a finite set $F$ so that $S[\vec{t}]\cup F$ is $\delta$-upward closed.
Indeed, for every $t \in \ZZ$, the set $S_1[t]\cup [0,t]$ is upward closed, yet $\vasreach(S_1)$ is \hbox{$\NP$-complete}. 
This is why we need the global cardinality bound $M$.

\begin{example}\label{exa:v2}
	Consider the two sets $S_2, S_2' \sset \ZZ^{2}\times\NN$ where
	\begin{align*} 
		S_2[t_1,t_2] & \coloneqq \{x\in\NN \mid x\ge t_1\}\setminus \{t_2\}, \text{ and }\\
		S_2'[t_1,t_2] & \coloneqq \{x\in\NN\mid x\ge t_1\}\cup \{t_2\}. 
	\end{align*}
	Then, $\vasreach(S_2)$ is still in $\AC^1$, but $\vasreach(S_2')$ is $\NP$-complete (since it slightly generalizes~\cref{exa:v1}).
\end{example}

\begin{example}\label{exa:v3}
	The previous example sets $S_1$ and $S_2'$ have shown that hardness ensues when there are points with unbounded gaps above them. 
	However, such gaps can also ``hide within residue classes''.
	Consider the set $S_3 \sset \ZZ^{1+1}$ where
	\begin{equation*}
		S_3[t] \coloneqq \{x\in\NN \mid \text{$x$ is even}\} \cup \{1\}.
	\end{equation*}
	On first inspection, $S_4[t]$ only has small gaps:
	Every point has another within distance $2$. 
	However, because of the odd number $1\in S_3[t]$, it turns out that \hbox{$\tuple{\delta,M}$-upward} closedness can hold neither with an even $\delta$ nor with an odd $\delta$.
	Accordingly, $\vasreach(S_3)$ is $\NP$-complete.
\end{example}

\begin{theorem}\label{main-result-vass-decidable}
	Given a semilinear $S\subseteq\ZZ^p\times\N$, it is decidable whether $S$ is uniformly quasi-upward closed.
\end{theorem}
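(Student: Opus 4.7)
The plan is to reduce decidability to a classical property of semilinear sets, after identifying a computable canonical choice for the period $\delta$. First, I reformulate the definition. For $A \sset \NN$ and $\delta \geq 1$, define the forward closure $\mathrm{cl}_\delta(A) := \{a + k\delta : a \in A, k \geq 0\}$. Then $A$ is $\tuple{\delta, M}$-upward closed iff $|\mathrm{cl}_\delta(A) \setminus A| \leq M$, since the minimal valid $F$ in the definition is exactly $\mathrm{cl}_\delta(A) \setminus A$. Hence $S$ is uniformly quasi-upward closed iff there exist $\delta \geq 1$, $M \geq 0$ such that the semilinear set $T_\delta := \{(\vec{t}, y) : y \in \mathrm{cl}_\delta(S[\vec{t}]) \setminus S[\vec{t}]\}$ has all $\vec{t}$-fibers of cardinality at most $M$. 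For each fixed $\delta$, $T_\delta$ is indeed semilinear (divisibility by a fixed integer being Presburger-expressible), and whether its fibers are uniformly bounded in $\vec{t}$ is a classical decidable property of semilinear sets (via quasi-polynomial counting of fiber cardinalities).

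The remaining task is to computably bound $\delta$. Writing $S = \bigcup_i L_i$ with $L_i = \vec{b}_i + \sum_j \NN \cdot \vec{p}_{i,j}$, set $\delta_0 := \mathrm{lcm}\{|p_{i,j}^{(d+1)}| : p_{i,j}^{(d+1)} \neq 0\}$, with $\delta_0 := 1$ if the set is empty---in which case every fiber $S[\vec{t}]$ is finite, so $S$ is uniformly quasi-upward closed iff $S$ is empty, which is decidable. I claim $S$ is uniformly quasi-upward closed iff $T_{\delta_0}$ has uniformly bounded fibers. The direction $(\Leftarrow)$ is immediate. For $(\Rightarrow)$, observe that if some $\delta^*$ works, then so does every multiple of $\delta^*$, since $\mathrm{cl}_{k\delta^*}(A) \sset \mathrm{cl}_{\delta^*}(A)$ gives $|\mathrm{cl}_{k\delta^*}(A) \setminus A| \leq |\mathrm{cl}_{\delta^*}(A) \setminus A|$. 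In particular $\delta' := \mathrm{lcm}(\delta^*, \delta_0)$ works, with $\delta_0 \mid \delta'$. A residue-class analysis then shows $\delta_0$ itself works: each nonempty residue class of $S[\vec{t}]$ modulo $\delta_0$ decomposes into $\delta'/\delta_0$ sub-classes modulo $\delta'$, each eventually a full arithmetic progression by $\delta'$-validity, and the representation of $S$ ensures that none of these sub-classes can be empty---since the periods of each relevant $L_i$ have last-coordinates dividing $\delta_0$ and thus populate all $\delta'$-subresidues within a $\delta_0$-class.

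The algorithm is therefore: compute $\delta_0$ from a semilinear representation of $S$; construct the Presburger formula defining $T_{\delta_0}$; decide whether $T_{\delta_0}$ has uniformly bounded $\vec{t}$-fibers via standard techniques. All steps are effective, yielding decidability.

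The main obstacle is the residue-class argument justifying that $\delta_0$ always suffices. The subtlety is that validity of $\delta$ is strictly stronger than being a multiple of the eventual period of each fiber: \cref{exa:v3} exhibits $S_3$ whose every fiber has eventual period $2$ yet admits no valid $\delta$, due to an isolated odd point. The argument must exploit the semilinear representation of $S$---not merely its abstract fibers---to ensure that each residue class modulo $\delta_0$ is populated densely enough by the linear components $L_i$ to preclude such sparse-subresidue phenomena; I expect this to require a careful case analysis based on the numerical semigroup generated by the positive last-coordinates of each $L_i$'s periods.
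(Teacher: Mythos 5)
Your reformulation of $\langle\delta,M\rangle$-upward closedness as $|\mathrm{cl}_\delta(A)\setminus A|\le M$ is correct, and for any \emph{fixed} $\delta$ the set $T_\delta$ is semilinear with decidable uniform fiber-boundedness, so the reduction is sound in principle. The gap is in the choice of $\delta_0$: the lcm of the absolute values of the nonzero last-coordinate entries of the period vectors does \emph{not} control the eventual periods of the fibers $S[\vec{t}]$, and the claimed equivalence is false.

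Concretely, take $p=1$ and $S=L$ with
\[
L = \tuple{0,0} + \NN\cdot\tuple{1,1} + \NN\cdot\tuple{-1,2} \;\subseteq\; \ZZ\times\NN.
\]
For $t\ge 0$ the fiber is $S[t]=\{(b+t)+2b : b\ge 0\}=t+3\NN$, and for $t<0$ it is $-2t+3\NN$; so every fiber is an arithmetic progression of period $3$, and $S$ is uniformly quasi-upward closed with $\delta^*=3$, $M=0$. But your $\delta_0=\mathrm{lcm}\{1,2\}=2$, and $\mathrm{cl}_2(c_t+3\NN)=c_t+\{3b+2k : b,k\ge 0\}=c_t+(\NN\setminus\{1\})$, so $T_2[t]=c_t+(\NN\setminus(\{1\}\cup 3\NN))$ is infinite for every $t$. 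Your algorithm would therefore wrongly declare $S$ not uniformly quasi-upward closed. The reason is precisely the issue you flag at the end: the achievable period inside a fiber comes from integer combinations of period vectors that \emph{cancel in the parameter coordinates} (here $\tuple{1,1}+\tuple{-1,2}=\tuple{0,3}$), and such combinations need neither divide nor be divided by the lcm of the individual last-coordinate entries. So you cannot sidestep the sparse-subresidue phenomenon of \cref{exa:v3} merely by taking that lcm; the ``careful case analysis based on the numerical semigroup'' you hope for would have to produce a fundamentally different $\delta_0$.

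The paper's proof (\cref{uniformly-quasi-upward-closed-residue-classes}) avoids this trap by a different route: it takes the modulus $B$ from \cref{lem:modulo-free}, i.e., the lcm of the moduli appearing in a \emph{quantifier-free} Presburger formula for $S$ --- quantifier elimination is what surfaces the divisibility constraint induced by cancelling period combinations (mod $3$ in the example above). After splitting into residue classes modulo $B$, each class is modulo-free, so each fiber is a finite union of intervals, and the paper shows that $S$ fails uniform quasi-upward closedness iff some residue class has ``unbounded gaps'', a condition directly expressible in Presburger arithmetic. If you replaced $\delta_0$ by this $B$ your plan would be closer, but you would still need to prove that $B$ suffices whenever some $\delta^*$ does; the paper's gap-based characterization on modulo-free classes gives this essentially for free and is worth comparing against your $T_\delta$ machinery.
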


\section{General techniques}
\label{sec:general-techniques}
In this short section, we collect some facts that are used across the later sections.
First, the $\NP$ upper bounds are a simple consequence of small Presburger formulas for reachability relations of $\Z$-VASS~\cite{DBLP:conf/rp/HaaseH14} and $1$-VASS~\cite{LiCWX20} (see~\cref{app:np-upper-bound}).
\begin{restatable}{theorem}{npUpperBound}\label{np-upper-bound}
	Let $S \sset \ZZ^{p+1}$ be semilinear. 
	$\intreach(S)$, $\natreach(S)$, and $\vasreach(S)$ are in \NP.
\end{restatable}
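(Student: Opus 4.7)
The plan is to reduce all three problems to the satisfiability of polynomial-size existential Presburger formulas, for which \NP\ membership is standard.

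Since $S\subseteq\ZZ^{p+1}$ is fixed and semilinear, it is definable by an existential Presburger formula $\varphi_S(\vec{t},x)$ whose size is a constant depending only on $S$. Each of the three problems thereby reduces to deciding, given $\cA$ and $\vec{t}\in\ZZ^p$ in binary, whether there exists an $x$ among the counter values reachable from $\tuple{q_0,0}$ to $\tuple{q_1,x}$ under the relevant semantics such that $\varphi_S(\vec{t},x)$ holds. I would then invoke the known succinct descriptions of one-counter reachability relations: Haase and Hermann~\cite{DBLP:conf/rp/HaaseH14} show that for a $\ZZ$-weighted automaton $\cA$ and fixed source/target states, the set of counter values reachable under $\stepsint$ is defined by a polynomial-size existential Presburger formula $\psi_\cA(x)$; the analogous result for $\stepsVASS$ on $1$-VASS is due to Li et al.~\cite{LiCWX20}. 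Combining these with $\varphi_S$ yields a polynomial-size existential Presburger formula $\exists x.\;\psi_\cA(x)\wedge\varphi_S(\vec{t},x)$ whose satisfiability is equivalent to the original decision problem and hence lies in \NP.

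The only case not directly covered by the two citations is $\natreach$, which is where I would need to be slightly careful. By definition, $\stepnat$ forces the counter to be non-decreasing at every step, so in any accepting $\natreach$-run only transitions with non-negative weights can ever be used. Deleting every negative-weight transition from $\cA$ therefore produces an equivalent instance on which $\stepsnat$ coincides with $\stepsVASS$ (and, in fact, with $\stepsint$, since non-negativity of the counter is automatic once one starts from $0$ and every weight is non-negative), reducing this case to the $1$-VASS result above. Modulo these citations and the straightforward $\natreach$ reduction, the remainder of the argument is a routine composition of existential Presburger formulas.
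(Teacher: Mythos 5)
Your proposal is correct and follows essentially the same route as the paper: both combine the existential Presburger formulas for the reachability relations from Haase--Hermann (integer semantics) and Li et al.\ ($1$-VASS semantics) with a fixed formula for $S$, and both dispose of $\natreach$ by observing that under $\stepsnat$ only non-negative-weight transitions can fire, reducing it to one of the other two cases (the paper phrases this as $\natreach$ being a special case of $\intreach$, which matches your parenthetical remark).
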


\noindent\textbf{Residue classes.}
We will decompose semilinear sets \hbox{$S\sset\ZZ^{p+1}$} into residue classes modulo
some common multiple $B$ of all moduli in a formula for $\varphi$. With such a
$B$, each residue class modulo $B$ is a \emph{modulo-free set}, meaning it can
be defined by a Presburger formula that is \emph{modulo-free}.
This means, the formula contains no quantifiers and no modulo constraints. 
Let us make this precise. 
For any set $S\subseteq\ZZ^k$ and any modulus $B\ge 0$, we define the
residue classes of $S$ as
\begin{equation*}
	[S]_{B, \vec{b}} \coloneqq \set{ \vec{u} \in \ZZ^k : B\vec{u} + \vec{b} \in S }, 
\end{equation*}
where $\vec{b}$ ranges over $[0,B-1]^{p+1}$, the set of $(p+1)$-dimensional
vectors with entries in $[0,B-1]$.  
The following lemma follows from the argument above (see~\cref{app:modulo-free} for details).
\begin{restatable}{lemma}{moduloFree}\label{lem:modulo-free}
	Let $S \sset \ZZ^{p+1}$ be a semilinear set.
	One can compute $B \geq 1$ so that, for each $\vec{b} \in [0, B-1]^{p+1}$, $[S]_{B, \vec{b}}$ is effectively definable by a modulo-free Presburger formula.
\end{restatable}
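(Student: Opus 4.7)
The plan is to exploit Presburger's classical quantifier-elimination procedure, which produces a quantifier-free formula whose only atomic statements are linear inequalities and \emph{divisibility (modulo) constraints} of the form $\vec{a}\cdot\vec{x}\equiv c \pmod{m}$. Given any Presburger formula $\varphi(\vec{x})$ defining $S$, I would first apply this procedure to obtain an equivalent formula $\psi(\vec{x})$ of that shape, effectively. Let $m_1,\dots,m_r$ be the moduli occurring in the modulo-constraints of $\psi$, and set
\[ B \coloneqq \mathrm{lcm}(m_1,\dots,m_r) \]
(with $B \coloneqq 1$ if no modulo-constraints appear). Both $\psi$ and $B$ are computable from $\varphi$.

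Next, fix any $\vec{b}\in[0,B-1]^{p+1}$. To obtain a formula for $[S]_{B,\vec{b}}$, I would substitute $\vec{x}\mapsto B\vec{u}+\vec{b}$ into $\psi$, yielding an equivalent formula $\psi_{\vec{b}}(\vec{u})$. The key observation is that under this substitution, each atom simplifies:
\begin{itemize}
\item A linear inequality $\vec{a}\cdot\vec{x}\le c$ becomes $B(\vec{a}\cdot\vec{u})\le c-\vec{a}\cdot\vec{b}$, which is again a linear inequality over $\vec{u}$ with integer coefficients, and contains no modulo operator.
\item A modulo-constraint $\vec{a}\cdot\vec{x}\equiv c \pmod{m}$ (with $m \in\{m_1,\dots,m_r\}$) becomes $B(\vec{a}\cdot\vec{u})+\vec{a}\cdot\vec{b}\equiv c\pmod{m}$. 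Since $m\mid B$, the term $B(\vec{a}\cdot\vec{u})$ vanishes modulo $m$, so the constraint is equivalent to the \emph{$\vec{u}$-independent} condition $\vec{a}\cdot\vec{b}\equiv c\pmod{m}$, which is either true or false and can be evaluated explicitly. The atom is therefore replaced by the constant $\top$ or $\bot$.
\end{itemize}
After these purely syntactic replacements, $\psi_{\vec{b}}$ is a Boolean combination of linear inequalities and constants, hence quantifier-free and modulo-free, and defines $[S]_{B,\vec{b}}$.

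All steps are effective: quantifier elimination for Presburger arithmetic is effective, the LCM is computable, and the substitution together with the evaluation of the residual modulo-conditions is routine arithmetic on the formula. The only genuinely non-trivial ingredient is the classical quantifier-elimination result (with divisibility predicates), which I would invoke as a black box; beyond that, the proof is essentially the observation that choosing $B$ as a common multiple of all moduli trivialises every modulo-atom on each fixed residue class.
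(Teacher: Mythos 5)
Your proof is correct and essentially identical to the paper's: start from a quantifier-free Presburger formula (obtained by quantifier elimination), take $B$ to be the LCM of all moduli, substitute $\vec{x} \mapsto B\vec{u}+\vec{b}$, and observe that every modulo-atom becomes a constant since $m \mid B$ while inequalities remain inequalities. The paper's write-up simply begins with the quantifier-free formula already in hand and phrases the constraints with a single variable, but the argument is the same.
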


\mysubsubsection{Computing interval endpoints.}
If $S\subseteq\ZZ^{p+1}$ is modulo-free (definable by a quantifier-free
modulo-free formula), then each set $S[\vec{t}]$ is a finite union of
intervals. It is straightforward to show that the functions computing (i)~the
number of intervals in $S[\vec{t}]$ (which is bounded by a constant only
depending on $S$), and (ii)~the endpoints of those intervals are also
Presburger computable. The exact statement is somewhat technical and can be
found in~\cref{app:intervals}.

\mysubsubsection{Computing Presburger-definable functions.} We will often need
to compute points in $S[\vec{t}]$ that have certain properties (e.g.\ have a
large gap above them). Here, the following (shown by Chan and
Ibarra~\cite[Theorem 2.1]{DBLP:journals/siamcomp/ChanI83}) will be useful:
It allows us to compute such points if  we can define them
uniquely.
\begin{lemma}\label{compute-presburger}
	Every Presburger-definable function $\NN^k\to\N$ can be computed in deterministic logspace.
\end{lemma}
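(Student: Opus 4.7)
The plan is to exploit the piecewise affine structure of Presburger-definable functions. I would first establish the folklore structural lemma: every Presburger-definable $f:\NN^k\to\NN$ decomposes into a finite collection of semilinear pieces $P_1,\ldots,P_r\sset\NN^k$ such that on each $P_j$ one has $f(\vec{x})=\vec{a}_j\cdot\vec{x}+c_j$ for fixed rational constants $\vec{a}_j\in\QQ^k$ and $c_j\in\QQ$. This follows from writing the graph of $f$ as a finite union of linear sets in the sense of semilinear set theory and using single-valuedness: on each such linear set, the output coordinate is forced to be a linear function of the input coordinates. Any integrality obstructions (requiring the $\QQ$-affine value to land in $\NN$) are absorbed by further refining each $P_j$ into residue classes modulo a fixed constant, in the spirit of~\cref{lem:modulo-free}.

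Second, since $f$ is fixed, $r$, the defining formulas for each $P_j$, and the rational coefficients $\vec{a}_j,c_j$ are all hard-coded into the algorithm. Determining which $P_j$ contains the input $\vec{x}$ reduces to checking a constant-size Boolean combination of linear inequalities and fixed-modulus congruences on binary inputs. Addition, comparison, and divisibility by a fixed constant are all in $\AC^0$ on binary-encoded integers, so the piece test is in $\AC^0\sset\compL$. Once the correct piece $j$ is known, computing $\vec{a}_j\cdot\vec{x}+c_j$ is a fixed $\QQ$-affine combination whose value has bit length $O(\log\|\vec{x}\|_\infty)$, linear in the input size. Its bits are emitted one at a time to the write-only output tape by the usual digit-by-digit addition of a constant number of scaled, shifted copies of the input entries; tracking only the current digit position and the carry requires $O(\log n)$ work space.

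The main obstacle is the structure theorem itself: one has to pass from an arbitrary Presburger formula $\varphi(\vec{x},y)$ defining the graph of $f$ to an explicit finite partition of $\NN^k$ on which $f$ is $\QQ$-affine. This requires, at compile time, quantifier elimination in Presburger arithmetic, followed by a case analysis identifying which linear set of the resulting semilinear graph supplies the unique witness $y$ for a given $\vec{x}$, and finally a modular refinement of the pieces to guarantee that each affine expression evaluates into $\NN$. Once this preprocessing is carried out, the run-time algorithm is entirely routine: a constant-depth cascade of piece tests followed by a streaming carry-propagation, all comfortably within deterministic logspace.
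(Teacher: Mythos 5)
The paper offers no proof of its own here: the lemma is imported from the cited work of Chan and Ibarra. Your reconstruction via the piecewise-$\QQ$-affine normal form is the standard direct argument, and the overall structure is sound --- hard-code the finitely many semilinear pieces and their affine coefficients at compile time, select the piece at run time, then stream the output bits using only a constant-width carry/remainder register.

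One intermediate claim is wrong, though it does not break the argument: deciding divisibility of a binary-encoded integer by a \emph{fixed} constant $m$ is in $\AC^0$ only when $m$ is a power of two. Since $x \bmod m = \sum_i (2^i \bmod m)\,x_i \bmod m$ and the weights $2^i \bmod m$ are eventually periodic in $i$, an $\AC^0$ padding reduces $\mathrm{MOD}_{m'}$ counting (for the odd part $m'$ of $m$) to this problem, and $\mathrm{MOD}_{m'}$ is not in $\AC^0$ for any $m'\ge 2$. The piece test is therefore only in $\TC^0 \subseteq \compL$, which is all you actually use, so the $\compL$ conclusion stands. A secondary remark: your ``folklore structural lemma'' quietly relies on first rewriting the graph of $f$ as a union of linear sets whose period vectors are $\QQ$-linearly independent; only after that normalization does single-valuedness of the graph force the \emph{projected} period vectors to be independent, which is what makes the restriction of $f$ to each projected piece $\QQ$-affine. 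A careful write-up needs that step explicitly, but it is not a genuine gap.
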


\section{Integer Semantics}
\label{sec:integer-semantics}
\subsection{Removing modulo constraints}
Our first step is to show that it suffices to show \cref{main-result-integer} in the case where $S$ is given by a modulo-free formula. Roughly speaking, this is because neither positivity of local density, nor the complexity depend on whether we take the entire set, or only residue classes. First, the former:
\begin{restatable}{lemma}{denseResidueClasses}\label{dense-residue-classes}
	Let $S\subseteq\ZZ^{p+1},~B\ge 1$. Then $\density(S)>0$ if and
	only if for every $\vec{b}\in[0,B-1]^{p+1}$, we have
	$\density([S]_{B,\vec{b}})>0$.
\end{restatable}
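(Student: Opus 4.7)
The plan is to prove the two directions separately, using the basic fact that the intersection of the arithmetic progression $x + k[-n,n]$ with any single residue class modulo $B$ is itself an arithmetic progression whose step is $\mathrm{lcm}(k,B)$ and whose length is approximately $n/B'$, where $B' = B/\gcd(k,B)$. Throughout, decompose $\vec{b} \in [0,B-1]^{p+1}$ as $\vec{b} = (\vec{b}', b_{p+1})$ with $\vec{b}' \in [0,B-1]^p$.

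For the forward direction, the workhorse is the identity
\[ \density_{k,n}([S]_{B,\vec{b}}[\vec{t}], x) = \density_{Bk, n}(S[B\vec{t}+\vec{b}'], Bx + b_{p+1}) \]
valid whenever $x \in [S]_{B,\vec{b}}[\vec{t}]$. It follows because the affine map $y \mapsto By + b_{p+1}$ is a bijection from $x + k[-n,n]$ onto $Bx + b_{p+1} + Bk[-n,n]$ that, by the definition of $[S]_{B,\vec{b}}$, sends $[S]_{B,\vec{b}}[\vec{t}]$ into $S[B\vec{t}+\vec{b}']$. Infimizing over $n$, then over $x$ and $k$, immediately yields $\density([S]_{B,\vec{b}}) \geq \density(S) > 0$.

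For the backward direction, set $c \coloneqq \min_{\vec{b} \in [0,B-1]^{p+1}} \density([S]_{B,\vec{b}}) > 0$, a finite minimum of positive numbers. Given $\vec{t} \in \ZZ^p$, $k\ge 1$, $n\ge 1$, and $x \in S[\vec{t}]$, decompose $\vec{t} = B\vec{t}' + \vec{b}'$ and $x = By + b_{p+1}$ with $b_{p+1} = x \bmod B$, so that $y \in [S]_{B,\vec{b}}[\vec{t}']$ for $\vec{b} = (\vec{b}', b_{p+1})$. Writing $d = \gcd(k,B)$, $B = dB'$, $k = dk'$, the sub-progression of $x + k[-n,n]$ consisting of points congruent to $x$ modulo $B$ is exactly $\{B(y + k'j) + b_{p+1} : j \in [-n', n']\}$ with $n' = \lfloor n/B' \rfloor$, and such a point lies in $S[\vec{t}]$ iff $y + k'j \in [S]_{B,\vec{b}}[\vec{t}']$. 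By the assumed bound applied to this residue class, at least $c(2n'+1)$ such indices $j$ qualify, giving $\density_{k,n}(S[\vec{t}], x) \geq c(2n'+1)/(2n+1)$.

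The only mildly technical step—which I view as the main obstacle—is to convert this inequality into a uniform positive lower bound independent of $\vec{t}, k, n, x$. For $n \geq B \geq B'$ the bound $n' \geq n/B - 1$ forces $c(2n'+1)/(2n+1)$ to be at least roughly $c/B$; for $n < B$ the single point $x$ itself (the index $j=0$) already forces $\density_{k,n}(S[\vec{t}], x) \geq 1/(2n+1) \geq 1/(2B-1)$. Taking the minimum of these two bounds yields a constant depending only on $c$ and $B$, establishing $\density(S) > 0$. No additional ideas are required beyond careful bookkeeping of the sub-progression in the residue class of $x$.
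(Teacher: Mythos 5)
Your proof is correct and follows essentially the same approach as the paper's: the forward direction rests on the identical identity $\density_k([S]_{B,\vec{b}}[\vec{t}],x) = \density_{Bk}(S[B\vec{t}+\vec{b}'], Bx+b_{p+1})$, and the backward direction is the same residue-class counting argument, just organized differently. The paper factors that counting through a reusable lemma ($\density_k(A,x)\ge\tfrac{1}{2B}\density_{kB}(A,x)$, namely \cref{density-multiply}) and then applies the identity again, whereas you inline the count with a $\gcd$-based step size $\mathrm{lcm}(k,B)$; both give a uniform lower bound of order $c/B$ and involve the same bookkeeping.
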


Next we show that the complexity (i.e.\ whether $\intreach(\cdot)$ belongs to
$\AC^1$ or is $\NP$-hard) does not depend on whether we take the entire set or
residue classes, either.
\begin{restatable}{lemma}{complexityTransfer}\label{complexity-transfer}
	Let $S\subseteq\ZZ^{p+1}$ be semilinear and let $B\ge 1$. 
			If $\intreach([S]_{B,\vec{b}})$ is in $\AC^1$ for
			each $\vec{b}\in[0,B-1]^{p+1}$, then $\intreach(S)$
			belongs to $\AC^1$. %
			Moreover, if $\intreach([S]_{B,\vec{b}})$ is \hbox{$\NP$-hard} for some
			$\vec{b}\in[0,B-1]^{p+1}$, then $\intreach(S)$ is
			\hbox{$\NP$-hard}. %
	Analogous statements hold for $\natreach(\cdot)$.
\end{restatable}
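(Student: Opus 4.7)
The plan is to show that $\intreach(S)$ and the $B^{p+1}$ sub-problems $\intreach([S]_{B,\vec{b}})$ are interreducible in a way that transports $\AC^1$ upper bounds and $\NP$-hardness. Since $B$ depends only on $S$ and not on the input, a constant-size disjunction over the assumed $\AC^1$ circuits yields an $\AC^1$ circuit, while a single logspace reduction suffices to transfer $\NP$-hardness.

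For the $\AC^1$ direction, given input $(\cA, \vec{t})$ to $\intreach(S)$, first compute in $\AC^0$ the unique decomposition $\vec{t} = B\vec{t}' + \vec{b}'$ with $\vec{b}' \in [0,B-1]^p$; division by the constant $B$ is an $\AC^0$ operation. Then construct a residue-tracking automaton $\cA'$ with state space $Q \times [0,B-1]$, initial state $\langle q_0, 0\rangle$, and transitions
\[
	\bigl(\langle q, i\rangle,~(i+w-j)/B,~\langle r, j\rangle\bigr),
\]
one for each $(q,w,r) \in T$ and $i \in [0,B-1]$, where $j \in [0,B-1]$ is the unique residue with $j \equiv i+w \pmod B$. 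An easy induction on run length shows that $\cA'$ reaches $\langle q, i\rangle$ with counter $x'$ iff $\cA$ reaches $q$ with counter $Bx' + i$. Hence $\cA$ reaches $q_1$ with counter $y = Bx' + j$ iff $\cA'$ reaches $\langle q_1, j\rangle$ with counter $x'$; moreover $(\vec{t}, y) \in S$ iff $(\vec{t}', x') \in [S]_{B,(\vec{b}', j)}$. We therefore accept iff for some $j \in [0,B-1]$ the instance $(\cA'_j, \vec{t}')$ of $\intreach([S]_{B,(\vec{b}', j)})$ is positive, where $\cA'_j$ uses $\langle q_1, j\rangle$ as the final state. This is a constant-size disjunction over $\AC^1$ oracles on polynomial inputs, hence $\AC^1$.

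For the $\NP$-hardness direction, pick $\vec{b}^* = (\vec{b}'^*, b_x^*)$ such that $\intreach([S]_{B,\vec{b}^*})$ is $\NP$-hard. Given input $(\cA, \vec{t}')$ to that problem, form $\cA''$ by multiplying every transition weight in $\cA$ by $B$ and prepending a fresh initial state $q_0'$ with a single transition $(q_0', b_x^*, q_0)$, and set $\vec{t}'' = B\vec{t}' + \vec{b}'^*$. Any run of $\cA''$ from $q_0'$ to $q_1$ first adds $b_x^*$ then a multiple of $B$, so the reachable counter values at $q_1$ are precisely $\{Bx + b_x^* : \cA \text{ reaches } q_1 \text{ with counter } x\}$. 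Again $(\vec{t}'', Bx + b_x^*) \in S$ iff $(\vec{t}', x) \in [S]_{B,\vec{b}^*}$, so the reduction is correct; and it only requires multiplication by the constant $B$ and addition of constants, hence is in logspace.

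The analogous statements for $\natreach$ follow from exactly the same constructions, since both preserve non-negativity of transition weights. In the residue-tracking automaton, $(i+w-j)/B \ge 0$ when $w \ge 0$, because $j = (i+w) \bmod B \le i+w$; in the hardness reduction, scaling by $B \ge 1$ and adding the non-negative constant $b_x^*$ keep all weights non-negative. The only non-routine step is verifying the simulation invariant of the residue-tracking construction, which follows by a straightforward induction on run length, and should present no real obstacle.
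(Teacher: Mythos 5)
Your proof is correct and follows essentially the same approach as the paper's: residue-tracking (the paper's Lemma on ``unwrap-modulo'') for the $\AC^1$ direction, and scale-by-$B$ plus a $+b$ transition for the hardness direction, with the paper appending $+b$ at the final state rather than prepending it as you do. One small slip: computing $\vec{t} \bmod B$ and the residues $j \equiv i+w \pmod B$ from binary inputs is not an $\AC^0$ operation for general constant $B$ (it requires at least $\TC^0$), but since logspace suffices for these reductions---as the paper itself uses---this does not affect the conclusion.
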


\subsection{Making automata acyclic}
We now show that for $\intreach(S)$, we may assume that the input automaton is
acyclic. Let $\acintreach(S)$ be the problem $\intreach(S)$ where the input
must be an acyclic automaton.
\begin{restatable}{theorem}{makeAcyclic}\label{make-acyclic}
	For each semilinear $S\subseteq\ZZ^{p+k}$, the problem
	$\intreach(S)$ reduces in logspace to $\acintreach(S)$. 
\end{restatable}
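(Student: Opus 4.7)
The plan is to construct, in logspace from an input $(\cA, \vec{t})$ to $\intreach(S)$, an acyclic $\ZZ$-weighted automaton $\cA'$ whose set of reachable effects meets $S[\vec{t}]$ precisely when that of $\cA$ does. The key structural ingredient is the Parikh-style decomposition of runs of a $\ZZ$-weighted automaton: every run from $q_0$ to $q_1$ factors into a simple acyclic path $\pi$ from $q_0$ to $q_1$ in $\cA$ together with a collection of closed walks anchored at states of $\pi$. Since integer semantics imposes no non-negativity constraint on intermediate counter values, each closed walk can be freely rearranged into a non-negative integer combination of simple cycles inside the SCC of its anchor. Eisenbrand and Shmonin's Carath\'eodory theorem for integer cones then guarantees that such a combination admits a representative using only $N = O(k \log(kW))$ distinct simple cycles, where $W$ is the maximum absolute edge weight of $\cA$ and $k$ the counter dimension. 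The polynomial-size witnesses for $\ZZ$-VASS reachability used in~\cref{np-upper-bound} further cap the required multiplicities at a bit-length $L$ polynomial in $|\cA|$ and $|\vec{t}|$.

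With these two bounds in hand, $\cA'$ is built in three layers. First, I compute the SCC decomposition of $\cA$ and a topological order on SCCs; both lie in $\compL$. Second, $\cA'$'s backbone non-deterministically traces a simple $q_0$-to-$q_1$ path through the SCC condensation, entering and leaving each SCC at arbitrary states. Third, upon each visit to an SCC $U$, $\cA'$ passes through a ``cycle-insertion gadget'': a sequence of $N$ slots, where each slot non-deterministically selects a simple cycle $C$ in $U$ (realized by traversing $U$'s edges through a locally acyclic layered unfolding of $U$) and then proceeds through a chain of $L$ optional transitions of weights $\eff{C},\,2\eff{C},\,4\eff{C},\,\ldots,\,2^{L-1}\eff{C}$, achieving any multiple $\lambda \cdot \eff{C}$ with $\lambda \in [0, 2^L - 1]$ by repeated squaring. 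All gadgets are assembled edge-locally, so $\cA'$ has polynomial size and is producible in deterministic logspace.

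The main obstacle is twofold. First, I must verify that attaching closed walks at non-anchor states of an SCC incurs only absorbable overhead: the ``detour'' cost of walking from the skeleton state to a cycle's base state and back is itself a closed walk, hence a non-negative combination of simple cycle effects, which collapses into additional multiplicities within the same Eisenbrand--Shmonin bound (up to constants). Second, soundness and completeness of the simulation must be established. Soundness is immediate from the gadget semantics: every transition of $\cA'$ corresponds to a genuine $\cA$-walk of matching effect, and integer semantics allows these to be chained arbitrarily. Completeness requires the above normal-form argument together with the $\ZZ$-VASS small-witness property to show that whenever $\cA$ reaches some $\vec{x} \in S[\vec{t}]$, it does so with a witness of the precise ``skeleton plus binary-encoded cycle multiplicities'' shape that $\cA'$ generates. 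The absence of non-negativity in integer semantics keeps re-ordering of cycle insertions free, which is why this reduction is markedly cleaner here than in the natural- and VASS-semantic variants treated subsequently.
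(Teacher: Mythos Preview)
Your high-level plan matches the paper's: decompose runs into a short skeleton plus anchored cycles, invoke Eisenbrand--Shmonin to bound the number of distinct cycle effects, and use binary expansion to encode multiplicities. The gap is in the cycle-insertion gadget.

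You write that each slot ``non-deterministically selects a simple cycle $C$ \ldots\ and then proceeds through a chain of $L$ optional transitions of weights $\eff{C},\,2\eff{C},\,\ldots,\,2^{L-1}\eff{C}$.'' But $\eff{C}$ is not known at construction time: it is determined by the non-deterministic traversal just performed. To place $2^j\eff{C}$ on a single transition, you would either have to enumerate all simple cycles of the SCC (exponentially many) and branch on each, or carry $\eff{C}$ in the control state. Neither yields a polynomial-size automaton, so the gadget as described is not realizable in logspace.

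The paper resolves this with a different encoding. Rather than fixing one cycle $C$ per slot and then choosing its multiplicity, the gadget has, for each bit position $j\in[0,\lceil\log m\rceil]$, a fresh copy of the layered cycle-unfolding whose \emph{edge weights are pre-scaled by $2^j$}. Thus the $j$-th copy contributes $2^j\cdot\eff{C_j}$ for whatever cycle $C_j$ is traced there, and crucially the $C_j$ may all be different. This enlarges the set of produced effects from $([0,m]\cdot C)^{\le r}$ to $\bigl(\sum_j 2^j\cdot(C\cup\{0\})\bigr)^r$, but that set is still contained in $C^*$, so soundness is preserved. This ``different cycle per bit'' relaxation is the missing idea that makes the gadget polynomial-size.

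A secondary point: your detour argument (absorbing the walk from the skeleton state to a cycle's basepoint into extra cycle multiplicities) is plausible but left informal. The paper sidesteps this entirely via a rearrangement lemma showing every run is equivalent to one where a skeleton of length at most $n^2$ visits every relevant state, with all cycles being short $q$-cycles anchored directly on the skeleton.
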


\begin{proof}[Proof sketch]
	First, one can construct a polynomial-sized existential Presburger
	formula for the set of vectors over the transitions that correspond to
	a run reaching $S$ (because the Parikh image of an NFA admits a
	polynomial-sized existential Presburger
	formula~\cite[Thm.~1]{DBLP:conf/icalp/SeidlSMH04}). By solution size
	bounds for integer programming, this yields an exponential bound $\ell$
	on the length of a shortest run. This means, it suffices to construct
	an acyclic automaton $\cA'$ that simulates at least those runs of the
	input automaton $\cA$ that are of length $\le\ell$. By standard
	arguments about rearranging cycles, it suffices to do this for the set
	of runs that are concatenations of short $q$-cycles for some state $q$.
	Here, a \emph{short $q$-cycle} is a cycle of length at most $n$ (the
	number of states of $\cA$) that start and end in $q$. 

	A Carath\'{e}odory-type result about integer cones by Eisenbrand and
	Shmonin~\cite[Thm.~1]{EisenbrandS06} implies that the counter effect
	of such a concatenation can be produced by a concatenation that
	uses $\le r$ \emph{distinct} short $q$-cycles, for some polynomial
	$r$. Again by integer programming solution sizes (and our bound
	$\ell$), we may assume this modified concatenation to contain each of
	the $q$-cycles at most $m$ times, for some exponential $m$.

	To simulate such a concatenation, we build an acyclic gadget
	that, for each $j\in[0,\lceil \log m\rceil]$ in sequence, guesses
	a short $q$-cycle on-the-fly, and multiplies its effect by $2^j$.
	Picking different $q$-cycles for different $j$ is possible, but also
	yields runs consistent with $\cA$. Using the gadget $r$ times in a row
	simulates all concatenations as above. See \cref{app:make-acyclic}.
\end{proof}

\subsection{Unbounded isolation}
On the way to prove \cref{main-result-integer}, we will characterize the
modulo-free sets of local density zero using the notion of ``unbounded
isolation''. This is also used in 
\cref{main-result-integer-decidable}.

We say that $S\subseteq\ZZ^{p+1}$ \emph{has unbounded isolation} if there exists an $\ell\ge 1$ such that for every $\delta\ge 1$, there exists a $\vec{t}\in\ZZ^{p}$ and an $x\in\ZZ$ such that:
\begin{enumerate}[(a)]
	\item $S[\vec{t}]\cap [x,x+\ell-1]\ne\emptyset$,
	\item $S[\vec{t}]\cap [x-\delta,x-1]=\emptyset$, and
	\item $S[\vec{t}]\cap [x+\ell,x+\ell+\delta]=\emptyset$.
\end{enumerate}
In other words, we can find an interval $I$ of size $\ell$ (namely,
$[x,x+\ell-1]$) such that $I$ contains a point in $S[\vec{t}]$ and
on either side of $I$, there is an interval of size
$\delta$ that does not intersect $S[\vec{t}]$. Intuitively, $I$ is
isolated in the sense that it is neighbored by $\delta$-sized empty intervals.
A proof (and a sketch) of the following lemma can be found in~\cref{app:intreach-hardness}

\begin{restatable}{lemma}{intreachHardness}\label{intreach-hardness}
	Suppose $S\subseteq\ZZ^{p+1}$ has unbounded isolation. Then
	$\intreach(S)$ is $\NP$-hard.
\end{restatable}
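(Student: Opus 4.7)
The plan is to reduce Subset Sum to $\intreach(S)$. Fix the constant $\ell\ge 1$ supplied by unbounded isolation. Given a Subset Sum instance $(a_1,\dots,a_n,T)$ with $a_i,T\in\NN$, set $M \coloneqq T+\sum_{i=1}^n a_i$ and $\delta \coloneqq \ell M$. By unbounded isolation applied to $\delta$, there exist $\vec{t}\in\ZZ^p$, $x\in\ZZ$, and $x_0\in S[\vec{t}]\cap[x,x+\ell-1]$ satisfying (a)--(c). I first argue that these witnesses are computable from $\delta$ within the reduction's budget. Because $S$ is a fixed semilinear set, conditions (a)--(c) together with $x_0\in S[\vec{t}]\cap[x,x+\ell-1]$ form a fixed Presburger formula $\psi(\delta,\vec{t},x,x_0)$. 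The lexicographically smallest tuple satisfying $\psi$ is then Presburger-definable in $\delta$; by Presburger small-model bounds it has bit-size polynomial in $\log\delta$, hence in the input, and by \cref{compute-presburger} it is computable in logspace.

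Next I would construct an acyclic $\ZZ$-weighted automaton $\cA$ that encodes the subset choices. It has states $q_0,q_1,\dots,q_n,q_{n+1}$; for each $i\in\{1,\dots,n\}$ there are two transitions $q_{i-1}\xrightarrow{0}q_i$ (exclude) and $q_{i-1}\xrightarrow{\ell a_i}q_i$ (include), followed by a single shift $q_n\xrightarrow{x_0-\ell T}q_{n+1}$. Declaring $q_0$ initial and $q_{n+1}$ final, every accepting run terminates at counter value
\[ y \;=\; x_0+\ell(\sigma-T), \quad \text{where } \sigma=\textstyle\sum_{i\in I} a_i \text{ for some } I\subseteq\{1,\dots,n\}. \]
All transition weights and the supplied parameter $\vec{t}$ have polynomial bit-length, so the instance of $\intreach(S)$ has polynomial size.

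Correctness reduces to a short case analysis. If $\sigma=T$ then $y=x_0\in S[\vec{t}]$. If $\sigma>T$, then $y\ge x_0+\ell\ge x+\ell$, while $y\le x_0+\ell(M-T)\le (x+\ell-1)+\delta$, so $y$ lies in the right gap $[x+\ell,x+\ell+\delta]$, which is disjoint from $S[\vec{t}]$ by (c). Symmetrically, if $\sigma<T$ then $y\le x_0-\ell\le x-1$ and $y\ge x_0-\ell T\ge x-\delta$, placing $y$ in the left gap $[x-\delta,x-1]$ excluded by (b). Thus $y\in S[\vec{t}]$ iff $\sigma=T$, so the Subset Sum instance is positive iff the constructed $\intreach(S)$ instance is.

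I expect the main obstacle to be the \emph{constructive} extraction of witnesses: unbounded isolation is only an existence statement, yet the reduction must produce $(\vec{t},x,x_0)$ in polynomial time without any search over candidates. Expressing the lex-minimum witness as a Presburger-definable function of $\delta$ and appealing to \cref{compute-presburger} resolves this cleanly. The other, smaller, point of care is the calibration of the scaling factor $\ell$: multiplying the $a_i$'s by $\ell$ (rather than $1$) ensures that every wrong value of $\sigma$ shifts $y$ by at least $\ell$ away from $x_0$, forcing it out of the length-$\ell$ isolated window and into one of the adjacent empty gaps of size~$\delta$.
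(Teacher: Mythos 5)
Your proof is correct and follows essentially the same route as the paper: extract a Presburger-definable witness $(\vec{t},x,x_0)$ from $\delta$ via \cref{compute-presburger}, build a subset-sum gadget that scales by $\ell$ and shifts by $x_0-\ell T$, and observe that any wrong sum lands in one of the two empty $\delta$-gaps. The only point to tighten is your phrase ``lexicographically smallest tuple'': ordinary lexicographic order on $\ZZ^{p+2}$ is not well-founded, so you should instead use a Presburger-definable well-order (as the paper does, e.g.\ ordering by absolute values within each orthant); once that is fixed, the argument goes through exactly as written.
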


The conditions of unbounded isolation are directly expressible in Presburger
arithmetic. Thus, we obtain:
\begin{lemma}\label{unbounded-isolation-decidable}
	Given a modulo-free Presburger formula for \hbox{$S\subseteq\ZZ^{p+1}$}, it is
	decidable whether $S$ has unbounded isolation.
\end{lemma}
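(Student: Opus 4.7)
The plan is to observe that unbounded isolation of $S$ is directly expressible as a closed sentence of Presburger arithmetic, and then to invoke the classical decidability of Presburger arithmetic. Since the definition of unbounded isolation is itself first-order over $S$, there is no real obstacle; the proof is essentially a direct translation of the definition.

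First, I will use the input (modulo-free) Presburger formula $\varphi_S(\vec{t},y)$ defining $S$. Because Presburger arithmetic is closed under negation, both $y \in S[\vec{t}]$ (via $\varphi_S$) and $y \notin S[\vec{t}]$ (via $\neg \varphi_S$) are Presburger-expressible. I will then write the unbounded-isolation condition verbatim as the sentence
\[
\Phi_S \;\equiv\; \exists \ell \ge 1\; \forall \delta \ge 1\; \exists \vec{t}\; \exists x \;\bigl[\alpha(\vec{t},x,\ell) \wedge \beta(\vec{t},x,\delta) \wedge \gamma(\vec{t},x,\ell,\delta)\bigr],
\]
where $\alpha(\vec{t},x,\ell) \equiv \exists y\,(x \le y \le x+\ell-1 \wedge \varphi_S(\vec{t},y))$ encodes condition~(a), and $\beta,\gamma$ encode conditions (b) and (c) analogously as bounded universal quantifications over $z$ asserting $\neg\varphi_S(\vec{t},z)$. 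The bounds $\ell \ge 1$ and $\delta \ge 1$ are simply conjoined atomic constraints, which is permissible in Presburger arithmetic over $\ZZ$.

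Since $\Phi_S$ is a closed Presburger sentence obtainable in logspace from $\varphi_S$, the classical decidability theorem for Presburger arithmetic yields an algorithm deciding its truth, and hence decides whether $S$ has unbounded isolation. The only step that requires any care is ensuring the translation is faithful---in particular, that the range constraints in (a)--(c) become bounded quantifiers over $z$ rather than unbounded ones---but this is routine. I note that the modulo-free hypothesis is not needed for mere decidability; it would only matter if one sought refined complexity bounds for the resulting procedure (observing, for instance, that $\Phi_S$ lies in the $\exists \forall \exists$ fragment of Presburger arithmetic).
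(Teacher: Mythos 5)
Your proposal is correct and takes essentially the same approach as the paper: the paper's justification is precisely the one-line observation preceding the lemma, that unbounded isolation is directly expressible in Presburger arithmetic (yielding a closed sentence whose truth is decidable). Your more detailed translation into the sentence $\Phi_S$ and your remark that modulo-freeness is inessential for bare decidability are both accurate.
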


\subsection{Algorithmic building block}
We now define a collection of subsets
$S^{(\rho,m)}\subseteq\ZZ^{2m+1}\times\ZZ$ that will serve as the basic
building blocks in our $\AC^1$ algorithm for $\intreach(S)$ if $S$ has positive local density. Specifically, we will show that for each set $S^{(\rho,m)}$ in our
collection, we have $\intreach(S^{(\rho,m)})$ in $\AC^1$, and that for every
$S$ with positive local density, the problem $\intreach(S)$ reduces to finitely many
instances of $\intreach(S^{(\rho,m)})$ for some $\rho,m$. 

\subsubsection*{Interval notation} It will be convenient to use notations for
distance and length of intervals.  For two disjoint intervals
$I,J\subseteq\ZZ$, let $d(I,J)$ be the minimum distance between a point in $I$
from a point in $J$. By $|I|$, we denote the length of $I$ (i.e.~the difference
between the largest and smallest point). 
For disjoint intervals $I$ and $J$, we write  $I<J$ 
if the numbers in $I$ are smaller than those in $J$.
We observe a \emph{triangle inequality for intervals}: If $I,J,K$ are pairwise
disjoint intervals, then $d(I,J)\le d(I,K)+|K|+d(K,J)$.
This is trivial if $K$ is infinite. If $K$ is finite and wlog $I<J$, then one can check this by considering all possible arrangements: $K,I,J$ and
$I,K,J$, and $I,J,K$. 
A central notion in our proof is the notion of a $\rho$-chain of intervals---a sufficient condition for the existence of a polynomial-time (even $\AC^1$) algorithm for reachability.
\begin{definition}[$\rho$-chain of intervals]\label{def:chain}
Let $\rho\ge 1$ be a constant. A sequence $I_1,\ldots,I_{m+1}$ of intervals is
called a \emph{$\rho$-chain of intervals} if $I_{m+1}$ is one-sided infinite and
	\begin{enumerate}[label=\textup{(A\arabic*)}, ref={(A\arabic*)}, leftmargin=0.5in]
		\item\label{itm:disjoint} $I_1, \ldots, I_m, I_{m+1}$ are pairwise disjoint, 
		\item\label{itm:sizes} $|I_1| \leq |I_2| \leq \cdots \leq |I_{m}|$,
		\item\label{itm:distances} for every $i \in [1,m], d(I_i, I_{i+1}) \leq \rho \cdot |I_i|$, and
		\item\label{itm:december} for every $i \in [2, m+1]$, either
		\begin{enumerate}[(i)]
			\item $I_i > I_j$ for all $j\in[1,i-1]$, or 
			\item $I_i < I_j$ for all $j\in[1,i-1]$.
		\end{enumerate} 
	\end{enumerate}
\end{definition}

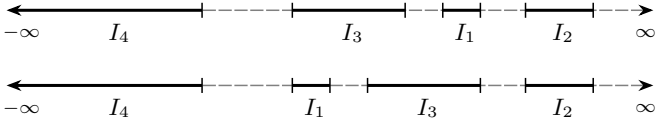
\begin{figure}
	\centering
	\begin{tikzpicture}
	\draw[gray, {Stealth[width=1.5mm, length=2mm, black]}-{Stealth[width=1.5mm, length=2mm, black]}, dashed, dash pattern={on 5pt off 1pt}, dash phase = 1.4, line width = 0.2mm] (-1.8, -1) -- (6.8, -1);
	\node at (-1.6, -1.3) {\scriptsize$-\infty$};
	\node at (6.7, -1.3) {\scriptsize$\infty$};

	\draw[line width = 0.2mm] (0.8, -0.9) -- (0.8, -1.1);
	\draw[line width = 0.4mm] (-1.7, -1) -- (0.8, -1);
	\node at (-0.3, -1.3) {\small$I_4$};

	\draw[line width = 0.2mm] (2, -0.9) -- (2, -1.1);
	\draw[line width = 0.2mm] (2.5, -0.9) -- (2.5, -1.1);
	\draw[line width = 0.4mm] (2, -1) -- (2.5, -1);
	\node at (2.3, -1.3) {\small$I_1$};

	\draw[line width = 0.2mm] (3, -0.9) -- (3, -1.1);
	\draw[line width = 0.2mm] (4.5, -0.9) -- (4.5, -1.1);
	\draw[line width = 0.4mm] (3, -1) -- (4.5, -1);
	\node at (3.8, -1.3) {\small$I_3$};

	\draw[line width = 0.2mm] (5.1, -0.9) -- (5.1, -1.1);
	\draw[line width = 0.2mm] (6, -0.9) -- (6, -1.1);
	\draw[line width = 0.4mm] (5.1, -1) -- (6, -1);
	\node at (5.6, -1.3) {\small$I_2$};

    \draw[gray, {Stealth[width=1.5mm, length=2mm, black]}-{Stealth[width=1.5mm, length=2mm, black]}, dashed, dash pattern={on 5pt off 1pt}, dash phase = 1.4, line width = 0.2mm] (-1.8, 0) -- (6.8, 0);
	\node at (-1.6, -0.3) {\scriptsize$-\infty$};
	\node at (6.7, -0.3) {\scriptsize$\infty$};

	\draw[line width = 0.2mm] (0.8, 0.1) -- (0.8, -0.1);
	\draw[line width = 0.4mm] (-1.7, 0) -- (0.8, 0);
	\node at (-0.3, -0.3) {\small$I_4$};

	\draw[line width = 0.2mm] (2, 0.1) -- (2, -0.1);
	\draw[line width = 0.2mm] (3.5, 0.1) -- (3.5, -0.1);
	\draw[line width = 0.4mm] (2, 0) -- (3.5, 0);
	\node at (2.8, -0.3) {\small$I_3$};

	\draw[line width = 0.2mm] (4, 0.1) -- (4, -0.1);
	\draw[line width = 0.2mm] (4.5, 0.1) -- (4.5, -0.1);
	\draw[line width = 0.4mm] (4, 0) -- (4.5, 0);
	\node at (4.3, -0.3) {\small$I_1$};

	\draw[line width = 0.2mm] (5.1, 0.1) -- (5.1, -0.1);
	\draw[line width = 0.2mm] (6, 0.1) -- (6, -0.1);
	\draw[line width = 0.4mm] (5.1, 0) -- (6, 0);
	\node at (5.6, -0.3) {\small$I_2$};
\end{tikzpicture}
	\vspace{-0.2in}
	\caption{
		An illustration for Condition~\ref{itm:december} of~\cref{def:chain} (when $m = 3$).
		The solid black horizonal line segments represent a sequence of pairwise disjoint intervals $I_1, I_2, I_3, I_4$  where  $|I_1| \leq \ldots \leq |I_4| = \infty$. The dashed grey horizonal line segments represent the complement of $I_1 \cup I_2 \cup I_3 \cup I_4$.
		The top picture shows a scenario when Condition~\ref{itm:december} is satisfied; notice that for $i = 3$, $I_3 < I_1$ and $I_3 < I_2$ (and for $i = 2$, $I_2 > I_1$).
		The bottom picture shows a scenario when Condition~\ref{itm:december} is falsified; observe for $i = 3$: $I_3 > I_1$ but $I_3 < I_2$.
	}
	\label{fig:a4}
\end{figure}

Roughly speaking, Condition~\ref{itm:december} requires that for every interval $I_i$, all smaller intervals $I_1, \ldots, I_{i-1}$ are either to the left or to the right; see~\cref{fig:a4} for a pair of examples.

The target sets of our algorithmic building blocks rely on the notion of $\rho$-chains of intervals.

\begin{definition}[Target sets and $\rho$-admissibility]\label{def:rho-admissible}
	Suppose $\rho>1$ is a rational number and $m\ge 1$. 
	We define the set $S^{(\rho,m)}\subseteq\ZZ^{2m+1}\times\ZZ$ (i.e.\ it has $2m+1$ parameters) as follows.
	We write the parameter vectors $\vec{t} \in \ZZ^{2m+1}$ as $\vec{t} = \langle s_1, t_1, \ldots, s_m, t_m, s_{m+1}\rangle$. 
	Let $I_i = [s_i, t_i]$ for $i \in [1,m]$, and $I_{m+1} = [s_{m+1}, \infty)$.
	If $I_1,\ldots,I_{m+1}$ is a $\rho$-chain of intervals, then $\vec{t}$ is said to be \emph{$\rho$-admissible}, and we set
	\begin{multline*}
		S^{(\rho,m)}[s_1,t_1,\ldots,s_m,t_m, s_{m+1}] \coloneqq I_1 \cup \cdots \cup I_m \cup I_{m+1}.
	\end{multline*}
	Otherwise, we set %
$
		S^{(\rho,m)}[s_1,t_1,\ldots,s_m,t_m, s_{m+1}] \coloneqq \emptyset.
$
\end{definition}

\subsection{Relating density, isolation, and building blocks}
\begin{definition}[Transformed building blocks]\label{def:building-block}
	We say that a set $S\subseteq\ZZ^{p+1}$ is a \emph{transformed building block} if there is a $\rho > 0$, an $m \ge 1$, and Presburger-definable functions \hbox{$\tau\colon \ZZ^p \to \ZZ^{2m+1}$} and \hbox{$\sigma \colon \ZZ^p \to \set{0, 1}$} such that for every $\vec{t} \in \ZZ^p$:
	\begin{equation*}
		S[\vec{t}] = (-1)^{\sigma(\vec{t})} \cdot S^{(\rho,m)}[\tau(\vec{t})]. 
	\end{equation*}
\end{definition}

In other words, $S$ is essentially a set of the form $S^{(\rho,m)}$, but the parameters for $S$ are not directly the interval ends---they have to be computed by a function $\tau$. Moreover, the function $\sigma$ allows us to flip the sign.

\begin{proposition}\label{integer-dichotomy-equivalence}
	Suppose $S\subseteq\ZZ^{p+1}$ is a modulo-free set. Then the following
	are equivalent:
	\begin{enumerate}[(1)]
		\item\label{equivalence-positive-local-density} $S$ has positive local density.
		\item\label{equivalence-no-unbounded-isolation} $S$ does \emph{not} have unbounded isolation.
		\item\label{equivalence-finite-union} $S$ is a finite union of transformed building blocks.
	\end{enumerate}
\end{proposition}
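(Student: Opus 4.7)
My plan is to prove the cycle of implications $(1)\Rightarrow(2)\Rightarrow(3)\Rightarrow(1)$.

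The easiest direction is $(1)\Rightarrow(2)$ by contrapositive. Assume $S$ has unbounded isolation with parameter $\ell$. Given any $\delta\ge 1$, pick $\vec{t}$ and $x$ as in the definition, and any $y\in S[\vec{t}]\cap[x,x+\ell-1]$. Since $[y-\delta,y+\delta]\sset[x-\delta,x+\ell+\delta-1]$ meets $S[\vec{t}]$ only inside $[x,x+\ell-1]$, we get $\density_{1,\delta}(S[\vec{t}],y)\le \ell/(2\delta+1)$, forcing $\density(S)=0$ as $\delta\to\infty$.

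For $(3)\Rightarrow(1)$, I would use that positive local density is preserved under finite unions (as stated after \cref{exa:z5}) and is invariant under negation of the counter coordinate (by reflection symmetry of $\density_{k,n}$), so it suffices to prove $\density(S^{(\rho,m)})>0$ for each $\rho>1$ and $m\ge 1$. Fix a $\rho$-admissible $\vec{t}$ with chain $I_1,\ldots,I_{m+1}$, a point $x\in I_j$, and integers $k,n\ge 1$. I would walk outward along the chain from $I_j$: by \ref{itm:sizes} and \ref{itm:distances}, once the window $x+k\cdot[-n,n]$ extends past the gap of size $\le\rho\cdot|I_j|$ into the larger interval $I_{j+1}$, a positive fraction of the newly seen span hits $S^{(\rho,m)}[\vec{t}]$; and \ref{itm:december} guarantees that the smaller intervals $I_1,\ldots,I_{j-1}$ sit on one side of $I_j$ and cannot force the window through disproportionate empty regions before re-entering $S^{(\rho,m)}[\vec{t}]$. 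Iterating up the chain yields a uniform lower bound $c(\rho,m)>0$ on $\density_{k,n}(S^{(\rho,m)}[\vec{t}],x)$, generalizing the computations in \cref{exa:z3,exa:z4}.

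The main work is $(2)\Rightarrow(3)$. Since $S$ is modulo-free, the interval-endpoint computability facts of \cref{sec:general-techniques} yield a constant $K$ and Presburger-definable functions describing $S[\vec{t}]$ as at most $K$ intervals. I would partition $\ZZ^p$ into finitely many Presburger-definable \emph{types}, each fixing which intervals are nonempty, which are one-sided infinite, and the total preorder of intervals by size together with their left/right position. On each type $P$, conditions \ref{itm:disjoint} and \ref{itm:sizes} are arranged by merging overlapping intervals and relabeling. When the full interval list violates \ref{itm:december}, I split it into a bounded number of sub-chains each sharing an unbounded ray of the type (such a ray must exist under (2), since a type of constant singletons immediately witnesses unbounded isolation), so that every sub-chain individually satisfies \ref{itm:december}. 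By \cref{compute-presburger}, the corresponding $\tau$ and $\sigma$ from \cref{def:building-block} are Presburger-definable, and each sub-chain yields a transformed building block whose union over sub-chains and types equals $S$.

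The critical remaining step, and what I expect to be the main obstacle, is verifying \ref{itm:distances} with a \emph{uniform} $\rho$ on each sub-chain. I would argue by contradiction: if $\sup_{\vec{t}\in P}d(I_i,I_{i+1})/|I_i|=\infty$, then by Presburger-definability of the ratios there is a sequence $(\vec{t}_n)\subseteq P$ along which the ratio diverges. Picking $x\in I_i(\vec{t}_n)$ and using \ref{itm:december}---which forces the smaller sub-chain intervals to sit on one side of $I_i$---together with the triangle inequality for intervals, I obtain a $\delta_n$-sized empty region on the $I_{i+1}$-side of a window of length $|I_i|$ around $x$; an analogous argument on the opposite side uses that any further sub-chain interval is either on the same far side as $I_{i+1}$ (hence beyond the diverging gap) or on the opposite side, where the same ratio-divergence recurs. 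Restricting to a further Presburger sub-case in which $|I_i|$ is uniformly bounded by some constant $\ell$ (the case $|I_i|\to\infty$ is handled by rescaling, since then the gap grows even faster) yields unbounded isolation with parameter $\ell$, contradicting (2). Presburger-definability of the relevant ratios makes the check effective, which additionally underpins \cref{main-result-integer-decidable}.
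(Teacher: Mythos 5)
Your directions $(1)\Rightarrow(2)$ and $(3)\Rightarrow(1)$ match the paper's Claims~\ref{clm:equivalence-1implies2} and \ref{clm:equivalence-3implies1}: the first is exactly the contrapositive computation the paper does, and the second is the same ``union preserves positive density, so it suffices to lower-bound $\density(S^{(\rho,m)})$'' observation. These are fine.

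The gap is in $(2)\Rightarrow(3)$, and it is exactly where you flagged it. Two things go wrong. First, from $\sup_{\vec{t}}d(I_i,I_{i+1})/|I_i|=\infty$ you cannot directly conclude unbounded isolation. Unbounded isolation requires a \emph{fixed} $\ell$ such that, for every $\delta$, some $\vec{t}$ exhibits an $\ell$-wide nonempty patch with $\delta$-sized empty gaps on \emph{both} sides. If $|I_i[\vec{t}_n]|\to\infty$ while the gap ratio diverges, the width of the isolated region grows too, so your ``rescaling'' parenthetical does not rescue the argument --- the definition of unbounded isolation does not rescale. The paper gets the fixed $\ell$ by a genuinely different mechanism: it packages into a semilinear subset $A\subseteq\N^2$ the pair $\langle\min(|K_{i_1}|,|K_{i_2}|),\ \sum(\text{intervals and gaps in between})\rangle$ and applies a lemma about linear sets (\cref{ratio-lemma}) that says an unbounded ratio in a semilinear $A$ forces a bounded second coordinate along which the first diverges. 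Your contradiction argument contains no analogue of this step, so it does not produce the constant $\ell$.

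Second, your account of the decomposition itself --- ``split into sub-chains each sharing an unbounded ray'' --- silently assumes that the grouping of intervals into $\rho$-chains can be fixed once per Presburger type. It cannot: which interval serves as the ``next step'' for $I_i[\vec{t}]$ can genuinely depend on $\vec{t}$, which is why the paper introduces harbors and the harbor-existence \cref{harbor-existence}, and why the decomposition ranges over all index \emph{sequences} $\chi$ (with the Presburger condition ``$\chi$ is a harbor chain for $\vec{t}$'' selecting the relevant parameters). To even prove harbor existence, the paper needs that there is no $\omega$-constellation: without that, you cannot guarantee that at least one side of $I_i[\vec{t}]$ has a bounded-ratio path to a larger interval. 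Your contradiction argument, which only looks at a single diverging ratio $d(I_i,I_{i+1})/|I_i|$, does not establish this two-sided, all-intermediate-intervals statement. The ideal decomposition of ratio matrices (the WQO step) is what makes ``no $\omega$-constellation'' a usable, finitely-described hypothesis; it is not merely a convenient reformulation but the load-bearing combinatorics, and it is missing from your proposal.
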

The implications
\impl{equivalence-positive-local-density}{equivalence-no-unbounded-isolation}
and \impl{equivalence-finite-union}{equivalence-positive-local-density} are
mostly straightforward.  For the latter, one can observe that transformed
building blocks have positive local density. Intuitively, this is because when
starting from a point in some interval $I$ in a $\rho$-chain of intervals, the
distance to the next interval $J$ in the chain is always at most $\rho$ times
the length of $I$. Therefore, we ``collect enough points'' during $I$ to get us
over the gap to $J$, while maintaining a probability of roughly $\ge
1/(\rho+1)$ of hitting $I$. Once we reach $J$, we then collect more points from
$J$ that will then get us over the next gap, etc. Furthermore, for
$S,S'\subseteq\ZZ^{p+1}$, we have $\density(S\cup S')\ge
\min(\density(S),\density(S'))$, so that a positive density of the constituents
yields a positive density of the entire union. See
Claims~\ref{clm:equivalence-1implies2} and~\ref{clm:equivalence-3implies1}
in~\cref{app:equivalence-finite-union-implications}.

Accordingly, the remainder of this section concentrates on the implication \impl{equivalence-no-unbounded-isolation}{equivalence-finite-union}.
We first introduce some basic definitions and then some concepts from well-quasi ordering (WQO) theory.

\subsubsection*{Interval-uniform sets} For subsets $S,S_1,\ldots,S_n\subseteq\ZZ^{p+1}$, we write
$S=S_1\oplus\cdots\oplus S_n$ if $S=S_1\cup\cdots\cup S_n$ and for each $\vec{t}\in\ZZ^p$, there is at most one $i\in[1,n]$ such that $S_i[\vec{t}]\ne\emptyset$. Observe that in this case, if some $S_i$ has unbounded isolation, then so does $S$. Moreover, if each $S_i$ is a finite union of transformed building blocks, then so is $S$. This means, if we decompose $S$ in this way, then proving the implication \impl{equivalence-no-unbounded-isolation}{equivalence-finite-union} for each $S_i$ will also establish it for $S$. This will allow us to assume $S$ to be ``interval-uniform'', which we define now.

If a set $A\subseteq\ZZ$ is a finite union of intervals (equivalently, $A$ is definable by a modulo-free Presburger formula), then $A$ has a \emph{canonical decomposition}, meaning $A$ can be written uniquely as $A=I_1\cup\cdots\cup I_m$ such that 
\begin{enumerate}[(a)]
	\item $I_1, \ldots, I_m$ are pairwise disjoint, 
	\item $I_1 < \cdots < I_m$, and 
	\item for all $i \in [2,m]$, $d(I_{i-1}, I_{i}) \ge 2$.
\end{enumerate}
In this case, the \emph{type} of $A$ is the tuple $(c_1, \ldots, c_m) \in
\set{1, -\infty, \infty, 2\infty}^m$, where
\begin{equation*}
	c_i =
	\begin{cases}
		\infty & 	\text{if } I_i = [s,\infty) \text{ for some } s \in \ZZ, \\
		-\infty & 	\text{if } I_i = (-\infty,s] \text{ for some } s \in \ZZ, \\
		2\infty & 	\text{if } I_i = \ZZ, \text{ and }\\
		1 &			\text{if } I_i \text{ is finite.}
	\end{cases}
\end{equation*}
Thus, the type encodes (i)~how many intervals 
there are, and (ii)~which of them are finite, one-sided infinite (and on which
side), or two-sided infinite.  We say that $S\subseteq\ZZ^{p+1}$ is
\emph{interval-uniform} if it is definable by a modulo-free Presburger formula,
and also all sets $S[\vec{t}]$ are of the same type, except for those with
$S[\vec{t}]=\emptyset$. The following is a consequence of the fact that the set
of $\vec{t}$ for which $S[\vec{t}]$ has a particular type is
Presburger-definable (see
\cref{app:interval-uniform-decomposition}).
\begin{restatable}[Inverval-uniform decomposition]{proposition}{intervalUniformDecomposition} \label{pro:interval-uniform-decomposition}
	Let \hbox{$S \sset \ZZ^{p+1}$} be a set given by a modulo-free
	Presburger formula.  Then $S$ can be written as
	$S=S_1\oplus\cdots\oplus S_n$, where $n\in\NN$ and $S_1,
	\ldots, S_n \sset \ZZ^{p+1}$ are interval-uniform.  
\end{restatable}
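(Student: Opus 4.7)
The plan is to partition the parameter space $\ZZ^p$ by the type that the slice $S[\vec{t}]$ realises, and take each $S_i$ to be the restriction of $S$ to one cell of the partition. The $\oplus$-condition will then follow immediately from disjointness of cells, and interval-uniformity of each $S_i$ will hold because every non-empty slice inside a fixed cell shares the prescribed type by construction.

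First I would observe that only finitely many types occur. Since $S$ is defined by a modulo-free quantifier-free Presburger formula, namely a Boolean combination of atomic linear inequalities of the form $\vec{a}^\top\vec{t}+ax\le b$, each atom contributes at most one breakpoint to $S[\vec{t}]$ as $x$ varies. Hence the number $m$ of maximal intervals in the canonical decomposition of $S[\vec{t}]$ is uniformly bounded by a constant $M$ depending only on the formula, and the set $\mathcal{T}$ of candidate types $\tau=(c_1,\dots,c_m)$ with $m\le M$ and entries in $\{1,-\infty,\infty,2\infty\}$ is finite. For each $\tau\in\mathcal{T}$ define $T_\tau\coloneqq\{\vec{t}\in\ZZ^p:S[\vec{t}]\text{ has type }\tau\}$ and $S_\tau\coloneqq S\cap(T_\tau\times\ZZ)$. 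Using the Presburger-computability of the endpoint functions of $S[\vec{t}]$ from the appendix on computing interval endpoints, membership in each $T_\tau$ is expressible as a Presburger condition on those endpoint functions, demanding gaps of at least $2$ between consecutive intervals and the prescribed finite-versus-infinite behaviour at the extremal ends. Since the $T_\tau$ are pairwise disjoint and together cover $\ZZ^p$, and each slice $S_\tau[\vec{t}]$ is either $S[\vec{t}]$ or empty, this already delivers $S=\bigoplus_{\tau\in\mathcal{T}}S_\tau$ as a set identity.

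The main obstacle is to ensure that each $S_\tau$ is \emph{modulo-free}, as required by the definition of interval-uniform, since the endpoint functions can involve floor operations whenever the defining formula has non-unit coefficients on $x$, causing the natural Presburger description of $T_\tau$ to use modulo constraints. To resolve this, I would apply \cref{lem:modulo-free} to each $T_\tau\subseteq\ZZ^p$ to obtain a common modulus $B\ge 1$ for which every residue class $[T_\tau]_{B,\vec{b}}$ with $\vec{b}\in[0,B-1]^p$ is modulo-free, and refine the partition into $(\tau,\vec{b})$-pieces $S_{\tau,\vec{b}}$. Within any such residue class the floor expressions appearing in the endpoint functions collapse to purely linear expressions in $\vec{t}$ with fixed integer correction terms determined by $\vec{b}$, so the restriction of $S$ to the cell admits a modulo-free defining formula, while all non-empty slices retain the common type $\tau$. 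Taking the disjoint union over all $(\tau,\vec{b})$ yields the required decomposition $S=\bigoplus_{\tau,\vec{b}}S_{\tau,\vec{b}}$ into interval-uniform pieces.
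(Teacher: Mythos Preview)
Your partition-by-type strategy is exactly the paper's proof: invoke \cref{lem:intervals} to get a uniform bound $m$ on the number of intervals, note that the set $W_\tau=\{\vec{t}:S[\vec{t}]\text{ has type }\tau\}$ is Presburger-definable via those endpoint functions, set $S_\tau[\vec{t}]=S[\vec{t}]$ for $\vec{t}\in W_\tau$ and $\emptyset$ otherwise, and read off $S=\bigoplus_\tau S_\tau$. The paper stops here and simply asserts that each $S_\tau$ is interval-uniform, without separately arguing modulo-freeness.

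You correctly spot that modulo-freeness of $S_\tau$ is not automatic, but your repair via \cref{lem:modulo-free} does not work. That lemma, applied to $T_\tau\subseteq\ZZ^p$, produces a modulus $B$ for which the \emph{rescaled} classes $[T_\tau]_{B,\vec{b}}=\{\vec{u}\in\ZZ^p:B\vec{u}+\vec{b}\in T_\tau\}$ are modulo-free in the fresh variable $\vec{u}$; it says nothing about the original cells $\{\vec{t}\in T_\tau:\vec{t}\equiv\vec{b}\pmod B\}$ in the variable $\vec{t}$. Any formula defining your $S_{\tau,\vec{b}}\subseteq\ZZ^{p+1}$ in the original coordinates must still encode $\vec{t}\equiv\vec{b}\pmod B$, which is itself a modulo constraint --- the fact that the endpoint floors become affine once the residue is fixed does not remove it. Concretely, take the modulo-free set $S=\{(t,x)\in\ZZ^2:3x=2t\ \lor\ x\le -1\}$: for $t\ge 0$ the slice type is $(-\infty,1)$ when $3\mid t$ and $(-\infty)$ otherwise, and your piece for type $(-\infty)$ and any residue $b\in\{1,2\}$ is $\{(t,x):t\equiv b\pmod 3,\ x\le -1\}$, which is not a finite Boolean combination of linear inequalities. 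Reparametrising to $\vec{u}$ would make each piece modulo-free, but then the pieces no longer $\oplus$-sum to $S$ inside $\ZZ^{p+1}$.
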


\subsubsection*{Ratio matrices and ideals} By our observation above---that it suffices to prove the implication
\impl{equivalence-no-unbounded-isolation}{equivalence-finite-union} for
constituents of a $\oplus$-decomposition, we now know that it suffices to prove
\impl{equivalence-no-unbounded-isolation}{equivalence-finite-union} for
interval-uniform sets $S$. 

This allows us to define a key ingredient in our proof (``ratio ideals''), a combinatorial object that encodes which gaps can grow unboundedly compared to interval sizes.
This requires some definitions. 
For a subset $A\subseteq\ZZ$ that is a finite union of intervals, let $A = I_1 \cup \cdots \cup I_m$ be its canonical decomposition. 
Its complement $\ZZ \setminus A$ is also a finite union of intervals, and
thus has a canonical decomposition $\ZZ \setminus A = K_1 \cup \cdots \cup K_r$.  
Here, $r$ can be $m-1$, $m$, or $m+1$; the exact value of $r$ depends on the number of unbounded invervals in $A$.
In fact, the type of $\ZZ\setminus A$, only depends on the type of $A$. 
The \emph{ratio matrix of $A$} is an $r\times m$ matrix $M_A \in \Mat(r \times m, \NN_\omega)$ with entries in $\NN_\omega\coloneqq \NN\cup\{\omega\}$, where
\begin{equation*} 
	(M_A)_{i,j}\coloneqq
	\begin{cases} 
		\left\lceil\frac{|K_i|}{|I_j|}\right\rceil 
			& \text{if $|K_i| \in \NN$,} \\ 
		\omega 
			& \text{if $|K_i|=\infty$.}
	\end{cases}
\end{equation*}
Note that since all $I_j$ are non-empty, the denominator is always non-zero. 
Put in other words, the entry $(M_A)_{i,j}$ is the quotient of the size of the $i$-th gap and the size of the $j$-th interval.

If $S \sset \ZZ^{p+1}$ is interval-uniform, then for every $\vec{t}\in\ZZ^p$, the set $\ZZ \setminus S[\vec{t}]$ has the same number $r \in \NN$ of intervals.
Hence, we obtain a set $\MM \sset \Mat(r \times m, \NN_{\omega})$ by defining 
\begin{equation*}
	\MM \coloneqq \set{M_{S[\vec{t}]} \mid \vec{t} \in \ZZ^p \text{ and } S[\vec{t}] \neq \emptyset}.
\end{equation*}

Consider a subset $U\subseteq \Mat(r \times m, \NN_{\omega})$. 
The \emph{downwards-closure} of $U$, denoted $\downclose{U}$ is the set of all matrices in \hbox{$M\in \Mat(r \times m, \NN)$} such that there is a $M'\in U$ with $M_{i,j} \leq M'_{i,j}$ for all $i,j$. If $U=\{D\}$ is a singleton, we also write $\downclose{D}$ for $\downclose{U}$.
We will now take the ideal decomposition of $\MM$.
For this, we interpret $\MM$ as a set of vectors in $(\NN_{\omega})^{r \cdot m}$ by interpreting each matrix $M_{S[\vec{t}]} \in \MM$ as a vector in $(\NN_{\omega})^{r \cdot m}$.
It does not matter exactly how this is achieved, for example, one can append the rows of each matrix together.
The result is that there exists a finite collection of matrices $D_1, \ldots, D_z \in \Mat(r \times m, \NN_{\omega})$ that are incomparable w.r.t.\ $\le$ and such that
\begin{equation*}	
	\downclose{\MM} =  \downclose{D_1} \cup \cdots \cup \downclose{D_z}.
\end{equation*}
The sets $\downclose{D_i}$ are ideals in the set $(\Mat(r\times m,\NN_\omega),\le)$, viewed as a WQO. Therefore, we call the sets $\downclose{D_i}$ the \emph{ratio ideals} of $S$. In particular,
the existence and uniqueness (up to ordering) of $D_1,\ldots,D_z$ are well-known facts in WQO theory~\cite{thesisHalfon}.

\subsubsection*{$\omega$-constellations} The notion of ratio ideals now allows
us to connect the lack of unbounded isolation with transformed building blocks:
We will show that if an interval-uniform $S$ has no unbounded isolation, then
this rules out a particular pattern called ``$\omega$-constellation'' in the
ratio ideals.

Let $S\sset\ZZ^{p+1}$ be interval-uniform.  We say that $S$ has
an \emph{$\omega$-constellation} if for some of the ratio ideals $D$ for $S$, there are 
$i_1, i_2 \in [1, r]$ such that $i_1 < i_2$ and, for all $j \in [1, m]$ such
that $K_{i_1} < I_j < K_{i_2}$, the entries $D_{i_1, j} = \omega$ and $D_{i_2,
j} = \omega$.  In other words, there exist two gaps $K_{i_1}$ and $K_{i_2}$
whose length has unbounded ratio compared to the lengths of all intervals $I_j$
in between these two gaps. With this notion in hand, we can phrase the following
two propositions, which together imply
\impl{equivalence-no-unbounded-isolation}{equivalence-finite-union} of
\cref{integer-dichotomy-equivalence} (due to \cref{pro:interval-uniform-decomposition}).

\begin{restatable}{proposition}{noIsolationNoConstellation}\label{no-isolation-no-constellation}
	If an interval-uniform $S\sset\ZZ^{p+1}$ has an \hbox{$\omega$-constellation},
	then $S$ has unbounded isolation.
\end{restatable}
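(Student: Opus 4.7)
The plan is to produce, for each $\delta\ge 1$, a parameter $\vec{t}$ for which the two flanking gaps $K_{i_1}^{\vec{t}}$ and $K_{i_2}^{\vec{t}}$ both have length at least $\delta$, while the ``island'' of $S[\vec{t}]$-points lying between them has total extent bounded by a single constant $\ell$ depending only on $S$. Writing $J=\{j_1<\cdots<j_q\}$ for the set of interval indices with $K_{i_1}<I_j<K_{i_2}$ and taking $x$ to be the leftmost point of $I_{j_1}$, this immediately verifies unbounded isolation for the chosen $\ell$: the window $[x,x+\ell-1]$ engulfs the entire island and so meets $S[\vec{t}]$, while $[x-\delta,x-1]\subseteq K_{i_1}$ and $[x+\ell,x+\ell+\delta]\subseteq K_{i_2}$ are empty of $S[\vec{t}]$.

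The flanking gaps are the easy part. Since $\downclose{D}$ is one of the maximal ideals in the decomposition of $\downclose{\MM}$, it is the directed join of $\MM\cap\downclose{D}$, so for any $N$ there exists $\vec{t}^{(N)}$ with $M_{S[\vec{t}^{(N)}]}\in\MM\cap\downclose{D}$ whose constellation entries all satisfy $\lceil|K_{i_l}^{\vec{t}^{(N)}}|/|I_j^{\vec{t}^{(N)}}|\rceil\ge N$, forcing $|K_{i_l}^{\vec{t}^{(N)}}|\ge N-1$. For the intermediate gaps one invokes minimality: take $(i_1,i_2)$ of minimal separation $i_2-i_1$ among constellation pairs in $D$. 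Then for each $i'\in(i_1,i_2)$, the pair $(i_1,i')$ fails to be a constellation; since $D_{i_1,j}=\omega$ holds throughout $J$, there must exist some $j^*\in J$ with $I_{j^*}<K_{i'}$ and $D_{i',j^*}$ finite. Because $M_{S[\vec{t}^{(N)}]}\in\downclose{D}$, this forces $|K_{i'}^{\vec{t}^{(N)}}|\le D_{i',j^*}\cdot|I_{j^*}^{\vec{t}^{(N)}}|$, so intermediate gaps are controlled by interval sizes in $J$.

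The principal obstacle is therefore the remaining step: bounding the interval sizes $|I_j^{\vec{t}^{(N)}}|$ for $j\in J$ by a single constant $C$ depending only on $S$, uniformly in $N$. My plan is to exploit the piecewise-affine structure of the Presburger-definable maps $\vec{t}\mapsto|I_j|,|K_{i_l}|$: on any linear piece of the parameter space where the ratios blow up, each quantity is affine in the period coordinates, and unboundedness of $|K_{i_l}|/|I_j|$ forces a period direction along which $|I_j|$ has zero slope while $|K_{i_l}|$ has strictly positive slope. Combining such directions simultaneously over all $j\in J$ and both $l\in\{1,2\}$ is the delicate part---it amounts to producing a single period vector in the intersection of the kernels of the functionals $\vec{p}\mapsto|I_j|(\vec{p})$ that also lies in the positive halfspaces of both $|K_{i_l}|(\vec{p})$. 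This can be approached by direct linear algebra in the period space, or by induction on $|J|$ recursing on hypothetical sub-constellations and closing them off via the minimality of $(i_1,i_2)$. Once $C$ is obtained, the island span is at most $|J|\cdot C+(|J|-1)\cdot C'$ where $C'$ absorbs the finite entries of $D$ governing the intermediate gaps, so we may take $\ell:=|J|\cdot C+(|J|-1)\cdot C'+1$, depending only on $S$, and the construction is complete.
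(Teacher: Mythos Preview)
Your overall plan is sound and mirrors the paper's structure closely: pick an $\omega$-constellation $(i_1,i_2)$ of minimal separation in some $D$, use minimality to bound the intermediate gaps $K_{i'}$ (for $i'\in(i_1,i_2)$) by a finite multiple of some $|I_{j^*}|$ with $j^*\in J$, and then seek parameters $\vec t$ for which the flanking gaps are $\ge\delta$ while the entire island between them has width at most a fixed $\ell$. Your identification of the ``principal obstacle''---uniformly bounding $|I_j^{\vec t^{(N)}}|$ over $j\in J$ by a constant $C$---is exactly right; this \emph{is} the crux.

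Where your proposal becomes thin is in discharging that obstacle. Your plan is to work in the period space of a linear piece and produce a single period direction on which every $|I_j|$ has zero slope while both $|K_{i_l}|$ have positive slope. You correctly flag the simultaneous combination as ``the delicate part'', but the two approaches you sketch (direct linear algebra; induction on $|J|$) are not carried out, and it is not obvious how to execute them cleanly: one needs a compactness/limiting argument to extract a common direction from the sequence $\vec\lambda^{(N)}$, then a rationalization step to land back in the period cone. This can be done, but it is more work than your outline suggests.

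The paper sidesteps this entirely, and the trick is worth knowing. Rather than searching for a good period direction, it bundles the island width into a \emph{single} number and pushes the problem into a two-dimensional semilinear set
\[
A \;=\; \Bigl\{\,\bigl(\min\{|K_{i_1}[\vec t]|,|K_{i_2}[\vec t]|\},\ \textstyle\sum_{j\in J}|I_j[\vec t]|+\sum_{i=i_1+1}^{i_2-1}|K_i[\vec t]|\bigr)\ :\ \vec t\in W\,\Bigr\},
\]
where $W$ is the (Presburger-definable) set of $\vec t$ with $M_{S[\vec t]}\le D$. One first checks, using the $\omega$-constellation together with the minimality bound on intermediate gaps, that for every $k$ there is a point $\vec u\in A$ with $\vec u[1]\ge k\cdot\vec u[2]$. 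Then a short ``ratio lemma'' for semilinear subsets of $\NN^2$ (\cref{ratio-lemma}) says: whenever the ratio $\vec u[1]/\vec u[2]$ is unbounded on a semilinear $A$, there is a constant $\ell$ such that for every $\delta$ one finds $\vec u\in A$ with $\vec u[2]\le\ell$ and $\vec u[1]\ge\delta$. The proof of that lemma is a one-line inspection of a semilinear representation (some linear component must contain a period with zero second coordinate and positive first coordinate). This immediately yields the fixed $\ell$ and, for each $\delta$, a witnessing $\vec t$---with no need to coordinate slopes across the individual $|I_j|$.

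In short: your structure is right, but the missing idea is to collapse the multi-interval bookkeeping to a single Presburger-definable scalar and invoke a 2D ratio lemma, rather than attempting the simultaneous linear-algebra argument in the period space.
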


\begin{proposition}\label{no-constellation-building-blocks}
	If an interval-uniform $S\subseteq\ZZ^{p+1}$ has no
	\hbox{$\omega$-constellation}, then $S$ is a finite union of transformed building
	blocks.
\end{proposition}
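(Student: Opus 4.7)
My plan is to partition $S$ into finitely many Presburger-definable pieces, each corresponding to a specific $\rho$-chain ordering of its intervals. Since the interval-uniform $S$ has a fixed number $m$ of intervals in the canonical decomposition, there are at most $m \cdot 2^{m-1}$ possible chain permutations $\pi$, each determined by a starting position and a sequence of left/right extensions. For each such $\pi$, I would define
\begin{equation*}
    S_\pi \coloneqq \{(\vec{t}, x) \in S : \pi \text{ is a valid chain ordering for } S[\vec{t}]\},
\end{equation*}
where \emph{validity} means conditions \ref{itm:disjoint}--\ref{itm:december} of \cref{def:chain} hold for $S[\vec{t}]$ under $\pi$, with the final chain element being a one-sided infinite interval (after a possible sign flip recorded by $\sigma$). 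Since each condition is Presburger-expressible for fixed $\pi$, the set $S_\pi$ is Presburger-definable. It then suffices to show both that (i) each $S_\pi$ is a transformed building block, and (ii) $S = \bigcup_\pi S_\pi$.

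For (i), within $S_\pi$ the chain ordering is fixed: I would define $\tau(\vec{t})$ to list the endpoints of the intervals in the order prescribed by $\pi$, and set $\sigma(\vec{t}) = 1$ exactly when the chain's infinite end is on the left of $S[\vec{t}]$. A uniform bound on $\rho$ comes from the ratio ideal $D$ of $S$: validity of $\pi$ forces each distance $d(I_{\pi(i)}, I_{\pi(i+1)})$ to be bounded by a finite entry of $D$ times $|I_{\pi(i)}|$, so taking $\rho$ to be a suitable function of the maximum finite entry of $D$ (and of $m$, to absorb the interval triangle inequality across detours) works uniformly over $\vec{t} \in \ZZ^p$.

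For (ii), I would greedily construct a valid chain for each $\vec{t}$: start with the smallest interval; at each step, extend the contiguous range $[I_\ell, I_r]$ (by \ref{itm:december}, the chosen intervals must form a contiguous range in the canonical left-to-right order, since otherwise skipped intervals could never be added) by whichever of $I_{\ell-1}, I_{r+1}$ is compatible with both the size order \ref{itm:sizes} and the distance bound \ref{itm:distances}. The no $\omega$-constellation condition ensures that the two bordering gaps $K_{\ell-1}$ and $K_r$ cannot simultaneously exhibit $\omega$-ratio to every interval $I_j$ in the current range, so at least one direction yields a bounded distance ratio. The cases where $S[\vec{t}]$ contains two one-sided infinite intervals (one on each end) or the full line $\ZZ$ are handled by splitting $S$ along a Presburger-definable midpoint or by direct triviality. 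The main obstacle is reconciling \ref{itm:sizes} with \ref{itm:distances} during the greedy extension, especially when the globally-next-smallest interval is not adjacent to the current range; handling this requires an inductive argument that propagates the no $\omega$-constellation property through direction changes via the interval triangle inequality, to guarantee that some valid $\pi$ always exists.
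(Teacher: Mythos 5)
Your proposal hits a genuine gap in step (ii). You aim to show that for every $\vec{t}$ there is a \emph{single} permutation $\pi$ of all $m$ canonical intervals of $S[\vec{t}]$ that forms a $\rho$-chain (so that $S_\pi[\vec{t}]=S[\vec{t}]$), but this can fail even in the absence of $\omega$-constellations. Consider the interval-uniform set $S\subseteq\ZZ^{1+1}$ with $S[t]=(-\infty,0]\cup[10,12]\cup[t,2t]$ for $t\ge 30$ and $S[t]=\emptyset$ otherwise, with canonical intervals $I_1=(-\infty,0]$, $I_2=[10,12]$, $I_3=[t,2t]$. Here the ratio-ideal entries $D_{1,2}$ (gap $[1,9]$ vs.\ $I_2$) and $D_{2,3}$ (gap $[13,t-1]$ vs.\ $I_3$) stay bounded, so no two gaps have $\omega$-ratio to every interval between them, and $S$ has no $\omega$-constellation. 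Yet no $\rho$-chain contains all three intervals: the chain must start at the smallest interval $I_2$ and end at the one-sided infinite interval $I_1$, and once $I_1$ enters the chain condition \ref{itm:sizes} forbids appending the finite $I_3$. The only alternative, $I_2,I_3,I_1$, fails \ref{itm:distances} because $d(I_2,I_3)\approx t$ is unbounded while $\rho\cdot|I_2|$ is a constant. Hence every $S_\pi[t]=\emptyset$ and $\bigcup_\pi S_\pi\ne S$, so your greedy construction cannot terminate successfully.

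The paper's proof avoids this by covering each $S[\vec{t}]$ with several \emph{overlapping} chains rather than one. For each interval index $i$ it builds a ``harbor chain'' from $i$: repeatedly jump to a strictly larger interval within $\rho$ times the current length, allowing smaller intervals in between to be skipped; \cref{harbor-existence} uses the absence of $\omega$-constellations to show that every finite interval has such a harbor, so every interval starts a harbor chain terminating in a one-sided infinite interval. These partial chains, indexed by sequences over $[1,m]$ of length at most $m$ (not by full permutations), give the transformed building blocks $T_\chi$, and together with $T_\infty,T_{-\infty}$ their union covers $S$. In the example, the two harbor chains $I_2,I_1$ and $I_3,I_1$ together cover $S[t]$. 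Relatedly, your parenthetical claim that the chain elements must form a contiguous range in the canonical order is only forced by insisting on a full permutation; harbor chains are not contiguous, and dropping that constraint is exactly what gives the covering argument the slack it needs.
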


For \cref{no-isolation-no-constellation}, one observes that the two gaps
$K_{i_1}$ and $K_{i_2}$ whose sizes have unbounded ratio compared to all
intervals between them, even yield something stronger: A bound $\ell$ so that
for each size $\delta$, we can find $\vec{t}$ so that all intervals and all
gaps between $K_{i_1}$ and $K_{i_2}$ together span at most $\ell$ points,
whereas $K_{i_1}$ and $K_{i_2}$ have size at least $\delta$. To this end, we
consider the set $A\subseteq\N^2$ of all pairs
\[ \left\langle \min\{|K_{i_1}[\vec{t}]|,|K_{i_2}[\vec{t}]|,~\sum_{j\in J}|I_j[\vec{t}]|+\sum_{i=i_1+1}^{i_2-1} |K_i[\vec{t}]|\right\rangle \]
where $\vec{t}$ ranges over all $\ZZ^p$ for which $S[\vec{t}] \neq\emptyset$. 
Then $A$ is Presburger-definable. 
Moreover, by picking an $\omega$-constellation in
$i_1$ and $i_2$ such that $i_2-i_1$ is minimized, we can make sure that even
every \emph{gap} $K_i$ (i.e.\ not only the invervals $I_k$) between $K_{i_1}$
and $K_{i_2}$ has bounded ratio w.r.t.\ to some interval $I_j$ with
$K_{i_1}<I_j<K_{i_2}$. With this, the
$\omega$-constellation yields for every $k\in\N$ a vector
$\vec{u}\in A$ with $\vec{u}[1]\ge k\cdot\vec{u}[2]$. Then, considering
a semilinear representation for $A$ reveals that there must be a constant
$\ell\ge 0$ so that for every $\delta\ge 1$, we can find a $\vec{u}\in A$ with
$\vec{u}[2]\le\ell$ and $\vec{u}[1]\ge\delta$. This yields the unbounded
isolation of $S$. See \cref{app:no-isolation-no-constellation} for a detailed
proof.

The remainder of this subsection is devoted to proving
\cref{no-constellation-building-blocks}. Thus, we now fix $S\subseteq\ZZ^{p+1}$
as given by a modulo-free Presburger formula, and where $S$ is interval-uniform. 
By interval-uniformity, there is a particular type $\tau=(c_1,\ldots,c_m)$ such
that all $S[\vec{t}]$ have type $\tau$ (except the empty ones). Moreover, let
$S[\vec{t}]=I_1[\vec{t}]\cup\cdots\cup I_m[\vec{t}]$ be the canonical
decomposition of $S[\vec{t}]$. By interval-uniformity, it does not depend on
$\vec{t}$ whether $I_i[\vec{t}]$ is finite (or one-sided infinite, or two-sided
infinite).

\Cref{no-constellation-building-blocks} requires us to show that for each $\vec{t}\in\ZZ^p$, we can decompose $S[\vec{t}]$ into finitely many $\rho$-chains, for some suitable number $\rho$. Let us begin by defining $\rho$. Let $R$ be the largest natural number occurring in the ideal decomposition of $\MM$. 
With this, let  $\rho \coloneqq 2R(m+1)$.

Our first lemma encapsulates the main idea of \cref{no-constellation-building-blocks}.
It tells us that for every $I_i[\vec{t}]$, we can find $I_j[\vec{t}]$ that is the next step within the $\rho$-chain. 
Such a next step will be called a ``harbor'', the idea being that from $I_i[\vec{t}]$, we can ``jump to the harbor'' without having to cross too large of a gap. 
More precisely, we say that $I_j[\vec{t}]$ is a \emph{harbor} for $I_i[\vec{t}]$ if 
\begin{enumerate}[(a)]
	\item \label{itm:harbor-size} $|I_j[\vec{t}]| > |I_i[\vec{t}]|$,
	\item \label{itm:harbor-close} $d(I_i[\vec{t}], I_j[\vec{t}])\le \rho\cdot |I_i[\vec{t}]|$, and 
	\item \label{itm:harbor-nearest} for all $k \in (i,j)$, we have $|I_k[\vec{t}]|\le |I_i[\vec{t}]|$.
\end{enumerate}

\begin{restatable}{lemma}{harborExistence}\label{harbor-existence}
	For every $\vec{t}\in\ZZ^p$ and every $i\in[1,m]$, if the interval
	$I_i[\vec{t}]$ is finite, then it has a harbor.
\end{restatable}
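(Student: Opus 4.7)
The plan is to argue the contrapositive: assuming $I_i[\vec{t}]$ has no harbor, I will construct an $\omega$-constellation in the ratio ideal $D$ with $M_{S[\vec{t}]}\in\downclose{D}$, contradicting the standing hypothesis that $S$ has no $\omega$-constellation. The strategy is to identify a gap $K_\alpha$ to the right of $I_i[\vec{t}]$ and a gap $K_\beta$ to the left, each of size strictly greater than $R\cdot|I_i[\vec{t}]|$ (possibly infinite), such that every interval positionally between them has size at most $|I_i[\vec{t}]|$. Since $R$ upper-bounds the finite entries of the ideals in the decomposition, any ratio exceeding $R$ forces the corresponding entry of $D$ to be $\omega$, yielding the desired constellation witnessed by $(K_\beta,K_\alpha)$.

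To produce $K_\alpha$, I would first try to locate $j_R$, the smallest index $j>i$ with $|I_j[\vec{t}]|>|I_i[\vec{t}]|$. If $j_R$ exists, minimality makes condition~(c) of the harbor definition automatic for $I_{j_R}[\vec{t}]$, so condition~(b) must fail, giving $d(I_i[\vec{t}],I_{j_R}[\vec{t}])>\rho\cdot|I_i[\vec{t}]|$. The at most $m-1$ intermediate intervals each have size $\le |I_i[\vec{t}]|$ and therefore contribute at most $m\cdot|I_i[\vec{t}]|$ in total; hence the at most $m$ intermediate gaps jointly contribute more than $(\rho-m)\cdot|I_i[\vec{t}]|$, so some such gap $K_\alpha$ satisfies $|K_\alpha[\vec{t}]|>\tfrac{\rho-m}{m}\cdot|I_i[\vec{t}]|>R\cdot|I_i[\vec{t}]|$, using $\rho=2R(m+1)$. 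If $j_R$ does not exist, then all intervals to the right of $I_i[\vec{t}]$ are finite and of size $\le|I_i[\vec{t}]|$ (in particular $I_m[\vec{t}]$ is finite), so the canonical decomposition of $\ZZ\setminus S[\vec{t}]$ contains a one-sided infinite rightmost gap, which I take as $K_\alpha$. A mirror-image argument yields $K_\beta$ to the left of $I_i[\vec{t}]$.

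For the $\omega$-constellation itself, observe that, by construction, every interval $I_j[\vec{t}]$ lying strictly between $K_\beta$ and $K_\alpha$ sits in the region flanked by these two gaps in which no interval exceeds $|I_i[\vec{t}]|$ in size. Consequently $|K_\alpha[\vec{t}]|,|K_\beta[\vec{t}]|>R\cdot|I_j[\vec{t}]|$ for every such $j$, whence both $(M_{S[\vec{t}]})_{\alpha,j}$ and $(M_{S[\vec{t}]})_{\beta,j}$ exceed $R$, forcing $D_{\alpha,j}=D_{\beta,j}=\omega$ and contradicting the absence of $\omega$-constellations.

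The main obstacle, I expect, will be handling the boundary cases uniformly: one-sided infinite $I_1[\vec{t}]$ or $I_m[\vec{t}]$, the case where an extremal infinite gap itself must serve as the witness $K_\alpha$ (or $K_\beta$), and the degenerate scenarios in which $j_R$ (or its left counterpart) lies immediately adjacent to $i$ so few intermediate slots exist. Each such configuration needs to be checked carefully so that $K_\alpha$ and $K_\beta$ jointly cover all intermediate intervals, and so that the triangle-inequality arithmetic on $d(I_i[\vec{t}],I_{j_R}[\vec{t}])$ genuinely concentrates enough excess into a single gap above the threshold $R\cdot|I_i[\vec{t}]|$ rather than having it smeared across many small ones.
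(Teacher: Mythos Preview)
Your proposal is correct and follows essentially the same approach as the paper's proof: locate the nearest strictly larger interval on each side (your $j_R$, $j_L$; the paper's $j_2$, $j_1$), use the failure of the harbor distance bound together with pigeonhole over at most $m$ gaps and $m$ intermediate intervals of size $\le |I_i[\vec t]|$ to extract a single gap exceeding $R\cdot|I_i[\vec t]|$ on each side, and conclude that these two gaps witness an $\omega$-constellation in the ratio ideal containing $M_{S[\vec t]}$. The paper organizes this as an explicit two-case split (larger intervals on both sides versus only one side, using the infinite extremal gap in the latter), while you fold the one-sided case into the same framework by taking the infinite extremal gap as $K_\alpha$ or $K_\beta$ when $j_R$ or $j_L$ does not exist; this is a cosmetic difference only.
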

\begin{proof}[Proof sketch] 
	Assume that there are
	intervals $I_{j_1}[\vec{t}]$ and $I_{j_2}[\vec{t}]$ such that
	$I_{j_1}[\vec{t}]<I_i<I_{j_2}[\vec{t}]$ and also both $I_{j_1}[\vec{t}]$ and
	$I_{j_2}[\vec{t}]$ are larger than $I_i[\vec{t}]$ (the other cases are
	similar). Here, we pick $j_1$ and $j_2$ as close to $i$ as possible.  Then
	all the intervals $I_j$ between $I_i[\vec{t}]$ and $I_{j_1}[\vec{t}]$, or
	between $I_i[\vec{t}]$ and $I_{j_2}[\vec{t}]$, must have size at most
	$|I_i[\vec{t}]|$.  We claim that either $I_{j_1}[\vec{t}]$ or $I_{j_2}[\vec{t}]$ must be
	a harbor: If the distance from $I_i[\vec{t}]$ to $I_{j_1}[\vec{t}]$ 
	is more than $\rho=2R(m+1)$ times the length of $I_i[\vec{t}]$, then this must
	be due to a ``big gap'' $K_j[\vec{t}]$, meaning one with ratio $>R$ to
	$I_i[\vec{t}]$, and in particular to every other $I_j[\vec{t}]$ between
	$I_i[\vec{t}]$ and $I_{j_1}[\vec{t}]$. Now if there is a \emph{big gap in both
	directions} (i.e.\ towards $I_{j_1}[\vec{t}]$ and towards $I_{j_2}[\vec{t}]$),
	then these two big gaps would induce an $\omega$-constellation in some ratio ideal.
	See \cref{app:harbor-existence}.
\end{proof}

More generally, a \emph{harbor chain for $\vec{t}\in\ZZ^p$} is a sequence $\tuple{i_1,\ldots,i_\ell}$ of indices $i_1, \ldots, i_\ell \in [1,m]$ such that 
\begin{enumerate}[(a)]
	\item for each $j\in[1,\ell-1]$, the interval $I_{i_j}[\vec{t}]$ is finite and $I_{i_{j+1}}[\vec{t}]$ is a harbor for $I_{i_j}[\vec{t}]$; and
	\item $I_{i_\ell}[\vec{t}]$ is one-sided infinite.
\end{enumerate}

Since
$I_{i_{j+1}}[\vec{t}]$ must be larger than $I_{i_j}[\vec{t}]$, we know that
such a harbor chain must be repetition free, and thus have length at most $m$.
It is not hard to see that if $(i_1,\ldots,i_\ell)$ is a harbor chain, then
intervals $I_{i_1}[\vec{t}], \ldots, I_{i_\ell}[\vec{t}]$ form a $\rho$-chain
of intervals.
By \cref{harbor-existence}, we know that for every $\vec{t}\in\ZZ^p$, and every
finite interval $I_i[\vec{t}]$, there is a harbor chain starting in $i$ (and, trivially, the same is true for one-sided infinite intervals).
This allows us to decompose all of
$S[\vec{t}]$ into $\rho$-chains. 

For each sequence $\chi=\langle i_1,\ldots,i_\ell\rangle$ of numbers in $[1,m]$
with $\ell\le m$, we define the set $T_\chi\subseteq\ZZ^{p+1}$, where
$T_\chi[\vec{t}] = I_{i_1}[\vec{t}]\cup\cdots\cup I_{i_\ell}[\vec{t}]$ if
$\chi$ is a harbor sequence for $\vec{t}$, and $T_\chi[\vec{t}]=\emptyset$
otherwise.  Moreover, we have $T_\infty[\vec{t}] =[0,\infty)$ if
$S[\vec{t}]=\ZZ$ and $T_\infty[\vec{t}]=\emptyset$ otherwise.  Finally, let
$T_{-\infty}[\vec{t}]=(-\infty,0]$ if $S[\vec{t}]=\ZZ$ and
$T_{-\infty}[\vec{t}]=\emptyset$ otherwise. Then it follows from
\cref{harbor-existence} that $S=T_{\infty}\cup T_{-\infty}\cup
\bigcup_{\chi}T_{\chi}$\label{claim-decomposition-transformed-building-blocks},
where $\chi$ ranges over the sequences over $[1,m]$ of length $\le m$.
Moreover, each $T_{\infty},T_{-\infty},T_{\chi}$ is indeed a transformed
building block. See \cref{app:decomposition-transformed-building-blocks}.  This
completes the proof of \cref{no-constellation-building-blocks}.

\subsection{Complexity of building blocks}
We will now complete the first statement of~\cref{main-result-integer}.
Since we have seen in~\cref{integer-dichotomy-equivalence} that $\density(S)>0$
implies that $S$ is a finite union of transformed building blocks, and for each
transformed buildling blocks $T$, the problem $\intreach(T)$ reduces to
$\intreach(S^{(\rho,m)})$ for some constants $\rho,m\ge 1$ in logspace (by
\cref{compute-presburger}), it remains to show:
\begin{restatable}{proposition}{buildingBlockACOne}\label{building-block-ac1}
	For each rational $\rho>1$ and $m\ge 1$, the problem $\intreach(S^{(\rho,m)})$ belongs to $\AC^1$.
\end{restatable}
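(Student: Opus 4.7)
The plan is to give an $\AC^1$ algorithm via matrix squaring over a carefully designed semiring, in the same spirit as the tropical-semiring approach to coverability. First I would apply~\cref{make-acyclic} to reduce to the case of an acyclic input automaton $\cA$; the reachability question for acyclic $\cA$ is captured by an $n$-fold matrix product, so it suffices to evaluate this product over a suitable semiring.

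Next I would define a semiring $\calF = \calF^{(\rho, m)}$ whose carrier consists of bounded-size ``summaries'' of subsets of $\Z$---essentially tuples of representative integers of cardinality bounded by a constant depending only on $\rho$ and $m$. Addition in $\calF$ is union of summaries followed by a normalization rule; multiplication is the Minkowski sum followed by the same normalization. The normalization exploits the $\rho$-chain geometry: because each inter-interval gap is bounded by $\rho$ times the preceding interval's length and because $|I_1|\le\cdots\le|I_m|$, only a bounded number of representative points per summary matters for any future Minkowski sum or any eventual intersection check against an admissible chain $I_1,\ldots,I_{m+1}$. For instance, in the vicinity of a finite interval $I_j$ only extremal reachable values need to be retained---generalizing the ``keep only the max'' rule used for tropical coverability.

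With $\calF$ in hand, I would represent $\cA$ by an $n\times n$ matrix $M$ over $\calF$ (where $M_{p,q}$ is the singleton summary $\{w\}$ for each transition $p\xrightarrow{w}q$ and zero otherwise) and compute $M^n$ via $\lceil\log n\rceil$ repeated squarings. Each element of $\calF$ has constant size over polynomially-bounded binary integers, so every individual semiring operation is in $\AC^0$; matrix multiplication over $\calF$ is then an $\AC^0$ task as well, and $O(\log n)$ rounds of squaring give the overall $\AC^1$ procedure. Reading off the answer from $M^n_{q_0,q_1}$ together with the given $\vec{t}$ reduces to a constant number of binary-integer comparisons, which is again in $\AC^0$.

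The main obstacle is the design of $\calF$: one must prove simultaneously that (i) the normalized summaries form an associative semiring under these operations; (ii) the normalization is \emph{sound}, i.e., any two summaries with the same normal form have the same intersection status with $S^{(\rho,m)}[\vec{t}]$ for every $\rho$-admissible $\vec{t}$ and behave identically under further Minkowski sums and unions; and (iii) the normalized summaries remain of constant size throughout the squaring computation. The $\rho$-chain hypothesis---the combination of monotonically growing interval lengths with a gap-to-length ratio bounded by $\rho$---is precisely the geometric content that underwrites (iii) and thereby makes $\AC^1$-tractability possible, in contrast to the $\NP$-completeness that ensues as soon as the density becomes zero.
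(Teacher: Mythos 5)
Your high-level plan matches the paper's: reduce to acyclic automata via \cref{make-acyclic}, encode the automaton as a matrix over a semiring built from finite subsets of $\ZZ$ (equivalently, Laurent polynomials in $\B[X,X^{-1}]$), and obtain $\AC^1$ by $\lceil\log n\rceil$ rounds of squaring where each entry is kept at constant size so that a single squaring is $\AC^0$. The three obligations you isolate --- well-definedness, soundness of the collapse, and bounded representative size --- are exactly the ones the paper discharges.

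The genuine gap is in how you propose to meet them. You index your semiring as $\calF^{(\rho,m)}$, i.e., by the constants alone, and only consult $\vec{t}$ in a final membership check against $(M^n)_{q_0,q_1}$. That forces your normalization rule to be sound for \emph{every} $\rho$-admissible $\vec{t}$ at once: two summaries with the same normal form must agree on intersection with $S^{(\rho,m)}[\vec{t}]$ for all such $\vec{t}$, even after arbitrary further Minkowski sums. This is far too strong. The admissible chains place $I_1$ at any position and of any length (for instance, for $m=1,\rho=2$ one can pick $I_1=[40,61]$, $I_2=[101,\infty)$ to separate $\{0,50,100\}$ from $\{0,100\}$), so almost no pair of distinct summaries may be identified, and nothing bounds the number of representative points. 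The paper makes precisely the opposite choice: the quotient $\sr{\rho}{\vec{t}}=\B[X,X^{-1}]/\mathord{\equiv_{\rho,\vec{t}}}$ depends on the input $\vec{t}$ through the quantities $u_\ell,v_\ell$ that appear in the defining equations \eqref{eq-left}--\eqref{eq-right}; the collapse rules are tailored to the actual interval endpoints, and soundness (\cref{correctness-equations}) is asserted and proved only for the single target $S^{(\rho,m)}[\vec{t}]$ at hand. This $\vec{t}$-dependence is exactly what makes the bounded-representation lemma (\cref{small-representation}) true.

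Two further omissions. Even with the right $\vec{t}$-dependent quotient, the bounded-size argument needs the instance to be preprocessed into a \emph{growing} one (\cref{make-growing}), since $v_i/u_i\ge 2$ is used in \cref{two-points-good-distance}; you do not mention this. And your worry about associativity of the normalization is not one the paper has: it works with the smallest congruence containing the given identities, so the quotient $\sr{\rho}{\vec{t}}$ is a semiring by construction, and the small representatives are merely \emph{some} $\equiv_{\rho,\vec{t}}$-equivalent sums of intervals, not canonical normal forms.
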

In the following, it is important to remember that $\rho$ and $m$ are
\emph{constants}, whereas the entries of $\vec{t}$ are part of the input.

\subsubsection*{Growing instances}
As a simple preparation, we show that we may assume that, in some sense, the
interval lengths in the input set are growing quickly.  We say that $\vec{t}$
is \emph{growing} if for every $i\in[1,m+1]$ and every $j\in[1,i]$, we have
$|I_i|\ge 2\cdot d(I_j,I_i)$. 
It is not difficult to see that
for \cref{building-block-ac1}, we only need to consider growing instances (see
\cref{appendix-make-growing}).

\subsubsection*{Reachability problems via semirings}
By \cref{make-acyclic}, it suffices to show that $\acintreach(S^{(\rho,m)})$
belongs to $\AC^1$.  The known $\AC^1$ algorithm for coverability in
$1$-$\ZZ$-VASS (see \cref{reference-coverability-one-z-vass}) uses the classical idea that the
weight of a word in a weighted automaton can be computed by multiplying
matrices. In the case of coverability, one actually only needs to compute the
power $M^n$ of a matrix $M$, which can be done by repeated squaring: If
$n=2^k$, then one can compute $M^n$ by computing $M^1=M$, and then
$M^{2^{i+1}}=(M^{2^i})^2$ for $i=2,\ldots,k$. In order to decide coverability
this way, one views $\ZZ$-weighted automata as weighted automata over the
tropical semiring $\tuple{\ZZ \cup \set{-\infty}, \max, +, -\infty, 0}$. For this semiring,
since each squaring can be performed in $\AC^0$, doing this logarithmically
many times leads to an $\AC^1$ procedure.

Our algorithm for $\AcReach(S^{(\rho,m)})$ is structured similarly. However, in
order to capture the more general sets $S^{(\rho,m)}$, we need to construct
semirings so that (i)~matrix products can still be computed in $\AC^0$ and
(ii)~the semiring elements reflect whether we can reach an element in
$S^{(\rho,m)}$. We construct such semirings as quotients of $\B[X,X^{-1}]$, the semiring of
Laurent polynomials with Boolean coefficients.  Intuitively, while the
tropical semiring captures only ``coverability information'' (i.e.\ the
maximal number we can reach on a set of paths), the semiring $\B[X,X^{-1}]$
captures the entire reachability set.

\newcommand{\sr}[2]{\calS_{#1,#2}}

\subsubsection*{Laurent polynomials with Boolean coefficients}
Let \hbox{$\B = \set{0,1}$} be the Boolean semiring, where we multiply by conjunction and add by disjunction. For a semiring $\calS$, let $\calS[X,X^{-1}]$ be the semiring of \emph{Laurent polynomials with coefficients in $\calS$}, which consists of formal sums $\sum_{i\in\Z} s_iX^i$,
where $s_i\in \calS$ for $i\in\Z$, and all but finitely many $s_i$ are zero. Addition and multiplication are defined as in polynomial rings:
\[ \left(\sum_{i\in\Z} s_iX^i\right)\left(\sum_{i\in\Z} s'_iX^i\right)=\sum_{i\in\Z} \left(\sum_{j,k\in\Z,\,j+k=i} s_js'_k\right)X^i,\]
\[ \sum_{i\in\Z} s_iX^i+\sum_{i\in\Z} s'_iX^i=\sum_{i\in\Z} (s_i+s'_i)X^i.\]
In particular, $\B[X,X^{-1}]$ is the \emph{semiring of Laurent polynomials with
Boolean coefficients}. Then $\B[X,X^{-1}]$ is isomorphic to the semiring
of finite subsets of $\ZZ$, where we multiply by Minkowski sum, and add by
union. We adopt the perspective of Laurent polynomials (rather than finite sets), since here, addition and multiplication are naturally defined the way we require them.
Nevertheless, we will occasionally borrow terminology from finite sets: We will identify a term $X^i$ with the number $i$ and say, e.g.\ that $f\in\B[X,X^{-1}]$ ``contains a point'' or ``intersects an interval''.

\subsubsection*{From automata to matrices}
Let $\cA$ be a $\Z$-weighted automaton with $n$ states $Q = \set{1, \ldots, n}$
and let $1$ be the initial state, and $n$ the final state. 
We may assume that between any $p$ and $q$, there is at most one transition. 
Let $A \in \Mat(n \times n, \B[X, X^{-1}])$ be the matrix defined as follows. For $q,r\in[1,n]$, if there is an transition from $q$ to $r$ with weight $w$, then $A_{q,r}=X^{w}$. 
If there is no transition from $q$ to $r$, then let $A_{q,r} \coloneqq 0$.
Then it is a standard fact (easily shown by induction) that for any $q,r$:

\begin{equation*}
	(A^k)_{q,r} = \sum_{\text{paths $\pi$ from $q$ to $r$ of length $k$}} X^{\text{weight of $\pi$}}.
\end{equation*}
If $B=A+I_n$, where $I_n\in\Mat(n\times n,\BB[X,X^{-1}])$ is the $n\times n$ identity matrix, then $B^k$ is the sum of all $A^\ell$ where $\ell\in[0,k]$.
In particular,
since all paths are of length $\le n$:
\begin{equation} \label{matrix-of-polynomials}
	(B^n)_{q,r} = \sum_{\text{paths $\pi$ from $q$ to $r$}} X^{\text{weight of $\pi$}}. 
\end{equation}
Thus, if we had $B^n$ available, we could easily decide if there is a path with
weight in $I_1\cup\cdots\cup I_{m+1}$. However, there might be exponentially
many different weights on paths from $q$ to $r$. Therefore, instead of
computing in the semiring $\B[X,X^{-1}]$, we work in the quotient
$\sr{\rho}{\vec{t}}=\B[X,X^{-1}]/\mathord{\equiv_{\rho,\vec{t}}}$, where $\equiv_{\rho,\vec{t}}$ is a congruence defined by equations.

A \emph{congruence} on a
semiring $\calS$ is an equivalence relation $\equiv$ such that $s_1\equiv s_1'$
and $s_2\equiv s'_2$ implies $s_1+s_2\equiv s'_1+s'_2$ and also $s_1s_2\equiv
s'_1s'_2$. The equivalence class containing \hbox{$s\in\calS$} is denoted by $[s]$. 
A congruence defines a new semiring $\calS/\mathord{\equiv}$, which consists of the equivalence classes modulo $\equiv$. Addition and multiplication are defined via representatives, i.e.\ $[s_1][s_2]\coloneqq[s_1s_2]$ and $[s_1]+[s_2]\coloneqq[s_1+s_2]$.
Since $\equiv$ is a congruence, these operations are well-defined
(i.e.\ do not depend on the chosen representatives $s_1,s_2$.

\subsubsection*{Equations}
Our congurence $\equiv_{\rho,\vec{t}}$ is defined by equations, which involve further quantities. 
For $i \in [1, m+1]$, define
\begin{align*}
	u_i &= \max_{j\in[1,i]} d(I_j, I_i) & v_i&=|I_i|.
\end{align*}
In particular, $v_m=|I_{m+1}|=\infty$ since $I_{m+1}$ is infinite.  Note that
the $u_i$ and $v_i$ depend on the input $\vec{t}\in\ZZ^{2m+1}$, but can be
computed in $\AC^0$ ($\rho$ and $m$, on the other hand, are constants!).

Let us define the congruence $\equiv_{\rho,\vec{t}}$ on $\B[X,X^{-1}]$.
We use the notation $X^{[i,j]}\coloneqq\sum_{k\in[i,j]} X^k$ and
define $\equiv_{\rho,\vec{t}}$ as the smallest congruence relation such that 
\begin{align}
	X^i+X^j+X^k~&\equiv_{\rho,\vec{t}}~X^{[i,j]}+X^k~\text{if $j-i\in[0,v_\ell]$,}  \label{eq-left}\\
	&\text{and $k-j\in[u_\ell,v_\ell]$ and ($I_1<I_\ell$ or $\ell=1$)}\notag \\
	X^h+X^i+X^j~&\equiv_{\rho,\vec{t}}~X^h+X^{[i,j]}~\text{if $j-i\in[0,v_\ell]$,} \label{eq-right}\\
	&\text{and $i-h\in[u_\ell,v_\ell]$ and $I_\ell<I_1$}    \notag
\end{align}
for all $\ell\in[1,m+1]$.  Let
$\sr{\rho}{\vec{t}}\coloneqq\B[X,X^{-1}]/\mathord{\equiv_{\rho,\vec{t}}}$.  We
shall prove two key properties of the equations: (i)~applying them will not
affect whether a polynomial contains a term in $S^{(\rho,m)}[\vec{t}]$ and
(ii)~every polynomial in $\B[X,X^{-1}]$ has an
$\equiv_{\rho,\vec{t}}$-equivalent that is a sum of at most a constant number
of terms $X^{[r,s]}$. The latter will be exploited algorithmically because we
can represent $X^{[r,s]}$ by the pair $\tuple{r,s}$.  We first
show (i):
\begin{restatable}{lemma}{correctnessEquations}\label{correctness-equations}
	Let $f\equiv_{\rho,\vec{t}} f'$. Then $f'$ contains a term in
	$S^{(\rho,m)}[\vec{t}]$ if and only if $f$ does.
\end{restatable}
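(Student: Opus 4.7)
My plan is to prove the lemma by induction on the derivation of $f \equiv_{\rho,\vec{t}} f'$. The predicate $P(f) \coloneqq$ ``$f$ has a term in $S^{(\rho,m)}[\vec{t}]$'' is trivially preserved by reflexivity, symmetry, and transitivity, so the task reduces to one-step rewriting. Since $P$ depends only on the support and satisfies $P(g + h) \Leftrightarrow P(g) \vee P(h)$, and since the support of a product is the Minkowski sum of the supports, a rewrite $g \to g'$ inside a product $g \cdot h$ reduces to verifying $P(X^a g) \Leftrightarrow P(X^a g')$ for each monomial $X^a$ in $h$. Because the side conditions of \eqref{eq-left} and \eqref{eq-right} are translation-invariant in $i, j, k$ (resp.\ $h, i, j$), every shift $X^a g \equiv X^a g'$ is itself an axiom instance, so it suffices to verify $P(g) \Leftrightarrow P(g')$ for each axiom instance directly.

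For an instance of \eqref{eq-left}, with $g = X^i + X^j + X^k$ and $g' = X^{[i,j]} + X^k$, the direction $P(g) \Rightarrow P(g')$ is trivial since $\{i, j, k\} \subseteq [i, j] \cup \{k\}$. For the converse, I would show that any $p \in (i, j) \cap S^{(\rho,m)}[\vec{t}]$ forces $\{i, j, k\} \cap S^{(\rho,m)}[\vec{t}] \neq \emptyset$. Writing $p \in I_{\ell'}$, we are done unless $I_{\ell'} \subsetneq (i, j)$; so assume this. Then $|I_{\ell'}| < j - i \leq v_\ell = |I_\ell|$, which by condition (A2) of the $\rho$-chain forces $\ell' < \ell$, and the side condition $I_1 < I_\ell$ (or $\ell = 1$, in which case the preceding step is vacuous) together with (A4) gives $I_{\ell'} < I_\ell$. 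Suppose for contradiction that $\{i, j, k\} \cap S^{(\rho,m)}[\vec{t}] = \emptyset$. If $j \geq \min(I_\ell)$, then $j \notin I_\ell$ forces $j > \max(I_\ell)$, so $[i, j] \supseteq I_{\ell'} \cup I_\ell$, giving $j - i > |I_{\ell'}| + d(I_{\ell'}, I_\ell) + |I_\ell| > |I_\ell|$, which contradicts $j - i \leq |I_\ell|$. Hence $j < \min(I_\ell)$. Combining $j \geq \max(I_{\ell'}) + 1$ with $k - j \geq u_\ell \geq d(I_{\ell'}, I_\ell) = \min(I_\ell) - \max(I_{\ell'})$ yields $k \geq \min(I_\ell) + 1$, while $k - j \leq v_\ell = |I_\ell|$ together with $j \leq \min(I_\ell) - 1$ yields $k \leq \max(I_\ell) - 1$, placing $k$ strictly inside $I_\ell \subseteq S^{(\rho,m)}[\vec{t}]$, a contradiction. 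The proof for \eqref{eq-right} is symmetric, with $h$ playing the role of $k$ on the opposite side and the side condition $I_\ell < I_1$ providing the mirrored geometry.

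The main technical obstacle is the calibration argument pinning $k$ strictly inside $I_\ell$: one must verify that $u_\ell = \max_{j' \in [1,\ell]} d(I_{j'}, I_\ell)$ is exactly strong enough to push $k$ past $\min(I_\ell)$, while $v_\ell = |I_\ell|$ is just tight enough to stop $k$ before $\max(I_\ell)$. This rests on the intermediate step $j < \min(I_\ell)$, which uses both the size bound $j - i \leq |I_\ell|$ and the fact that $I_{\ell'}$ already occupies part of $(i, j)$, preventing $[i, j]$ from also reaching past $I_\ell$ on the right.
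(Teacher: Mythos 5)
Your proposal follows the paper's argument closely, and the core geometric calibration---using $u_\ell \ge d(I_{\ell'},I_\ell)$ to push $k$ past $\min(I_\ell)$ and $v_\ell = |I_\ell|$ to keep $k$ below $\max(I_\ell)$, after first establishing $j < \min(I_\ell)$ from $j - i \le v_\ell$ together with $I_{\ell'}$ occupying part of $(i,j)$---is exactly the paper's argument. Two small points worth flagging. First, a literal induction on the derivation of $f\equiv_{\rho,\vec{t}}f'$ as the least congruence stalls at the closure-under-multiplication rule: $P(f_1)\Leftrightarrow P(f_1')$ and $P(f_2)\Leftrightarrow P(f_2')$ alone do not yield $P(f_1f_2)\Leftrightarrow P(f_1'f_2')$, since $P$ of a product is not a function of $P$ of the factors. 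The paper sidesteps this by first proving, as a separate lemma, that $\equiv_{\rho,\vec{t}}$ coincides with the reflexive-symmetric-transitive closure of the relation that rewrites a single axiom instance inside an additive context; your Minkowski-sum and translation-invariance observations are precisely the substance of that lemma, so the gap is one of presentation rather than of ideas, but it does need to be spelled out rather than folded into ``induction on the derivation.'' Second, a small slip: you want $I_{\ell'} \subseteq (i,j)$ rather than $I_{\ell'} \subsetneq (i,j)$, since in the equality case neither $i$ nor $j$ lies in $I_{\ell'}$ so you are not yet done, but the remainder of your argument (including $|I_{\ell'}| < j - i$) applies verbatim.
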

\begin{proof}[Proof sketch]
We have to argue that applying an equation does not affect whether
$S^{(\rho,m)}[\vec{t}]$ is reached.  Let us first consider \eqref{eq-left} with
$\ell=1$. It is equivalent to $X^i+X^j\equiv_{\rho,\vec{t}} X^{[i,j]}$, where
$j-i\in[0,v_1]$, since $k-j$ is allowed to be zero, as $u_1=0$.  The equation
says: If we have terms $X^i$ and $X^j$ that are at most $v_1=|I_1|$ apart, then
we can introduce all terms $X^s$ with $s\in[i,j]$. This does not affect
intersection with $S[\vec{t}]$. Indeed, since all our intervals have length $\ge
|I_1|=v_1$: If $j-i\le |I_1|$, then $[i,j]$ intersects $I_1\cup\cdots\cup
I_{m+1}$ iff $\{i,j\}$ does.

Now consider \eqref{eq-left} with $\ell>1$. The equation says, if we have
terms $X^i$ and $X^j$ that are at most $v_\ell=|I_\ell|$ apart, and we can see
an $X^k$ with distance in $[u_\ell,v_\ell]$ from $X^j$, then we may
introduce all $X^s$ with $s\in[i,j]$. This time, it may happen that
$[i,j]$ intersects some $I_r$ for $r<\ell$ although $\{i,j\}$ did not: e.g.\ if
$r<\ell$ and $|I_r|<|I_\ell|$. However, since all intervals $I_r$ with $r<\ell$
are to the left of $I_\ell$, we know that $j$ must lie between $I_r$ and
$I_\ell$. Crucially, since $k\in[j+u_\ell,j+v_\ell]$ with $u_\ell\ge d(I_r,I_\ell)$,
the point $k$ must belong to $I_\ell$. To summarize, if this equation happens
to introduce a point in some $I_r$ with $r<\ell$, then there already was one in
$I_\ell$.  See \cref{appendix-correctness-equations}).
\end{proof}

The second key property of the equations is that every polynomial has an
equivalent with a small representation:
\begin{restatable}{lemma}{smallRepresentation}\label{small-representation}
	Every polynomial in
	$\B[X,X^{-1}]$ is $\equiv_{\rho,\vec{t}}$-equivalent to a sum of at
	most $(m\rho+2m)^{m}$ intervals.
\end{restatable}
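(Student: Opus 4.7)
The plan is to apply the defining equations of $\equiv_{\rho,\vec{t}}$ iteratively, level by level, and argue that any polynomial reduces to an equivalent sum of at most $(m\rho + 2m)^m$ interval summands $X^{[r,s]}$. The argument is an induction on $\ell \in [1, m+1]$, with the $\rho$-chain and interval-length bookkeeping handled via the quantities $u_\ell$ and $v_\ell$.

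For the base case $\ell = 1$, I would first show that \eqref{eq-left} collapses to a much simpler rule. Since $u_1 = d(I_1,I_1) = 0$, one may take $k = j$ in the generating equation, and Boolean idempotency ($X^j + X^j = X^j$ in $\B$) together with $j \in [i,j]$ then yields $X^i + X^j \equiv_{\rho,\vec{t}} X^{[i,j]}$ whenever $0 \le j - i \le v_1$. Exhaustively applying this merges every cluster of terms at pairwise distance $\le v_1$ into a single interval, so after level-$1$ reduction, consecutive interval summands in the equivalent representation are separated by strictly more than $v_1$.

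For the inductive step at level $\ell \ge 2$, I would apply \eqref{eq-left} when $I_1 < I_\ell$ and \eqref{eq-right} when $I_\ell < I_1$. In either case, the equation permits merging two adjacent clusters whose combined span is at most $v_\ell$, provided a witness term sits at distance in $[u_\ell, v_\ell]$ on the appropriate side. After exhausting the level-$\ell$ reductions, any two consecutive interval summands that remain un-merged must either (i) lie at distance $> v_\ell$ apart (in which case they are ``well-separated'' for this level) or (ii) fail to admit a suitable witness (which confines them to a bounded reach region controlled by $u_\ell$ and $v_\ell$).

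The combinatorial crux -- and the main obstacle I anticipate -- is the per-level counting bound of $m\rho + 2m$. Using the $\rho$-chain conditions $d(I_j, I_{j+1}) \le \rho |I_j|$ and the monotonicity $|I_1| \le \cdots \le |I_m|$, I would argue that the relevant reach region has total length bounded proportionally to $v_\ell$, and since intervals surviving from the previous round are at pairwise distance $> v_{\ell-1}$, at most $m\rho + 2m$ of them can fit inside. Tracking this reach region precisely, treating the asymmetric witness condition (which is not symmetric in $i,j,k$), and handling polynomials whose $\rho$-chain changes direction (so that both \eqref{eq-left} and \eqref{eq-right} contribute) will require careful case analysis; this is where the bookkeeping becomes most delicate. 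Multiplying the per-level factor of $m\rho + 2m$ across the $m$ nontrivial inductive steps then produces the stated bound $(m\rho + 2m)^m$.
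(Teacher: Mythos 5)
Your high-level structure matches the paper's: induct on the ``level'' $\ell \in [1,m+1]$, treat each level as clustering terms at distance at most $v_\ell$, identify that a $1$-cluster is just an interval $X^{[i,j]}$ (your base-case simplification of \eqref{eq-left} is exactly right), and multiply a per-level factor of $m\rho+2m$ across $m$ steps. That is the paper's decomposition.

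The gap is in the per-level counting, where you write that the ``relevant reach region has total length bounded proportionally to $v_\ell$, and since intervals surviving from the previous round are at pairwise distance $> v_{\ell-1}$, at most $m\rho+2m$ of them can fit inside.'' This cannot work: the ratio $v_\ell / v_{\ell-1}$ (i.e.\ $|I_\ell|/|I_{\ell-1}|$) is completely unbounded over $\rho$-admissible inputs, so a region of length proportional to $v_\ell$ can contain arbitrarily many disjoint $(\ell-1)$-clusters. The bound that the $\rho$-chain hypothesis actually gives you is on $u_\ell / v_{\ell-1} \le m(\rho+1)$, where $u_\ell = \max_{j\le\ell} d(I_j,I_\ell)$ --- a distance, not a length. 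And even with that corrected, a static ``how many fit'' count is the wrong mechanism: an $\ell$-cluster may genuinely consist of an unbounded number of $(\ell-1)$-clusters. What saves the bound is dynamic: once you have $m(\rho+1)$ of them, among them you can find two terms at distance in $[u_\ell, v_\ell]$ (this is the paper's two-points lemma, whose upper bound $\le v_\ell$ requires the \emph{growing} normalization $v_\ell \ge 2u_\ell$, which you never invoke); that pair then serves as the witness $\langle X^k \rangle$ or $\langle X^h \rangle$ in \eqref{eq-left}/\eqref{eq-right}, letting you interpolate \emph{all the remaining clusters, however many, into a single interval}. So the surviving count is $m(\rho+1)+1 \le m\rho+2m$, not because only that many fit, but because the equations collapse the rest. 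You would need to replace the ``fit'' argument with this two-part witness-then-collapse argument, and add the reduction to growing instances to make the witness lemma go through.
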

\begin{proof}[Proof sketch] 
Two terms $X^i$ and $X^j$ are called \emph{$\ell$-close}, in symbols
$X^i\leftrightarrow_\ell X^j$, if $|i-j|\le v_\ell$. A polynomial
$f\in\B[X,X^{-1}]$ is an \emph{$\ell$-cluster} if its set of terms is strongly
connected w.r.t.\ the relation $\leftrightarrow_\ell$. We claim that every
$\ell$-cluster can be written as a sum of at most $m(\rho+2)$ many
$(\ell-1)$-clusters. This is sufficient, since (i)~by \eqref{eq-left}, any
$1$-cluster is equivalent to a single interval and (ii)~$v_{m+1}=\infty$
implies that every polynomial is an $(m+1)$-cluster. 

For the claim, note that an $\ell$-cluster for $\ell>1$ can be written as a sum
of disjoint $(\ell-1)$-clusters, which consequently have distance $>v_{\ell-1}$
from each other, but also at most $v_\ell$.  The fact that $I_1,\ldots,I_{m+1}$
is a $\rho$-chain implies that $u_{\ell}/v_{\ell-1}\le \rho(m+1)$. This means,
if our $\ell$-cluster contains more than $m(\rho+1)$ many $(\ell-1)$-clusters,
then there must be two points of distance at least $u_\ell$. One can then argue
(here we use that our instance is growing) that there must even be such a pair
with distance in $[u_\ell,v_{\ell}]$. But such a pair within the first
$m(\rho+1)$ many $(\ell-1)$-clusters allows us to apply \eqref{eq-right} and
use it to collapse all remaining $(\ell-1)$-clusters into a single interval,
resulting in $\le m(\rho+2)$ many $(\ell-1)$-clusters. 
See \cref{app:small-representation}.
\end{proof}

\begin{proof}[Proof sketch of \cref{building-block-ac1}]
Consider the matrix $C\in\Mat(n\times n,\sr{\rho}{\vec{t}})$ with
$C_{p,q}=[B_{p,q}]$, i.e.\ $C$ contains the equivalence classes of the
polynomials in $B$. We compute $C^n$, which, by \eqref{matrix-of-polynomials}
and \cref{correctness-equations}, tells us whether any of the paths in $\cA$
reaches $S^{(\rho,m)}[\vec{t}]$. In our algorithm, we represent a polynomial as
a \emph{$\langle\rho,\vec{t}\rangle$-representation}, which is a list $\langle i_1,j_1\rangle,\ldots,\langle i_r,j_r\rangle$ of pairs of
binary-encoded numbers with $r\le (m\rho+2m)^m$, which is a constant. It represents the polynomial
$X^{[i_1,j_1]}+\cdots+X^{[i_r,j_r]}$. By \cref{small-representation}, every
polynomial in $\B[X,X^{-1}]$ has a $\langle\rho,\vec{t}\rangle$-representation. We compute $C^n$ by repeated squaring, i.e.\
we compute $C^1=C$ and then $C^{2^{k+1}}=(C^{2^k})^2$, and we claim that each
step can be performed in $\AC^0$. Since we only need to square $\lceil\log
n\rceil$ many times, we obtain an $\AC^1$ circuit overall.

To compute a matrix square, we first compute in $\AC^0$ the square of the
matrix of representing polynomials.  This yields polynomials
$X^{[i_1,j_1]}+\cdots+X^{[i_r,j_r]}$, where it is possible that
$r>(m\rho+2m)^m$. Therefore, along the lines of \cref{small-representation}, we
then compute in $\AC^0$ an equivalent
$\langle\rho,\vec{t}\rangle$-representation. Here, decomposing the polynomial into
$1$-clusters may seem to require $\NL$, since connectedness of graphs is
$\NL$-hard. However, two terms $X^i,X^j$ belonging to the same $1$-cluster is
equivalent to all gaps between $X^i$ and $X^j$ having size $\le v_\ell$, which
can be checked in $\AC^0$.  See \cref{app:building-block-ac1}.
\end{proof}

\subsection{Decidability}
\cref{main-result-integer-decidable} follows from the material in this section:
Given $S\subseteq\ZZ^{p+1}$, \cref{lem:modulo-free} lets us
compute modulo-free formulas for $S$'s residue classes. By
\cref{dense-residue-classes}, it suffices to decide whether all residue classes
have positive local density, or equivalently, whether none of them has
unbounded isolation (\cref{integer-dichotomy-equivalence}). This is
decidable by \cref{unbounded-isolation-decidable}.

\section{Natural Semantics}
\label{sec:natural-semantics}
The results about natural semantics (\cref{main-result-natural,main-result-natural-decidable}) can largely be deduced from the material in~\cref{sec:integer-semantics}. 
To this end, for a given semilinear $S\subseteq\Z^p\times\N$, we consider $S' \coloneqq S\cup (\ZZ\times\ZZ_{<0})$. 
Then, we show that $\densityplus(S) > 0$ if and only if $\density(S')>0$. 
The results about sets with positive local density then apply to $S'$. 
For example, we conclude that $\intreach(S')$ is in $\AC^1$, which implies that $\natreach(S)$ is in $\AC^1$. 
For space reasons, the remainder of the proofs are in \cref{app:natural-semantics} (with a short overview in~\cref{app:natural-semantics-overview}).

\section{VASS Semantics}
\label{sec:vass-semantics}
\subsection{$\AC^1$ upper bound in dichotomy}
In the next subsection, we will show that coverability in $1$-VASS is in $\AC^1$. 
Let us first show that this implies that $\vasreach(S)$ is in $\AC^1$ membership if $S$ is uniforly quasi-upwards closed (part~\ref{itm:vas-p-new} of~\cref{main-result-vass}).
\begin{restatable}{lemma}{vassReduction}\label{vass-reduction}
	If $S\subseteq\ZZ^{p}\times\N$ is uniformly quasi-upward
	closed, then $\vasreach(S)$ logspace-reduces to $1$-VASS coverability.
\end{restatable}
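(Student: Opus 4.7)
The plan is to reduce $\vasreach(S)$ to a single $1$-VASS coverability instance in logspace, by exploiting the structural constraints imposed by uniform quasi-upward closedness. Let $\delta \ge 1$ and $M \ge 0$ be the constants witnessing the hypothesis. The first step is to fix a canonical decomposition $S[\vec{t}] = U^*(\vec{t}) \setminus E^*(\vec{t})$, where $U^*(\vec{t})$ is the unique minimal $\delta$-upward closed superset of $S[\vec{t}]$, and $E^*(\vec{t}) \coloneqq U^*(\vec{t}) \setminus S[\vec{t}]$ is the exception set, of size at most $M$ (any $F$ witnessing $\tuple{\delta,M}$-upward closedness of $S[\vec{t}]$ must contain $E^*(\vec{t})$). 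The set $U^*(\vec{t})$ is specified by at most $\delta$ thresholds $\alpha_r(\vec{t}) = \min(S[\vec{t}] \cap (r + \delta\NN))$, one per active residue class $r \in \{0, \ldots, \delta-1\}$. Because $S$ is semilinear and $\delta, M$ are constants, both the thresholds and the (at most $M$) exception values are Presburger-definable functions of $\vec{t}$, and hence computable in logspace by~\cref{compute-presburger}.

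Next, I would construct a modified 1-VASS $\cA'$ as the product $\cA \times \ZZ_\delta$: the state set is $Q \times \{0, \ldots, \delta-1\}$, tracking the counter's residue modulo $\delta$, and this is only a constant-factor blowup in size. Under this product, for each active residue $r$, the intersection $S[\vec{t}] \cap (r + \delta\NN)$ decomposes into at most $M+1$ \emph{safe segments}: at most $M$ bounded intervals between consecutive exceptions, plus one unbounded tail above the largest exception in residue $r$. From each $(q_1, r)$ in $\cA'$ I would attach, for each safe segment, a gadget that decrements the counter by the segment's lower endpoint and routes to a fresh common accept state. The coverability target of $\cA'$ is this accept state (with threshold $0$), and the output of the reduction is the pair $(\cA', 0)$.

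The correctness argument is straightforward for the unbounded-tail segments: covering the accept state via the tail gadget in residue $r$ is equivalent to reaching some counter value $\ge e_{r, \max}(\vec{t}) + \delta$ in residue $r$ at $q_1$, which lies in $S[\vec{t}]$ automatically. The main obstacle I expect is establishing correctness for the bounded-interval segments, because reaching a value in a finite range is not directly a coverability condition in 1-VASS (upper-bound testing is not natively supported, and specific-value reachability is already NP-hard). The resolution must exploit that there are only $O(\delta M) = O(1)$ such bounded intervals and that each interval's endpoints are logspace-computable from $\vec{t}$. By non-deterministically combining per-interval coverability sub-gadgets within $\cA'$, together with the residue-tracking auxiliary structure coming from the product with $\ZZ_\delta$, each bounded interval's membership condition can still be encoded as a coverability condition on the common accept state; this step is the technically most delicate part of the proof.
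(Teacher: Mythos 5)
Your setup is on the right track---extracting the residue thresholds $\alpha_r(\vec{t})$ and the exception set $E^*(\vec{t})$ as Presburger-definable (hence logspace-computable) functions, and tracking the residue modulo $\delta$ in the state, are exactly the ingredients the paper uses. However, your proposal has a genuine gap precisely where you flag the difficulty: the bounded ``safe segments.'' Membership in a bounded interval is not a monotone condition, so it cannot ``be encoded as a coverability condition'' in any $1$-VASS, no matter how the residue-tracking is arranged; the sentence ``the resolution must exploit\ldots'' asserts that a resolution exists but does not provide one, and as stated the approach would fail. The underlying issue is that you are treating each segment as a separate reachability target, and finite targets are exactly what coverability cannot express (as you yourself observe).

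The paper sidesteps this by \emph{not} decomposing $S[\vec{t}]$ into segments at all. It first reduces $\vasreach(S)$ to $\vasreach(S')$ where $S'[\vec{t}]=S[\vec{t}]\cup\{f_1,\ldots,f_m\}$ is the genuinely $\delta$-upward closed superset (your $U^*(\vec{t})$). This reduction is the content of \cref{remove-bounded-set}, and it is where the missing idea lives: to ensure the reachable value is not one of the constantly many exceptions $f_1,\ldots,f_m$, one runs polynomially many copies of the input automaton, each copy tagged with a tuple of primes $\tuple{p_1,\ldots,p_m}$ with $p_i\in O(\log f_i)$; each copy tracks the counter value modulo $p_i$ in its state and accepts only if the final value is $\not\equiv f_i\pmod{p_i}$ for every $i$. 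By the Prime Number Theorem, whenever the target value $x$ is distinct from all $f_i$, some tuple of small primes witnesses all these incongruences; conversely any accepting run of some copy clearly avoids the exceptions. After this preprocessing, $S'[\vec{t}]$ is $\delta$-upward closed, so within each residue class it is an upward-closed set determined by the single threshold $\mu_i(\vec{t})$, and the reduction to coverability is then exactly the straightforward tail argument you already gave. In short: first eliminate the exceptions via the prime-filtering trick, then reduce the resulting upward-closed target to coverability; trying to cover bounded intervals directly, as your proposal does, is a dead end.
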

\begin{proof}[Proof sketch]
	If $S$  is $\tuple{\delta,M}$-upward closed, then for every
	$\vec{t}\in\ZZ^p$, the set $S[\vec{t}]$ is $\delta$-upward closed, if
	we add some $m$-many points with $m\le M$. We argue that a smallest
	such set of $m$ additional points is uniquely determined, and that that
	there is a Presburger definable $\gamma\colon\ZZ^p\to\NN^m$
	that yields these. This function is logspace-computable by
	\cref{compute-presburger}.  If we are given $\vec{t}\in\ZZ^p$ and
	an automaton, we compute the numbers
	$\gamma(\vec{t})=\{f_1,\ldots,f_m\}$ and know that
	$S[\vec{t}]\cup\{f_1,\ldots,f_m\}$ is $\delta$-upward closed. It thus
	suffices to decide whether for some residue $r\in[0,\delta-1]$, we can
	reach an $x\in\NN$ that (i)~is congruent $r$ modulo $\delta$,
	(ii)~lies above the minimal number of $S[\vec{t}]$ within $r$'s residue
	class modulo $\delta$, and (iii)~differs from $f_1,\ldots,f_m$. We construct a VASS that, in its state, checks conditions (i) and
	(ii) (recall that $\delta$ is a constant). However, $f_1,\ldots,f_m$
	are not constants (they are computed from the input). To make sure that
	$x\notin\{f_1,\ldots,f_m\}$, we guess primes $p_1,\ldots,p_m$
	with logarithmically many bits and then check in the state that $x
	\not\equiv f_i\pmod{p_i}$, for each $i\in[1,m]$. 
\end{proof}

\subsection{Coverability in $\AC^1$: The algorithm}
We now show that coverability in $1$-VASS is in $\AC^1$. This improves on the
$\NC^2$ upper bound shown by Almagor, Cohen, P\'{e}rez, Shirmohammadi, and
Worrell~\cite[Theorem 23]{AlmagorCPSW20}. Their algorithm, roughly speaking,
maintains a set of paths, and expands this set $\lceil\log n\rceil$ times (for
$n$ states). After $k$ steps, the current set $P_k$ of paths is large enough so
that anything that can be covered in $\le 2^k$ steps, can also be covered using
$P_k$. To prevent exponential growth, the set is pruned each time. This
involves operations on paths in $\NC^1$ in each step, hence $\NC^2$ overall.
Our algorithm can be viewed as decomposing the paths in $P_k$ so that it
suffices to store a single number for each piece. Then each step is in
$\AC^0$, yielding $\AC^1$ overall.

We first describe our algorithm as a simple combinatorial procedure. Its correctness is not obvious, but will be proven afterwards via an algebraic description.
A \emph{coverability table} $\covertable{p}{q}$, for each pair of states $p, q$, is a list of pairs of natural numbers $\langle u_1,v_1 \rangle, \ldots, \langle u_m,v_m \rangle$ such that $\config{p}{u}$ can cover $\config{q}{v}$ if and only if there is an $i$ with $u\ge u_i$ and $v_i+(u-u_i)\ge v$. 
We will show that in $\AC^1$, given an \emph{acyclic} \hbox{$1$-VASS}, we can compute a coverability table $\covertable{p}{q}$ in $\AC^1$.

Computing coverability tables for acyclic \hbox{$1$-VASS} in $\AC^1$ suffices for deciding coverability in (not necessarily acyclic) \hbox{$1$-VASS} in $\AC^1$ for the following reason. 
Let $\Aa$ be a $1$-VASS with $n$ states $Q$.
Consider the acyclic \hbox{$1$-VASS} $\cA'$ that is obtained from $\cA$ by simulating runs of length at most $n$.
More precisely, let $Q' \sset \set{ q_\ell : q \in Q \text{ and } \ell \in [0, n]}$ be the states of $\cA'$; the idea is that there are $n+1$ copies of each state $q$ in $\Aa$ and, for $k, \ell \in [0,n]$, the state $q_\ell$ in $\cA'$ can be reached from $p_k$ if there is a run from $p$ to $q$ of length exactly $\ell - k$ in $\Aa$.
Note that $\config{p}{u}$ can cover $\config{q}{v}$ in $\cA$ if and only if  
\begin{enumerate}[(i)]
	\item for some $\ell \in [0, n]$, $\config{p_0}{u}$ can cover $\config{q_\ell}{v}$ in $\cA'$, or
	\item there is a state $r \in Q$, $\ell \in [0, n]$, and an entry $\langle u_i, v_i \rangle$ in the coverability table $\covertable{r_0}{r_\ell}$ of $\cA'$ such that $\config{p_0}{u}$ can cover $\config{r_\ell}{u_i}$ in $\cA'$ and $v_i > u_i$, and $q$ is reachable from $r$ in the underlying finite automaton.
\end{enumerate}

Let us now see how to compute $\covertable{p}{q}$ in $\AC^1$. 
Suppose we are given an acyclic $1$-VASS $\cA$ with $n$ states $Q$. 
Our algorithm performs a simple translation, $\log(n)$ times. 
The translation is the following. 
For a path $\pi$ whose transitions have weights $w_1, \ldots, w_m$, we define its \emph{amplitude} as follows: 
(i)~if all partial sums $w_1+\cdots+w_i$ are non-negative, then the amplitude is the sum $w_1+\cdots+w_m$, or 
(ii)~if some partial sum $w_1+\cdots+w_i$ is negative, then the amplitude is the smallest such partial sum.
The $1$-VASS $\cA^\dagger$ is obtained from $\cA$ by removing all transitions and then for all states $p, q$ introducing the transition $p\xrightarrow{a} q$ where $a$ is the maximal amplitude of a path of length $\le 3$ from $p$ to $q$.
Since there are only polynomially many paths of length $\leq 3$, we can compute the maximal amplitude in $\AC^0$.
Accordingly, we can compute $\cA^\dagger$ from $\cA$ in $\AC^0$. 

Our algorithm builds $\cA_0$ by taking $\cA$ and adding, for each state $p$, the self-loop $\tuple{p, 0, p}$.
Subsequently, the algorithm computes $\cA_{k+1} \coloneqq \cA_k^{\dagger}$ for every $k \in [0, \lceil\log(n)\rceil]$. 
We will see that for the final 1-VASS $\cA_k$, it is true that $\config{p}{u}$ can cover $\config{q}{v}$ in $\cA$ if and only if $\config{p}{u}$ can cover $\config{q}{v}$ in $\cA_k$ \emph{in at most two steps}. 
Hence from~$\cA_k$, one can compute the coverability table $\covertable{p}{q}$ for $\cA$ in logspace by considering all (quadratically many) paths of length from $p$ to $q$ of length \hbox{at most $2$} in $\cA_k$.

\subsection{Coverability in $\AC^1$: The proof}

Our correctness proof for the algorithm above views the set of \emph{coverability tables as a semiring}.
Let $P$ be a set of paths from a state $p$ to $q$.
We call $f\colon \NN\cup\set{-\infty} \to \NN\cup\set{-\infty}$ the \emph{coverability function for $P$} if $f(i)$ is the maximal counter value one can obtain in $q$ when starting from $p$ with counter value $i$ (and if there is no run to $q$ when starting from $p$ with counter value $i$, then $f(i) = -\infty$).
Then $f$ is strictly monotone.

The coverability table $\covertable{p}{q}$ is represented by the coverability function for the set of all paths from $p$ to $q$.
More precisely, when $\covertable{p}{q} = \langle u_1, v_1\rangle,\ldots,\langle u_m, v_m \rangle$, we would get the coverability function $f$ with
\begin{equation*}
	f(i) = 
	\begin{cases}
		-\infty & \text{if $i < u_1$;}\\
		v_j + k & \text{if there is $j \in [1, m], k \in [0, u_{j+1} - u_j) $} \\ 
				& \text{such that $i = u_j+k$; and }\\
		v_m + k & \text{if there is $k \geq 0$ such that $i = u_m + k$.}
	\end{cases}
\end{equation*}
Note that the coverability table entries $\tuple{u_1, v_1}, \ldots, \tuple{u_m, v_m}$ can be recovered from $f$ as its ``discontinuities'': the values $i$ for which \hbox{$f(i) > f(i-1) + 1$}. 
See~\Cref{fig:coverability-function} for an example.

\begin{figure}
	\centering
	\begin{tikzpicture}
	\draw[line width = 0.1mm] (-0.05, -0.5) -- (0,-0.5);
	\node at (-0.4, -0.5) {\small$-\infty$};
	\draw[line width = 0.1mm] (-0.05, 0) -- (0, 0);
	\node at (-0.2, 0) {\small$0$};
	\draw[line width = 0.1mm] (-0.05, 0.5) -- (0, 0.5);
	\node at (-0.2, 0.5) {\small$2$};
	\draw[line width = 0.1mm] (-0.05, 2.25) -- (0, 2.25);
	\node at (-0.2, 2.25) {\small$9$};

	\draw[line width = 0.1mm] (2, 0) -- (2, -0.05);
	\node at (2, -0.25) {\small$4$};
	\draw[line width = 0.1mm] (4.5, 0) -- (4.5, -0.05);
	\node at (4.5, -0.25) {\small$9$};

	\node (d0) at (0, -0.495) {}; %
	\node (e0) at (1.5, -0.495) {}; %
	\draw[line width = 0.1mm, blue] (d0.center) -- (e0.center);	
	\node (r0) at (0, -0.505) {}; %
	\node (s0) at (1.5, -0.505) {}; %
	\draw[line width = 0.1mm, red] (r0.center) -- (s0.center);
	\node[circle, fill=black, inner sep = 0.6pt] at (0.5, -0.5) {}; %
	\node[circle, fill=black, inner sep = 0.6pt] at (1, -0.5) {}; %
	\node[circle, fill=black, inner sep = 0.6pt] at (1.5, -0.5) {}; %

	\node[circle, fill=blue, inner sep = 0.6pt] (d1) at (2, 0.5) {}; %
	\node[circle, fill=blue, inner sep = 0.6pt] at (2.5, 0.75) {}; %
	\node[circle, fill=blue, inner sep = 0.6pt] at (3, 1) {}; %
	\node[circle, fill=blue, inner sep = 0.6pt] at (3.5, 1.25) {}; %
	\node[circle, fill=blue, inner sep = 0.6pt] (e1) at (4, 1.5) {}; %
	\draw[line width = 0.2mm, blue] (d1.center) -- (e1.center);

	\node[circle, fill=blue, inner sep = 0.6pt] (d2) at (4.5, 2.25) {}; %
	\node[circle, fill=blue, inner sep = 0.6pt] at (5, 2.5) {}; %
	\node[circle, fill=blue, inner sep = 0.6pt] at (5.5, 2.75) {}; %
	\node[rotate = 22.5, blue] (e2) at (6, 3) {\tiny$...$}; %
	\draw[line width = 0.2mm, blue] (d2.center) -- (e2);

	\draw[line width = 0.2mm, dashed, blue, opacity = 0.4] (d1) -- (e0);
	\node[gray, blue, opacity = 0.4] at (1.2, 0.25) {\tiny first discontinuity};
	\draw[line width = 0.2mm, dashed, blue, opacity = 0.4] (d2) -- (e1);
	\node[blue, opacity = 0.4] at (3.4, 1.875) {\tiny second discontinuity};

	\node[blue] at (5.25, 2.9) {\small$f$};

	\node (s1) at (2, 0) {}; %
	\node[circle, fill=red, inner sep = 0.6pt] at (2.5, 0.25) {}; %
	\node[circle, fill=red, inner sep = 0.6pt] at (3, 0.5) {}; %
	\node[circle, fill=red, inner sep = 0.6pt] at (3.5, 0.75) {}; %
	\node[circle, fill=red, inner sep = 0.6pt] at (4, 1) {}; %
	\node[circle, fill=red, inner sep = 0.6pt] at (4.5, 1.25) {}; %
	\node[circle, fill=red, inner sep = 0.6pt] at (5, 1.5) {}; %
	\node[circle, fill=red, inner sep = 0.6pt] at (5.5, 1.75) {}; %
	\node[rotate = 22.5, red] (s2) at (6, 2) {\tiny$...$}; %
	\draw[line width = 0.2mm, red] (s1.center) -- (s2);

	\node[red] at (5.25, 1.25) {\small$\sigma(f)$};

	\draw[-{Stealth[width=1.5mm, length=2mm]}, line width = 0.3mm] (0,0) -- (6.5,0);
	\draw[-{Stealth[width=1.5mm, length=2mm]}, line width = 0.3mm] (0,-0.015) -- (0,3.2);
	\draw[line width = 0.3mm, dashed, dash pattern={on 1.5pt off 1pt}] (0,0) -- (0,-0.45);
	\draw[line width = 0.3mm] (0,-0.45) -- (0,-0.55);
\end{tikzpicture}
	\vspace{-0.1in}
	\caption{
		Drawn in \textcolor{blue}{blue} is the coverability function $f$ (labelled with its two discontinuities) for the $\tuple{p,q}$ coverability table $\tuple{4, 2}, \tuple{9, 9}$.
		Drawn in \textcolor{red}{red} is $\sigma(f) = \bar{X}^4$.
		For the definition of $\bar{X}$, see the text above~\cref{discontinuities}; and for the definition of $\sigma$, please turn to~\cpageref{def:sigma}. 
		Note that the $y$-axis has been scale by a factor of $0.5$.
	}
	\label{fig:coverability-function}
\end{figure}
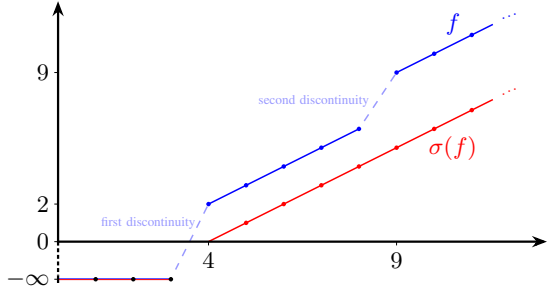

Now, consider the set $\calF$ of all functions $f\colon\NN\cup\{-\infty\}\to\NN\cup\{-\infty\}$ that are \emph{strictly monotone} (i.e. $i < j$ implies \hbox{$f(i) < f(j)$}), that have $f(-\infty) = -\infty$, and that have only finitely many \emph{discontinuities}, meaning numbers $i \in \N$ where \hbox{$f(i) > f(i-1) + 1$}. Then $\calF$ is a semiring, where 
\begin{enumerate}[(1)]
	\item the multiplication operation is function composition: $(fg)(i)\coloneqq g(f(i))$; 
	\item the addition operation is the pointwise maximum:\linebreak $(f+g)(i)\coloneqq\max\set{f(i),g(i)}$; and 
	\item the identity function is the multiplicative neutral and the constant $-\infty$ is the additive neutral. 
\end{enumerate}

We write $X$ for the element $f\in\calF$ with $f(i)=i+1$ for all $i \in \NN$ and $f(-\infty) = -\infty$.
We write $\bar{X}$ for the element $f\in\calF$ with $f(i) = i-1$ for all $i > 0$ and \hbox{$f(-\infty), f(0) = -\infty$}.
Note that, if $D = \set{\langle i,f(i)\rangle \mid f(i) > f(i-1)+1}$ is the set of discontinuities of $f$, then 
\begin{equation} \label{discontinuities} 
	f=\sum_{\langle i,j\rangle\in D} \bar{X}^iX^j. 
\end{equation}

Let us now see what addition and multiplication of $\calF$ mean in terms of coverability. A \emph{path} in a 1-VASS is a sequence \hbox{$\pi = e_1 \cdots e_m$} of transitions $e_1,\ldots, e_m$ where, for each \hbox{$i \in [1, m-1]$}, the end state of $e_i$ is the start state of $e_{i+1}$. 
To each transition $e = \tuple{p,w,q}$, we associate an element $f_e$ as follows. If $w\ge 0$, then $f_e=X^w$. If $w<0$, then $f_e=\bar{X}^{-w}$. For a path $\pi=e_1\cdots e_m$, we define $f_\pi\coloneqq f_{e_1}\cdots f_{e_m}$. 
Furthermore, for a pair of states $p,q$, let $P$ be a set of paths from $p$ to $q$; we define $f_P \coloneqq \sum_{\pi\in P} f_\pi$. 
See~\cref{app:functions-are-tables} for the following:

\begin{restatable}{claim}{functionsAreTables}\label{clm:functions-are-tables}
	Let $p,q$ be a pair of states and let $P$ be a set of paths from $p$ to $q$; $f_P$ is the coverability function for $P$. 
\end{restatable}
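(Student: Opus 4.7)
The plan is to decompose $f_P$ into its constituent pieces---transitions, paths, then sums over paths---and check at each level that the algebraic operation in $\calF$ mirrors the VASS-semantics behaviour it is meant to simulate.

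First I would verify the base case of a single transition $e = \langle p, w, q\rangle$. If $w \ge 0$, then $f_e = X^w$ acts on $\NN$ by $i \mapsto i+w$ and fixes $-\infty$, which matches the effect of the transition under VASS semantics. If $w < 0$, then $f_e = \bar{X}^{-w}$; using $\bar{X}(i) = i-1$ for $i>0$ and $\bar{X}(0) = -\infty$, a direct calculation shows that $\bar{X}^{-w}(i) = i+w$ when $i \ge -w$ and $-\infty$ otherwise. The asymmetric definition---in particular $\bar{X}(1) = 0$ but $\bar{X}(0) = -\infty$---is exactly what distinguishes the legal move leaving counter value $0$ from the illegal move that would take the counter negative.

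Next I would prove, by induction on the length $m$ of $\pi = e_1 \cdots e_m$, that $f_\pi$ coincides with the coverability function of $\{\pi\}$. Since multiplication in $\calF$ is the composition $(fg)(i) = g(f(i))$, we have $f_\pi(i) = f_{e_m}(\cdots f_{e_1}(i) \cdots)$, and combining this with the absorbing property $f(-\infty) = -\infty$ shared by every $f \in \calF$, an unfolded composition shows that $f_\pi(i) = i + \sum_{j=1}^m w_j$ whenever every prefix sum $i + \sum_{j=1}^k w_j$ is non-negative, and $f_\pi(i) = -\infty$ otherwise. This is precisely the counter value obtained by firing $\pi$ from $\langle p, i \rangle$ in the VASS, with the convention that a run which dips below zero returns $-\infty$.

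Finally, for a general set $P$ the definition $f_P = \sum_{\pi \in P} f_\pi$ and the fact that addition in $\calF$ is pointwise maximum immediately give $f_P(i) = \max_{\pi \in P} f_\pi(i)$, which by the previous step is the largest counter value at $q$ reachable from $\langle p, i \rangle$ along any path in $P$---exactly the coverability function of $P$. The main obstacle is the bookkeeping in the inductive step: one must verify that once some prefix causes a value to become $-\infty$, all subsequent applications (including positive-weight ones, where $X^w(-\infty) = -\infty$) preserve $-\infty$, so that paths whose overall weight sum is non-negative but which temporarily dip below zero are correctly rejected.
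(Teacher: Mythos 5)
Your proof is correct and follows essentially the same approach as the paper: first showing by induction on path length that $f_\pi$ computes the (unique) final counter value along $\pi$ under VASS semantics (or $-\infty$ when $\pi$ is blocked), then using that $+$ is pointwise maximum to conclude for $f_P$. Your version is simply a more detailed unwinding of the paper's terse one-paragraph argument.
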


In our algorithm, all occurring 1-VASS have the same set of states; let us assume that the set of states is fixed to \hbox{$Q = \set{1,\ldots,n}$}.
Given an acyclic 1-VASS $\cV$, its \emph{adjacency matrix} $\Adj(\cV)$ is the following
$n\times n$ matrix $\Adj(\cV)=A$ over $\calF$. 
For a pair of states $p,q \in Q$ such there is an transition $e$ from $p$ to $q$, we set $(A)_{p,q} = f_e$; if there is no transition from $p$ to $q$, then $(A)_{p,q}=0$.
From now on, we write $A\coloneqq \Adj(\cA)$ and $A_k\coloneqq\Adj(\cA_k)$, where $\cA$ and $\cA_k$ are the 1-VASS occurring in our algorithm. Recall that $\cA_0$ is obtained from $\cA$ by adding the self-loop $\tuple{p,0,p}$ at every state $p \in Q$. 
Thus, we have $A_0=A+I_n$, where $I_n$ is the $n\times n$ identity matrix over $\calF$.

Observe that $(A^k)_{p,q}=\sum_{\pi\in P_{p,q}^{=k}} f_\pi$, where $P_{p,q}^{=k}$ is the set of paths from $p$ to $q$ in $\cA$ of length $k$. 
Since $A_0^k=\sum_{i=0}^k A^i$, this implies that $(A_0^k)_{p,q}=\sum_{\pi\in P_{p,q}^{\le k}} f_\pi$, where $P_{p,q}^{\le k}$ is the set of all paths in $\cA$ from $p$ to $q$ of length $\le k$. 
By our observation above, this means $(A_0^n)_{p,q} = f_{P_{p,q}}$, where $P_{p,q}$ is the set of all paths from $p$ to $q$. Hence, \cref{clm:functions-are-tables} implies:

\begin{lemma}\label{matrix-power-coverability}
	$(A_0^n)_{p,q}$ is $\covertable{p}{q}$.
\end{lemma}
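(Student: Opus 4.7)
The plan is to chain together the paragraph of observations stated just above the lemma and then invoke~\cref{clm:functions-are-tables}. Essentially nothing new is needed: one just has to verify one identity in the semiring $\calF$, namely that $A_0^n = \sum_{i=0}^{n} A^i$, and observe that $n$ is large enough to capture all paths.

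First I would exploit the fact that $\calF$ is \emph{idempotent} as a semiring: because addition in $\calF$ is pointwise maximum, $f+f=f$ for every $f \in \calF$, and in particular $M+M=M$ for every matrix $M \in \mathrm{Mat}(n\times n,\calF)$. Expanding $A_0^k = (A+I_n)^k$ by the distributive law produces a formal sum of monomials $A^{i_1}I_n^{j_1}A^{i_2}\cdots$ whose underlying powers $A^i$ range over $i\in[0,k]$; idempotency collapses all duplicate occurrences, so a straightforward induction on $k$ yields $A_0^k = \sum_{i=0}^{k} A^i$ in $\calF$.

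Second, combining this identity with the already-established entrywise formula $(A^i)_{p,q} = \sum_{\pi \in P^{=i}_{p,q}} f_\pi$ gives
\begin{equation*}
	(A_0^n)_{p,q} \;=\; \sum_{i=0}^{n} \sum_{\pi \in P^{=i}_{p,q}} f_\pi \;=\; \sum_{\pi \in P^{\le n}_{p,q}} f_\pi.
\end{equation*}
Next I would use acyclicity: since $\cA$ is acyclic with $n$ states, every path in $\cA$ visits each state at most once and thus has length at most $n-1\le n$. Hence $P^{\le n}_{p,q} = P_{p,q}$ is the set of \emph{all} paths from $p$ to $q$, and the sum above is exactly $f_{P_{p,q}}$.

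Finally, I would invoke~\cref{clm:functions-are-tables} with $P \coloneqq P_{p,q}$ to conclude that $f_{P_{p,q}}$ is the coverability function for the set of all paths from $p$ to $q$ in $\cA$, which by definition is precisely the representation of the coverability table $\covertable{p}{q}$. There is no real obstacle here; the only point that warrants care is the idempotency collapse, but this is immediate once one notes that the semiring addition is pointwise $\max$ and that $I_n \cdot A^i = A^i \cdot I_n = A^i$.
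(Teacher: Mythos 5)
Your proof is correct and takes essentially the same route as the paper: the paper establishes the lemma in the paragraph immediately preceding its statement, using exactly the same chain $(A^k)_{p,q}=\sum_{\pi\in P^{=k}_{p,q}} f_\pi$, then $A_0^k=\sum_{i=0}^k A^i$, then acyclicity to cap path lengths at $n$, then~\cref{clm:functions-are-tables}. The only difference is that you spell out why $A_0^k=\sum_{i=0}^k A^i$ holds (idempotency of $+$ in $\calF$ plus $I_n$ being the multiplicative identity), which the paper takes as immediate; your argument for this detail is sound.
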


\begin{remark}\label{remark-vass-matrix-product}
	Note that by \cref{matrix-power-coverability}, we could also
	use repeated squaring of $A_0$ to decide coverability. Unfortunately,
	computing the product of two matrices over $\calF$ seems to require
	$\TC^0$, so this would only yield a $\TC^1$ upper bound overall.
\end{remark}

\cref{matrix-power-coverability} tells us that correctness of our algorithm is established, if we can prove the following proposition.

\begin{proposition}\label{correctness-coverability}
	For $k\ge\lceil\log n\rceil$, we have $A_kA_k=A_0^n$.
\end{proposition}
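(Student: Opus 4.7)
The plan is to establish the two inequalities $A_k A_k \leq A_0^n$ (for every $k$) and $A_k A_k \geq A_0^n$ (for $k \geq \lceil \log n \rceil$) separately. For the upper direction, I would induct on $k$ to show $A_k \leq A_0^{3^k}$ as matrices over $\calF$: every transition $p \xrightarrow{a} q$ that $\dagger$ introduces has its $\calF$-representation ($X^a$ if $a \geq 0$, else $\bar{X}^{-a}$) pointwise bounded by the true $f_\pi \in \calF$ of the max-amplitude length-$\leq 3$ path $\pi$ in $\cA_k$ that defined $a$, because the amplitude-based summary under-represents $f_\pi$ whenever the amplitude is negative but $\pi$ has positive overall gain. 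Since $\cA$ is acyclic, any path of length $\geq n$ repeats a state and can be contracted without decreasing coverability, giving $A_0^m = A_0^n$ for every $m \geq n$. Combining these two observations, $A_k A_k \leq A_0^{2 \cdot 3^k} \leq A_0^n$.

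The lower direction rests on a key decomposition observation that I would use throughout. For any path $\pi$ from $p$ to $q$ in $\cA_0$, let $r$ be the state visited at the position where the partial-sum trajectory of $\pi$ attains its global minimum $s_\pi$. Then $\pi = \pi_1 \pi_2$, with $\pi_1$ ending at the minimum (so $f_{\pi_1} = \bar{X}^{-s_\pi}$ is a pure monomial) and $\pi_2$ beginning from the minimum with all its further partial sums non-negative in isolation (so $f_{\pi_2} = X^{W_\pi - s_\pi}$ is also a pure monomial). Such ``dagger-friendly'' monomials match exactly the format preserved by the $\dagger$ operation, so a single entry of $(A_k)$ can represent each of $f_{\pi_1}$ and $f_{\pi_2}$ without loss, and their composition in $\calF$ reconstructs $f_\pi = \bar{X}^{-s_\pi} X^{W_\pi - s_\pi}$ exactly.

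With this in hand, I would prove by induction on $k$ the invariant $(A_k A_k)_{p,q} \geq f_\pi$ for every path $\pi$ of length $\leq 2^{k+1}$ in $\cA_0$, which together with the stabilization of $A_0^m$ yields $A_k A_k \geq A_0^n$ as soon as $2^{k+1} \geq n$. The inductive step takes a path $\pi$ of length $\leq 2^{k+2}$, splits it at its midpoint via some intermediate state $r_m$ into halves of length $\leq 2^{k+1}$, and applies the induction hypothesis to each half to obtain $A_k^4 \geq f_\pi$ at entry $(p,q)$. To pass from $A_k^4$ to $A_{k+1}^2$, every length-$4$ path in $\cA_k$ contributing to $A_k^4$ is re-routed through its own global-minimum partial-sum position, producing two dagger-friendly sub-paths of length $\leq 3$ that are captured exactly by single $(A_{k+1})$-entries; their product in $A_{k+1} A_{k+1}$ then reproduces the length-$4$ path's $f$ in full. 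The hard part will be precisely this re-routing step: $\dagger$ is genuinely lossy per entry (so $A_{k+1} \geq A_k^3$ fails), and the proof must operate at the level of $A_k A_k$ rather than $A_k$ alone, exploiting pure-monomial reassembly across the matrix product to compensate for the per-entry loss, with careful case analysis when the global minimum lies at an endpoint (using a ``secondary minimum'' split so that both resulting sub-paths remain dagger-friendly and of length $\leq 3$).
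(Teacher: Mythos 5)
Your plan follows the paper's argument essentially verbatim: the lower bound $A_0^{2^k}\le A_kA_k$ rests on the same split-at-the-minimum decomposition, namely \cref{minimal-decomposition}, which you re-derive at the path level; and your ``secondary minimum'' worry about endpoints is exactly what restricting the split index to $[1,m-1]$ in that lemma achieves --- a short sign check on the chosen partial sum then shows both halves are always simple, with no further case analysis needed. One small point in your favor: you establish the upper direction via the unsquared inequality $A_k\le A_0^{3^k}$ together with $A_0^m\le A_0^n$ for all $m$, which has a trivially true base case $A_0\le A_0$, whereas the paper's claimed base case for \eqref{sandwich} at $k=0$ would require $A_0^2\le A_0$, which generally fails (e.g.\ when $p$ reaches $q$ in two steps but not one, so that $(A_0^2)_{p,q}\ne 0=(A_0)_{p,q}$). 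The paper's inductive step \eqref{sandwich-step} and its use at $k=\lceil\log n\rceil$ are unaffected, but your bookkeeping for the upper bound is cleaner.
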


Indeed, this would mean that the coverability function of all paths from $p$ to $q$ of length at most two in $\cA_k$ is the same as the coverability table $\covertable{p}{q}$ for $\cA$. 
Thus, the remainder of this subsection is devoted to proving \cref{correctness-coverability}.

We want to describe $A_{k+1}$ algebraically in terms of $A_k$.
In a semiring $S$, for $s,t\in S$, we write $s\le t$ if there is \hbox{$r\in S$} such that $t=s+r$.
This is a quasi-ordering, and also multiplicative: if $s\le t$ and $s'\le t'$ with $t=s+r$ and $t'=s'+r'$, then $tt'=(s+r)(s'+r')=ss'+sr'+rs'+rr'$, hence $ss'\le tt'$. Moreover, in the semiring of $n\times n$ matrices over $S$, we have $M\le N$ if and only if $(M)_{i,j}\le (N)_{i,j}$ for every $i,j \in [1,n]$.
For $f,g\in\calF$, we have $f\le g$ iff $f(i)\le g(i)$ for every $i\in\N$.

An element of $\calF$ is \emph{simple} if it is $0$ or of the form $X^i$ or $\bar{X}^i$ for some $i\in\N$. 
We denote the set of simple elements by $\hat{\calF}$. A matrix over $\calF$ is \emph{simple} if all its entries are simple. 
Consider the map $\sigma\colon\calF\to\hat{\calF}$, where $\sigma(f)$ is the largest simple $g\in\calF$ for which $g\le f$. \label{def:sigma}
This is well-defined: 
Each $f\in\calF$ can be written as in~\cref{discontinuities}. If $D=\emptyset$, then $\sigma(f)=0$. 
If $D$ contains a pair $\langle 0,j\rangle$, then $\sigma(f)=X^j$ for the largest such $j$. 
If $D$ only contains pairs $\langle i,j\rangle$ with $i>0$, then $\sigma(f)=\bar{X}^i$ for the smallest such $i$. 
See~\Cref{fig:coverability-function} for an example where $D$ only contains pairs $\langle i,j\rangle$ with $i>0$.
The following is straightforward (see~\cref{app:sigma-properties}):
\begin{restatable}{lemma}{sigmaProperties}\label{sigma-properties}
	The map $\sigma\colon\calF\to\calF$ is a monoid homomorphism of $\tuple{\calF,+}$. In particular, $f\le g$ implies \hbox{$\sigma(f)\le\sigma(g)$}. Moreover, $\sigma(f)\le f$ for every $f\in\calF$.
\end{restatable}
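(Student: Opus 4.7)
The plan is to first give a concrete description of $\sigma(f)$ case by case, and then verify the three assertions via short case analyses and pointwise computations.

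First I would observe that the simple elements of $\calF$ form a chain under $\le$: we have $0 \le \bar{X}^i \le \bar{X}^{i'}$ whenever $i \ge i'$ (a smaller index yields a pointwise larger function), $\bar{X}^i \le X^j$ for all $i, j \ge 0$, and $X^j \le X^{j'}$ whenever $j \le j'$. Each inequality is immediate by pointwise evaluation. Using the fact that every $f \in \calF$ is strictly monotone, so that $f(k) \ge f(0) + k$ whenever $f(0) \ne -\infty$, one then reads off: (a) if $f(0) \in \NN$, then $\sigma(f) = X^{f(0)}$, since $X^{f(0)} \le f$ while $X^j(0) = j > f(0)$ rules out any larger $j$; (b) if $f(0) = -\infty$ and $f \ne 0$, then $\sigma(f) = \bar{X}^{k_0}$ where $k_0 \coloneqq \min\{k \in \NN : f(k) \ne -\infty\}$, because $\bar{X}^{k_0} \le f$ by strict monotonicity, $\bar{X}^i(i) = 0 > -\infty = f(i)$ forbids any $i < k_0$, and no $X^j$ can sit below $f$ since $X^j(0) \ge 0 > f(0)$; (c) if $f = 0$, then $\sigma(f) = 0$. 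In all cases the candidate element is the supremum in the chain of simples below $f$, so $\sigma$ is well-defined and $\sigma(f) \le f$ is immediate, which establishes the third assertion.

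For additivity, I would first verify the three arithmetic identities $X^a + X^b = X^{\max(a,b)}$, $\bar{X}^a + \bar{X}^b = \bar{X}^{\min(a,b)}$, and $\bar{X}^a + X^b = X^b$ (the last using $\bar{X}^a \le X^b$), all by pointwise evaluation. A case split on whether each of $f(0), g(0)$ lies in $\NN$ or equals $-\infty$ then matches $(f+g)(0) = \max(f(0), g(0))$ (with the convention $\max(-\infty, x) = x$) against the case description of $\sigma$ above; for instance, when both $f(0), g(0) = -\infty$ but $f,g \ne 0$, we have $k_0^{f+g} = \min(k_0^f, k_0^g)$, and the second identity gives $\sigma(f) + \sigma(g) = \bar{X}^{\min(k_0^f,k_0^g)} = \sigma(f+g)$. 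The remaining cases are analogous, and $f = g = 0$ takes care of $\sigma(0) = 0$.

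Finally, the monotonicity statement $f \le g \Rightarrow \sigma(f) \le \sigma(g)$ drops out from additivity: by the definition of $\le$ in a semiring, $f \le g$ provides some $h \in \calF$ with $g = f + h$, whence $\sigma(g) = \sigma(f) + \sigma(h) \ge \sigma(f)$. There is no real obstacle here; the only minor delicacy is confirming that no $X^j$ dominates $f$ from below when $f(0) = -\infty$, which ensures that case (b) of the description genuinely computes the supremum in the chain of simples rather than a merely local maximum.
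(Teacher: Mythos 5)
Your proof is correct and takes essentially the same approach as the paper's. The paper characterizes "$X^i \le f$ iff $f(0)\ge i$" and "$\bar{X}^i \le f$ iff $f(i)\ge 0$" and then does a case split on the shape of $\sigma(f+g)$, whereas you derive the equivalent explicit closed form for $\sigma(f)$ (in terms of $f(0)$ or $k_0 = \min\{k : f(k)\ne -\infty\}$) and case-split on the shapes of $f$ and $g$; the difference is purely organizational. One minor observation: your derivation of monotonicity from additivity is cleaner than strictly necessary — the paper does not spell this out, but in any semiring with the order $s\le t \Leftrightarrow \exists r\, (t=s+r)$, a $+$-homomorphism is automatically monotone, exactly as you argue.
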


Moreover, $\sigma$ allows us to decribe the translation
\hbox{$\cA_k\leadsto\cA_{k+1}$} algebraically. 
It is easy to observe the following lemma (a detailed proof can be found in~\cref{app:update-algebraically}).
\begin{restatable}{lemma}{updateAlgebraically}\label{update-algebraically}
	For every $k\ge 0$, $A_{k+1}=\sigma(A_kA_kA_k)$.
\end{restatable}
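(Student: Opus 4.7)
The plan is to prove the matrix identity entrywise. Fix states $p, q$, and let $P_3$ (resp.\ $P_{\le 3}$) denote the set of paths of length exactly $3$ (resp.\ at most $3$) from $p$ to $q$ in $\cA_k$. Expanding the matrix cube and invoking \cref{clm:functions-are-tables}, the entry $(A_kA_kA_k)_{p,q}$ equals the coverability function $\sum_{\pi \in P_3} f_\pi$. Since $\sigma$ is an additive homomorphism (\cref{sigma-properties}), the goal reduces to showing $(A_{k+1})_{p,q} = \sum_{\pi \in P_3} \sigma(f_\pi)$.

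The next step is to compute $\sigma(f_\pi)$ for a single path $\pi$ with weights $w_1, \ldots, w_m$. Let $s_\pi = \sum_i w_i$ be the total and $m_\pi = \min_{j} \sum_{i \le j} w_i$ the minimum partial sum. Unfolding the composition gives $f_\pi(i) = i + s_\pi$ when $i \ge \max(0, -m_\pi)$ and $f_\pi(i) = -\infty$ otherwise. A case split on the sign of $m_\pi$ then shows $\sigma(f_\pi) = f_{\tuple{p, \mathrm{amp}(\pi), q}}$: if $m_\pi \ge 0$, then $f_\pi = X^{s_\pi}$ is already simple; if $m_\pi < 0$, then $f_\pi$ has a unique discontinuity at $-m_\pi > 0$, so $\sigma(f_\pi) = \bar{X}^{-m_\pi}$. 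This matches the two-case definition of amplitude. Summing over $P_3$ using the simple-element identities $X^a + X^b = X^{\max(a,b)}$, $\bar{X}^a + \bar{X}^b = \bar{X}^{\min(a,b)}$, and $\bar{X}^a + X^b = X^b$ for $a, b \ge 0$, yields the simple element corresponding to a single transition of weight $A \coloneqq \max_{\pi \in P_3} \mathrm{amp}(\pi)$ (with the empty sum giving $0$ when $P_3 = \emptyset$).

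By definition, $(A_{k+1})_{p,q}$ is the element for a single transition of weight $B \coloneqq \max_{\pi \in P_{\le 3}} \mathrm{amp}(\pi)$, so the lemma reduces to $A = B$. The inequality $A \le B$ is immediate. The main obstacle I anticipate is the converse $B \le A$, which boils down to showing that any path of length less than $3$ can be padded to length $3$ without decreasing amplitude. The key observation is that every state of $\cA_k$ carries a self-loop of weight $\ge 0$: this is explicit for $k = 0$, and for $k \ge 1$ it follows from the dagger construction since the empty path of amplitude $0$ is always a candidate in $\cA_{k-1}$. Inserting such a self-loop between transitions $i$ and $i+1$ replaces the partial sums $\sigma_{i+1}, \ldots, \sigma_m$ by $\sigma_{i+1} + c, \ldots, \sigma_m + c$ (and inserts the new value $\sigma_i + c$), so every partial sum and the total weakly increase; hence the amplitude weakly increases. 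This yields $B \le A$ and completes the plan.
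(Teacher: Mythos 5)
Your proof is correct and follows the same approach as the paper: it expands $\sigma(A_kA_kA_k)$ entrywise using the additivity of $\sigma$, identifies $\sigma(f_\pi)$ with the simple element determined by the amplitude of $\pi$, and concludes. You additionally make explicit a step the paper passes over silently---that the maximum amplitude over paths of length exactly~$3$ equals the maximum over paths of length at most~$3$---by observing that every state of $\cA_k$ carries a self-loop of non-negative weight and that inserting such a self-loop can only weakly increase the amplitude.
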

This will allow us to prove for every $k\ge 0$:
\begin{equation}\label{sandwich}
	A_0^{2^k} \le A_kA_k \le A_0^{3^k}.
\end{equation} 
Since for $k\ge \lceil \log n\rceil$ we have $A_0^{2^k}=A_0^n=A_0^{3^k}$ (recall $A_0^{n+m}=A_0^n$ for $m\ge 0$ because of acyclicity of $\cA$) and thus $A_0^n\le A_kA_k\le A_0^n$, which yields~\cref{correctness-coverability}. 
Hence, it remains to prove~\eqref{sandwich}, which we do inductively.
Here, $k=0$ is trivial, and for $k>0$, we use a simple, but key observation:
\begin{lemma}\label{minimal-decomposition}
	If $s_1,\ldots,s_m\in\hat{\calF}$, then
	\begin{equation*}
		s_1\cdots s_m=\sum_{1\le i<m} \sigma(s_1\cdots s_i)\sigma(s_{i+1}\cdots s_m).
	\end{equation*}
\end{lemma}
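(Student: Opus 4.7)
The plan is to establish both inequalities separately. The direction $\sum_{1 \le i < m} \sigma(s_1 \cdots s_i) \sigma(s_{i+1} \cdots s_m) \le s_1 \cdots s_m$ will follow immediately from \cref{sigma-properties} ($\sigma(f) \le f$) together with multiplicativity of $\le$ on $\calF$: each summand is $\le (s_1 \cdots s_i)(s_{i+1} \cdots s_m) = s_1 \cdots s_m$, and taking sums preserves this.

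For the reverse direction I will first record explicit formulas for the quantities involved. Assuming each $s_j \ne 0$ (else both sides trivially vanish), associate to $s_j$ its signed weight $w_j \in \Z$ so that $s_j = X^{w_j}$ when $w_j \ge 0$ and $s_j = \bar{X}^{-w_j}$ otherwise. Let $W_k = w_1 + \cdots + w_k$ and $\mu_i = \min(0, W_1, \ldots, W_i)$. A short computation in $\calF$ will show that $(s_1 \cdots s_i)(x) = x + W_i$ for $x \ge -\mu_i$ and $-\infty$ otherwise; reading off its discontinuities via \eqref{discontinuities} will give $\sigma(s_1 \cdots s_i) = X^{W_i}$ if $\mu_i = 0$, and $\bar{X}^{-\mu_i}$ otherwise. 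It then suffices to exhibit a single split $i^* \in \{1, \ldots, m-1\}$ for which $\sigma(s_1 \cdots s_{i^*}) \sigma(s_{i^*+1} \cdots s_m) = s_1 \cdots s_m$, as this summand alone then pushes the sum up to $s_1 \cdots s_m$.

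I will choose $i^*$ adaptively, based on the partial-sum walk $(W_k)_k$. Set $M := \min(0, W_1, \ldots, W_{m-1})$ and take $i^*$ to be the largest index in $\{1, \ldots, m-1\}$ with $W_{i^*} = M$, if one exists; otherwise $M = 0$ and I pick $i^* = m-1$. By construction, $W_k > M$ strictly for $i^* < k < m$, so on the second half the walk can drop below $W_{i^*}$ only at $k = m$. A routine case analysis on the sign of $M$ and on whether $W_m \ge M$, using the identities $\bar{X}^a \bar{X}^b = \bar{X}^{a+b}$ and $X^a \bar{X}^b = X^{a-b}$ for $a \ge b$ (else $\bar{X}^{b-a}$), then verifies the equality: for instance, when $M < 0$ and $W_m \ge M$ both sides equal $\bar{X}^{-M} X^{W-M}$, and when $M < 0$ and $W_m < M$ (so $W_m$ is the overall walk minimum) both sides collapse to $\bar{X}^{-W_m}$. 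The main obstacle will be precisely this adaptive choice of $i^*$ and the accompanying case split: uniform choices such as $i^* = 1$ or $i^* = m-1$ fail in general, so the argument really must cut the walk at the last position of its minimum on $\{0, 1, \ldots, m-1\}$ in order for the two $\sigma$-summaries to recombine exactly into $s_1 \cdots s_m$.
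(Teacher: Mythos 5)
Your proof is correct and takes essentially the same route as the paper: the easy direction via $\sigma(f)\le f$ and multiplicativity of $\le$, and for the reverse, cutting the product at an index where the partial-sum walk $W_1,\ldots,W_{m-1}$ attains its minimum, so that both halves are already simple and $\sigma$ acts as the identity on them. The paper simply picks any minimizing index $i$ and observes the two halves are simple; your "largest minimizer, else $i^*=m-1$" rule and the explicit formulas for $\sigma(s_1\cdots s_i)$ are an equivalent (slightly more verbose) way of checking the same fact.
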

\begin{proof}
	Clearly, since $\sigma(fg)\le fg$ for every $f,g\in\calF$ (by definition of $\sigma$), the right-hand side is $\le$ the left-hand
	side. Conversely, consider the \emph{effect} $\delta(s)\in\Z$ of a
	simple $s\in\hat{\calF}$.
	For $s=\bar{X}^u$, let $\delta(s) \coloneqq -u$; and for $s=X^v$, let $\delta(s) \coloneqq v$. 
	Pick $i \in [1,m-1]$ so that $\delta(s_1) + \cdots + \delta(s_i)$ is minimal. 
	Then clearly, $s_1 \cdots s_i$ and $s_{i+1} \cdots s_m$ are both simple, and thus
	$\sigma(s_1\cdots s_i)=s_1\cdots s_i$ and $\sigma(s_{i+1}\cdots
	s_m)=s_{i+1}\cdots s_m$. Hence, $\sigma(s_1\cdots
	s_i)\sigma(s_{i+1}\cdots s_m)$ is $\ge$ the left-hand side.
\end{proof}

\cref{minimal-decomposition} yields the first inequality in:
\begin{equation}\label{sandwich-step}
	\begin{split}
	A_k^4\le\sigma(A_k)\sigma(A_k^3)+\sigma(A_k^2)\sigma(A_k^2)+\sigma(A_k^3)\sigma(A_k) \\
	\le\underbrace{\sigma(A_k^3)\sigma(A_k^3)}_{=A_{k+1}A_{k+1}} \le A_k^6\,. 
	\end{split}
\end{equation}
The second inequality follows from multiplicativity of $\le$, monotonicity of $\sigma$, and from $I_n\le A_k$, which implies \hbox{$A_k^i\le A_k^3$} for $i\le 3$. 
The third inequality follows from $\sigma(A_k^3)\le A_k^3$ and again multiplicativity of $\le$. 
Now the induction step of \eqref{sandwich} can be carried out.
If~\eqref{sandwich} holds for $k$, then~\eqref{sandwich-step} implies $A_0^{2^{k+1}}\le A_k^4\le A_{k+1}^2\le A_k^6\le A_0^{3^{k+1}}$, and thus~\eqref{sandwich} for $k+1$.

\subsection{$\NP$-hardness in dichotomy and decidability}
We briefly sketch (i)~$\NP$-hardness in \cref{main-result-vass} and (ii)~\cref{main-result-vass-decidable} (details in \cref{app:vass-hardness-decidability}). We show that with the constant $B\ge 1$ from \cref{lem:modulo-free}, the $S\subseteq\ZZ^{p}\times\NN$ is not uniformly quasi-upward closed if and only if for some \hbox{$\vec{b}\in[0,B-1]^{p+1}$}, the set $[S]_{B,\vec{b}}$ has \emph{unbounded gaps}. This means, for each $g\ge 1$, there is a $\vec{t}\in\ZZ^p$ such that $S[\vec{t}]$ contains a gap of size $\ge g$ above some $x\in S[\vec{t}]$.
If $S$ itself has unbounded gaps, then subset sum easily reduces to $\vasreach(S)$, and a slight modification works when some $[S]_{B,\vec{b}}$ has unbounded gaps. Since ``unbounded gaps'' is expressible in Presburger arithmetic, this also implies \cref{main-result-vass-decidable}.

\section{Conclusion}
We have provided dichotomies for the reachability problem in systems with one counter  and integer, natural, or VASS semantics. 
This raises several further questions.

First, which complexities arise in higher dimension? For fixed-dimensional
VASS, even ordinary reachability is not
understood~\cite{czerwinski_et_al:LIPIcs.FSTTCS.2023.35}, meaning a full dichotomy
for semilinear target sets appears to be a longterm goal. For natural
semantics, there are concrete target sets $S\subseteq \NN^{0+2}$ (i.e., 
with $p=0$) where it is not clear whether
$\natreach(S)$ is in polynomial time, such as $S=\{\langle
x_1,x_2\rangle\in\NN^2 \mid x_2\in [x_1,2x_1]\}$. Thus, \hbox{$\NP$ vs.\ $\compP$}
dichotomies in higher dimension for integer and natural semantics seem
challenging as well.

This work reveals a variety of decision problems about systems with one counter
that can be solved in $\AC^1$.  
Here, $\AC^1$ seems to be a difficult barrier to break.
Since at least 1990, it is open whether the $\AC^1$ upper bound for
the shortest-path problem in automata weighted with naturals can be improved
(see \cite[p.~13]{cook1985taxonomy} and \cite[p.~32]{chandra1990complexity});
and this is a special case of coverability in both integer and VASS semantics.

\section*{Acknowledgments}
\noindent\raisebox{-9pt}[0pt][0pt]{\includegraphics[height=.8cm]{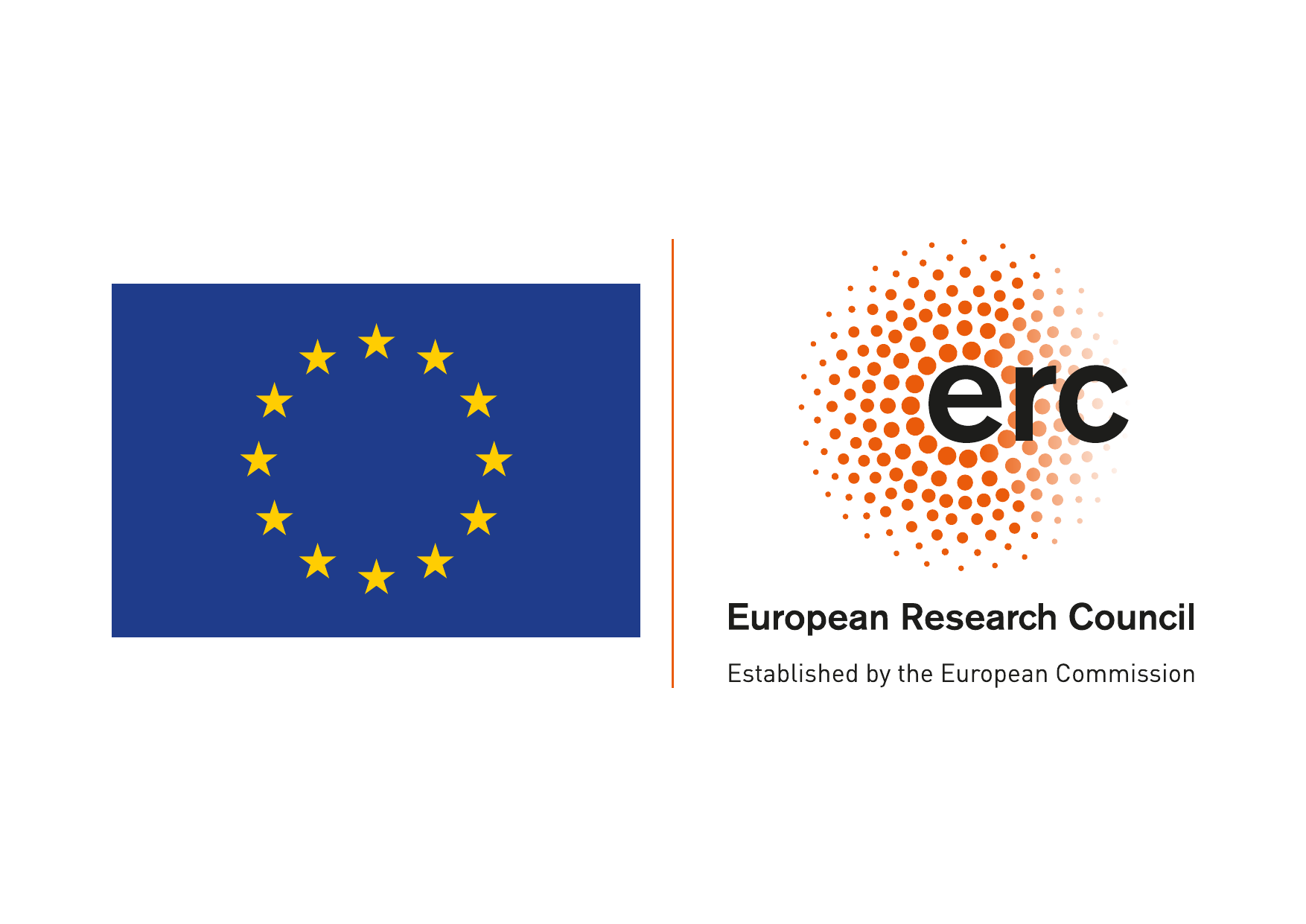}}
Funded by the European Union (ERC, FINABIS, 101077902).
Views and opinions expressed are however those of the authors only and do not necessarily reflect those of the European Union or the European Research Council Executive Agency. 
Neither the European Union nor the granting authority can be held responsible for them.

Henry Sinclair-Banks is supported by the ERC grant \hbox{INFSYS}, agreement \hbox{no.~950398}.

\label{beforebibliography}
\newoutputstream{pages}
\openoutputfile{main.pages.ctr}{pages}
\addtostream{pages}{\getpagerefnumber{beforebibliography}}
\closeoutputstream{pages}

\printbibliography

\pagebreak

\appendices
\crefalias{section}{appsec}
\crefalias{subsection}{appsec}

\gdef\thesubsection{\Roman{subsection}.}%
\gdef\thesubsectiondis{\Roman{subsection}.}%
\renewcommand\thesubsection{\thesection.\Roman{subsection}}

	\section{Additional material on Section~\ref{sec:main-results}}
        \label{app:main-results}
	\subsection{The case of one-counter automata}
\label{app:one-counter-automata}

Let us briefly mention that the case of one-counter automata (OCA) with binary
updates~\cite{DBLP:conf/concur/HaaseKOW09}---which work like $1$-VASS, but
allow a zero-test instruction---is not interesting in terms of complexity
dichotomies with target sets $S\subseteq\ZZ^{p+1}$.

Indeed, for any semilinear $S\subseteq\ZZ^{p}\times\N$, the problem for OCA
would be trivial (the answer to every instance would be ``no'') if
$S[\vec{t}]=\emptyset$ for all $\vec{t}$. Moreover, if $S[\vec{t}]\ne\emptyset$
for any $\vec{t}\in\ZZ^p$, then the reachability problem for $S$ would be
$\NP$-complete. This is just because using the zero tests, we can always reduce
subset sum; and after such a zero test, we can add some fixed target number
$x\in S[\vec{t}]$. Then, we can reach $S[\vec{t}]$ if and only if the subset
sum instance is positive.

\subsection{Additional integer semantics density properties}
\label{app:z-density-properties}

\begin{restatable}{fact}{densityMultiply}\label{density-multiply}
	For every set $A\subseteq\ZZ$ and $k,\ell\ge 1$ and $x\in A$, we have
	$\density_k(A,x)\ge \tfrac{1}{2\ell}\density_{k\ell}(A,x)$.
\end{restatable}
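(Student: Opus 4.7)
The statement compares the infimum defining $\density_k(A,x)$ (over progressions of step $k$) with the infimum defining $\density_{k\ell}(A,x)$ (over progressions of step $k\ell$). The key observation is that every progression $x + k\ell\cdot[-m,m]$ is a sub-progression of $x + k\cdot[-n,n]$ whenever $\ell m\le n$, so lower bounds on the density of $A$ in progressions of step $k\ell$ transfer directly to lower bounds on progressions of step $k$; the only loss is the ratio between $2m+1$ and $2n+1$.

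Concretely, I plan to fix an arbitrary $n\ge 1$ and take $m \coloneqq \lfloor n/\ell\rfloor$. Then $x + k\ell\cdot[-m,m] \subseteq x + k\cdot[-n,n]$, so
\begin{equation*}
\abs{A \cap (x + k\cdot[-n,n])} \;\ge\; \abs{A \cap (x + k\ell\cdot[-m,m])} \;\ge\; (2m+1)\,\density_{k\ell}(A,x),
\end{equation*}
where the second inequality uses the definition of $\density_{k\ell}(A,x)$ as an infimum. Dividing by $2n+1$ gives
\begin{equation*}
\density_{k,n}(A,x) \;\ge\; \frac{2\lfloor n/\ell\rfloor + 1}{2n+1}\,\density_{k\ell}(A,x).
\end{equation*}

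The only remaining step is the elementary inequality $\frac{2\lfloor n/\ell\rfloor+1}{2n+1} \ge \frac{1}{2\ell}$ for all $n,\ell\ge 1$. Writing $n = \ell q + r$ with $0\le r < \ell$, this amounts to checking $2\ell(2q+1) \ge 2n+1$, i.e.\ $2\ell q + 2(\ell - r) \ge 1$, which holds since $\ell - r\ge 1$. Taking the infimum over $n\ge 1$ then yields $\density_k(A,x) \ge \tfrac{1}{2\ell}\density_{k\ell}(A,x)$. There is no real obstacle here; the only subtlety is ensuring the ratio bound is uniform in $n$ (including small $n$, where $\lfloor n/\ell\rfloor$ may be $0$), which the case analysis above confirms.
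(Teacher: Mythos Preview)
Your proof is correct and follows essentially the same approach as the paper's: write $n=m\ell+r$ with $m=\lfloor n/\ell\rfloor$, use the containment $x+k\ell\cdot[-m,m]\subseteq x+k\cdot[-n,n]$, and verify the elementary inequality $2\ell(2m+1)\ge 2n+1$. The only minor point (implicit in both your argument and the paper's) is that when $m=0$ the bound $|A\cap\{x\}|\ge\density_{k\ell}(A,x)$ is not literally an instance of the defining infimum over $n\ge1$, but it holds anyway since $x\in A$ and $\density_{k\ell}(A,x)\le 1$.
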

\begin{proof}
	Take any $n\in\NN$ and write $n=m\ell+r$ for some $r\in[0,\ell-1]$. Then 
	\begin{align*}
		| A \cap (&x + k\cdot[-n,n]) | \\
		& = | A \cap [x-kn, x+kn] \cap (x + k\ZZ) | \\
		& \geq | A \cap [x-kn,x+kn] \cap (x+k\ell\ZZ) | \\
		& = |A\cap [x-k(m\ell+r),x+k(m\ell+r)]\cap (x+k\ell\ZZ)| \\
		& \geq |A\cap [x-k\ell m,x+k\ell m]\cap (x+k\ell\ZZ)| \\
		& = | A \cap (x + k\ell\cdot[-m,m]) |.
	\end{align*}
	Therefore,
	\begin{align*} 
		\frac{| A \cap (x + k\cdot[-n,n]) |}{2n+1} 
		& \geq \frac{| A \cap (x + k\ell\cdot[-m,m]) |}{2(m\ell+r)+1} \\
		& \geq \frac{1}{2\ell} \cdot \frac{| A \cap (x + k\ell\cdot[-m,m]) |}{2m+1} \\
		& \geq \tfrac{1}{2\ell}\cdot \density_{k\ell}(A,x),
	\end{align*}
	where the second inequality follows from 
	\[ 2\ell\cdot(2m+1)=4\ell m+2\ell\ge 2m\ell+2r+1=2(m\ell+r)+1. \]
	This shows that $\density_k(A,x)\ge\tfrac{1}{2\ell}\cdot \density_{k\ell}(A,x)$.
\end{proof}

\begin{claim}\label{clm:z1}
	Let $S_1 \coloneqq -2\cdot\NN \cup (2\cdot\NN+1)$; $\intreach(S_1)$ is in $\AC^1$.
\end{claim}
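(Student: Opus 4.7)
The plan is to apply Theorem~\ref{main-result-integer}, which reduces the task to showing that $\density(S_1) > 0$. I would then use Lemma~\ref{dense-residue-classes} with $B = 2$ to split the problem along parity.

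First I would unpack the residue classes of $S_1$. Since every element of $S_1 = -2\NN \cup (2\NN+1)$ is either a non-positive even integer or a positive odd integer, the residue classes are $[S_1]_{2, 0} = \set{u \in \ZZ : 2u \in S_1} = (-\infty, 0]$ and $[S_1]_{2, 1} = \set{u \in \ZZ : 2u+1 \in S_1} = [0, \infty)$. So each residue class is simply a half-line.

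Next I would verify that each half-line has positive local density; this is essentially immediate. Take $A = (-\infty, 0]$, any $x \in A$, and any $k \geq 1$, $n \geq 1$. For every $i \in [-n, 0]$, the point $x + ki \leq x \leq 0$ lies in $A$, so $|A \cap (x + k\cdot[-n, n])| \geq n+1$, giving $\density_{k,n}(A, x) \geq (n+1)/(2n+1) \geq 1/2$. Taking infima yields $\density(A) \geq 1/2$, and the same bound holds for $[0, \infty)$ by the symmetric argument.

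By Lemma~\ref{dense-residue-classes}, it follows that $\density(S_1) > 0$, and Theorem~\ref{main-result-integer} then gives $\intreach(S_1) \in \AC^1$. The only step requiring any genuine care is verifying that the residue classes of $S_1$ are precisely the two half-lines claimed; once this is confirmed, the density bound is straightforward. I do not anticipate any substantial obstacle.
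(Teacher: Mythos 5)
Your proof is correct, and it takes a genuinely different route from the paper. The paper works directly with $S_1$: it splits on whether $k$ is even or odd, uses a direct parity argument to get $\density_k(S_1,x)\ge\tfrac{1}{2}$ for even $k$, and then invokes Fact~\ref{density-multiply} to transfer this to odd $k$ at the cost of a factor $\tfrac14$, yielding $\density(S_1)\ge\tfrac18$. You instead invoke Lemma~\ref{dense-residue-classes} with $B=2$ to reduce to the residue classes, correctly compute $[S_1]_{2,0}=(-\infty,0]$ and $[S_1]_{2,1}=[0,\infty)$, and then observe that a half-line trivially has local density $\ge\tfrac12$ for every $k$ (no case split on $k$ needed, since translating by multiples of $k$ in one direction always stays in the half-line). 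This is cleaner and more in the spirit of how the paper handles residue classes elsewhere; the cost is that it leans on Lemma~\ref{dense-residue-classes}, whose own proof uses Fact~\ref{density-multiply}, so the underlying machinery is the same, just packaged once rather than redone inline. Both give the desired positive lower bound on $\density(S_1)$, and then Theorem~\ref{main-result-integer} finishes the job.
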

\begin{proof}
	Suppose $k$ is even. 
	If $x\in S_1$, then $S_1 \cap (x + k\cdot[-n,n])$ contains at least $\tfrac{n}{2}$ integers.
	If $x$ is positive, then $x$ must be odd, and thus $S_1 \cap (x + k\cdot[-n,n])$ contains all odd numbers in $[x, x + kn]$. 
	If $x$ is non-positive, then $x$ is even and thus $S_1 \cap (x + k\cdot[-n,n])$ contains all even numbers in $[x - kn, x]$. 
	Hence, we have $\density_k(S_1,x) \geq \tfrac{1}{2}$ for all even $k$.

	Moreover, if $k$ is odd, then by~\cref{density-multiply}, we have $\density_{k}(S_1,x)\ge \tfrac{1}{4}\cdot \density_{2k}(S_1,x)$, which is at least $\tfrac{1}{8}$ by the previous case. 
	Hence, we have $\density_k(S_1,x)\ge\tfrac{1}{8}$ for any $k$.
	This implies that $\density(S_1) \geq \tfrac{1}{8}$ and so, by~\cref{main-result-integer}, this implies that $\intreach(S_1)$ is in $\AC^1$.
\end{proof}

\begin{example}\label{exa:z6}
	At first glance, it might seem that if $S\subseteq\ZZ^{p+1}$ has positive local density and we have $S' \supseteq S$, then $S'$ would also have positive local density.
	However, note that we measure the density $\density_k(S[\vec{t}],x)$ only for $x\in
	S[\vec{t}]$. 
	This means, each additional point $x$ in the set incurs the additional requirement that the density at $x$ must be positive. 

	For example, let $S_6[t] \coloneqq (-\infty,0]$ and $S'_6[t] \coloneqq (-\infty,0] \cup \set{t}$. 
	We have $\density(S_6) = \frac{1}{2}$, so $S_6$ has positive local density.
	However, $\density_1(S'_6[t], t) = \tfrac{1}{2t+1}$ for every $t\ge 0$, so $S'_6$ has zero density.
	It is therefore the case that even though $S' \supseteq S$ and $\density(S) > 0$, $\intreach(S_6)$ is in $\AC^1$, whereas $\intreach(S'_6)$ is $\NP$-complete.
\end{example}

\begin{example}\label{exa:z7}
	Despite~\cref{exa:z6}, the union of two sets with positive local density will again have positive local density. 
	However, intersecting sets does not preserve this property.
	Consider the sets $S_7, S'_7 \subseteq \ZZ^{p+1}$ for $p=1$ with $S_7[t] \coloneqq (-\infty,t]$ and $S'_7[t] \coloneqq [t,\infty)$.
	Both sets clearly have positive local density but their intersection $\set{t}$ does not.
\end{example}

\subsection{Additional natural semantics density properties}
\label{app:n-density-properties}

\begin{restatable}{fact}{positiveDensityMultiply}\label{positive-density-multiply}
	For every set $A\subseteq\NN$ and $k,\ell\ge 1$ and $x\in A$, we have
	$\densityplus_k(A,x)\ge \tfrac{1}{2\ell}\densityplus_{k\ell}(A,x)$.
\end{restatable}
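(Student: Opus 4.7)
The plan is to mimic the proof of Fact~\ref{density-multiply} almost verbatim, with one additional bookkeeping step to ensure the auxiliary arithmetic progression used in the bound actually stays within $\NN$, so that it witnesses the infimum in the definition of $\densityplus_{k\ell}(A,x)$ (and not just $\density_{k\ell}(A,x)$).

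Concretely, I would fix $A\subseteq\NN$, $k,\ell\ge 1$ and $x\in A$, and take an arbitrary $n\in\NN$ with $x-kn\ge 0$ (so that $\densityplus_{k,n}(A,x)$ is defined). Write $n=m\ell+r$ with $r\in[0,\ell-1]$. The key numerical observation is that $x-k\ell m\ge x-k(m\ell+r)=x-kn\ge 0$, so the smaller progression $x+k\ell\cdot[-m,m]$ lies in $\NN$ and therefore $\densityplus_{k\ell,m}(A,x)$ is defined and at least $\densityplus_{k\ell}(A,x)$.

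Then I would chain the same set inclusions as in the integer case: starting from $A\cap(x+k\cdot[-n,n])$, restrict to the sub-lattice $x+k\ell\ZZ$ (which only removes points, hence a $\ge$), then shrink the interval from $[x-kn,x+kn]$ down to $[x-k\ell m,x+k\ell m]$ (since $k\ell m\le kn$), obtaining
\[
 |A\cap(x+k\cdot[-n,n])|\ \ge\ |A\cap(x+k\ell\cdot[-m,m])|.
\]
Dividing by $2n+1=2(m\ell+r)+1\le 2\ell(2m+1)/2=\ell(2m+1)$ (using $4\ell m+2\ell\ge 2m\ell+2r+1$, exactly as in Fact~\ref{density-multiply}), I obtain
\[
 \densityplus_{k,n}(A,x)\ \ge\ \tfrac{1}{2\ell}\cdot\densityplus_{k\ell,m}(A,x)\ \ge\ \tfrac{1}{2\ell}\cdot\densityplus_{k\ell}(A,x).
\]
Taking the infimum over all admissible $n$ on the left yields the claimed inequality.

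I do not expect any real obstacle: the only substantive difference with Fact~\ref{density-multiply} is checking that the non-negativity constraint transfers from the ``$n$-witness'' to the ``$m$-witness'', which is the one-line inequality $x-k\ell m\ge x-kn\ge 0$ noted above. Everything else is the same arithmetic manipulation.
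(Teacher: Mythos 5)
Your proposal takes essentially the same approach as the paper's proof: you decompose $n=m\ell+r$, use the same chain of inclusions, and identify the one new bookkeeping step (checking $x-k\ell m\ge x-kn\ge 0$ so that the $m$-witness is admissible for $\densityplus_{k\ell}$), which is exactly how the paper handles it. There is a stray ``$/2$'' in your display (the correct bound from $4\ell m+2\ell\ge 2m\ell+2r+1$ is $2n+1\le 2\ell(2m+1)$, not $\le\ell(2m+1)$), but the inequality you actually cite and the final $\tfrac{1}{2\ell}$ factor are both correct.
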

\begin{proof}
	Take any $n\in\NN$ with $x-kn\ge 0$ and write $n=m\ell+r$ for some $r\in[0,\ell-1]$. Then 
	\begin{align*}
		| A \cap (&x + k\cdot[-n,n]) | \\
		& = |A\cap [x-kn,x+kn]\cap (x+k\ZZ)| \\
		& \ge |A\cap [x-kn,x+kn]\cap (x+k\ell\ZZ)| \\
		& = |A\cap [x-k(m\ell+r),x+k(m\ell+r)]\cap (x+k\ell\ZZ)| \\
		& \ge |A\cap [x-k\ell m,x+k\ell m]\cap (x+k\ell\ZZ)| \\
		& = | A \cap (x + k\ell\cdot[-m,m]) |.
	\end{align*}
	Therefore,
	\begin{align*} 
		\frac{| A \cap (x + k\cdot[-n,n]) |}{2n+1}
		& \geq \frac{| A \cap (x + k\ell\cdot[-m,m]) |}{2(m\ell+r)+1} \\
		& \geq \frac{1}{2\ell} \cdot \frac{| A \cap (x + k\ell\cdot[-m,m]) |}{2 m+1} \\
		& \geq \tfrac{1}{2\ell} \cdot d^+_{k\ell}(A,x),
	\end{align*}
	where the second inequality follows from 
	\[ 2\ell\cdot(2m+1)=4\ell m+2\ell\ge 2m\ell+2r+1=2(m\ell+r)+1. \]
	Finally, notice that $x-k\ell m\ge 0$, because $0\le x-kn=x-k(m\ell+r)=x-k\ell m-kr$.
	This shows that $\densityplus_k(A,x)\ge\tfrac{1}{2\ell}\cdot \densityplus_{k\ell}(A,x)$.
\end{proof}

\noParametersNat*
\begin{proof}
	First, observe that union of two sets with positive local${}^+$ density will again have positive local${}^+$ density. 
	Moreover, every semilinear $S\subseteq\NN$ is a finite union of arithmetic progressions $a+b\cdot\NN$, where $a,b\ge 0$. 
	Thus, it suffices to argue that sets $a+b\cdot\NN$ have positive local${}^+$ density. 

	If $b\ge 1$, note that $\densityplus_{bk}(a+b\cdot\NN,x)\ge \tfrac{1}{2}$ for every $x\in a+b\cdot\NN$ and every $k\ge 1$. 
	Moreover, by \cref{positive-density-multiply}, we have that, for any $k\ge 1$,
	\begin{equation*}
		\densityplus_k(a+b\cdot\NN,x)\ge \tfrac{1}{2b} \cdot d^+_{bk}(a+b\cdot\NN)\ge \tfrac{1}{4b}.
	\end{equation*}
	On the other hand, if $b=0$, then $\densityplus_k(\{a\},a)\ge\tfrac{1}{2a+1}$. 

	Thus, for every $k\ge 1$, $\densityplus_k(S) \geq \min\set{\tfrac{1}{4b+1}, \tfrac{1}{2a+1}} > 0$.
	We therefore conclude that $\natreach(S)$ is in $\AC^1$ by~\cref{main-result-natural}.
\end{proof}

	\section{Additional material on Section~\ref{sec:general-techniques}}
        \label{app:general-techniques}
	\subsection{Proof of Theorem \ref{np-upper-bound}}\label{app:np-upper-bound}
\npUpperBound*
\begin{proof}
	The result for $\natreach(S)$ is a special case of the result for
	$\intreach(S)$, so we need not consider $\natreach(S)$. Given a
	$\Z$-weighted automaton and two states $p,q$, we can construct, both
	for VASS semantics and for integer semantics, a Presburger formula
	$\psi_{p,q}(x,y)$ such that from a configuration $p(u)$, we can reach
	$q(v)$ if and only if $\psi_{p,q}(u,v)$ holds. For VASS semantics, this
	follows from the fact that this can even be done for one-counter
	automata~\cite{LiCWX20}. For integer semantics, a more general fact has
	been shown in \cite[Corollary~10]{DBLP:conf/rp/HaaseH14}.

	Therefore, suppose $S$ is defined by a formula $\varphi$ with $p+1$
	free variables. Given a $\Z$-weighted automaton, states $p,q$, and
	$\vec{t}\in\ZZ^p$, we have reachability of $S[\vec{t}]$ if and only if
	$\exists x\colon \psi_{p,q}(0,x)\wedge \varphi(\vec{t},x)$, which is an
	existential Presburger formula. 
	Since determining the truth of existential Presburger formulas is $\NP$-complete, this completes the proof.
\end{proof}

\subsection{Proof of Lemma \ref{lem:modulo-free}}\label{app:modulo-free}
\moduloFree*
\begin{proof}
	Let $\varphi(\vec{x})$ be a quantifier-free Presburger formula defining
	$S$. Then $\varphi$ may contain modulo constraints $x\equiv r\pmod m$
	for some $r,m\in\NN$. Let $B\ge 1$ be the least common multiple of all
	$m$ in such modulo constraints in $\varphi$. 

	Consider some $\vec{b}\in[0,B-1]^{p+1}$. Notice that whether an
	assignment $\vec{x}=B\vec{u}+\vec{b}$ to the variables $\vec{x}$
	satisfies a modulo constraint $x\equiv r\pmod{m}$ does not depend on
	$\vec{u}$, because the residues of $B\vec{u}+\vec{b}$ modulo $m$ are
	determined by $\vec{b}$. Thus, we can build a formula $\varphi'$ by
	replacing each modulo constraint $x\equiv r\pmod{m}$ by $\top$ or
	$\bot$, so that $\varphi'(B\vec{u}+\vec{b})$ if and only if
	$\varphi(B\vec{u}+\vec{b})$ for every $\vec{u}\in\ZZ^{p+1}$. Notice
	that $\varphi'$ does not contain modulo constraints anymore, and is
	thus modulo-free. Therefore, the formula
	\[ \psi(\vec{x}):=\varphi'(B\vec{x}+\vec{b}) \]
	is modulo-free and defines $[S]_{B,\vec{b}}$.
\end{proof}

\subsection{Computing interval endpoints}\label{app:intervals}
In this subsection, we will talk about functions $\gamma\colon \ZZ^k\to \ZZ\cup\{-\infty,\infty\}$ (or similar) being Presburger definable. This will be meant w.r.t.\ a simple encoding. We encode such a function as the function
\[ \hat{\gamma}\colon\ZZ^k\to \ZZ\times\ZZ, \]
where $\hat{\gamma}(\vec{t})=\tuple{a,0}$ if $\gamma(\vec{t})=a\in\ZZ$.
Moreover, $\hat{\gamma}(\vec{t})=\tuple{0,1}$ if $\gamma(\vec{t})=\infty$, and
$\hat{\gamma}(\vec{t})=\tuple{0,-1}$ if $\gamma(\vec{t})=-\infty$. With this,
we say that $\gamma\colon\ZZ^k\to\ZZ\cup\{-\infty,\infty\}$ is
\emph{Presburger-definable} if $\hat{\gamma}$ is Presburger-definable.

\begin{restatable}{lemma}{intervals}\label{lem:intervals}
	Let $T\subseteq\ZZ^{p+1}$ be defined by a modulo-free Presburger formula.
	Then there exist Presburger-definable functions $\alpha_1, \beta_1, \ldots, \alpha_m, \beta_m : \ZZ^p \to \ZZ \cup \set{-\infty, \infty}$ such that, for every $\vec{t} \in \ZZ^p$, $\alpha_1(\vec{t}) \leq \beta_1(\vec{t}) \leq \ldots \leq \alpha_m(\vec{t}) \leq \beta_m(\vec{t})$ and 
	\begin{equation*}
		T[\vec{t}] = (\alpha_1(\vec{t}), \beta_1(\vec{t})) \cup \cdots \cup (\alpha_m(\vec{t}), \beta_m(\vec{t})).
	\end{equation*}
\end{restatable}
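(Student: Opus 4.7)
The plan is to first write the defining formula in disjunctive normal form so that each clause constrains $x$ to lie in a single interval with Presburger-definable endpoints, and then merge the resulting intervals into a sorted decomposition via a case analysis on the ordering of those endpoints.

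First, since $T$ is defined by a modulo-free quantifier-free formula $\varphi(\vec{t}, x)$, every atom is a linear (in)equality in $\vec{t}$ and $x$. I would push negations through atoms (using $\neg(a \leq b) \equiv b + 1 \leq a$ over $\ZZ$) and take a DNF, obtaining
\[ \varphi(\vec{t}, x) \equiv \bigvee_{i=1}^{k} \varphi_i(\vec{t}, x) \]
with each $\varphi_i$ a conjunction of atomic linear inequalities. Each $\varphi_i$ splits into a $\vec{t}$-only part $\chi_i(\vec{t})$ together with constraints of the form $a_j x \geq L_{i,j}(\vec{t})$ and $b_{j'} x \leq U_{i,j'}(\vec{t})$ with $a_j, b_{j'} > 0$. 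Setting
\[ \alpha_i(\vec{t}) \coloneqq \max_j \lceil L_{i,j}(\vec{t}) / a_j \rceil, \qquad \beta_i(\vec{t}) \coloneqq \min_{j'} \lfloor U_{i,j'}(\vec{t}) / b_{j'} \rfloor \]
(using $-\infty$ or $+\infty$ when no lower or upper bound is present), the clause $\varphi_i$ defines the interval $I_i(\vec{t}) = [\alpha_i(\vec{t}), \beta_i(\vec{t})]$ whenever $\chi_i(\vec{t})$ holds and $\alpha_i(\vec{t}) \leq \beta_i(\vec{t})$, and $I_i(\vec{t}) = \emptyset$ otherwise. Since floors and ceilings of linear expressions, as well as maxima and minima over finitely many Presburger-definable values, are themselves Presburger-definable, so are $\alpha_i$ and $\beta_i$.

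Next, I would merge the $k$ candidate intervals $I_1(\vec{t}),\dots,I_k(\vec{t})$ into a sorted, pairwise disjoint decomposition. There are only finitely many relative orderings $\prec$ of the $2k$ endpoint values (taking into account whether each endpoint is $\pm\infty$, whether each $\chi_i$ holds, and how the finite values compare); each such ordering corresponds to a Presburger-definable region $P_\prec \subseteq \ZZ^p$. For each fixed $\prec$, the sorted canonical decomposition of $\bigcup_i I_i(\vec{t})$ can be read off directly as a list of at most $k$ intervals whose endpoints are specific values among the $\alpha_i(\vec{t}), \beta_i(\vec{t})$. Setting $m \coloneqq k$ and defining the output endpoints $\alpha'_1, \beta'_1, \dots, \alpha'_m, \beta'_m$ piecewise over the regions $P_\prec$, padding with empty placeholder intervals whenever a given case yields fewer than $k$ nonempty intervals, gives Presburger-definable functions satisfying the sortedness condition of the lemma.

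The main obstacle I anticipate is the bookkeeping in the merging step: one must ensure that across the finitely many orderings of the endpoints, the intervals are indexed uniformly so that each output $\alpha'_i$ or $\beta'_i$ is a single Presburger-definable function rather than a family. This is handled by fixing a canonical left-to-right enumeration once the ordering is known and by padding shorter lists with a fixed empty interval; since piecewise definition over Presburger-definable regions preserves Presburger-definability, the resulting $\alpha'_i, \beta'_i$ have the required form.
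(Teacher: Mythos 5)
Your proposal is correct and follows the same high-level strategy as the paper: reduce to DNF, extract a single Presburger-definable interval from each conjunctive clause (via ceilings/floors of the normalized linear bounds), and then merge the resulting $k$ intervals into a sorted decomposition by a finite case analysis over Presburger-definable regions of parameter space. The paper handles the per-clause step in a separate auxiliary lemma and then does the merge by \emph{induction on $k$}, adding one interval at a time to an already-sorted family; it extracts the new sorted endpoints via explicit nested $\min$/$\max$ formulas, followed by a small case analysis to restore disjointness. You instead merge all $k$ intervals at once by a single (finite, but combinatorially larger) case split over all relative orderings of the $2k$ candidate endpoints. Both are valid; the paper's induction keeps the piecewise case split small and local (only two families to merge at a time), whereas your global case split is conceptually simpler to state but may enumerate on the order of $(2k)!$ regions. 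Either way the key observations are identical: endpoints are Presburger-definable after normalizing the coefficient of $x$, and piecewise definition over Presburger-definable regions preserves Presburger-definability.

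Two cosmetic points worth noting. First, the lemma statement uses \emph{open} intervals $(\alpha_i(\vec{t}),\beta_i(\vec{t}))$, while you work with closed intervals $[\alpha_i(\vec{t}),\beta_i(\vec{t})]$; this is harmless since $[\alpha,\beta] = (\alpha-1,\beta+1)$ and the endpoint shift is trivially Presburger. Second, when you sort the endpoints you need your ``ordering'' of the $2k$ values to additionally record whether consecutive intervals touch or are adjacent (i.e.\ relations like $\alpha_{j}(\vec{t}) = \beta_{i}(\vec{t}) + 1$), so that touching intervals get merged into one output slot rather than listed separately; these extra constraints are still Presburger-definable, so this only refines the finite case analysis and does not change the argument.
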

We call $(\beta_1(\vec{t}), \alpha_2(\vec{t})), \ldots, (\beta_{m-1}(\vec{t}), \alpha_m(\vec{t}))$ as well as $(-\infty, \alpha_1(\vec{t}))$ if $\alpha_1(\vec{t})> \infty$ and $(\beta_m(\vec{t}),\infty)$ if $\beta_m(\vec{t})<\infty$ the \emph{gaps} of $T[\vec{t}]$.
Our proof of \cref{lem:intervals} begins with a lemma. We call a modulo-free Presburger formula \emph{conjunctive} if all its logical connectives are conjunctions (i.e.\ it contains no negations and no disjunctions).
\begin{lemma}\label{lem:intervals:single}
	Let $T\subseteq\ZZ^{p+1}$ be defined by a conjunctive modulo-free Presburger formula.
	Then there exist Presburger-definable functions $\alpha, \beta\colon \ZZ^p \to
	\ZZ \cup \set{-\infty, \infty}$ such that, for every $\vec{t} \in
	\ZZ^p$, $\alpha(\vec{t}) \leq \beta(\vec{t})$ and $T[\vec{t}] = (\alpha(\vec{t}),
	\beta(\vec{t}))$.
\end{lemma}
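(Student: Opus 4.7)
The plan is to read off $\alpha$ and $\beta$ directly from the syntactic structure of the formula. Since $\varphi(\vec{t},x)$ is conjunctive and modulo-free, I may assume it is a finite conjunction of atomic formulas, each either a linear inequality $\sum_i a_i \vec{t}[i] + b x \leq c$ or a linear equality of the same shape, for integer constants $a_i, b, c$. (Strict inequalities over $\ZZ$ reduce to non-strict via a $\pm 1$ shift, and ``conjunctive'' rules out negations and disjunctions.) I would partition the conjuncts according to the coefficient $b$ of $x$.

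Conjuncts with $b = 0$ depend only on $\vec{t}$; let $\chi(\vec{t})$ denote their conjunction, a Presburger-definable condition without which $T[\vec{t}] = \emptyset$. Each inequality with $b > 0$ yields an upper bound $x \leq U_j(\vec{t}) \coloneqq \lfloor (c - \sum_i a_i \vec{t}[i])/b \rfloor$, each inequality with $b < 0$ yields a lower bound $x \geq L_k(\vec{t}) \coloneqq \lceil (c - \sum_i a_i \vec{t}[i])/b \rceil$, and each equality with $b \neq 0$ pins $x$ to the single value $(c - \sum_i a_i \vec{t}[i])/b$, contributing matching upper and lower bounds together with a divisibility condition $b \mid c - \sum_i a_i \vec{t}[i]$ to be appended to $\chi$. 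Since $b$ is a fixed integer constant, the floors, ceilings, and divisibility by $b$ all unfold to short existential quantifications over the quotient, so each $L_k, U_j$ is a Presburger-definable function $\ZZ^p \to \ZZ$.

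Next, define
\[
\alpha(\vec{t}) \coloneqq \max_k L_k(\vec{t}), \qquad \beta(\vec{t}) \coloneqq \min_j U_j(\vec{t}),
\]
with the conventions $\max \emptyset \coloneqq -\infty$ and $\min \emptyset \coloneqq +\infty$. A finite max or min of Presburger-definable functions is Presburger-definable, and the extension to $\pm\infty$ fits the encoding $\hat{\gamma}$ described just before \cref{lem:intervals}. Monotonicity of min/max over constraints implies $\alpha(\vec{t}) \leq \beta(\vec{t})$ on the locus where $T[\vec{t}] \neq \emptyset$, and the resulting interval equals $T[\vec{t}]$ on that locus by construction. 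Finally, patch the empty case: whenever $\chi(\vec{t})$ fails or $\alpha(\vec{t}) > \beta(\vec{t})$, overwrite both $\alpha(\vec{t})$ and $\beta(\vec{t})$ with a fixed canonical value realizing the empty interval under whichever interval convention is in force. This overwrite is Presburger-definable because its trigger condition is.

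The main obstacle is purely notational: cleanly handling $\pm\infty$ outputs (for one- or two-sided infinite intervals), integrality side conditions arising from equalities, and the empty case all within the encoding of Presburger-definable partial functions into $\ZZ \cup \{-\infty, \infty\}$. Once one commits to a case split between finite and infinite bounds---defining $\alpha$ and $\beta$ by separate Presburger formulas on each branch, glued by $\chi$---the remainder reduces to routine linear arithmetic and the standard fact that $\max$ and $\min$ over finitely many Presburger-definable functions are themselves Presburger-definable.
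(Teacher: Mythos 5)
Your proof follows essentially the same approach as the paper's: partition the conjuncts into lower-bound and upper-bound constraints on $x$, take the $\max$ of the lower bounds and $\min$ of the upper bounds (using $\pm\infty$ when a side is absent), and patch the empty case separately. You are somewhat more careful than the paper in explicitly handling conjuncts with $b=0$ (the paper's proof writes every inequality as $\vec{u}_i^\top\vec{x}<y$ or $y<\vec{v}_j^\top\vec{x}$, implicitly assuming $y$ occurs in each, which your $\chi$ fixes) and in spelling out how equalities are absorbed. One small mismatch you should repair: the lemma asserts $T[\vec{t}]=(\alpha(\vec{t}),\beta(\vec{t}))$, an \emph{open} interval, whereas your $\alpha=\max_k L_k$ and $\beta=\min_j U_j$ (with $L_k$ a non-strict lower bound and $U_j$ a non-strict upper bound) describe the \emph{closed} interval $[\alpha,\beta]$. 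Shifting $\alpha$ down by $1$ and $\beta$ up by $1$ (equivalently, using strict inequalities throughout, as the paper does with $\vec{u}_i^\top\vec{x}<y<\vec{v}_j^\top\vec{x}$ and then $\alpha=\lfloor\max_i\vec{u}_i^\top\vec{t}\rfloor$, $\beta=\lceil\min_j\vec{v}_j^\top\vec{t}\rceil$) restores the match; this is cosmetic, not a conceptual flaw.
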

\begin{proof}
	Suppose $T$ is defined be the conjunctive modulo-free Presburger
	formula $\varphi(x_1,\ldots,x_p,y)$. Then $\varphi$ is a conjunction of
	inequalities, which we can write 
	\begin{align*}\vec{u}_i^\top\vec{x}&<y,& y<\vec{v}_j^\top\vec{x} \end{align*}
		for $i=1,\ldots,n$, $j=1,\ldots,m$, where
	$\vec{u}_1,\ldots,\vec{u}_n\in\QQ^p$ and
	$\vec{v}_1,\ldots,\vec{v}_m\in\QQ^p$. Thus, for each vector $\vec{t}\in\ZZ^p$, the set of $y\in\Z$ with $\varphi(\vec{t},y)$ forms an interval: The smallest value among all $\lfloor\vec{u}_i^\top\vec{t}\rfloor$ is the left end-point; the largest value among all $\lceil\vec{v}_j^\top\vec{t}\rceil$ is the right end-point. However, the largest lower bound could be larger-or-equal to the smallest upper bound, causing the set of $y$ to be empty. The latter is the case if the following formula holds.
	\[ \psi(\vec{x}):=\bigvee_{i\in[1,n]}\bigvee_{j\in[1,m]} \vec{u}_i^\top\vec{x}\ge\vec{v}_j^\top\vec{x} \]
	Therefore, we define
	\[ \alpha(\vec{t})=\begin{cases} \lfloor \max(\vec{u}_i^\top\vec{t}\mid i\in[1,n])\rfloor& \text{if $\neg\psi(\vec{t})$ and $n\ge 1$}, \\
		0 & \text{if $\psi(\vec{t})$ and $n\ge 1$}, \\
		-\infty & \text{if $n=0$.}
	\end{cases}\]
	Furthermore, we set
	\[ \beta(\vec{t})=\begin{cases}\lceil \min(\vec{v}_j^\top\vec{t}\mid j\in[1,m])\rceil 
		& \text{if $\neg \psi(\vec{t})$ and $m\ge 1$}, \\
		0 & \text{if $\psi(\vec{t})$ and $m\ge 1$},  \\
		\infty & \text{if $m=0$.}
	\end{cases}
		\]
	Note that if $\psi(\vec{t})$ holds, then $\alpha(\vec{t})=\beta(\vec{t})=0$, hence $(\alpha(\vec{t}),\beta(\vec{t}))=(0,0)=\emptyset$. Since $\alpha$ and $\beta$ are clearly Presburger-definable, this proves the \lcnamecref{lem:intervals:single}.
\end{proof}

Let us now show the general statement of \cref{lem:intervals}:
\begin{proof}[Proof of \cref{lem:intervals}]
	Suppose $T$ is defined by the modulo-free Presburger formula $\varphi(x_1, \dots, x_p, y)$. We may assume that $\varphi$ is in disjunctive normal form, meaning that $\varphi$ is a disjunction of $k$-many conjunctive modulo-free formulas, for some $k \geq 1$. We proceed by induction on $k$. (Here, the disjunctive normal form may still contain negations of inequalities, but since $\neg (a<b)$ is equivalent to $b<a+1$, we can eliminate all negations.)

The case $k = 1$ is already shown in Lemma IV.3. For the case $k+1$, we write $\varphi = \psi \lor \kappa$, where $\psi$ is a $k$-fold disjunction of conjunctive modulo-free formulas, and $\kappa$ is itself conjunctive. By induction, there is an $m \geq 1$ and Presburger-definable functions $\alpha_1, \beta_1, \dots, \alpha_m, \beta_m : \mathbb{Z}^p \to \mathbb{Z} \cup \{-\infty, \infty\}$ such that the set of all $y$ with $\psi(\vec{t}, y)$ is 
\[
(\alpha_1(\vec{t}), \beta_1(\vec{t})) \cup \cdots \cup (\alpha_m(\vec{t}), \beta_m(\vec{t})).
\]
These intervals represent the part of $T[\vec{t}]$ corresponding to $\psi$, with no overlap between them.

	Moreover, by \cref{lem:intervals:single}, there are Presburger-definable functions $\alpha_{m+1}, \beta_{m+1} : \mathbb{Z}^p \to \mathbb{Z} \cup \{-\infty, \infty\}$ such that the set of all $y$ with $\kappa(\vec{t}, y)$ is 
\[
(\alpha_{m+1}(\vec{t}), \beta_{m+1}(\vec{t})).
\]

Thus, the set of all $y$ with $\varphi(\vec{t}, y) = \psi(\vec{t}, y) \lor \kappa(\vec{t}, y)$ is the union of $T_\psi[\vec{t}]$ and $T_\kappa[\vec{t}]$:
\begin{multline*}
	T[\vec{t}] = (\alpha_1(\vec{t}), \beta_1(\vec{t})) \cup \cdots \cup (\alpha_m(\vec{t}), \beta_m(\vec{t})) \\ 
	\cup (\alpha_{m+1}(\vec{t}), \beta_{m+1}(\vec{t})).
\end{multline*}

Since these intervals may overlap, we construct new non-overlapping intervals. Let 
\[
\begin{aligned}
S_\alpha(\vec{t}) &= \{\alpha_1(\vec{t}), \dots, \alpha_m(\vec{t}), \alpha_{m+1}(\vec{t})\}, \\
S_\beta(\vec{t}) &= \{\beta_1(\vec{t}), \dots, \beta_m(\vec{t}), \beta_{m+1}(\vec{t})\}.
\end{aligned}
\]
The first interval is defined as 
\[
(\tilde{\alpha}_1(\vec{t}), \tilde{\beta}_1(\vec{t})) = (\min(S_\alpha(\vec{t})), \min(S_\beta(\vec{t}))).
\]
For $i \geq 2$, define 
\[
\tilde{\alpha}_i(\vec{t}) = \max(\tilde{\beta}_{i-1}(\vec{t})-1, \min(S_\alpha(\vec{t}) \setminus \{\tilde{\alpha}_1(\vec{t}), \dots, \tilde{\alpha}_{i-1}(\vec{t})\})),
\]
\[
\tilde{\beta}_i(\vec{t}) = \min(S_\beta(\vec{t}) \setminus \{\tilde{\beta}_1(\vec{t}), \dots, \tilde{\beta}_{i-1}(\vec{t})\}).
\]
We now have:
\[
	T[\vec{t}] = (\tilde{\alpha}_1(\vec{t}), \tilde{\beta}_1(\vec{t})) \cup \cdots \cup (\tilde{\alpha}_{m+1}(\vec{t}), \tilde{\beta}_{m+1}(\vec{t})).
\]
	This procedure ensures all intervals are non-overlapping, ordered, and fully cover $T[\vec{t}]$. Furthermore, since $\min$, $\max$, and set operations preserve Presburger-definability, the functions $\tilde{\alpha}_i(\vec{t})$ and $\tilde{\beta}_i(\vec{t})$ are Presburger-definable. The only requirement the functions $\tilde{\alpha}_1,\ldots,\tilde{\alpha}_{m+1}$, $\tilde{\beta}_1,\ldots,\tilde{\beta}_{m+1}$ might still be missing is $\tilde{\beta}_{i-1}(\vec{t})\le\tilde{\alpha}_i(\vec{t})$. This means, it is possible that some intervals $(\tilde{\alpha}_{i-1}(\vec{t}),\tilde{\beta}_{i-1}(\vec{t}))$ and $(\tilde{\alpha}_{i}(\vec{t}),\tilde{\beta}_{i}(\vec{t}))$ have an empty gap between them. By distinguishing the finitely many cases of which intervals have distance $1$ to which others, we can now easily modify the functions $\tilde{\alpha}_1,\ldots,\tilde{\alpha}_{m+1}$, $\tilde{\beta}_1,\ldots,\tilde{\beta}_{m+1}$ to guarantee non-empty gaps. Note that if for some $\vec{t}$, $T[\vec{t}]$ contains strictly less than $m+1$ intervals, we can set some of the pairs $\tuple{\tilde{\alpha}_i(\vec{t}),\tilde{\beta}_i(\vec{t})}$ to $\tuple{\ell,\ell}$ for some value $\ell$, which yields an empty interval.
\end{proof}

	\section{Additional material on Section~\ref{sec:integer-semantics}}
        \label{app:integer-semantics}
	\subsection{Local density of modulo-free sets}
We begin this section with an auxiliary lemma that tells us that a modulo-free
set has positive local density if and only if it has positive local
$1$-density. Here, $1$-density of a set $S\sset\ZZ^{p+1}$ means that we only
measure the $1$-density at each point. Let us make this precise.  The local
$k$-density of a set $A\subseteq\ZZ$ is defined as
\[ \density_k(A)=\inf_{x\in A} \density_k(A,x). \]
\begin{lemma}\label{lem:interval-local-density}
	If $A$ is a disjoint union of $m$ intervals, then $\density_k(A)\ge \tfrac{1}{5m}\density_1(A)$.
\end{lemma}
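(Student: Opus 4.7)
My plan is to fix $x \in A$ and $n \geq 1$ and estimate $|A \cap (x + k \cdot [-n, n])|$ from below in terms of $\alpha := \density_1(A)$ (which may be assumed positive, else the conclusion is trivial). Since $A$ is a union of at most $m$ intervals, the slice $A \cap [x - kn, x + kn]$ is a union of at most $m$ pairwise-disjoint intervals $J_1, \ldots, J_{m'}$ with $m' \leq m$, obtained by intersecting each interval of $A$ with $[x - kn, x + kn]$. Applying the definition of $\alpha$ at $x$ with radius $kn$ yields $\sum_i |J_i| = |A \cap [x - kn, x + kn]| \geq \alpha(2kn+1)$.

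Next I want to lower-bound how many elements of each $J_i$ lie in the arithmetic progression $x + k\ZZ$. For an interval $J_i$ of $b_i$ consecutive integers, the count is at least $\lfloor b_i/k \rfloor \geq b_i/k - 1$, which is only useful (nonnegative) when $b_i \geq k$. I therefore split the $J_i$ into \emph{large} ones ($b_i \geq k$) and \emph{small} ones ($b_i < k$). There are at most $m$ small intervals, each contributing length less than $k$, so their total length is strictly less than $mk$; subtracting yields $\sum_{\text{large}} b_i > \alpha(2kn+1) - mk$, and hence
\begin{equation*}
	|A \cap (x + k[-n, n])| \geq \sum_{\text{large}} (b_i/k - 1) > \alpha(2n + 1/k) - 2m \geq 2n\alpha - 2m.
\end{equation*}

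To conclude I perform a case split at the threshold $5m/\alpha$. If $2n + 1 \leq 5m/\alpha$, then the trivial bound $|A \cap (x + k[-n,n])| \geq 1$ (since $x \in A$) already gives ratio $\geq 1/(2n+1) \geq \alpha/(5m)$. If instead $2n+1 > 5m/\alpha$, then using $\alpha \leq 1$ and $m \geq 1$ we get $2n\alpha/(2n+1) \geq \alpha(1 - \alpha/(5m)) \geq 4\alpha/5$, while $2m/(2n+1) < 2m \cdot \alpha/(5m) = 2\alpha/5$. Combining with the displayed inequality gives ratio $\geq 4\alpha/5 - 2\alpha/5 = 2\alpha/5 \geq \alpha/(5m)$ (the last step using $m \geq 1$). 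Taking the infimum over all $x \in A$ and $n \geq 1$ finishes the proof.

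I do not expect any serious obstacle; the only subtle point is balancing the threshold so that both cases produce the same constant $1/(5m)$. The choice $5m/\alpha$ makes the two estimates line up cleanly, and a slightly sharper constant could be obtained by using the tighter bound $\lfloor b_i/k \rfloor \geq b_i/k - 1 + 1/k$ together with more careful book-keeping of the small intervals.
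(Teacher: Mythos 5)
Your proof is correct and takes essentially the same approach as the paper's: both lower-bound $\density_{k,n}(A,x)$ by $\density_{1,kn}(A,x)$ minus an additive error of order $m/n$ (each of the $\le m$ intervals ``loses'' roughly one point of the arithmetic progression at its boundary), then case-split on whether $n$ is small or large relative to $m/\density_1(A)$. The only cosmetic difference is that you isolate the short intervals before applying $\lfloor b_i/k\rfloor \ge b_i/k-1$, whereas the paper applies the same bound to every interval uniformly; the two lines of bookkeeping balance the final constant $1/(5m)$ the same way.
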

\begin{proof}
	First, observe that if $I$ is an interval, then
	\begin{equation*}
		| I \cap [x-kn, x+kn] | \leq k \cdot | I \cap (x + k\cdot[-n,n]) | + k.
	\end{equation*}
	This implies 
	\begin{align*}
		\frac{| I \cap (x + k\cdot[-n,n]) |}{2n+1} \ge \frac{\tfrac{1}{k} \cdot | I \cap [x-kn, x+kn]| - 1}{2n+1}
	\end{align*}
	Therefore, if $A = I_1 \cup \cdots \cup I_m$ is a disjoint union, then
	\begin{align*}
		\density_{k,n}(A,x)
		& =\frac{|A\cap (x + k\cdot[-n,n])|}{2n+1} \\
		& = \sum_{i=1}^m \frac{| I_i \cap (x + k\cdot[-n,n]) |}{2n+1} \\
		& \ge \sum_{i=1}^m \frac{\tfrac{1}{k} \cdot | I_i \cap [x-kn, x+kn]| - 1}{2n+1} \\
		& = - \frac{m}{2n+1} + \sum_{i=1}^m \frac{|I_i \cap [x-kn, x+kn]|}{2kn+k} \\
		& \ge - \frac{m}{2n+1} + \tfrac{1}{2}\sum_{i=1}^m \frac{|I_i \cap [x-kn, x+kn]|}{2kn+1} \\
		&=\tfrac{1}{2}\density_{1,kn}(A,x) - \frac{m}{2n+1}
	\end{align*}
	We now distinguish two cases to show that $\density_{k,n}(A)\ge
	\tfrac{1}{5m} \density_1(A)$ for every $n\ge 1$. 
	Let $\density_1(A)=\varepsilon$. 
	\begin{itemize}
		\item If $n>\tfrac{2m}{\varepsilon}$, then $2n+1>\tfrac{4m}{\varepsilon}$  and thus $\tfrac{m}{2n+1}<\tfrac{\varepsilon}{4}$. 
		This implies
		\begin{multline*} 
			\density_{k,n}(A,x) \ge \tfrac{1}{2}\density_{1,kn}(A,x)-\frac{m}{2n+1} \\
			> \frac{\varepsilon}{2}-\frac{\varepsilon}{4} = \frac{\varepsilon}{4} \ge \tfrac{1}{5m}D_1(A).
		\end{multline*}
		\item If $n\le\tfrac{2m}{\varepsilon}$, then $2n+1\le \tfrac{4m}{\varepsilon}+1\le \tfrac{5m}{\varepsilon}$ and thus
		\[ \density_{k,n}(A,x)\ge \frac{1}{2n+1}\ge \frac{\varepsilon}{5m}=\tfrac{1}{5m} \density_1(A). \]
	\end{itemize}
	Thus, for every $n\ge 1$, we have $\density_{k,n}(A)\ge \tfrac{1}{5m}\density_1(A)$. This implies $\density_k(A)\ge \tfrac{1}{5m}\density_1(A)$.
\end{proof}

\subsection{Proof of Lemma \ref{dense-residue-classes}}
\denseResidueClasses*
\begin{proof}
	Suppose $S$ has positive local density, i.e.\ there is an $\varepsilon>0$ such that $d_k(S[\vec{t}],x)\ge \varepsilon$ for every $k\ge 1$, $\vec{t}\in\ZZ^p$, and $x\in\ZZ$. 
	Moreover, let $\vec{b}\in [0,B]^{p+1}$. 
	Write $\vec{b} = \tuple{\vec{s}, y}$, with $\vec{s}\in[0,B-1]^p$ and $y\in[0,B-1]$. 
	Now for every $\vec{t}\in\ZZ^p$ and $x\in\ZZ$, we have
	\begin{align*}
		d_k([S]_{B,\vec{b}}[\vec{t}],x)&=d_{kB}(S[B\cdot\vec{t}+\vec{s}],\vec{b}+B\cdot x)\ge\varepsilon,
	\end{align*}
	thus, $[S]_{B,\vec{b}}$ has positive local density with the same $\varepsilon$.

	Conversely, suppose $[S]_{B,\vec{b}}$ has positive local density for every $\vec{b}\in[0,B]^{p+1}$. 
	Since there are only finitely many $\vec{b}\in[0,B]^{p+1}$, there is a single $\varepsilon>0$ such that $d_{k}([S]_{B,\vec{b}}[\vec{t}],x)\ge \varepsilon$ for every $k\ge 1$, $\vec{t}\in\ZZ^p$, $x\in\ZZ$, and $\vec{b}\in[0,B-1]^{p+1}$. 
	Now consider $k\ge 1$, $\vec{t}\in\ZZ^p$, and $x\in\ZZ$. 
	Then there exists a $\vec{b} = \tuple{\vec{s}, y} \in[0,B]^{p+1}$ such that $\vec{t}=\vec{s}+B\vec{r}$ with $\vec{r}\in \ZZ^p$, $x=y+B\cdot z$ for some $z\in\ZZ$. By \cref{density-multiply}, we have
	\begin{align*}
		d_k(S[\vec{t}],x)&=\tfrac{1}{2B}d_{kB}(S[\vec{t}],x)=\tfrac{1}{2B} d_k([S]_{B,\vec{b}}[\vec{r}],z)\ge \tfrac{\varepsilon}{2B}.
	\end{align*}
	Thus, $S$ has positive local density.
\end{proof}

\subsection{Proof of Lemma \ref{complexity-transfer}}
\complexityTransfer*
We prove \cref{complexity-transfer} using the following lemma.
Here, we use the notation
\begin{equation*}
	\Eff(\cA)=\{w\in\ZZ \mid \tuple{q_0,0}\stepsint\tuple{q_1,w} \},
\end{equation*}
where $\cA = \tuple{Q, T q_0, q_1}$, i.e.\ $q_0$ and $q_1$ are the initial and
final state of $\cA$, respectively.

\begin{lemma}\label{unwrap-modulo}
	Let $B\ge 1$ and $b\in[0,B-1]$ be constants. Given an automaton $\cA$,
	we can compute in logspace an automaton $\cA'$ such that
	$\Eff(\cA')=\{x\in\ZZ \mid Bx+b\in \Eff(\cA)\}$. Moreover, if $\cA$ has
	updates in $\NN$, then so does $\cA'$.
\end{lemma}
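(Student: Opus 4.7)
The plan is to track, in the states of $\cA'$, the residue modulo $B$ of the partial sum of transition weights taken so far, and to rescale each transition weight by a factor of $B$, so that a run in $\cA'$ of total effect $x$ corresponds to a run in $\cA$ of total effect $Bx + b$.

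Concretely, I would set $Q' = Q \times [0, B-1]$, with initial state $\tuple{q_0, 0}$ and final state $\tuple{q_1, b}$. For every transition $\tuple{q, w, r} \in T$ of $\cA$ and every $c \in [0, B-1]$, I would put $c' \coloneqq (c + w) \bmod B$ and add to $T'$ the transition $\tuple{\tuple{q, c},\; (c + w - c')/B,\; \tuple{r, c'}}$. The new weight is an integer because $c + w \equiv c' \pmod{B}$ by construction, and the residue component of the state is exactly the running $\cA$-partial sum modulo $B$.

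Correctness will follow from a telescoping argument. If $\cA$ takes transitions of weights $w_1, \ldots, w_k$ with partial sums $S_i = w_1 + \cdots + w_i$ and residues $r_i = S_i \bmod B$, then the simulating run in $\cA'$ has total effect
\[ \sum_{i=1}^k \frac{r_{i-1} + w_i - r_i}{B} \;=\; \frac{S_k - r_k}{B}. \]
Fixing the final state to $\tuple{q_1, b}$ forces $r_k = b$, hence $S_k = Bx + b$ where $x$ is the $\cA'$-effect; the converse direction is immediate, since any $\cA'$-run projects to an $\cA$-run whose residues match the second component. For the non-negativity claim, if every $w \geq 0$ then the new weight $(c + w - c')/B$ is a non-negative integer: it is a multiple of $B$ by construction, and at the same time bounded below by $-(B-1)/B$, since $c \geq 0$, $w \geq 0$, and $c' \leq B-1$; the only multiple of $B$ in that range is $0$ or larger.

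Logspace computability is routine, as $B$ and $b$ are constants: the state set blows up by a constant factor, and each new weight is obtained by a single subtraction and a division by the constant $B$ on binary-encoded integers, both of which are feasible in logspace. I do not anticipate a significant obstacle; the one point requiring care is the non-negativity addendum, since a naïve construction could produce negative rescaled weights when the residue $c'$ exceeds $c + w$ due to wraparound, and the proof must invoke that the rescaled weight is forced to be a non-negative multiple of $B$.
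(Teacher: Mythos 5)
Your proof is correct and uses essentially the same construction as the paper: track the residue modulo $B$ of the running counter value in the state component, rescale each transition weight by dividing out $B$, and make $\tuple{q_1,b}$ the unique final state. The paper splits the rescaled weight into the two cases $i+r<B$ (weight $y$) and $i+r\ge B$ (weight $y+1$) where $x=By+r$, which is just an unrolled version of your unified formula $(c+w-c')/B$ with $c'=(c+w)\bmod B$; a short check confirms the two agree. You also supply the telescoping verification and the non-negativity argument explicitly, which the paper leaves as ``clearly as desired,'' so your writeup is if anything more complete.
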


\begin{proof}
	The idea is to maintain in the control state the residue modulo $B$ of
	the value added so far. 
	If $Q$ is the set of states of $\cA$, then $\cA'$ has state set $Q\times[0,B-1]$. 
	For each transition $p\xrightarrow{x}q$, we write $x = By+r$ for some $r\in[0,B-1]$, and then add transitions:
	\begin{equation*}
		\tuple{p,i} \xrightarrow{y} \tuple{q,i+r}
	\end{equation*}
	for every $i\in[0,B-1]$ with $i+r\in[0,B-1]$. 
	For those $i\in[0,B-1]$ where $i+r\in[B,2B-2]$ (and this is the only other option), say $i+r = B+s$, we add a transition:
	\begin{equation*}
		\tuple{p,i} \xrightarrow{y+1} \tuple{q,s}.
	\end{equation*}
	Moreover, a state $\tuple{q,i}$ is final if and only if $q$ is final in $\cA$
	and $i=b$. 
	Then $\cA'$ is clearly as desired.
\end{proof}

We are now ready to prove \cref{complexity-transfer}.
\begin{proof}[Proof of \cref{complexity-transfer}]
	For the first statement, suppose we are given an automaton $\cA$ and
	$\vec{t}\in\ZZ^p$. 
	We write \hbox{$\vec{t}=B\cdot\vec{s}+\vec{r}$} for some
	$\vec{s}\in\ZZ^p$ and $\vec{r}\in[0,B-1]^p$. 
	Then, for each possible
	residue $b\in[0,B-1]$ of the final counter value, we use
	\cref{unwrap-modulo} to build $\cA'_b$ with 
	\begin{equation*}
		\Eff(\cA'_b) = \set{x\in\ZZ\mid Bx+b\in\Eff(\cA)}.
	\end{equation*}
	Then clearly, $\cA$ can reach $S$ if and only if for some \hbox{$b\in[0,B-1]$}, the pair $\tuple{\cA'_b, \vec{s}}$ is a positive instance of $\intreach([S]_{B,\vec{b}})$, where $\vec{b} = \tuple{\vec{r}, b}$.
	Thus, if $\intreach([S]_{B,\vec{b}})$ is in $\AC^1$ for each $\vec{b} \in [0,B-1]^{p+1}$, then so is $\intreach(S)$. 
	Moreover, by the guarantee of~\cref{unwrap-modulo}, our reduction preserves the property that all updates are natural numbers. Hence, we have the same reduction for $\natreach(\cdot)$. 

	For the second statement, it suffices to prove that $\intreach([S]_{B,\vec{b}})$ reduces to $\intreach(S)$. 
	This is easy: We write $\vec{b} = \tuple{\vec{s}, b}$ with $\vec{s}\in [0,B-1]^{p}$, and $b\in [0,B-1]$. 
	Then given an automaton $\cA$ and $\vec{t}\in \ZZ^p$, we modify the automaton by multiplying $B$ to every counter update, and attaching to each final state a transition that adds $b$ and leads to a fresh final state. 
	Let $\cA'$ be the resulting automaton. 
	The parameter vector for the new instance is $\vec{t}' = B\vec{t}+\vec{s}$. 
	Then clearly, the instance $\tuple{\cA, \vec{t}}$ is positive for $\intreach([S]_{B,\vec{b}})$ if and only if $\tuple{\cA', \vec{t}'}$ is positive for $\intreach(S)$. Note also that the reduction preserves the property of all updates being non-negative, yielding the same reduction for $\natreach(\cdot)$.
\end{proof}

\subsection{Proof of Theorem \ref{make-acyclic}}\label{app:make-acyclic}
In this subsection, we prove the following:
\makeAcyclic*
We will prove \cref{make-acyclic} with the help of the following proposition.
Let us introduce some notation for an automaton $\cA = \tuple{Q, T, q_0, q_1}$. 
For a $\ZZ$-weighted automaton $\cA$, a \emph{run} is a sequence of transitions.  
By $\Runs(\cA)$, we denote the set of all runs, in $\cA$, from $\config{q_0}{0}$ to a configuration $\config{q_1}{w}$, for some $w \in \ZZ$.
Moreover, $\Runs_{\ell}(\cA)$ is the set of all such runs that have length $\le \ell$.
The \emph{effect} of a run $\config{q_0}{0} \xrightarrow{*} \config{q_1}{w}$ is $w$ and for a set of runs $R$, we denote 
\begin{equation*}
	\Eff(R) = \set{ w \in \ZZ : \config{q_0}{0} \xrightarrow{*} \config{q_1}{w} \text{ is in } R}.
\end{equation*}
Equivalently to the definition of $\Eff(\cA)$ above, we have $\Eff(\cA) = \Eff(\Runs(\cA))$.
\begin{proposition}\label{simulate-acyclic}
Given a $\Z^k$-labeled automaton $\cA$ and a bound $\ell$ in binary, we can
compute in logspace an acyclic automaton $\cA'$ with
	$\Eff(\Runs_{\le\ell}(\cA))\subseteq\Eff(\cA')\subseteq\Eff(\cA)$.  Moreover, if $\cA$
is $\N^k$-labeled, then $\cA'$ is $\N^k$-labeled as well.
\end{proposition}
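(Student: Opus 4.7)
The plan is to make the sketch from the proof of \cref{make-acyclic} precise. Given a run $\rho \in \Runs_{\le\ell}(\cA)$, I first apply the standard cycle-rearrangement lemma: since $\rho$ has length $\le \ell \le |Q|^{O(k)}2^{\mathrm{poly}(|\cA|)}$, one can reorder its transitions into a simple path $\pi$ from $q_0$ to $q_1$ (visiting each state at most once) plus, at each state $q$ on $\pi$, a multiset of short $q$-cycles whose accumulated effect is $\vec{v}_q = \sum_c \lambda_c\,\mathrm{eff}(c) \in \ZZ^k$. At each such $q$, I apply the Eisenbrand--Shmonin integer Carathéodory theorem \cite[Thm.~1]{EisenbrandS06} to the set of effects of short $q$-cycles: this yields at most $r = O(k\log k)$ distinct short $q$-cycles $c_1,\ldots,c_{r'}$ ($r'\le r$) and non-negative integers $\mu_1,\ldots,\mu_{r'}$ with $\vec{v}_q = \sum_j \mu_j\,\mathrm{eff}(c_j)$. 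Combined with standard solution-size bounds for integer linear programs, and using the given bound $\ell$ on total run length, I obtain a uniform exponential bound $m$ with $\log m = \mathrm{poly}(|\cA|,\log\ell)$ such that each $\mu_j \le m$.

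Next I construct an acyclic \emph{cycle-guesser gadget} $G_q$ for each $q\in Q$, with entry $e_q$ and exit $x_q$. It consists of $\lceil\log m\rceil+1$ levels chained in series: for each $j\in[0,\lceil\log m\rceil]$, I add states $e_{q,j},x_{q,j}$ together with layer states $p_{q,j,i}$ for $p\in Q$ and $i\in[0,n]$. Transitions (all with weight $\vec{0}$ except as noted) are: (i) the chain $e_q\to e_{q,0}\to\cdots\to x_{q,\lceil\log m\rceil}\to x_q$; (ii) a skip edge $e_{q,j}\to x_{q,j}$; (iii) a cycle-start $e_{q,j}\to q_{q,j,0}$; (iv) a simulation transition $p_{q,j,i}\xrightarrow{2^j\vec{w}}s_{q,j,i+1}$ for each $p\xrightarrow{\vec{w}}s$ in $\cA$; and (v) a close-cycle transition $q_{q,j,i}\to x_{q,j}$ for every $i\in[1,n]$. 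A run through $G_q$ therefore, independently at each level $j$, either skips or guesses on-the-fly some short $q$-cycle $c_j$ and contributes $2^j\,\mathrm{eff}(c_j)$ to the total effect. Crucially, since every contribution is of the form $2^j\,\mathrm{eff}(c_j)$ with $c_j$ a genuine short $q$-cycle of $\cA$, the total effect of $G_q$ is always achievable in $\cA$ by actually traversing those cycles $2^j$ times each; and since all factors $2^j$ are non-negative, the $\N^k$-labeling property is preserved.

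I assemble $\cA'$ by giving it a backbone of states $(q,i)^{\mathrm{pre}},(q,i)^{\mathrm{post}}$ for $q\in Q$ and $i\in[0,n]$, with transitions $(p,i)^{\mathrm{post}}\xrightarrow{\vec{w}}(s,i+1)^{\mathrm{pre}}$ for each $p\xrightarrow{\vec{w}}s\in\cA$, and, at each $(q,i)$, $r$ copies of $G_q$ concatenated in series between $(q,i)^{\mathrm{pre}}$ and $(q,i)^{\mathrm{post}}$. The designated initial state is $(q_0,0)^{\mathrm{pre}}$, with $\vec{0}$-weight edges from every $(q_1,i)^{\mathrm{post}}$ to a fresh final state. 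This graph is acyclic and of polynomial size (the weights $2^j\vec{w}$ fit in polynomially many bits), so $\cA'$ is logspace-constructible from $\cA$ and $\ell$. Soundness ($\Eff(\cA')\subseteq\Eff(\cA)$) follows because each gadget transition of weight $2^j\vec{w}$ can be simulated in $\cA$ by traversing the corresponding edge $2^j$ times along a real short $q$-cycle. Completeness ($\Eff(\Runs_{\le\ell}(\cA))\subseteq\Eff(\cA'))$ follows by invoking the three steps above: given $\rho$, follow $\pi$ on the backbone of $\cA'$, and at each visited state $q$ use the $r$ copies of $G_q$ to realize each of the $\le r$ cycles $c_j$ with multiplicity $\mu_j$ via the binary expansion of $\mu_j$.

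The main obstacle will be carrying out the second step cleanly: justifying a uniform exponential bound $m$ on the cycle multiplicities. Eisenbrand--Shmonin gives a polynomial support bound but no multiplicity bound, so this requires an additional argument invoking integer-programming solution sizes for the equation $\sum_j \mu_j\,\mathrm{eff}(c_j)=\vec{v}_q$, where $\vec{v}_q$ has entries bounded in terms of $\ell$ and the maximum absolute weight of $\cA$. The cycle rearrangement of Step~1 is standard (effects commute in $\ZZ^k$), and the correctness checks of Step~5 are routine once the gadget's semantics is made precise.
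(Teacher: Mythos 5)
Your overall approach mirrors the paper's: (i) rearrange a run into a ``backbone'' plus short cycles, (ii) use Eisenbrand--Shmonin plus ILP solution-size bounds to bound cycle multiplicities, and (iii) build a cycle-guesser gadget that generates each cycle multiplicity in binary by guessing a (possibly different) short $q$-cycle at each power-of-two level. Steps (ii) and (iii) are essentially right --- including the key observation that choosing \emph{different} cycles at different levels still yields a valid run of $\cA$, so overshooting $\Eff(\Runs_{\le\ell}(\cA))$ while staying inside $\Eff(\cA)$ is harmless.

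However, Step~1 has a genuine gap. You claim a run $\rho$ can be rearranged into a \emph{simple} path $\pi$ from $q_0$ to $q_1$ plus, at each state $q$ \emph{on $\pi$}, a multiset of short $q$-cycles. This is false in general. Consider states $\{s,a,b,c,d,e,t\}$ ($n=7$) and the run $s\to a\to b\to c\to d\to e\to d\to c\to b\to a\to s\to t$. The simple path is $s\to t$, and the leftover edges form an Eulerian structure that cannot be decomposed into simple cycles of length $\le n$ all anchored at $s$: any single cycle covering all leftover edges has length $10>n$, and every decomposition into short cycles uses cycles anchored at $a,b,c$ or $d$, none of which lie on $\pi$. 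Combining nested cycles into one anchored at a $\pi$-state is not an option either, since the result is typically no longer short. Consequently, your backbone automaton with layers $i\in[0,n]$ cannot realize such runs. The paper avoids this via \cref{moving-cycles}: the backbone is not required to be simple but only to have length $\le n^2-1$, and --- crucially --- it is constructed so that after cycle removal it still visits \emph{every} state visited by $\rho$; hence every extracted short cycle is anchored at a backbone state. This is why the paper's acyclic skeleton uses $i\in[0,n^2]$ rather than $i\in[0,n]$. Your construction needs the same fix (quadratically more backbone layers, and the weaker backbone guarantee), together with the observation that the removed cycles can always be re-inserted because the surviving backbone still covers all visited states.

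A smaller point: the Carathéodory bound $r$ you cite as $O(k\log k)$ is not correct. Eisenbrand--Shmonin gives $r = 2k\log_2(4kM)$ where $M$ is the largest absolute value among the generators; here $M$ can be exponential in $|\cA|$ (binary weights, cycles of length up to $n$), so $r = O(k\cdot\mathrm{poly}(|\cA|))$. This is still polynomial, so it does not break the construction, but the stated bound is wrong as written.
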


Before we prove \cref{simulate-acyclic}, let us see how we can prove \cref{make-acyclic} using \cref{simulate-acyclic}.
\begin{proof}[Proof of \cref{make-acyclic}]
	According to \cref{simulate-acyclic}, it suffices to show that there is
	an exponential bound $\ell\in\NN$ such that if there is a run reaching
	$S$, then there is one of length at most $\ell$. Indeed, given $\ell$,
	we can invoke \cref{simulate-acyclic} and construct $\cA'$. Then we can
	reach $S$ in $\cA'$ if and only if we can reach $S$ in $\cA$.

	To find the bound $\ell$, first note that since $S$ is
	Presburger-definable, it is also definable by some (fixed)
	quantifier-free Presburger formula. Let $m$ the number of transitions
	of $\cA$. Since for a finite automaton, there is a polynomial-sized
	existential Presburger formula for its Parikh image~\cite[Theorem
	1]{DBLP:conf/icalp/SeidlSMH04}, there is also a polynomial-sized
	existential Presburger formula $\varphi(x_1,\ldots,x_m)$ with a free
	variable for each transition of $\cA$, such that $\varphi$ is satisfied
	if and only if there is a run with those transition multiplicities that
	reaches a vector in $S$.  Since every satisfiable existential
	Presburger formula has a solution that is at most exponential in its
	size (this is, in turn a consequence of solution sizes of systems of
	integer linear inequalities~\cite[Theorem]{vonZurGathenSieveking1978}),
	if $\cA$ can reach $S$, then $\varphi$ must have a solution, and this
	solution must be at most exponential in size. This solution, in turn,
	yields an at-most exponentially long run of $\cA$. We thus obtain our
	bound $\ell$.
\end{proof}

The remainder of this subsection is devoted to proving \cref{simulate-acyclic}. 
Let us introduce some notation. For an automaton $\cA$ and a state $q$ of
$\cA$, we denote by $\Cycles_q(\cA)$ the set of $q$-cycles. Here, a
\emph{$q$-cycle} is a sequence of transitions starting and ending in $q$.
Moreover, for a number $\ell\in\N$, $\Cycles_{q,\ell}(\cA)$ denotes the set of $q$-cycles of length at most
$\ell$. Furthermore, for a set $K\subseteq E^*$ of transitions, $K^*$ denotes the
usual Kleene closure, i.e.\ the set of concatenations of sequences in $K$.
Similarly, $K^{\le \ell}$ denotes the set of concatenations of at most $\ell$
sequences from $K$. In particular, $\Cycles_{q,n}(\cA)^{\le \ell}$ is the set of transition sequences that can be written as a concatenation of at most $\ell$-many $q$-cycles of length $n$ of $\cA$.
\begin{lemma}\label{moving-cycles}
	Let $\cA$ be a counter automaton with $n$ states. For every path $\pi$
	in $\cA$ there is a path $\pi'$ of the same effect such that $\pi'$ has
	the form $v_0 u_1 v_1 \cdots u_m v_m$, where 
	\begin{enumerate}
		\item $|v_0\cdots v_m|\le n^2$ and
		\item for each $i\in[1,m]$, there is a state $q_i$ such that $u_i$ belongs to $\Cycles_{q_i,n}(\cA)^*$.
	\end{enumerate}
\end{lemma}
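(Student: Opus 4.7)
The plan is to build $\pi'$ by rearranging the transitions of $\pi$ via its Parikh image. Let $\chi\in\N^E$ count the number of uses of each transition in $\pi$. Since $\pi$ is a walk from $q_0$ to the final state, $\chi$ is a balanced flow on the states of $\cA$ (with $+1$ out-excess at $q_0$ and $-1$ at the final state) supported on a connected subgraph.

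I would proceed in three steps. First, pick a simple path $\sigma$ from $q_0$ to the final state using only transitions in $\mathrm{supp}(\chi)$; such a $\sigma$ exists and has length at most $n-1$, and the residual $\chi-\chi_\sigma$ is still a balanced flow. Second, while the residual has a simple cycle visiting some state outside the current vertex set $V(\sigma)$, I would absorb such a cycle into $\sigma$ by splicing into $\sigma$, at a shared state $s\in V(\sigma)$, a simple cycle that passes through $s$ and at least one new state. Each absorption adds at most $n$ transitions while covering at least one new state, so at most $n-1$ absorptions suffice and
\[
|\sigma|\;\le\;(n-1)+(n-1)\cdot n\;\le\;n^2.
\]
Third, every remaining simple cycle in the residual has all of its states in $V(\sigma)$, so I would insert each into $\sigma$ at a chosen anchor state and group cycles at the same anchor into a single $u_i$. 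The resulting $\pi'$ has the same Parikh image as $\pi$ (hence the same counter effect) and has the desired form $v_0u_1v_1\cdots u_mv_m$ with skeleton $\sigma$ of length at most $n^2$.

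The main obstacle is justifying the absorption step, i.e., showing that whenever the residual's support still contains a state outside $V(\sigma)$, the residual also contains an absorbable simple cycle (one through some $s\in V(\sigma)$ that visits an outside state). The argument is a flow-balance one: since every transition used by $\sigma$ has both endpoints inside $V(\sigma)$ and $\mathrm{supp}(\chi)$ is connected (as $\pi$ visits all of it), the residual must contain a bridging transition from some $s\in V(\sigma)$ to a state outside $V(\sigma)$; closing this transition into a walk returning to $V(\sigma)$ via balance and then extracting a simple sub-cycle through the returning endpoint yields the required absorbable cycle.
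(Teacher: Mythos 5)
Your proof is correct, but it takes a genuinely different route from the paper. The paper works \emph{locally on the sequence} $\pi$: it marks, for each visited state, one ``red'' occurrence of a transition through it (at most $n$ reds), and then greedily excises cycles of length $\le n$ from any block of $n$ consecutive black transitions until none remains; the surviving skeleton still visits every state of $\pi$, so the removed cycles can be re-inserted. Your argument is \emph{global on the Parikh image}: you treat $\chi$ as a flow, extract a simple path $\sigma$ from $q_0$ to $q_1$, and then grow the skeleton by absorbing simple cycles from the residual circulation that bridge $V(\sigma)$ to new states; what remains is a circulation decomposing into simple cycles anchored inside $V(\sigma)$. Both yield the same $n^2$ bound ($(n-1)+(n-1)n = n^2-1$ in your accounting). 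The paper's version is a bit more elementary, since it never needs the cycle-decomposition of circulations; your version fits the Parikh-image viewpoint that the surrounding appendix already uses (e.g.\ in the Eisenbrand--Shmonin step of Lemma~\ref{simulate-cycles}).

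Two small points worth tightening. First, ``$\mathrm{supp}(\chi)$ is connected'' is weaker than what you need; the relevant fact is that every state visited by $\pi$ is \emph{reachable from $q_0$} along edges of $\mathrm{supp}(\chi)$. Following such a path from $q_0$ to a state outside $V(\sigma)$, the first edge $s\to t$ leaving $V(\sigma)$ cannot be an edge of $\sigma$ (its target would then be in $V(\sigma)$), so it survives with full multiplicity in the residual — that is the bridging edge. Second, the phrase ``closing this transition into a walk returning to $V(\sigma)$ and extracting a simple sub-cycle through the returning endpoint'' is a bit murky (the extracted simple sub-cycle need not contain $t$). A cleaner finish: the residual is a nonnegative integer circulation, hence decomposes into a multiset of simple cycles; the cycle of that decomposition containing the bridging edge $s\to t$ passes through $s\in V(\sigma)$ and through $t\notin V(\sigma)$, so it is absorbable. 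With these two repairs your argument is complete.
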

\begin{proof}
	Let $\pi$ be a path of $\cA$. For each state that appears in $\pi$, we
	pick some occurrence of a transition that visits it and color it red;
	all the transition occurrences are colored black. Note that at most $n$
	positions in $\pi$ are red. If there are $n$ consecutive black
	transitions, then these must contain a cycle. While we can still find
	$n$ consecutive black transitions, we remove such cycles. After this
	removal, we end up with a run where between any two red transitions
	(and before the first; and after the last), there are less than $n$
	transitions. Thus, this run has length $\le (n+1)(n-1)=n^2-1$.
	Moreover, this run still visits all states that $\pi$ visits. We can
	thus re-insert all removed cycles and an appropriate place, and obtain
	a run as desired.
\end{proof}

\begin{lemma}\label{simulate-cycles}
	Given a $k$-counter automaton $\cA$ with $n$ states, a state $q$ of $\cA$, and a number $\ell\in\N$ in binary, 
	one can construct in logspace an acyclic automaton $\cA'$ such that 
	\[ \Eff(\Cycles_{q,n}(\cA)^{\le\ell})\subseteq\Eff(\cA')\subseteq\Eff(\Cycles_{q,n}(\cA)^*). \]
	Moreover, if $\cA$ is $\NN^k$-labeled, then $\cA'$ is as well.
\end{lemma}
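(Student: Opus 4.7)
The plan is to apply the Eisenbrand-Shmonin Carath\'{e}odory theorem for integer cones~\cite[Thm.~1]{EisenbrandS06} to represent every effect in $\Eff(\Cycles_{q,n}(\cA)^{\le\ell})$ as a nonnegative integer combination of only polynomially many distinct short $q$-cycles, and then to realize the (exponentially bounded) multiplicities in binary via an acyclic chain of doubling gadgets. Let $W$ denote the maximum absolute value of a transition weight in $\cA$; every short $q$-cycle has effect of $\ell_\infty$-norm at most $nW$. Eisenbrand-Shmonin yields a bound $r\coloneqq 2k\lceil\log(4knW)\rceil$ on the number of distinct generators needed, which is polynomial in the input size. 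Once $r$ generators are fixed, the standard integer-LP solution-size bound (e.g.\ Papadimitriou's) produces a constant $B$ on the coefficients of a nonnegative integer combination with target of $\ell_\infty$-norm $\le \ell nW$, satisfying $\log B=O(k\log(k\ell nW))$, which is polynomial in $|\cA|+\log\ell$. Set $L\coloneqq\lceil\log(B+1)\rceil$.

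First I would build $\cA'$ as the lexicographic concatenation of doubling gadgets $G_{i,j}$ for $i\in[1,r]$ and $j\in[0,L-1]$. Each $G_{i,j}$ has an entry $e_{i,j}^{\mathrm{in}}$ and an exit $e_{i,j}^{\mathrm{out}}$, and is implemented as a layered DAG with states $\tuple{i,j,q',h}$ for $q'\in Q$ and $h\in[0,n]$. There is a transition $\tuple{i,j,q',h}\xrightarrow{2^j w}\tuple{i,j,q'',h+1}$ whenever $h<n$ and $\cA$ has a transition $q'\xrightarrow{w}q''$; we identify $e_{i,j}^{\mathrm{in}}=\tuple{i,j,q,0}$ and connect every $\tuple{i,j,q,h}$ for $h\in[0,n]$ to $e_{i,j}^{\mathrm{out}}$ by a zero-weight transition. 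Thus $G_{i,j}$ nondeterministically either does nothing (via $h=0$) or simulates a short $q$-cycle of $\cA$ with each weight scaled by $2^j$. We identify $e_{i,j}^{\mathrm{out}}$ with the entry of the next gadget in lexicographic order; the initial state of $\cA'$ is $e_{1,0}^{\mathrm{in}}$ and the final state is $e_{r,L-1}^{\mathrm{out}}$. Strict monotonicity of $h$ within each gadget and of $(i,j)$ across gadgets makes $\cA'$ acyclic; since $2^j\ge 0$, scaling preserves the $\NN^k$-labeled property. The chain has $O(rL\cdot|Q|\cdot n)$ states and each weight has polynomially many bits, so $\cA'$ is logspace-constructible.

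For the lower inclusion $\Eff(\Cycles_{q,n}(\cA)^{\le\ell})\sset\Eff(\cA')$, given $b$ on the left the discussion above yields short $q$-cycles $\pi_1,\ldots,\pi_r$ (padded with the empty cycle if necessary) and $c_1,\ldots,c_r\in[0,B]$ with $b=\sum_{i=1}^r c_i\,\eff{\pi_i}$. Writing $c_i=\sum_{j=0}^{L-1}b_{i,j}2^j$ in binary, a run of $\cA'$ that chooses to simulate $\pi_i$ inside $G_{i,j}$ whenever $b_{i,j}=1$ and takes the empty-cycle branch otherwise accumulates effect exactly $b$. For the upper inclusion $\Eff(\cA')\sset\Eff(\Cycles_{q,n}(\cA)^*)$, any accepting run of $\cA'$ selects, in each $G_{i,j}$, a short $q$-cycle $\pi_{i,j}\in\Cycles_{q,n}(\cA)$ (possibly empty) with contribution $2^j\,\eff{\pi_{i,j}}$; concatenating $2^j$ copies of each $\pi_{i,j}$ (all passing through $q$) yields an element of $\Cycles_{q,n}(\cA)^*$ with the same total effect.

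The main obstacle is calibrating the two bounds simultaneously: we need the Eisenbrand-Shmonin support bound $r$ to be polynomial in the input \emph{and} the integer-LP solution-size bound on the surviving coefficients to yield a $B$ whose binary length $L$ is also polynomial in $|\cA|+\log\ell$. Together these ensure the chain of $rL$ gadgets has polynomial total size and all scaled weights fit in polynomially many bits, which is what makes the construction logspace-computable.
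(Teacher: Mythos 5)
Your proof is correct and follows essentially the same approach as the paper: apply Eisenbrand--Shmonin to bound the support of the nonnegative integer combination to a polynomial number $r$ of distinct short $q$-cycles, use the standard integer-programming solution-size bound to cap the coefficients by an exponential $B$, and then realize each coefficient in binary via a chain of $r\cdot L$ doubling gadgets that may pick different $q$-cycles per power of two (which is sound because the over-approximation still lands in $\Cycles_{q,n}(\cA)^*$). The only cosmetic differences are that you spell out the gadget DAG explicitly and cite Papadimitriou rather than von zur Gathen--Sieveking for the coefficient bound.
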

\begin{proof}
	We begin with some notation. For any set $A\subseteq \ZZ^k$, let
	$A^*\subseteq\ZZ^k$ be the submonoid generated by $A$. Moreover, for
	$F\subseteq\ZZ$, define $F\cdot A:=\{f\cdot\vec{a}\mid f\in
	F,~\vec{a}\in A\}$. Moreover, $A^{r}$ is the $r$-fold Minkowski sum of $A$, meaning the set of vectors that
	can be written as a sum of $r$ vectors from $A$. In other
	words, we have
	\[ A^{r}=\{\vec{a}_1+\cdots+\vec{a}_r \mid \vec{a}_1,\ldots,\vec{a}_r\in A\}. \]
	Furthermore, we define $A^{\le r}=\bigcup_{i\in[0,r]} A^r$.
	For example, \hbox{$(\NN\cdot A)^{\le r}$} is the set of all vectors that can be written as a positive integer linear combination of at most $r$ distinct vectors from $A$. 

	Now let $C=\Eff(\Cycles_{q,n}(\cA))\subseteq\ZZ^k$ be the set of
	effects of cycles in $\cA$ of length $\le n$.  With this notation,
	our goal is to construct in logspace an acyclic automaton $\cA'$ such
	that
	\[ C^{\le\ell}\subseteq\Eff(\cA')\subseteq C^*. \]

	A result of Eisenbrand \& Shmonin~\cite[Theorem 1]{EisenbrandS06} shows that if
	$A$ is a finite set and $M$ is the maximal absolute value appearing in
	vectors in $A$, then $A^*=(\N\cdot A)^{\le r}$, where \hbox{$r=2k\log(4kM)$}.
	Since all the absolute values appearing in $C$ are at most exponential,
	this implies that Eisenbrand and Shmonin's result yields a polynomial
	bound $r$ with \hbox{$C^* = (\NN\cdot C)^{\le r}$}. 

	Note that each vector in $C^{\le\ell}$ has polynomial bit length. As
	observed above, we have in particular $C^{\le\ell}\subseteq(\NN\cdot
	C)^{\le r}$. This means, for each $\vec{c}\in C^{\le\ell}$, there are vectors
	$\vec{c}_1,\ldots,\vec{c}_t\in C$ with $t\le r$ and coefficients
	$\lambda_1,\ldots,\lambda_t\in\N$ with
	$\vec{c}=\lambda_1\vec{c}_1+\cdots+\lambda_t\vec{c}_t$. By the standard
	solution size bound for integer linear
	programming (e.g.\ \cite[Theorem]{vonZurGathenSieveking1978}), since the
	vectors $\vec{c},\vec{c}_1,\ldots,\vec{c}_t$ have polynomial bit
	length, and $t\le r$ is polynomial, the equation
	$\vec{c}=\lambda_1\vec{c}_1+\cdots+\lambda_t\vec{c}_t$ even has a
	solution where all the $\lambda_i$ have polynomial bit-length. This
	implies that there is an exponential bound $m\in\NN$ such that every
	vector from $C^{\le\ell}$ belongs to $([0,m]\cdot C)^{\le r}$. In
	summary, we observe
	\[ C^{\le\ell}\subseteq ([0,m]\cdot C)^{\le r}\subseteq C^*. \]
	Now the difficulty is that it is not clear how to construct a small
	automaton whose effects are exactly those in $[0,m]\cdot C$: Such an
	automaton would have to pick a factor $f\in[0,m]$ (for which there are exponentially many choices), and pick a cycle in
	$C$, and then multiply $f$ with each transition weight on this cycle. 
	Here, our key observation is that we can generate a \emph{larger set}
	consisting of multiples of cycles just by powers of two, where we allow
	to use different cycles for each power of two. This is still correct,
	since this larger effect set is still included in $C^*$. To make this precise,
	observe that
	\begin{align*}
		C^{\le\ell}\subseteq ([0,m]\cdot C)^{\le r}\subseteq \underbrace{\left(\sum_{j=0}^{\lceil \log m\rceil} 2^j\cdot (C\cup\{0\})\right)^r}_{=:B}\subseteq C^*.
	\end{align*}
	Therefore, it suffices to construct an acyclic automaton $\cA'$ with $\Eff(\cA')=B$. This is easy to do in deterministic logspace. Indeed, by definition of $C$, it is easy to construct an acyclic automaton for $C\cup\{0\}$, and thus also for $2^j\cdot (C\cup\{0\})$, by multiplying $2^j$ to each transition effect. For the sum term, we then chain together $\lceil\log m\rceil$-many (which is a polynomial quantity) of such automata together. Finally, for the $r$-fold Minkowski sum, we just chain $r$ copies of this automaton together.
\end{proof}

We are now ready to prove \cref{simulate-acyclic}.
\begin{proof}[Proof of \cref{simulate-acyclic}]
	We first construct an acyclic automaton $\cA''$ that simulates all runs of
	$\cA$ of length $\le n^2$: Its states are of the form $\tuple{q, i}$, where
	$q$ is a state of $\cA$ and $i\in[0,n^2]$. 
	Moreover, each transition
	increases the second component of the state, making the automaton
	acyclic. 

	In the second step, for each state $q$ of $\cA$, we use \cref{simulate-cycles} to construct an acyclic automaton $\cA_q$ so that $\Eff(\Cycles_{q,n}(\cA)^{\le\ell})\subseteq\Eff(\cA_q)\subseteq\Eff(\Cycles_{q,n}(\cA)^*)$. 

	We now combine $\cA''$ with the automata $\cA_q$ to obtain $\cA'$: We
	replace each state $\tuple{q, i}$ by a copy of $\cA_q$: Every transition between
	$\tuple{p, j}$ and $\tuple{q,i}$ will induce one from each final state of $\tuple{p,j}$'s
	copy of $\cA_p$ to each initial state of $\tuple{q,i}$'s copy of $\cA_q$.

	Then clearly, $\cA'$ is acyclic and constructible in logspace.
	Moreover, by \cref{moving-cycles} and the construction of the $\cA_q$,
	we have $\Eff(\Runs_{\le
	\ell}(\cA))\subseteq\Eff(\cA')\subseteq\Eff(\cA)$.
\end{proof}

\subsection{Proof of Lemma \ref{intreach-hardness}}\label{app:intreach-hardness}

\intreachHardness*

Before proving~\cref{intreach-hardness}, we first provide a short sketch of the proof.
Based on \cref{compute-presburger}, there are constants \hbox{$k,\ell\in\N$} where $k < \ell$ such that for given $\delta\ge 1$, we can compute in logspace a number
$x\in\ZZ$ and parameters $\vec{t}\in\ZZ^p$ such that $x + k \in S[\vec{t}]$ , $S[\vec{t}]\cap [x-\delta,x-1]=\emptyset$, and \hbox{$S[\vec{t}]\cap [x+\ell,x+\ell+\delta]=\emptyset$}. 
Given a subset sum instance $\tuple{a_1, \ldots, a_n, b}$, we compute
such $x,\vec{t}$ for $\delta \coloneqq |\ell a_1|+\cdots+|\ell a_n|$.  
Then the modified instance $\tuple{\ell a_1,\ldots,\ell a_n,\ell b}$
is equivalent to the first. Moreover, any sum that is not exactly $\ell
b$ will be off by at least $\ell$, and by at most $\delta$. Thus, we
construct an automaton that produces the sum of a subset of $\ell
a_1,\ldots,\ell a_n$, and then adds $x+k-\ell b$. Then, the automaton
will either reach $x+k$ (if the instance was positive) or will be off
by at least $\ell$, and by at most $\delta$ (and thus not reach
$S[\vec{t}]$). %

\begin{proof}[Proof of~\cref{intreach-hardness}]
	We know that for each $\delta\ge 1$, there exist $\vec{t}\in\ZZ^p$ and
	$x\in\ZZ$ such that 
	\begin{enumerate}[(a)]
		\item $S[\vec{t}]\cap[x,x+\ell-1]\ne\emptyset$,
		\item $S[\vec{t}]\cap[x-\delta,x-1]=\emptyset$, and
		\item $S[\vec{t}]\cap[x+\ell,x+\ell+\delta]=\emptyset$. 
	\end{enumerate}
	This implies that
	there exists a $k\in[0,\ell-1]$ such that for every $\delta\ge 1$,
	there are $\vec{t}\in\ZZ^p$ and $x\in\ZZ$ such that
	\begin{enumerate}[(a')]
		\item $x+k\in S[\vec{t}]$,
		\item $S[\vec{t}]\cap[x-\delta,x-1]=\emptyset$, and
		\item $S[\vec{t}]\cap[x+\ell,x+\ell+\delta]=\emptyset$.
	\end{enumerate}
	We call such a pair $\tuple{\vec{t}, x}$ a \emph{hard pair} for $\delta$.

	Consider a Presburger-definable well-order $\preceq$ on $\ZZ^{p+1}$.
	For example, within each orthant, we use the lexigraphic ordering
	w.r.t.\ absolute values of the entries; and then we pick an arbitrary linear
	ordering on the $2^{p+1}$ orthants. Since we know that for each
	$\delta\ge 1$, there exists a hard pair, we can define a function
	$f\colon\NN\to\ZZ^p\times\ZZ$ in Presburger arithmetic such that for
	each $\delta\ge 1$, the pair $f(\delta)$ is a hard pair for $\delta$:
	The function picks the smallest hard pair w.r.t.\ the ordering
	$\preceq$. Since $f$ is Presburger-definable, it is computable in
	logarithmic space by \cref{compute-presburger}.

	Consider a subset sum instance $\tuple{a_1,\ldots,a_n,b}$, set $\delta = |\ell a_1|+\cdots+|\ell a_n|$, and compute a hard pair $f(\delta) = \tuple{\vec{t}, x}$ for $\delta$.  Consider the automaton in \cref{fig:vas-np-hard},
	and use it with $x_i:=\ell a_i$ for $i\in[1,n]$, $y:=\ell b$, and
	$v:=x+k$. Moreover, use the vector $\vec{t} \in \ZZ^p$ as the parameter vector
	for our instance of $\intreach(S)$. We claim that the automaton can
	reach $S$ if and only if the instance $\tuple{a_1,\ldots,a_n,b}$ is positive.

	If $\tuple{a_1,\ldots,a_n,b}$ is a positive instance, then this automaton can
	clearly reach $x+k$. 
	Conversely, if $\tuple{a_1,\ldots,a_n,b}$ is a negative
	instance of subset sum, then any sum of a subset of $\{\ell
	a_1,\ldots,\ell a_n\}$ will be either $\le \ell b-\ell$ or $\ge \ell
	b+\ell$. Therefore, any number reached in the final state of our
	automaton will belong to $[x+k-\delta,x+k-\ell]$ or to
	$[x+k+\ell,x+k+\delta]$. However, both of these sets have an empty
	intersection with $S[\vec{t}]$, by the choice of $\vec{t}$ and $x$.
	Hence, the instance of $\intreach(S)$ is also negative.
\end{proof}

\subsection{Proofs of \ref{equivalence-positive-local-density} to \ref{equivalence-no-unbounded-isolation} and \ref{equivalence-finite-union} to \ref{equivalence-positive-local-density} of Prop.~\ref{integer-dichotomy-equivalence}}
\label{app:equivalence-finite-union-implications}
\begin{claim}[\impl{equivalence-positive-local-density}{equivalence-no-unbounded-isolation} of~\cref{integer-dichotomy-equivalence}]\label{clm:equivalence-1implies2}
	If $S$ has positive local density, then $S$ does \emph{not} have unbounded isolation.
\end{claim}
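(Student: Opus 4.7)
The plan is to prove the contrapositive: if $S$ has unbounded isolation, then $\density(S)=0$. Fix an $\ell\ge 1$ witnessing unbounded isolation. To show $\density(S)=0$, it suffices to exhibit, for every $\varepsilon>0$, a parameter $\vec{t}\in\ZZ^p$, a period $k\ge 1$, a point $y\in S[\vec{t}]$, and a radius $n\ge 1$ such that $\density_{k,n}(S[\vec{t}],y)<\varepsilon$.

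The idea is to use $k=1$ and choose $\delta$ enormous compared to $\ell/\varepsilon$. First I would apply the definition of unbounded isolation at this $\delta$ to obtain $\vec{t}\in\ZZ^p$ and $x\in\ZZ$ satisfying conditions (a)--(c). By condition (a), pick any $y\in S[\vec{t}]\cap[x,x+\ell-1]$; this will be the center of the arithmetic progression. Set $n\coloneqq\delta$. Since $y\in[x,x+\ell-1]$, the window $y+[-n,n]=[y-\delta,y+\delta]$ is contained in $[x-\delta,x+\ell-1+\delta]$. Conditions (b) and (c) now imply
\[
S[\vec{t}]\cap[y-\delta,y+\delta]\ \subseteq\ [x,x+\ell-1],
\]
so $|S[\vec{t}]\cap(y+[-\delta,\delta])|\le \ell$. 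Consequently,
\[
\density_{1,\delta}(S[\vec{t}],y)\ \le\ \frac{\ell}{2\delta+1}.
\]

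Choosing $\delta\ge\ell/\varepsilon$ makes this ratio at most $\varepsilon$, which establishes $\density(S)\le\varepsilon$ for every $\varepsilon>0$, hence $\density(S)=0$. This contradicts positive local density and completes the contrapositive.

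I do not anticipate a serious obstacle here: once one notices that the two empty $\delta$-intervals flanking $[x,x+\ell-1]$ trap every nearby point of $S[\vec{t}]$ inside an interval of length $\ell$, the bound on $\density_{1,\delta}(S[\vec{t}],y)$ is immediate. The only subtlety to flag is that the center $y$ of the arithmetic progression must lie in $S[\vec{t}]$ (as required by the definition of $\density$), which is exactly what condition (a) guarantees, and that using period $k=1$ is legitimate because $\density(S)$ takes the infimum over all $k\in\NN$.
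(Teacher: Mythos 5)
Your proof is correct and takes essentially the same approach as the paper's: use the two $\delta$-sized empty intervals flanking $[x,x+\ell-1]$ to trap every nearby point of $S[\vec{t}]$ inside an interval of length $\ell$, yielding $\density_{1,\delta}(S[\vec{t}],y)\le\ell/(2\delta+1)\to 0$. In fact your version is slightly tidier than the paper's, since you consistently measure density at a point $y\in S[\vec{t}]$ (as the definition of local density requires), whereas the paper's write-up evaluates $\density_1(S[\vec{t}],\cdot)$ at $x$, which need not lie in $S[\vec{t}]$.
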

\begin{proof}
	Let us assume that $S$ has positive local density $\density(S) > 0$.
	For the sake of contradiction, suppose that $S$ has unbounded isolation.
	This means that there exists $\ell\ge 1$ such that for every $\delta\ge 1$, there exists a $\vec{t}\in\ZZ^{p}$ and an $x\in\ZZ$ such that:
	\begin{enumerate}
		\item $S[\vec{t}]\cap [x,x+\ell-1]\ne\emptyset$,
		\item $S[\vec{t}]\cap [x-\delta,x-1]=\emptyset$, and
		\item $S[\vec{t}]\cap [x+\ell,x+\ell+\delta]=\emptyset$.
	\end{enumerate}
	
	Let $\delta = \frac{1}{2}\left(\frac{2\ell}{\density(S)} - 1\right)$.
	Consider the $\vec{t} \in \ZZ^p$ and $x \in \ZZ$ for which the unbounded isolation conditions hold (for this $\delta$).
	The first condition implies the existence of $y \in [x, x+\ell]$ such that $y \in S[\vec{t}]$.
	Given the fact that $y \geq x$, the second condition implies 
	\begin{equation*}
		S[\vec{t}] \cap [y-\delta,x-1] \sset S[\vec{t}] \cap [x-\delta,x-1] = \emptyset.
	\end{equation*}
	Similarly, given the fact that $y \leq x+\ell$, the third condition implies
	\begin{equation*}
		S[\vec{t}] \cap [x+\ell,y+\delta] \sset S[\vec{t}] \cap [x+\ell,x+\ell+\delta] = \emptyset.
	\end{equation*}
	We therefore know that $S[\vec{t}] \cap [y-\delta,y+\delta] \sset S[\vec{t}] \cap [x, x+\ell-1] \sset [x, x+\ell-1]$, so 
	\begin{equation}\label{eq:sparse-interval}
		| S[\vec{t}] \cap [y-\delta,y+\delta] | \leq \ell.
	\end{equation}

	Now, by definition of local density, we know that $\density(S[\vec{t}]) \geq D(S)$.
	Precisely, we know that 
	\begin{equation*}
		\density(S[\vec{t}]) = \inf_{k\in\NN} \inf_{x' \in S[\vec{t}]} \density_k(S[\vec{t}],x').
	\end{equation*}
	By considering $k = 1$ and $x' = x$, we deduce that $\density_1(S[\vec{t}],x) \geq D(S)$.
	However,
	\begin{align*}
		\density_1(S[\vec{t}],x) 
			& = \inf_{n\ge 1} \frac{|S[\vec{t}] \cap [x-n,x+n] \cap (x+\ZZ)|}{2n+1} \\
			& = \inf_{n\ge 1} \frac{|S[\vec{t}] \cap [x-n,x+n]|}{2n+1} \\
			& \leq \frac{|S[\vec{t}] \cap [x-\delta,x+\delta]|}{2\delta+1} \\
			& \leq \frac{\ell}{2\delta+1} && \text{(\cref{eq:sparse-interval})} \\
			& = \frac{\ell}{2 \cdot \frac{1}{2}\left(\frac{2\ell}{\density(S)} - 1\right) +1} 
			  = \frac{\density(s)}{2}.
	\end{align*}	
	Since $D(S) > 0$, this contradicts the fact that $\density_1(S[\vec{t}],x) \geq D(S)$.
	We therefore conclude that $S$ indeed must not have unbounded isolation.
\end{proof}

\begin{claim}[\impl{equivalence-finite-union}{equivalence-positive-local-density} of~\cref{integer-dichotomy-equivalence}]\label{clm:equivalence-3implies1}
	If $S$ is a finite union of transformed building blocks, then $S$ has positive local denisty.
\end{claim}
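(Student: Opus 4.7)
The plan is to reduce to a single base building block and then exploit the $\rho$-chain structure. I will first use two easy preservation properties of local density: (i)~sign-flipping preserves density, because $(-A) \cap (x + k \cdot [-n, n])$ and $A \cap (-x + k \cdot [-n, n])$ have the same cardinality; and (ii)~local density is preserved under finite unions in the sense that $\density(S_1 \cup S_2) \geq \min(\density(S_1), \density(S_2))$, since for any $\vec{t}$ and any $x \in (S_1 \cup S_2)[\vec{t}]$, the point $x$ lies in some $S_i[\vec{t}]$, and every arithmetic window around $x$ contains at least as many points of $S_1 \cup S_2$ as of $S_i$. These two properties reduce the statement to showing $\density(S^{(\rho, m)}) > 0$ for each fixed rational $\rho > 1$ and $m \geq 1$.

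For $S^{(\rho, m)}$, non-$\rho$-admissible parameters $\vec{t}$ yield $\emptyset$ and contribute density $1$ to the infimum. For admissible $\vec{t}$, $A := S^{(\rho, m)}[\vec{t}] = I_1 \cup \cdots \cup I_{m+1}$ is a $\rho$-chain of $m+1$ intervals, and by~\Cref{lem:interval-local-density} we have $\density_k(A) \geq \tfrac{1}{5(m+1)} \density_1(A)$, so it suffices to uniformly lower-bound $\density_1(A, x)$ for all admissible $\vec{t}$ and $x \in A$. For $x$ in the one-sided infinite interval $I_{m+1}$, the semi-infinite side of $[x-n, x+n]$ lies entirely in $I_{m+1}$, giving $|A \cap [x-n, x+n]| \geq n+1$ and hence $\density_1(A, x) \geq \tfrac{1}{2}$. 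For $x \in I_j$ with $j \leq m$, I will show $\density_1(A, x) \geq c_{\rho, m}$ using the $\rho$-chain property: as $[x-n, x+n]$ grows, whenever it extends across a gap between chain-consecutive intervals $I_i$ and $I_{i+1}$, the gap length $d(I_i, I_{i+1}) \leq \rho \cdot |I_i|$ is compensated by the $\geq |I_i|$ points already accumulated from $I_i$, so the running ratio stays above roughly $1/(2\rho + 2)$; iterating through the chain maintains this bound, and once $[x-n, x+n]$ reaches $I_{m+1}$ the density approaches $1/2$.

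The main obstacle is the precise accounting in the last step, because by property~\ref{itm:december} of~\Cref{def:chain} the chain can zigzag spatially: the chain-successor of $I_i$ can be to the left or to the right, so chain order need not coincide with spatial order, and the symmetric ball $[x-n, x+n]$ may reach chain-successors in a tangled interleaved pattern on both sides of $x$. I plan to handle this by induction on $m$: the base case $m = 0$ is immediate because $I_1 = I_{m+1}$ is infinite on one side, and in the inductive step, property~\ref{itm:december} applied to $I_{m+1}$ forces it to lie entirely on one side of $I_1, \ldots, I_m$, which lets me split the argument spatially and apply the inductive bound on the opposite side while using the infinite tail for the side containing $I_{m+1}$. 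As an alternative, one can proceed directly by ordering the intervals by distance from $x$ and lower-bounding, at each scale $n$, the contribution of the intervals already captured within $[x-n, x+n]$ in terms of the $\rho$-bounded chain gaps.
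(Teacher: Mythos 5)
Your overall strategy matches the paper's: reduce via the easy preservation facts (sign-flip and $\density(S_1\cup S_2)\ge\min(\density(S_1),\density(S_2))$) to bounding $\density(S^{(\rho,m)})$, invoke \cref{lem:interval-local-density} to pass from $k$-density to $1$-density, and then bound $\density_1(A,x)$ at each scale $n$ by ``paying for'' each gap $d(I_i,I_{i+1})\le\rho|I_i|$ with the $\ge|I_i|$ points already collected. That is exactly what the paper's proof does, with a clean-up to the explicit constant $1/(3(\rho+1))$ via a case split on $n$ relative to the quantities $d_j$ and $e_j$.

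Where you diverge is in identifying the spatial zigzag permitted by condition~\ref{itm:december} as the ``main obstacle'' and proposing an induction on $m$ to handle it. This is a red herring: because one lower-bounds $|A\cap[x-n,x+n]|$ by the \emph{sum of lengths} of the intervals $I_i,\dots,I_{j-1}$ that are fully enclosed once $n$ exceeds their maximum distance from $x$, and because the triangle inequality for intervals bounds that maximum distance by $(\rho+1)(|I_i|+\dots+|I_{j-1}|)$ irrespective of which side of $x$ each interval lands on, the symmetric-window accounting is blind to the spatial arrangement. Moreover, the proposed induction would not go through as stated: after dropping $I_{m+1}$, the remaining $I_1,\dots,I_m$ no longer form a $\rho$-chain (the last interval must be one-sided infinite), so the inductive hypothesis cannot be applied to the ``opposite side.'' Your fallback suggestion --- order the intervals by distance from $x$ and track the running ratio at each scale $n$ --- is the right move and is essentially what the paper carries out, so you should drop the induction and flesh out that direct argument instead.
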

\begin{proof}
	Indeed, if $S$ is the empty set, then it is a finite union of transformed building blocks (the empty set is a valid tranformed building block) and $S$ has positive local density.
	Let us assume that $S \neq \emptyset$ and $S$ is a finite union of transformed building blocks.
	First, observe that for an arbitrary sets $A_1, A_2 \sset \ZZ$, $\density(A_1 \cup A_2) \geq \min\set{\density(A_1), \density(A_2)}$.
	Thus, given that $S$ is a finite union of transformed building blocks, it suffices to prove that an arbitrary transformed building block has positive local density.

	Let $S'$ be a non-empty transformed building block; there exists $\rho > 0$, $m \geq 1$, and Presburger definable functions $\tau : \ZZ^p \to \ZZ^{2m+1}$ and $\sigma : \ZZ^p \to \set{0, 1}$ such that for every $\vec{t} \in \ZZ^p$, we have 
	\begin{equation*}
		S'[\vec{t}] = (-1)^{\sigma(\vec{t})} \cdot S'^{(\rho, m)}[\tau(\vec{t})].
	\end{equation*}
	Recall that
	\begin{equation*}
		\density(S') = \inf_{\vec{t} \in \ZZ^p} \density(S[\vec{t}]).
	\end{equation*}
	Given than $S' \neq \emptyset$, we know that there exists at least one $\vec{t} \in \ZZ^p$ such that $S'[\vec{t}] \neq \emptyset$ and recall that $\density(\emptyset) = 1$, thus
	\begin{equation}\label{eq:nonempty-parameters}
		\density(S') = \inf_{\vec{t} \in \ZZ^p : S'[\vec{t}] \neq \emptyset} \density(S'[\vec{t}]).
	\end{equation}
	Furthermore, given that $\sigma(\vec{t})$ is only used to (potentially) flip the sign, for every $\vec{t} \in \ZZ^p$, 
	\begin{equation*}
		\density(S'[\vec{t}]) = \density(S'^{(\rho, m)}[\tau(\vec{t})]).
	\end{equation*}
	Given~\cref{eq:nonempty-parameters}, if there exists some constant $\varepsilon > 0$ such that for all $\vec{t} \in \ZZ^p,\linebreak S'[\vec{t}] \neq \emptyset \implies \density(S'^{(\rho, m)}[\tau(\vec{t})]) \geq \varepsilon$, then $\density(S') \geq \varepsilon > 0$.

	Accordingly, let $\vec{t} \in \ZZ^p$ be an instantiation of the parameters such that $S[\vec{t}] \neq \emptyset$.
	We will prove that $\density(S'^{(\rho, m)}[\tau(\vec{t})]) \geq \frac{1}{15(m+1)(\rho+1)}$ (and clearly $\frac{1}{15(m+1)(\rho+1)} > 0)$.
	Since $S'^{(\rho, m)}[\tau(\vec{t})] \neq \emptyset$, we know that $\tau(\vec{t})$ is $\rho$-admissible.
	Suppose $\tau(\vec{t}) = (s_1, t_1, \ldots, s_m, t_m, s_{m+1})$.
	Let $I_i = [s_i, t_i]$ for all $i \in [1,m]$ and $I_{m+1} = [s_{m+1}, \infty)$.
	We know that $I_1, \ldots, I_{m+1}$ satisfy Conditions~\ref{itm:disjoint}--\ref{itm:december} of~\cref{def:rho-admissible}.

	We will now prove that $\density_1(S'^{(\rho, m)}[\tau(\vec{t})]) \geq \frac{1}{3(\rho+1)}$ by arguing that in fact, for every $x \in S'^{(\rho, m)}[\tau(\vec{t})]$, $\density_1(S'^{(\rho, m)}[\tau(\vec{t})], x) \geq \frac{1}{3(\rho+1)}$.
	Recall that 
	\begin{multline*}
		\density_1(S'^{(\rho, m)}[\tau(\vec{t})], x) = \inf_{n \in \NN} \density_{1,n}(S'^{(\rho, m)}[\tau(\vec{t})], x) \\ 
		= \inf_{n \in \NN} \frac{|S'^{(\rho, m)}[\tau(\vec{t})] \cap [x-n, x+n]|}{2n+1}.
	\end{multline*}

	Before analysing the value of $\density_{1,n}(S'^{(\rho, m)}[\tau(\vec{t})], x)$, we will introduce some helpful notation.
	For every $j \in [i, m]$, let $d_j$ be the largest distance between $x$ and any point in $I_i \cup \cdots \cup I_j$.
	Precisely, 
	\begin{equation*}
		d_j \coloneqq \max\set{d(x, y) : y \in I_i \cup \cdots \cup I_j}.
	\end{equation*}
	We will also define $d_{m+1} \coloneqq \infty$.
	Note that the point $y \in I_i \cup \cdots \cup I_j$ such that $d(x, y)$ is maximised as an endpoint of one of the intervals $I_i, \ldots, I_j$.
	For all $j \in [1, m+1]$, we will also define $e_j$ to be the maximum distance between $x$ and any one of the intervals $I_i, \ldots, I_j$.
	Precisely, 
	\begin{equation*}
		e_j \coloneqq \max\set{d(x, I_i), \ldots, d(x, I_j)}.
	\end{equation*}
	Similar to before, note that the point $x + e_j$ is an endpoint of one of the intervals $I_i, \ldots, I_j$.
	The following two basic facts will be useful later in this proof.
	Firstly, simply by definition of $d_j$,
	\begin{equation}\label{eq:contains}
		[x-d_j, x+d_j] \supseteq I_i \cup \cdots \cup I_j.
	\end{equation}
	Secondly, given Condition~\ref{itm:distances} of~\cref{def:rho-admissible},
	\begin{align}\label{eq:interval-distance}
		e_j 
		& \leq |I_i| + d(I_i, I_{i+1}) + \ldots + |I_{j-1}| + d(I_{j-1}, I_j)  \nonumber \\
		& \leq |I_i| + \rho\cdot|I_i| + \ldots + |I_{j-1}| + \rho\cdot|I_{j-1}| \\
		& = (\rho+1)\cdot(|I_i| + \ldots + |I_{j-1}|). \nonumber
	\end{align}

	Now, we shall conside the value of $\density_{1,n}(S'^{(\rho, m)}[\tau(\vec{t})], x)$ for various values of $n$.
	First, suppose that $n \leq d_i$.
	Observe that $[x-n, x]$ or $[x, x+n]$ is contained in $I_i$.
	Without loss of generality, suppose that $[x-n, x] \sset I_i$; we conclude that
	\begin{align*}
		\density_{1,n}(S'^{(\rho, m)}[\tau(\vec{t})], x) 
		& \geq \frac{|I_i \cap [x-n, x+n]|}{2n+1} \\
		& \geq \frac{|[x-n, x]|}{2n+1} = \frac{n+1}{2n+1} \geq \tfrac{1}{2}.
	\end{align*}

	Next, for any $j \in [i+1, m+1]$ suppose $n > d_{j-1}$ and $n \leq e_j$.
	Since $n > d_{j-1}$ and given~\cref{eq:contains}, we know that $[x-n, x+n] \supseteq I_i \cup \cdots \cup I_{j-1}$.
	Therefore, 
	\begin{align*}
		\density_{1,n}&(S'^{(\rho, m)}[\tau(\vec{t})], x) \\
		& \geq \frac{|(I_i \cup \cdots \cup I_{j-1}) \cap [x-n, x+n]|}{2n+1} \\
		& \geq \frac{|(I_i \cup \cdots \cup I_{j-1}) \cap [x-n, x+n]|}{2(\rho+1)\cdot(|I_i| + \ldots + |I_{j-1}|) + 1} \hspace{0.2in}\text{(By~\cref{eq:interval-distance})} \\
		& \geq \frac{|I_i| + \ldots + |I_{j-1}|}{2(\rho+1)\cdot(|I_i| + \ldots + |I_{j-1}|) + 1} \hspace{0.2in}\text{(By~\ref{itm:disjoint})} \\
		& \geq \frac{|I_i| + \ldots + |I_{j-1}|}{3(\rho+1)\cdot(|I_i| + \ldots + |I_{j-1}|)} = \frac{1}{3(\rho+1)}.
	\end{align*}

	Now, for any $j \in [i+1, m+1]$ suppose $n > e_j$ and $n \leq d_j$.
	Given that $n \leq d_j$ and by considering~\cref{eq:interval-distance}, we know that $n \leq (\rho+1)\cdot(I_i + \ldots I_{j-1}) + |I_j|$.
	Given that $n > e_j$, we at least know that $[x-n, x+n] \supseteq I_i \cup \cdots \cup I_{j-1}$.
	It could also be the case that (i) $I_j \sset [x-n, x+n]$, otherwise if $I_j \not\sset [x-n, x+n]$, then it must be true that (ii) $[x-n, x-e_j] \sset I_j$ or $[x+e_j, x+n] \sset I_j$.
	In case (i), we know that 
	\begin{align*}
		\density_{1,n}&(S'^{(\rho, m)}[\tau(\vec{t})], x) \\
		& \geq \frac{|(I_i \cup \cdots \cup I_{j-1} \cup I_j) \cap [x-n, x+n]|}{2n+1} \\
		& \geq \frac{|I_i| + \ldots + |I_{j-1}| + |I_j|}{2((\rho+1)\cdot(I_i + \ldots I_{j-1}) + |I_j|)+1} \hspace{0.2in}\text{(By~\ref{itm:disjoint})} \\
		& \geq \frac{|I_i| + \ldots + |I_{j-1}| + |I_j|}{3((\rho+1)\cdot(I_i + \ldots I_{j-1}) + |I_j|)} \\
		& \geq \frac{|I_i| + \ldots + |I_{j-1}| + |I_j|}{3((\rho+1)\cdot(I_i + \ldots I_{j-1}) + (\rho+1)\cdot|I_j|)} \\
		& = \frac{1}{3(\rho+1)}.
	\end{align*}
	In case (ii), without loss of generality, let us assume that $[x-n, x-e_j] \sset I_j$.
	We therefore conclude that,
	\begin{align*}
		\density_{1,n}&(S'^{(\rho, m)}[\tau(\vec{t})], x) \\
		& \geq \frac{|(I_i \cup \cdots \cup I_{j-1} \cup I_j) \cap [x-n, x+n]|}{2n+1} \\
		& = \frac{|(I_i \cup \cdots \cup I_{j-1} \cup I_j) \cap [x-n, x+n]|}{2(e_j + (n-e_j))+1} \\
		& \geq \frac{|I_i| + \ldots + |I_{j-1}| + |[x-n, x-e_j]|}{2(e_j + (n-e_j))+1} \hspace{0.2in}\text{(By~\ref{itm:disjoint})} \\
		& = \frac{|I_i| + \ldots + |I_{j-1}| + n - e_j + 1}{2(e_j + (n-e_j))+1} \\
		& \geq \frac{|I_i| + \ldots + |I_{j-1}| + n - e_j + 1}{2(\rho+1)\cdot(|I_i| + \ldots + |I_{j-1}|) + 2(n-e_j) + 1} \\
		& \geq \frac{|I_i| + \ldots + |I_{j-1}| + n - e_j}{2(\rho+1)\cdot(|I_i| + \ldots + |I_{j-1}|) + 2(n-e_j)} \\
		& \geq \frac{|I_i| + \ldots + |I_{j-1}| + n - e_j}{2(\rho+1)\cdot(|I_i| + \ldots + |I_{j-1}|) + 2(\rho+1)\cdot(n-e_j)} \\
		& = \frac{1}{2(\rho+1)}.
	\end{align*}

	Given that $x \in S'^{(\rho, m)}[\tau(\vec{t})]$ was selected arbitrarily and since $\NN = [0, d_i] \cup (d_i, e_{i+1}] \cup (e_{i+1}, d_{i+1}] \cup \cdots \cup (e_{m+1}, \infty)$, 
	\begin{align*}
		D_1(S'^{(\rho,m)}[\tau(\vec{t})]) 
		& = \inf_{x \in S'^{(\rho, m)}[\tau(\vec{t})]}\inf_{n \in \NN} D_{1, n}(S'^{(\rho,m)}[\tau(\vec{t})], x) \\
		& \geq \frac{1}{3(\rho+1)}.
	\end{align*}

	Therefore, given that $S'^{(\rho, m)}[\tau(\vec{t})]$ is a union of $m+1$ intervals, by~\cref{lem:interval-local-density} we know that for all $k \geq 2$, 
	\begin{align*}
	\density_k(S'^{(\rho, m)}[\tau(\vec{t})]) 
	& \geq \frac{1}{5(m+1)}\density_1(S'^{(\rho, m)}[\tau(\vec{t})]) \\
	& \geq  \frac{1}{15(m+1)(\rho+1)}.
	\end{align*}

	Finally, given~\cref{eq:nonempty-parameters} and since
	\begin{align*}
		\density(S'[t]) = \density(S'^{(\rho, m)}[\tau(\vec{t})]) 
		& = \inf_{k \in \NN} \density_k(S'^{(\rho, m)}[\tau(\vec{t})]) \\
		& \geq \frac{1}{15(m+1)(\rho+1)},
	\end{align*}
	we conclude that $D(S') \geq \frac{1}{15(m+1)(\rho+1)} > 0$.
\end{proof}

\subsection{Proof of Proposition \ref{pro:interval-uniform-decomposition}}\label{app:interval-uniform-decomposition}

\intervalUniformDecomposition*
\begin{proof}
	By \cref{lem:intervals}, there are functions
$\alpha_1,\beta_1,\ldots,\alpha_m,\beta_m\colon\ZZ^p\to\ZZ\cup\{-\infty,\infty\}$
such that for every $\vec{t}\in\ZZ^p$, we have
$\alpha_1(\vec{t})\le \beta_1(\vec{t})\le\alpha_2(\vec{t})\le\beta_2(\vec{t})\le\cdots\le\alpha_m(\vec{t})\le\beta_m(\vec{t})$
and
$S[\vec{t}]=(\alpha_1(\vec{t}),\beta_1(\vec{t}))\cup\cdots\cup(\alpha_m(\vec{t}),\beta_m(\vec{t}))$.
This implies that for every $\vec{t}\in\ZZ^p$, the set $S[\vec{t}]$ has a canonical decomposition of at most $m$ intervals. Moreover, a set with a canonical decomposition with $k\le $ intervals has one of at most $4^k$ different types. Let $T=\{\tau \mid \text{$\tau$ is a type of length $\le m$}\}$. Then $T$ is finite, and for each $\tau\in T$, the set
\[ W_\tau =\{\vec{t}\in\ZZ^p \mid \text{$S[\vec{t}]$ has type $\tau$}\} \]
is definable in Presburger arithmetic. This can easily be done by employing
Presburger formulas for the functions
$\alpha_1,\beta_1,\ldots,\alpha_m,\beta_m$. Note that a formula for $W_\tau$
will have to consider all possible cases of intervals
$(\alpha_i(\vec{t}),\beta_i(\vec{t}))$ being empty.
But those intervals that are non-empty must have pairwise distance $\ge 2$ and
thus form separate constituents of the canonical decomposition.  Hence, we only
need to distinguish which intervals are empty (i.e. $2^m$ cases), and in each
case, one can verify that the type of $S[\vec{t}]$ is $\tau$.

Now for each $\tau\in T$, consider the set $S_\tau\subseteq\ZZ^{p+1}$ with
\[ S_\tau[\vec{t}]=\begin{cases} S[\vec{t}] & \text{if $t\in W_{\tau}$} \\ \emptyset & \text{otherwise}\end{cases} \]
Then $S_\tau$ is Presburger-definable, since $W_\tau$ is. Observe that each
$S_\tau$ is interval-uniform: For every $\vec{t}\in\ZZ^p$, the set
$S_\tau[\vec{t}]$ is either empty or of type $\tau$.  Then, we clearly have
$S=\bigoplus_{\tau\in T} S_\tau$, as desired.
\end{proof}

\subsection{Proof of Proposition \ref{no-isolation-no-constellation}}\label{app:no-isolation-no-constellation}

In this subsection, we prove the following:
\noIsolationNoConstellation*

We begin with a simple lemma about semilinear sets.
\begin{lemma}\label{ratio-lemma}
	Let $A\subseteq\NN^2$ be a semilinear set where $\vec{u}[2]>0$ for every $\vec{u}\in A$. The following are equivalent.
	\begin{enumerate}
		\item\label{cond:unbounded-ratio} For every $k\ge 1$, there is a $\vec{u}\in A$ with $\vec{u}[1]\ge k\cdot\vec{u}[2]$.
		\item\label{cond:unbounded-fixed} There is $\ell\ge 1$ such that, for every $\delta\ge 1$, there is a $\vec{u}\in A$ with $\vec{u}[2]\le \ell$ and $\vec{u}[1]\ge \delta$.
	\end{enumerate}
\end{lemma}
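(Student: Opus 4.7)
The implication \ref{cond:unbounded-fixed} $\Rightarrow$ \ref{cond:unbounded-ratio} is immediate: given $k\ge 1$ and the constant $\ell$ guaranteed by \ref{cond:unbounded-fixed}, apply \ref{cond:unbounded-fixed} with $\delta=k\ell$ to obtain $\vec{u}\in A$ with $\vec{u}[2]\le\ell$ and $\vec{u}[1]\ge k\ell\ge k\cdot\vec{u}[2]$.

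For the harder direction \ref{cond:unbounded-ratio} $\Rightarrow$ \ref{cond:unbounded-fixed}, I would use the semilinearity of $A$ to write $A=L_1\cup\cdots\cup L_n$ as a finite union of linear sets $L_i=\vec{b}_i+\NN\vec{p}_{i,1}+\cdots+\NN\vec{p}_{i,m_i}$. Assuming \ref{cond:unbounded-ratio}, for every $k\ge 1$ there is some $\vec{u}_k\in A$ with $\vec{u}_k[1]\ge k\cdot\vec{u}_k[2]$; by the pigeonhole principle, one of the finitely many linear components, say $L=\vec{b}+\NN\vec{p}_1+\cdots+\NN\vec{p}_m$, contains such a witness for arbitrarily large $k$, so \ref{cond:unbounded-ratio} already holds within $L$.

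The main step is then to show that inside $L$, some period $\vec{p}_j$ must satisfy $\vec{p}_j[1]>0$ and $\vec{p}_j[2]=0$. Suppose for contradiction that no such period exists, so every $\vec{p}_j$ with $\vec{p}_j[1]>0$ also has $\vec{p}_j[2]\ge 1$. Let $C\coloneqq \max_j \vec{p}_j[1]$, which is finite. For any $\vec{u}=\vec{b}+\sum_j \lambda_j\vec{p}_j\in L$, the first coordinate is bounded by
\[ \vec{u}[1]\le \vec{b}[1]+C\sum_{j:\vec{p}_j[1]>0}\lambda_j, \]
while the second satisfies $\vec{u}[2]\ge\sum_{j:\vec{p}_j[1]>0}\lambda_j$ because each such $j$ contributes at least $\lambda_j$ to $\vec{u}[2]$. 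Combining these, and using the hypothesis $\vec{u}[2]\ge 1$, gives $\vec{u}[1]\le \vec{b}[1]+C\cdot\vec{u}[2]$, so the ratio $\vec{u}[1]/\vec{u}[2]$ is uniformly bounded by $\vec{b}[1]+C$, contradicting \ref{cond:unbounded-ratio} for $L$.

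Once a period $\vec{p}_j$ with $\vec{p}_j[1]\ge 1$ and $\vec{p}_j[2]=0$ is secured, I set $\ell\coloneqq \vec{b}[2]$ and, for each $\delta\ge 1$, consider $\vec{u}\coloneqq \vec{b}+\delta\vec{p}_j\in L\subseteq A$. Then $\vec{u}[2]=\vec{b}[2]\le\ell$ and $\vec{u}[1]=\vec{b}[1]+\delta\vec{p}_j[1]\ge\delta$, establishing \ref{cond:unbounded-fixed}. The main obstacle is the case analysis on periods above; the rest is straightforward arithmetic.
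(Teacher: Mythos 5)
Your proof is correct and follows essentially the same route as the paper: decompose $A$ into linear sets, use pigeonhole to reduce to a single linear set $L$ satisfying condition~(1), show by contradiction that $L$ must have a period $\vec{p}$ with $\vec{p}[1]>0$ and $\vec{p}[2]=0$ (else the ratio is uniformly bounded), and then take $\ell=\vec{b}[2]$ and $\vec{u}=\vec{b}+\delta\vec{p}$. Your handling of periods with $\vec{p}_j[1]=0$ is slightly more explicit than the paper's, and your bound $\vec{b}[1]+C$ differs cosmetically from the paper's $r$, but the argument is the same; you might add the one-line remark (as the paper does) that $\vec{b}\in L\subseteq A$ forces $\vec{b}[2]>0$, so that $\ell=\vec{b}[2]\ge 1$ as required.
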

\begin{proof}
	The implication \impl{cond:unbounded-fixed}{cond:unbounded-ratio} is
	trivial, so suppose we have \cref{cond:unbounded-ratio} and consider a
	semilinear representation for $A$. 
	Clearly, if~\cref{cond:unbounded-ratio} holds for $A$, it must hold for some linear
	set $L \sset A$ in the semilinear representation of $A$. 
	Let $\vec{b}\in\NN^2$ be the base vector of $L$.
	Since all vectors in $A$ have non-zero second
	component, we know that $\vec{b}[2]>0$. 
	\cref{cond:unbounded-ratio} implies that there must be a period vector $\vec{p}\in\NN^2$ such that $\vec{p}[1]>0$ and $\vec{p}[2]=0$. 
	Indeed, if every period vector had $\vec{p}[2]>0$, then in all elements $\vec{u} \in L$,
	would have $\vec{u}[1] \le r \cdot \vec{u}[2]$, where
	\begin{equation*}
		r = \max\left\{\frac{\vec{b}[1]}{\vec{b}[2]}, \frac{\vec{p}[1]}{\vec{p}[2]} : \vec{p} \text{ is period vector of } L \right\},
	\end{equation*}
	in contradiction to~\cref{cond:unbounded-ratio}. 
	Now, with $\ell=\vec{b}[2]$, we satisfy~\cref{cond:unbounded-fixed} by picking
	$\vec{u}=\vec{b}+\delta\cdot\vec{p}$.
\end{proof}

For interval-uniform sets, we have a variant of \cref{lem:intervals} that
guarantees each of the intervals $(\alpha_i(\vec{t}),\beta_i(\vec{t}))$ to be
non-empty. 
\begin{restatable}{lemma}{intervalsIntervalUniform}\label{lem:intervals-interval-uniform}
	If $T\subseteq\ZZ^{p+1}$ is interval-uniform, then 
	there exist Presburger-definable functions $\alpha_1, \beta_1, \ldots, \alpha_m, \beta_m : \ZZ^p \to \ZZ \cup \set{-\infty, \infty}$ such that, for every $\vec{t} \in \ZZ^p$, 
	$\alpha_i(\vec{t})+2 \le \beta_i(\vec{t})$ for $i\in[1,m]$ and also $\beta_{i}(\vec{t})\le\alpha_i(\vec{t})$ for $i\in[1,m-1]$ and either (i)~$T[\vec{t}]=\emptyset$ or (ii)~%
		$T[\vec{t}] = (\alpha_1(\vec{t}), \beta_1(\vec{t})) \cup \cdots \cup (\alpha_m(\vec{t}), \beta_m(\vec{t}))$.
\end{restatable}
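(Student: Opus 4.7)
The plan is to construct the functions $\alpha_i, \beta_i$ directly via a case distinction on whether $T[\vec{t}]$ is empty. Since $T$ is interval-uniform, there is a fixed type $\tau = (c_1, \ldots, c_m) \in \set{1, -\infty, \infty, 2\infty}^m$ such that every non-empty $T[\vec{t}]$ has canonical decomposition into exactly $m$ intervals of the shapes dictated by $\tau$. Note that $\set{\vec{t} : T[\vec{t}] = \emptyset}$ is Presburger-definable since $T$ is.

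First, for $\vec{t}$ with $T[\vec{t}] = \emptyset$, I would set the default values $\alpha_i(\vec{t}) \coloneqq 3(i-1)$ and $\beta_i(\vec{t}) \coloneqq 3i - 1$ for $i \in [1,m]$. Then $\alpha_i + 2 = \beta_i$ and $\beta_i < \alpha_{i+1}$, so the required inequalities hold trivially and clause~(i) of the conclusion is satisfied.

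Next, for $\vec{t}$ with $T[\vec{t}] \neq \emptyset$, I would introduce the Presburger-definable predicates
\[ L(\vec{t}, x) \coloneqq x \in T[\vec{t}] \wedge x - 1 \notin T[\vec{t}], \quad R(\vec{t}, x) \coloneqq x \in T[\vec{t}] \wedge x + 1 \notin T[\vec{t}]. \]
Then $L(\vec{t}, \cdot)$ holds exactly for the leftmost points of those canonical intervals $I_i$ of $T[\vec{t}]$ with finite left endpoint, i.e., with $c_i \in \set{1, \infty}$; analogously for $R$ and $c_i \in \set{1, -\infty}$. For each $i \in [1, m]$, I would set $\alpha_i(\vec{t}) \coloneqq -\infty$ if $c_i \in \set{-\infty, 2\infty}$; otherwise, let $\alpha_i(\vec{t}) \coloneqq x - 1$ where $x$ is the $k_i$-th smallest element satisfying $L(\vec{t}, \cdot)$, with $k_i \coloneqq |\set{j \le i : c_j \in \set{1, \infty}}|$. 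Define $\beta_i(\vec{t})$ symmetrically using $R$, setting $\beta_i(\vec{t}) \coloneqq \infty$ if $c_i \in \set{\infty, 2\infty}$.

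Each $\alpha_i, \beta_i$ is Presburger-definable because the fixed-$k$ construction ``$x$ is the $k$-th smallest solution of $\psi(\cdot)$'' is expressible by a standard existential formula with $k$ existentially quantified variables, and $k_i$ depends only on the fixed $\tau$. The inequalities $\alpha_i + 2 \le \beta_i$ and $\beta_i \le \alpha_{i+1}$ will follow from the fact that each canonical interval contains at least one point and consecutive canonical intervals have distance at least $2$, while clause~(ii) follows since $(\alpha_i(\vec{t}), \beta_i(\vec{t}))$ recovers the $i$-th canonical interval of $T[\vec{t}]$ by construction. The step requiring the most care is the bookkeeping for the indices $k_i$ and the correct case split on each $c_i$; however, since $\tau$ is a \emph{constant} tuple independent of the input $\vec{t}$, this amounts to finitely many fixed case distinctions in the formula construction and poses no genuine obstacle to Presburger-definability.
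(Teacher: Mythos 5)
Your proposal is correct, and it takes a genuinely different route from the paper's proof. The paper's proof invokes the general \cref{lem:intervals} to get endpoint functions $\alpha'_1,\beta'_1,\ldots,\alpha'_k,\beta'_k$ that may produce \emph{empty} intervals $(\alpha'_i(\vec{t}),\beta'_i(\vec{t}))$, and then repairs this by introducing, for each subset $J\subseteq[1,k]$, the Presburger-definable set $W_J$ of parameter vectors $\vec{t}$ for which exactly the intervals indexed by $J$ are empty, and re-pointing the final functions $\alpha_i,\beta_i$ to the non-empty intervals on a per-$W_J$ basis. Your construction bypasses \cref{lem:intervals} entirely: you extract the interval boundaries directly from the boundary-detection predicates $L$ and $R$ together with the ``$k$-th smallest solution'' idiom, using the fixed type $\tau$ to determine which interval endpoints are finite and what the selection index $k_i$ should be. This is more self-contained and arguably more transparent. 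The paper's approach has the advantage of reusing an already-established lemma and not having to re-argue definability of the endpoints from scratch; your approach avoids the combinatorial blow-up of the $2^k$-fold case split over $W_J$. One small caveat on presentation: the paper encodes values in $\ZZ\cup\{-\infty,\infty\}$ via a two-component encoding $\hat{\gamma}$ (see \cref{app:intervals}); your construction is compatible with that convention, but it is worth saying so explicitly if this argument were to be incorporated.
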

\begin{proof}
	By interval-uniformity, we know that there is a type $\tau\in
\{1,-\infty,\infty,2\infty\}^m$ such that for every $\vec{t}\in\ZZ^p$, the set
$S[\vec{t}]$ is either empty or of type $\tau$.  Moreover, by
\cref{lem:intervals}, there are Presburger-definable functions
$\alpha'_1,\beta'_1\ldots,\alpha'_k,\beta'_k\colon\ZZ^p\to\ZZ\cup\{-\infty,\infty\}$
such that for each $\vec{t}$, we have
$\alpha'_1(\vec{t})\le\beta'_1(\vec{t})\le\cdots\le\alpha'_k(\vec{t})\le\beta'_k(\vec{t})$
with
$S[\vec{t}]=(\alpha'_1(\vec{t}),\beta'_1(\vec{t}))\cup\cdots\cup(\alpha'_k(\vec{t}),\beta'_k(\vec{t}))$.
In the latter decomposition, not every interval
$(\alpha'_i(\vec{t}),\beta'_i(\vec{t}))$ is non-empty. However, since $S$ is
interval-uniform, the canonical decomposition of $S[\vec{t}]$ has the same
number $m$ of intervals, provided $S[\vec{t}]\ne\emptyset$. We can therefore
consider, for each subset $J\subseteq[1,k]$, the set 
\begin{align*}
 W_J=\{\vec{t}\in\ZZ^p \mid S[\vec{t}]\ne\emptyset\text{ and for all $j\in[1,k]$, we have} \\
\text{$(\alpha'_j(\vec{t}),\beta'_j(\vec{t}))=\emptyset$~if and only if $j\in J$}\}. \end{align*}
Then $W_J$ is clearly Presburger-definable. We can therefore define
$\alpha_1,\beta_1,\ldots,\alpha_m,\beta_m\colon\ZZ^p\to\Z$ as follows. First,
they all check for which subset $J\subseteq[1,k]$, we have $\vec{t}\in W_J$.
This tells us which of the intervals $(\alpha'_i(\vec{t}),\beta'_i(\vec{t}))$
are empty. Note that all the non-empty intervals
$(\alpha'_i(\vec{t}),\beta'_i(\vec{t}))$ must pairwise have distance $\ge 2$,
so they form the canonical decomposition. Therefore, there must be exactly $m$
intervals $(\alpha'_i(\vec{t}),\beta'_i(\vec{t}))$ that are non-empty. The
functions $\alpha_1,\beta_1,\ldots,\alpha_m,\beta_m$ can thus point to those
non-empty intervals.
\end{proof}
\begin{proof}[Proof of~\cref{no-isolation-no-constellation}]
	Suppose in the ideal decomposition $D_1\cup\cdots\cup D_z$, there is an
	$\omega$-constellation. 
	Without loss of generality, suppose that $D_1$ has an $\omega$-constellation. 
	For each $\vec{t}\in\ZZ^p$, let
	$S[\vec{t}]=I_1[\vec{t}]\cup\cdots\cup I_m[\vec{t}]$ be the canonical
	decomposition of $S[\vec{t}]$, and let $\ZZ\setminus
	S[\vec{t}]=K_1[\vec{t}]\cup\cdots K_r[\vec{t}]$ be the canonical
	decomposition of the complement $\ZZ\setminus S[\vec{t}]$. 
	Since $S$ is interval-uniform, the values $m$ and $r$ do not depend on $\vec{t}$. 
	Moreover, by~\cref{lem:intervals-interval-uniform}, the
	interval endpoints of the $I_j[\vec{t}]$ and the $K_j[\vec{t}]$ can be
	computed by Presburger functions.

	Observe that the set $W\coloneqq\{\vec{t}\in\ZZ^p \mid
	M_{S[\vec{t}]}\le D_1\}$ is Presburger-definable, despite the fact that
	$M_{S[\vec{t}]}$ measures ratios. The entries in $D_1$ are either
	$\omega$ or constants, and so, for each $i$ and $j$, the set of $\vec{t} \in \ZZ^p$ such that $\left\lceil\tfrac{|K_i[\vec{t}]|}{|I_j[\vec{t}]|}\right\rceil\le (D_1)_{i,j}$ is Presburger-definable.
	Since $D_1$ has an $\omega$-constellation, we know
	that there are indices $i_1,i_2\in[1,m]$ such that for the indices
	$J=\{j\in[1,r] \mid K_{i_1}<I_j<K_{i_2}\}$, we have
	$(D_1)_{i_1,j} = (D_1)_{i_2,j} = \omega$ for each $j\in J$. 
	Moreover, by choosing
	$i_1,i_2$ with minimal difference, we can ensure that whenever
	$i_1 < i < i_2$, there exists $j\in J$ such that  $(D_1)_{i,j}\in\NN$. 

	Our goal is to apply \cref{ratio-lemma} to the set $A\subseteq\N^2$ of all pairs
	\begin{equation*}
		\left\langle \min\set{|K_{i_1}[\vec{t}]|, |K_{i_2}[\vec{t}]|},~\sum_{j\in J} |I_j[\vec{t}]|+\sum_{i=i_1+1}^{i_2-1} |K_{i}[\vec{t}]| \right\rangle
	\end{equation*}
	for $\vec{t}\in W$. 
	Thus, we claim that for every $k\in\NN$, there is a $\vec{t}\in W$ with 
	\begin{equation}\label{eq-unbounded-ratio} 
		\min\set{|K_{i_1}[\vec{t}]|,|K_{i_2}[\vec{t}]|} \ge k\cdot \left(\sum_{j\in J} |I_j[\vec{t}]|+\sum_{i=i_1+1}^{i_2-1} |K_i[\vec{t}]|\right). 
	\end{equation}
	Indeed, let $R\in\NN$ be the largest number occurring in $D_1$.
	Since $(D_1)_{i_1,j} = (D_1)_{i_2,j} = \omega$ for every $j\in J$, there must be a $\vec{t}\in W$ with $|K_{i_1}[\vec{t}]|, |K_{i_2}[\vec{t}]| \ge 2km^2R\cdot |I_j[\vec{t}]|$ for every $j\in J$. 
	Also, by the pigeonhole principle, there must be a specific $j'\in J$ with $\sum_{j\in J}|I_j[\vec{t}]|\le m \cdot |I_{j'}[\vec{t}]|$ (since $|J|\le m$). 
	With this, we have 
	\begin{equation}\label{eq-unbounded-ratio-a} 
		|K_{i_1}[\vec{t}]|, |K_{i_2}[\vec{t}]| 
		\ge 2km^2R \cdot |I_{j'}[\vec{t}]|
		\ge 2kmR \cdot \sum_{j\in J}|I_j[\vec{t}]|. 
	\end{equation}
	Furthermore, we know that $|K_i[\vec{t}]|\le R\cdot \sum_{j\in J}|I_j[\vec{t}]|$ for every $i\in[i_1+1,i_2-1]$ and thus 
	\begin{equation}\label{eq-unbounded-ratio-b}
		kmR\cdot \sum_{j\in J} |I_j[\vec{t}]|\ge k\cdot\sum_{i=i_1+1}^{i_2-1} |K_i[\vec{t}]| 
	\end{equation}
	Together,~\cref{eq-unbounded-ratio-a} and~\cref{eq-unbounded-ratio-b} imply \cref{eq-unbounded-ratio}, which proves our claim. 
	The claim means that the direction \impl{cond:unbounded-ratio}{cond:unbounded-fixed} of~\cref{ratio-lemma} is applicable to the semilinear set $A$ and yields an $\ell\in\NN$ such that for every $\delta\in\NN$, there is a $\vec{t}\in W$ with
	\begin{align*} 
		&|K_{i_1}[\vec{t}]|, |K_{i_2}[\vec{t}]| \ge \delta, 
		&&\sum_{j\in J}|I_j[\vec{t}]|+\sum_{i=i_1+1}^{i_2-1} |K_{i}[\vec{t}]|\le \ell, 
	\end{align*}
	which proves that $S$ has unbounded isolation.
\end{proof}

\subsection{Proof of Lemma \ref{harbor-existence}}\label{app:harbor-existence}
\harborExistence*
\begin{proof}
	Suppose $I_i[\vec{t}]$ is finite.
	We know, therefore that there exists an infinite interval $I$ (that is either $I_1[\vec{t}]$ or $I_m[\vec{t}]$) such that $I < I_i[\vec{t}]$ or $I_i[\vec{t}] > I$.
	We therefore know that there exists an interval that is larger than $I_i[\vec{t}]$ (which need not be infinite) that is to the left or to the right of $I_i[\vec{t}]$.
	We split the proof into two cases: the first is when there is a larger interval to the left and to the right; the second is when there is only a larger interval to the left and there does not exist a larger interval to the right.
	Note that the (remaining) scenario when there a larger interval to the right but no larger interval to the left is symmetric to the second case. 

	\paragraph*{Case 1} there exists two indices $j_1, j_2 \in [1, m]$ such that $j_1 < i < j_2$ and $|I_{j_1}[\vec{t}]|, |I_{j_2}[\vec{t}]| > |I_i[\vec{t}]|$.
	In fact, let $j_1$ be the greatest index such that $j_1 < i$ and $|I_{j_1}[\vec{t}]| > |I_i[\vec{t}]|$ and let $j_2$ be the least index such that $j_2 > i$ and $|I_{j_2}[\vec{t}]| > |I_i[\vec{t}]|$.
	In other words, $I_{j_1}[\vec{t}]$ is the closest interval to the left of $I_i[\vec{t}]$ that is larger than $I_i[\vec{t}]$ and $I_{j_2}[\vec{t}]$ is the closest interval to the right of $I_i[\vec{t}]$ that is larger than $I_i[\vec{t}]$.
	We claim that 
	\begin{equation*}
		d(I_i[\vec{t}], I_{j_1}[\vec{t}]) \leq \rho \cdot |I_i[\vec{t}]|
		\text{ or }
		d(I_i[\vec{t}], I_{j_2}[\vec{t}]) \leq \rho \cdot |I_i[\vec{t}]|.
	\end{equation*}

	To see why this is true, we shall assume for the sake of contradiction, that $d(I_i[\vec{t}], I_{j_1}[\vec{t}]), d(I_i[\vec{t}], I_{j_2}[\vec{t}]) > \rho \cdot |I_i[\vec{t}]|$.
	By definition of $j_1$, we know that for all $k \in (j_1, i), |I_k[\vec{t}]| \le |I_i[\vec{t}]|$. 
	Given that $d(I_i[\vec{t}], I_{j_1}[\vec{t}]) > \rho \cdot |I_i[\vec{t}]|$, we deduce that there exists $i_1 \in [1, r]$ such that $I_{j_1}[\vec{t}] < K_{i_1}[\vec{t}] < I_i[\vec{t}]$ and 
	\begin{align*}
		|K_{i_1}[\vec{t}]| 
		& > \frac{\rho \cdot |I_i[\vec{t}]| - (i - j_1 - 1) \cdot |I_i[\vec{t}]|}{i - j_1} \\
		& > \frac{\rho \cdot |I_i[\vec{t}]| - m \cdot |I_i[\vec{t}]|}{m} \\
		& = \frac{2R(m+1) \cdot |I_i[\vec{t}]| - m \cdot |I_i[\vec{t}]|}{m} \\
		& > R \cdot |I_i[\vec{t}]|.
	\end{align*}
	Symmetrically, by definition of $j_2$, we know that for all $k \in (i, j_2), |I_k[\vec{t}]| < |I_i[\vec{t}]|$.
	Given that $d(I_i[\vec{t}], I_{j_2}[\vec{t}]) > \rho \cdot |I_i[\vec{t}]|$, we can also deduce that there exists $i_2 \in [1, r]$ such that $I_i[\vec{t}] < K_{i_2}[\vec{t}] < I_{j_2}[\vec{t}]$ such that $|K_{i_2}[\vec{t}]| > R \cdot |I_i[\vec{t}]|$.
	
	Let $J = \set{ j \in [1, m] : K_{i_1}[\vec{t}] < I_j[\vec{t}] < K_{i_2}[\vec{t}] }$; observe that $J \sset (j_1, j_2)$.
	Furthermore, by definition of $j_1$ and $j_2$, we know that for all $j \in J$, $|I_j[\vec{t}]| \leq |I_i[\vec{t}]|$; so, with the above, we know that that
	\begin{equation}\label{eq:big-gaps-1}
		|K_{i_1}[\vec{t}]|, |K_{i_2}[\vec{t}]| > R\cdot |I_j[\vec{t}]| \text{ for all } j \in J.
	\end{equation}

	Now, given that $M_{S[\vec{t}]} \in \MM$, we know that there exists a matrix $D$ in the ideal decomposition of $\MM$ such that $M_{S[\vec{t}]} \in \downclose{D}$.
	Thus, for any $i$ and $j$ such that $(D)_{i,j} \in \NN$, it is true that $R \geq (D)_{i,j} \geq (M_{S[\vec{t}]})_{i,j}$.
	Thus, by~\cref{eq:big-gaps-1}, we know that $(M_{S[\vec{t}]})_{i_1, j}, (M_{S[\vec{t}]})_{i_2, j} > R$ for all $j \in J$.
	This implies that $(D)_{i_1, j} \not\in \NN$ and $(D)_{i_2, j} \not\in \NN$ for all $j \in J$; accordingly such entries must be $\omega$.
	This would imply that $S[\vec{t}]$ has an $\omega$-constellation, however this contradicts the assumption of this proposition.
	We therefore conclude that
	\begin{equation*}
		d(I_i[\vec{t}], I_{j_1}[\vec{t}]) \leq \rho \cdot |I_i[\vec{t}]|
		\text{ or }
		d(I_i[\vec{t}], I_{j_2}[\vec{t}]) \leq \rho \cdot |I_i[\vec{t}]|.
	\end{equation*}
	Lastly, in either scenario, we note that properties~\ref{itm:harbor-size} and~\ref{itm:harbor-nearest} of a harbor immediately follow from the definition of $j_1$ (and $j_2$).

	\paragraph*{Case 2} there exists an index $j_1 \in [1, m]$ such that $j_1 < i$ and $|I_{j_1}[\vec{t}]| > |I_i[\vec{t}]|$ and there does not exist an index $j_2 \in [1, m]$ such that $j_2 > i$ and $|I_{j_2}[\vec{t}]| > |I_i[\vec{t}]|$.
	This case is very similar to Case 1; instead of finding two unbounded gaps between $I_{j_1}[\vec{t}]$ and $I_{j_2}[\vec{t}]$, we will find one unbounded gap between $I_{j_1}[\vec{t}]$ and $I_i[\vec{t}]$ and use the non-existence of $j_2$ to deduce that the rightmost gap $K_r[\vec{t}]$ is infinite.

	Let $j_1 \in [1, m]$ be the greatest index such that $j_1 < i$ and $|I_{j_1}[\vec{t}]| > |I_i[\vec{t}]|$.
	Assume, for the sake of contradiction, that $d(I_i[\vec{t}], I_{j_1}[\vec{t}]) > \rho \cdot |I_i[\vec{t}]|$.
	By definition of $j_1$, we know that for all $k \in (j_1, i), |I_k[\vec{t}]| \le |I_i[\vec{t}]|$.
	We can therefore find an index $i_1 \in [1, r]$ such that $I_{j_1}[\vec{t}] < K_{i_1}[\vec{t}] < I_i[\vec{t}]$ and
	\begin{equation}\label{eq:large-left-interval}
		|K_{i_1}[\vec{t}]| 
		> \frac{\rho \cdot |I_i[\vec{t}]| - (i_1 - j - 1) \cdot |I_i[\vec{t}]|}{i-j_1} 
		> R \cdot |I_i[\vec{t}]|.
	\end{equation}

	Now, since there does not exist $j_2 \in [1, m]$ such that $j_2 > i$ and $|I_{j_2}[\vec{t}]| > |I_i[\vec{t}]|$, we know that $K_r[\vec{t}]$ is infinite.
	We now proceed as before in Case 1, but now with $i_2 = r$.
	This time, let $J = \set{ j \in [1, m] : K_{i_1}[\vec{t}] < I_j[\vec{t}] < K_r[\vec{t}]}$; observe that $J \sset (j_1, m]$.
	Given the definition of $j_1$ and the non-existence of $j_2$, we know that for all $j \in J, |I_j[\vec{t}]| \leq |I_i[\vec{t}]|$.
	By~\cref{eq:large-left-interval} and since $|K_r[\vec{t}]| = \infty$, it follows that
	\begin{equation}\label{eq:big-gaps-2}
		|K_{i_1}[\vec{t}]|, |K_r[\vec{t}]| > R \cdot |I_j[\vec{t}]| \text{ for all } j \in J.
	\end{equation}

	Since $M_{S[\vec{t}]} \in \MM$, we know that there exists a matrix $D$ in the ideal decomposition of $\MM$ such that $M_{S[\vec{t}]} \in \downclose{D}$.
	Thus, for all $i$ and $j$ such that $(D)_{i,j} \in \NN$, it is true that $R \geq (D)_{i,j} \geq (M_{S[\vec{t}]})_{i,j}$.
	By~\cref{eq:big-gaps-2}, we know that $(M_{S[\vec{t}]})_{i_1,j}, (M_{S[\vec{t}]})_{r,j} > R$ for all $j \in J$. 
	This implies that $(D)_{i_1, j} \notin \NN$ and $(D)_{r, j} \notin \NN$ for all $j \in J$; accordingly, such entries must be $\omega$.
	This would imply that $S[\vec{t}]$ has an $\omega$-constellation, however this contradicts the assumption of this proposition.
	We therefore conclude that 
	\begin{equation*}
		d(I_i[\vec{t}], I_{j_1}[\vec{t}]) \leq \rho \cdot |I_i[\vec{t}]|.
	\end{equation*}
	Lastly, we note that properties~\ref{itm:harbor-size} and~\ref{itm:harbor-nearest} of a harbor immediately follow from the definition of $j_1$.
\end{proof}

\subsection{Decomposition into transformed building blocks}\label{app:decomposition-transformed-building-blocks}
Here, we show how the equality
\[ S=T_\infty\cup T_{-\infty}\cup\bigcup_{\chi} T_\chi \]
on \cpageref{claim-decomposition-transformed-building-blocks} can be deduced
from \cref{harbor-existence}.  

The sets $T_\chi$ cover all the intervals that
are finite or one-sided infinite; and $T_\infty$ and $T_{-\infty}$ cover all
the intervals that are two-sided infinite (i.e.\ comprise all of $\ZZ$).
Finally, it is not hard to check that each set $T_\chi$, $T_{\infty}$, and
$T_{-\infty}$ are transformed building blocks. This is obvious for $T_{\infty}$
and $T_{-\infty}$. 

For $T_{\chi}$, in case $I_{i_\ell}[\vec{t}]$ is an interval of the form
$[s,\infty)$ (note that this only depends on the index $i_\ell$, due to
interval-uniformity), then this is clear as well: The interval endpoints of
$I_{i_1}[\vec{t}],\ldots,I_{i_\ell}[\vec{t}]$ can be computed using the
Presburger-definable functions $\alpha_1,\beta_1,\ldots,\alpha_m,\beta_m$ from
\cref{lem:intervals-interval-uniform}.  Hence, we can choose
$\gamma(\vec{t})\coloneqq\langle\alpha_{i_1}(\vec{t}),\beta_{i_1}(\vec{t}),\ldots,\alpha_{i_\ell}(\vec{t})\rangle$
and $\sigma\colon\ZZ^p\to \{0,1\}$ as the constant $0$, provided that $\chi$ is
indeed a harbor chain for $\vec{t}$. If $\chi$ is not a harbor chain for
$\vec{t}$, then we choose for $\gamma(\vec{t})$ some tuple that is not
$\rho$-admissible. Note that since the set of $\vec{t}$ for which $\chi$ is a
harbor chain is clearly Presburger-definable, and thus $\gamma$ can depend on
this.

However, in case $I_{i_\ell}[\vec{t}]$ is of the form $(-\infty,s]$, then we
can use
$\gamma(\vec{t})\coloneqq\langle -\alpha_{i_1}(\vec{t}),-\beta_{i_1}(\vec{t}),\ldots,-\alpha_{i_\ell}(\vec{t})\rangle$
and $\sigma\colon\ZZ^p\to \{0,1\}$ as the constant $1$. As above, we only do
this is $\chi$ is a harbor chain for $\vec{t}$, and otherwise $\gamma$ yields
an arbitrary non-$\rho$-admissible tuple. 

\subsection{Making instances growing}\label{appendix-make-growing}
The following lemma shows that we
can always assume that our input instance is growing. This is because if it is
not growing, then for some $I_i$, its length is in particular bounded by a
constant factor in $I_{i-1}$. And in this case, we can consider (i)~the
instance without $I_i$ (which, because of the constant factor, will still be
$\rho'$-admissible for a larger $\rho'$) and (ii)~the instance consisting just
of $I_i,I_{i+1},\ldots,I_{m+1}$. Since in both instances, the number of
intervals is strictly smaller, this reduction will (after at most $m$
repetitions) yield a growing instance.
\begin{restatable}{lemma}{makeGrowing}\label{make-growing}
	Let $\rho,m\ge 1$ be constants. There are $\rho',k\ge 1$ such that
	given an instance of $\intreach(S^{(\rho,m)})$, we can compute in
	logspace at most $k$ instances of $\intreach(S^{(\rho',r)})$ for some
	$r\in[1,m]$, such that the original instance is positive if and only if
	one of the new instances is positive.
\end{restatable}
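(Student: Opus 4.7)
The plan is to proceed by induction on $m$. The base case $m=1$ is immediate: any $\rho$-admissible tuple is automatically growing, since $|I_2|=\infty$ trivially dominates $2\cdot d(I_j,I_2)$ for $j\in\{1,2\}$, and $d(I_1,I_1)=0$. So the input is returned as the single output instance.

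For $m\ge 2$, the first step is to test in logspace whether $\vec{t}$ is growing by comparing $|I_i|$ to $2\cdot d(I_j,I_i)$ for every pair $j\le i$ in $[1,m+1]$ (the interval endpoints are directly read off $\vec{t}$, with $|I_{m+1}|=\infty$). If yes, the input is returned unchanged. Otherwise, a bad index $i$ must lie in $[2,m]$ (since $|I_{m+1}|=\infty$ trivially satisfies the inequality and $d(I_i,I_i)=0$) with $|I_i|<2\cdot d(I_j,I_i)$ for some $j<i$. Iterating the triangle inequality for intervals together with the $\rho$-chain bound $d(I_k,I_{k+1})\le\rho\cdot|I_k|$ and the monotonicity $|I_j|\le\cdots\le|I_{i-1}|$ yields
\[ d(I_j,I_i)\le \sum_{k=j}^{i-1}\bigl(|I_k|+d(I_k,I_{k+1})\bigr)\le (\rho+1)m\cdot|I_{i-1}|, \]
so $|I_i|\le C\cdot|I_{i-1}|$ with $C\coloneqq 2(\rho+1)m$, a constant depending only on $\rho$ and $m$.

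Given such an index, the plan is to split the input into two sub-instances on the same automaton but with smaller parameter tuples. \emph{Instance A} drops the offending interval, keeping $I_1,\ldots,I_{i-1},I_{i+1},\ldots,I_{m+1}$; \emph{Instance B} drops all intervals smaller than $I_i$, keeping the suffix $I_i,\ldots,I_{m+1}$. The union of the two target sets equals $S^{(\rho,m)}[\vec{t}]$, so the original instance is positive iff at least one sub-instance is. Instance B directly inherits $\rho$-admissibility from the original. For Instance A, conditions \ref{itm:disjoint}, \ref{itm:sizes}, and \ref{itm:december} are preserved under deletion of a middle interval (for \ref{itm:december}, removing $I_i$ only shrinks the set of ``earlier intervals'' on one side of which $I_{i+1}$ must lie, and other indices are unaffected); the nontrivial check is \ref{itm:distances} at the newly formed gap, which follows from
\[ d(I_{i-1},I_{i+1})\le d(I_{i-1},I_i)+|I_i|+d(I_i,I_{i+1})\le (\rho+C+\rho C)\cdot |I_{i-1}|, \]
so Instance A is $\rho''$-admissible with $\rho''\coloneqq \rho+C+\rho C$.

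Since both sub-instances have strictly fewer intervals than the original (Instance A has $r=m-1$ finite intervals; Instance B has $r=m+1-i\le m-1$), the induction hypothesis applied to each (with updated constant $\max(\rho,\rho'')$) provides the final $\rho'$ and an overall bound $k\le 2^{m-1}$ on the number of produced instances, both depending only on $\rho$ and $m$. The main technical step is the bound $|I_i|\le C\cdot|I_{i-1}|$ via the triangle inequality, which is what keeps the new gap in Instance A within a constant-factor blowup of $\rho$; the remaining admissibility checks and the logspace bookkeeping are routine.
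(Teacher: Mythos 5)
Your proof is correct and follows essentially the same route as the paper: induction on $m$, splitting a non-growing instance at a bad index $i$ into the instance with $I_i$ removed and the suffix instance $I_i,\ldots,I_{m+1}$, and bounding the newly-created gap $d(I_{i-1},I_{i+1})$ with the triangle inequality for intervals. Your derivation of $|I_i|\le C\cdot|I_{i-1}|$ via the chained triangle inequality is in fact slightly more careful than the paper's proof, which writes $|I_i|<2d(I_{i-1},I_i)$ directly even though the violating index $j$ in the non-growing hypothesis need not be $i-1$; the argument still holds with a larger constant, exactly as you compute.
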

\begin{proof}
	We proceed by induction on $m$. For $m=1$, the statement is trivial:
	Since $|I_2|=|I_{m+1}|=\infty$, every instance for $m=1$ is growing.

	Now suppose $m>1$ and that the intervals $I_1,\ldots,I_{m+1}$ for the
	parameter vector $\vec{t}\in\ZZ^p$ are not growing, say we have $|I_i|<
	2d(I_j,I_i)$ for some $i\in[2,m]$ and $j\in[1,i]$.  We build two new
	instances: One where we remove $I_i$, and one where we remove
	$I_1,\ldots,I_{i-1}$.

	First, let us consider $I_1,\ldots,I_{i-1},I_{i+1},\ldots,I_m$. Then
	for any $k\notin\{i-1,i\}$, we clearly have $d(I_k,I_{k+1})\le
	\rho\cdot |I_k|$. Moreover, we have $|I_i|<2d(I_{i-1},I_i)\le
	2\rho\cdot |I_{i-1}|$ and thus
	\begin{align*}
		d(I_{i-1},I_{i+1})&\le d(I_{i-1},I_i)+|I_i|+d(I_i,I_{i+1}) \\
		&\le \rho\cdot |I_{i-1}|+2\rho|I_{i-1}|+\rho|I_i| \\
		&\le \rho\cdot |I_{i-1}|+2\rho|I_{i-1}|+2\rho^2 |I_{i-1}| \\
		&\le 5\rho^2|I_{i-1}|.
	\end{align*}
	Thus, the instance $I_1,\ldots,I_{i-1},I_{i+1},\ldots,I_{m+1}$ is
	$5\rho^2$-admissible.

	Second, the instance $I_i,I_{i+1},\ldots,I_{m+1}$ is clearly $\rho$-admissible.
	Moreover, the union
	\[ (I_1\cup\cdots\cup I_{i-1}\cup I_{i+1}\cup\cdots\cup I_{m+1})\cup (I_i\cup\cdots\cup I_{m+1}) \]
	is the same as $I_1\cup\cdots\cup I_{m+1}$, so that the original
	instance is positive if and only if one of the two new instances is
	positive.

	Since both new instances have strictly fewer intervals, by induction,
	they can in turn be reduced to growing instances.
\end{proof}

\subsection{Proof of Lemma \ref{correctness-equations}}\label{appendix-correctness-equations}
\correctnessEquations*
\newcommand{\eqstep}{\Rightarrow}
\newcommand{\eqlr}{\Leftrightarrow}
\newcommand{\eqlrs}{\xLeftrightarrow{*}}
	In order to control the additional equalities incurred by taking the
	smallest congruence, let us define a step relation whose reflexive
	transitive closure will be exactly $\equiv_{\rho,\vec{t}}$.

	For any $i,j,k,h$ and $g\in\B[X,X^{-1}]$, we have 
	\begin{multline}\label{eq-step-left}
		g+X^i+X^j+X^k ~\eqstep~ g+X^{[i,j]}+X^k~\text{if $j-i\in[0,v_\ell]$, }\\
		\text{$k-j\in[u_\ell,v_\ell]$, and }
		\text{($I_1<I_\ell$ or $\ell=1$); }
	\end{multline}
	\begin{multline}\label{eq-step-right}
		g+X^h+X^i+X^j ~\eqstep~ g+X^h+X^{[i,j]}~\text{if $j-i\in[0,v_\ell]$, }\\		
		\text{$i-h\in[u_\ell,v_\ell]$, and }
		\text{$I_\ell<I_1$}.
	\end{multline}

Let $\eqlr$ be defined as $\eqstep\cup\eqstep^{-1}$, i.e.\ the symmetric
closure of $\eqstep$. Moreover, let $\eqlrs$ be the reflexive transitive closure of
$\eqlr$. We now show that $\eqlrs$ is indeed identical to $\equiv_{\rho,\vec{t}}$.
\begin{lemma}\label{congruence-step-characterization}
	The congruence $\equiv_{\rho,\vec{t}}$ is identical to $\eqlrs$.
\end{lemma}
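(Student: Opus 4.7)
The plan is to prove the two inclusions $\eqlrs \subseteq \equiv_{\rho,\vec{t}}$ and $\equiv_{\rho,\vec{t}} \subseteq \eqlrs$ separately. The first direction is straightforward: each basic step $g + X^i + X^j + X^k \eqstep g + X^{[i,j]} + X^k$ is obtained from an instance of \eqref{eq-left} by adding $g$ on both sides, which preserves $\equiv_{\rho,\vec{t}}$ since the latter is additively compatible (and analogously for \eqref{eq-right}). Hence $\eqstep$, and therefore its symmetric, reflexive, and transitive closure $\eqlrs$, is contained in $\equiv_{\rho,\vec{t}}$.

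For the converse, I would argue that $\eqlrs$ is itself a congruence on $\B[X,X^{-1}]$ that contains every instance of equations \eqref{eq-left} and \eqref{eq-right}; by minimality of $\equiv_{\rho,\vec{t}}$, this forces $\equiv_{\rho,\vec{t}} \subseteq \eqlrs$. Containing the defining equations is immediate on taking $g = 0$ in \eqref{eq-step-left} and \eqref{eq-step-right}; reflexivity, symmetry, and transitivity of $\eqlrs$ hold by construction. Additive compatibility is easy: if $f \eqstep f'$ is witnessed by $f = g + X^i + X^j + X^k$ and $f' = g + X^{[i,j]} + X^k$, then for any $h \in \B[X,X^{-1}]$ the same equation applied with $g$ replaced by $g + h$ yields $f + h \eqstep f' + h$; an induction on the number of $\eqlr$-steps extends this to all of $\eqlrs$.

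The main obstacle is compatibility with multiplication, because the equations only rewrite triples of single monomials, whereas multiplication by a general polynomial sprays each such monomial across many shifted copies. I would exploit commutativity of $\B[X,X^{-1}]$ together with translation-invariance of the side conditions: whether \eqref{eq-left} applies to $(i,j,k)$ depends only on the differences $j - i$ and $k - j$ (and on the fixed sequence $u_\ell, v_\ell, I_\ell$), so it applies equally to every translate $(i+a, j+a, k+a)$ with the same witness $\ell$. Given a step $f = g + X^i + X^j + X^k \eqstep g + X^{[i,j]} + X^k = f'$ and an arbitrary $h = \sum_{a \in A} X^a$ with $A \subseteq \ZZ$ finite, I would expand
\begin{equation*}
	fh = gh + \sum_{a \in A}\bigl(X^{i+a} + X^{j+a} + X^{k+a}\bigr),
\end{equation*}
enumerate $A = \{a_1,\ldots,a_n\}$, and apply \eqref{eq-step-left} to the triple $(i+a_s, j+a_s, k+a_s)$ in round $s$, absorbing $gh$ together with all yet-unprocessed and all already-rewritten terms into the context polynomial. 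After $|A|$ such steps we reach $gh + \sum_{a \in A}(X^{[i+a,j+a]} + X^{k+a}) = f'h$, so $fh \eqlrs f'h$. The symmetric case $h \eqstep h'$ follows by commutativity $fh = hf$ and the same argument, and then compatibility for arbitrary $\eqlrs$-equivalences on both factors follows by an induction on the total number of steps together with the chain $fh \eqlrs f'h \eqlrs f'h'$.
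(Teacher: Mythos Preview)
Your proof is correct and follows essentially the same approach as the paper: show $\eqlrs \subseteq \equiv_{\rho,\vec{t}}$ directly, and for the converse show that $\eqlrs$ is itself a congruence containing the defining equations, with multiplicative compatibility coming from the translation-invariance of the side conditions (they depend only on the differences $j-i$ and $k-j$). The only organizational difference is that the paper first reduces to multiplication by a single monomial $X^a$ (where a single $\eqstep$-step suffices) and then handles general $h$ by writing $h = X^{a_1} + \cdots + X^{a_n}$ and using additive compatibility, whereas you expand $fh$ directly and perform $|A|$ many $\eqstep$-steps; the underlying idea is the same.
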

\begin{proof}
	The relation $\eqstep$ is clearly included in $\equiv_{\rho,\vec{t}}$, and
	thus $\eqlrs$ is included in $\equiv_{\rho,\vec{t}}$. Moreover, the pairs
	listed in \cref{eq-left,eq-right} are included in $\eqstep$. Thus, it
	remains to be shown that $\eqlrs$ is a congruence relation.  By
	definition, $\eqlrs$ is symmetric, reflexive, and transitive. Hence, we
	only need to show that $f_1\eqlrs f'_1$ and $f_2\eqlrs f'_2$ imply
	$f_1+f_2\eqlrs f'_1+f'_2$ and $f_1f_2\eqlrs f'_1f'_2$. For this, in turn, it suffices to show that
	\begin{equation} f_1\eqlrs f'_1~\text{implies}~f_1+f_2\eqlrs f'_1+f_2~\text{and}~f_1f_2\eqlrs f'_1f_2\label{one-unchanged}\end{equation}
		for every $f_2\in\B[X,X^{-1}]$.
		Indeed, if $f_1\eqlrs f'_1$ and $f_2\eqlrs f'_2$, then \eqref{one-unchanged} yields
	\[ f_1+f_2\eqlrs f'_1+f_2\eqlrs f'_1+f'_2~\text{and}~f_1f_2\eqlrs f'_1f_2\eqlrs f'_1f'_2, \]
	since addition and multiplication are commutative.

	To prove \eqref{one-unchanged}, in turn, it suffices to prove
	\begin{equation} f_1\eqlrs f'_1~\text{implies}~f_1+X^i\eqlrs f'_1+X^i~\text{and}~f_1X^i\eqlrs f'_1X^i\label{monomial-unchanged}\end{equation}
		for every $i\in\Z$. Indeed, if $f_1\eqlrs f'_1$ and $f_2=X^{i_1}+\cdots+X^{i_k}$, then 
	applying $k$-times the implication \eqref{monomial-unchanged} yields $f_1+f_2\eqlrs f'_1+f_2$. Moreover, we have 
	\begin{align*}
		f_1f_2&=f_1X^{i_1}+\cdots+f_1X^{i_k}\eqlrs f'_1X^{i_1}+\cdots+f'_1X^{i_k} \\
		&=f'_1(X^{i_1}+\cdots+X^{i_k})=f'_1f_2,
	\end{align*}
	where the first $\eqlrs$ relation follows by $k$ applications of \eqref{monomial-unchanged},
and the argument above for adding up equivalent terms.

	Finally, \eqref{monomial-unchanged} follows by observing that
	$f_1\eqstep f'_1$ implies $f_1+X^i\eqstep f'_1+X^i$ and also
	$f_1X^i\eqstep f'_1X^i$.
\end{proof}

We are now in a position to prove \cref{correctness-equations}.
\begin{proof}[Proof of \cref{correctness-equations}]
	Because of \cref{congruence-step-characterization}, it suffices to show
	that if $f\eqstep f'$ for some $f,f'\in\B[X,X^{-1}]$, then $f'$
	contains a point in $S^{(\rho,m)}[\vec{t}]=I_1\cup\cdots\cup I_{m+1}$
	if and only if $f$ does. Since $f\eqstep f'$ means that $f'$ has a
	superset of the terms of $f$, the ``if'' direction is trivial. Thus, we
	only need to show: If $f$ contains no point in $I_1\cup\cdots\cup
	I_{m+1}$, then $f'$ does not either.

	Suppose $f\eqstep f'$ via rule \eqref{eq-step-left}, i.e.
	\begin{align*}
		f&=g+X^i+X^j+X^k, \\
		f'&=g+X^{[i,j]}+X^k
	\end{align*}
	for $j-i\in[0,v_\ell]$ and $k-j\in[u_\ell,v_\ell]$ and we have
	$I_1<I_\ell$ or $\ell=1$. The case of \eqref{eq-step-right} (i.e.\ $I_\ell<I_1$) is
	symmetric, so we don't treat it explicitly.
	
	Suppose $f'$ contains a point of $I_1\cup\cdots\cup I_{m+1}$. Note that
	$f'$ only introduces points in $X^{[i+1,j-1]}$, and since $f$ contains no point
	in $I_1\cup\cdots\cup I_{m+1}$, any violating point in $f'$ would need to be in
	$[i+1,j-1]$. Suppose $f'$ intersects $I_r$. Then $I_r$ would need to lie in
	$[i+1,j-1]$, because $f$ does not intersect $I_r$. But since
	$j-i\in[0,v_\ell]$, this means $|I_r|<v_\ell=|I_\ell|$, meaning $r<\ell$. This
	already rules out the case $\ell=1$, so we are in the case $I_1<I_\ell$. By the
	conditions of $\rho$-chains of intervals, if $I_1$ is to the left of $I_\ell$,
	then all intervals $I_1,\ldots,I_{\ell-1}$ are to the left of $I_\ell$. Thus,
	the points $i,j$ and the intervals are arranged as follows:
	\[ i,~I_r,~j,~I_\ell \]
	across $\Z$. Indeed, $j$ has to lie between $I_r$ and $I_\ell$, because
	$j-i\in[0,v_\ell]=[0,|I_\ell|]$, and if $j$ were beyond $I_\ell$, the
	difference $j-i$ would have to be larger than $|I_\ell|$.

	Finally, note that $k-j\in[u_\ell,v_\ell]$ implies that $k\in I_\ell$:
	We have $k\ge j+u_i\ge j+d(I_r,I_\ell)$, so $k$ is at least the left
	end-point of $I_\ell$. Furthermore, we have $k\le j+v_\ell$, meaning
	$k$ cannot exceed the right end-point of $I_\ell$. However, since $X^k$
	occurs in $f$, this would mean that $f$ already intersects $I_\ell$,
	which is impossible.
\end{proof}

\subsection{Proof of Lemma \ref{small-representation}}\label{app:small-representation}
Let us now describe how we exploit the equations in $\sr{\rho}{\vec{t}}$ to obtain small representations of elements of $\sr{\rho}{\vec{t}}$.
Specifically, we will se that every polynomial $f\in\B[X,X^{-1}]$ has an equivalent sum
of at most $(m\rho+m)^{m+1}$ (i.e.\ constantly many) intervals.
\smallRepresentation*

We first need a small observation about the numbers $u_i$ and $v_i$.
\begin{restatable}{lemma}{uvProperties}\label{uv-properties}
For $i\in[0,m]$, we have $u_{i+1}/v_i\le m(\rho+1)$ and $v_i/u_i\ge 2$.
\end{restatable}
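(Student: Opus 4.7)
Both inequalities are direct consequences of the definitions together with the $\rho$-chain conditions \ref{itm:disjoint}--\ref{itm:december} and the assumption (justified by \cref{make-growing}) that the instance is growing, i.e.\ $|I_i|\ge 2\cdot d(I_j,I_i)$ for every $i$ and every $j\in[1,i]$.

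For the bound $v_i/u_i\ge 2$, the plan is simply to take the maximum over $j\in[1,i]$ in the growing condition. This yields $v_i=|I_i|\ge 2\cdot\max_{j\in[1,i]}d(I_j,I_i)=2u_i$. The edge case $u_i=0$ (which occurs in particular for $i=1$, since $d(I_1,I_1)=0$) is handled by reading $v_i/u_i$ as $\infty$.

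For the bound $u_{i+1}/v_i\le m(\rho+1)$, the plan is to bound $d(I_j,I_{i+1})$ separately for each $j\in[1,i+1]$ and then take the maximum. The case $j=i+1$ is immediate since $d(I_{i+1},I_{i+1})=0$. For $j\in[1,i]$, since $I_j,I_{j+1},\ldots,I_{i+1}$ are pairwise disjoint by \ref{itm:disjoint}, we iterate the triangle inequality for intervals stated in the preliminaries to obtain
\[ d(I_j,I_{i+1})\le\sum_{k=j}^{i} d(I_k,I_{k+1})+\sum_{k=j+1}^{i}|I_k|. \]
Then \ref{itm:distances} gives $d(I_k,I_{k+1})\le\rho|I_k|$ and \ref{itm:sizes} gives $|I_k|\le|I_i|=v_i$ for every $k\le i$, so the right-hand side is at most $\rho\cdot(i-j+1)v_i+(i-j)v_i\le((\rho+1)i-1)v_i\le m(\rho+1)\cdot v_i$. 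Taking the maximum over $j$ yields $u_{i+1}\le m(\rho+1)\cdot v_i$, as desired.

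There is no real obstacle here; the whole argument is just bookkeeping on top of the axioms of a $\rho$-chain. The only point to be careful about is using the correct form of the triangle inequality (which holds for arbitrary arrangements of pairwise disjoint intervals, not only the ``obvious'' left-to-right one) so that iterating it along the index sequence $j,j+1,\ldots,i+1$ is legitimate even though \ref{itm:december} does not place these intervals in linear geometric order.
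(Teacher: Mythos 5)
Your proof is correct and follows essentially the same approach as the paper: iterate the triangle inequality for intervals along the index sequence, bound each distance term via condition \ref{itm:distances}, bound each length term via condition \ref{itm:sizes}, count at most $m$ terms, and for the second inequality invoke the growing assumption directly. Your version is slightly more careful about tracking indices and about the $u_i=0$ edge case, but there is no substantive difference.
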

\begin{proof}
First, note that
\begin{align*}
	d(I_j,I_i)&\le d(I_j,I_{j+1})+\sum_{k=j+1}^{i-1} (|I_k|+d(I_k,I_{k+1})) \\
	&\le \rho\cdot|I_j|+\sum_{k=j+1}^{i-1} (|I_k|+\rho\cdot |I_k|) \\
	&=m(\rho+1)|I_{i-1}|,
\end{align*}
where the first inequality is due to the triangle inequality for intervals. Thus
\[ \frac{u_{i+1}}{v_i}=\frac{d(I_j,I_{i+1})}{|I_i|}\le \frac{m(\rho+1)|I_i|}{|I_{i}|}=m(\rho+1) \]
where $j\in[1,i+1]$ maximizes $d(I_j,I_{i+1})$. 

Furthermore, the fact that $v_i/u_i\ge 2$ follows from the fact that our
	instance is growing (recall that we assume that our instance is growing, see \cref{make-growing}).
\end{proof}

For $\ell\in[1,m+1]$, we call a polynomial an \emph{$\ell$-cluster} if between
any two terms $X^i$ and $X^j$, there is a path of terms in $f$ from $X^i$ to
$X^j$ such that in each step, the difference in exponents is at most $v_\ell$.
For \cref{small-representation}, we first prove the following lemma about clusters.

\begin{lemma}\label{two-points-good-distance}
	If $f_1,\ldots,f_{r}$ are distinct (hence disjoint) $(\ell-1)$-clusters
	with $r\ge m(\rho+1)$, then $f_1+\cdots+f_{r}$ contains two points with
	distance in $[u_{\ell},v_{\ell}]$.
\end{lemma}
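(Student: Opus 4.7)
The plan is to consider all points of $f_1 \cup \cdots \cup f_r$ in sorted order and argue via partial sums that if no consecutive difference lies in $[u_\ell, v_\ell]$, then some non-consecutive one does, exploiting the ``growing'' inequality $v_\ell \ge 2 u_\ell$ from \cref{uv-properties}.

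Order the clusters as $f_1 < f_2 < \cdots < f_r$ from left to right, and list all the points of $f_1 \cup \cdots \cup f_r$ in increasing order as $p_1 < p_2 < \cdots < p_N$. Within a single $(\ell-1)$-cluster, consecutive points differ by at most $v_{\ell-1}$. Between consecutive clusters $f_i$ and $f_{i+1}$, the gap from the rightmost point of $f_i$ to the leftmost point of $f_{i+1}$ strictly exceeds $v_{\ell-1}$, since otherwise $f_i$ and $f_{i+1}$ would merge into a single $(\ell-1)$-cluster, contradicting that they are distinct. Moreover, in the intended use of the lemma (inside the proof of \cref{small-representation}) the $f_i$'s are the $(\ell-1)$-subclusters of a common $\ell$-cluster, so this same inter-cluster gap is also at most $v_\ell$; in particular every consecutive difference $p_{j+1} - p_j$ is at most $v_\ell$.

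If some $p_{j+1} - p_j$ itself lies in $[u_\ell, v_\ell]$, we are done. Otherwise, every consecutive difference is strictly less than $u_\ell$. Let $k$ be the smallest index with $p_k - p_1 \ge u_\ell$ (existence discussed below); then $p_{k-1} - p_1 < u_\ell$, and so
\[
    p_k - p_1 \;=\; (p_{k-1} - p_1) + (p_k - p_{k-1}) \;<\; u_\ell + u_\ell \;=\; 2u_\ell \;\le\; v_\ell,
\]
where the final inequality is the growing inequality $v_\ell \ge 2 u_\ell$ from \cref{uv-properties}. Hence $p_k - p_1 \in [u_\ell, v_\ell]$, giving the desired pair of points in $f_1+\cdots+f_r$.

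It remains to verify that $p_N - p_1 \ge u_\ell$. The $r-1$ inter-cluster gaps each contribute at least $v_{\ell-1}+1$ to the span, so $p_N - p_1 \ge (r-1)(v_{\ell-1}+1)$. Combined with $u_\ell \le m(\rho+1)\, v_{\ell-1}$ from \cref{uv-properties}, the hypothesis $r \ge m(\rho+1)$ yields $p_N - p_1 \ge u_\ell$ after carefully tracking the strict inequalities. The main obstacle is precisely this constant-matching step: one must align the two bounds of \cref{uv-properties} with the strict inequality $> v_{\ell-1}$ for inter-cluster gaps so that the hypothesis $r \ge m(\rho+1)$ is tight enough to push the span past $u_\ell$. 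The rest of the argument is a clean one-shot pigeonhole.
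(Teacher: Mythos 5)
Your proof is correct and takes essentially the same approach as the paper's: a pigeonhole bound using $r\ge m(\rho+1)$ and $u_\ell/v_{\ell-1}\le m(\rho+1)$ to guarantee two points at distance $\ge u_\ell$, followed by the growing inequality $v_\ell\ge 2u_\ell$ to pin the first prefix difference that exceeds $u_\ell$ into $[u_\ell,v_\ell]$; your sorted-points/prefix-sum formulation of the second step is a shade cleaner than the paper's explicit ``gap just below $X^i$'' case split, but it is the same idea. Two small remarks. First, the extra hypothesis you invoke (that the $f_j$ are sub-clusters of a common $\ell$-cluster, so that inter-cluster gaps are $\le v_\ell$) is indeed assumed implicitly in the paper's proof as well; without it the lemma as literally stated is false, so you were right to flag it. Second, the constant-matching step you identify as the main obstacle is genuinely off by one: with $r=m(\rho+1)$ the $r-1$ gaps give a span of at least $(m(\rho+1)-1)(v_{\ell-1}+1)=m(\rho+1)v_{\ell-1}+m(\rho+1)-v_{\ell-1}-1$, which need not reach $u_\ell\le m(\rho+1)v_{\ell-1}$ when $v_{\ell-1}$ is large; the paper's prose sketch actually says ``more than $m(\rho+1)$'' clusters, and strengthening the hypothesis to $r\ge m(\rho+1)+1$ repairs the count while only harmlessly bumping the constant in \cref{small-representation}.
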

\begin{proof}
Note that a $1$-cluster is equivalent to a single interval $X^{[i,j]}$ for some
$i$ and $j$.  Now consider an $\ell$-cluster $f$ for $\ell>1$. It can be
written as a sum $f_1+\cdots+f_r$, where each $f_j$ is an $(\ell-1)$-cluster
for $j\in[1,r]$, and $f_j$ is to the left of $f_{j+1}$, for each $j\in[1,r-1]$.
Since $f$ is an $\ell$-cluster and the $f_j$ are distinct $(\ell-1)$-clusters,
we know that the distance between $f_j$ and $f_{j+1}$ is at least
$v_{\ell-1}+1$, but at most $v_{\ell}$. 

Let us consider the case that $I_{\ell}<I_1$ (the case $I_1<I_{\ell}$ is
analogous). We claim that within the first $m(\rho+1)$ many $(\ell-1)$-clusters
of $f_1,\ldots,f_r$, there must be two points with distance in
$[u_{\ell},v_{\ell}]$. This will make \eqref{eq-right} applicable, which
lets us interpolate all other $(\ell-1)$-clusters into a single interval.

Since $d(f_j,f_{j+1})>v_{\ell-1}$ and $u_{\ell}/v_{\ell-1}\le m(\rho+1)$, there must
clearly be two points within $f_1,\ldots,f_{m(\rho+1)}$ of distance at least
$u_{\ell}$.  It remains to argue that there must be two such points where the
distance is also at most $v_{\ell}$. Consider the left-most point $X^h$ in
$f_1+\cdots+f_r$. Moreover, let $X^i$ be the left-most point $X^i$ among those that have distance
$\ge u_{\ell}$ from $X^h$. Now if $i-h=u_{\ell}$, then we are done because
$u_{\ell}\le 2u_{\ell}\le v_{\ell}$. If $i-h>u_{\ell}$, then to the
left of $X^i$ the interval sum $f_1+\cdots+f_r$ must have a gap directly to the
left of $X^i$. Let $X^s$ be the term to the left of this gap. 

Observe that if $i-s\ge u_{\ell}$, then we can choose $X^s,X^i$ as our two
points, because they have difference in $[u_{\ell},v_{\ell}]$ (recall that
all gaps in $f_1+\cdots+f_r$ are at most $v_{\ell}$, since it is an
$\ell$-cluster). If, on the other hand $i-s<u_{\ell}$, then we know that
$s-h<u_{\ell}$ (by choice of $i,h$), and also $i-s<u_{\ell}$. Together,
this means $u_{\ell}\le i-h=(i-s)+(s-h)<2u_{\ell}\le v_{\ell}$.
\end{proof}

Hence, within the first $m(\rho+1)$ many $(\ell-1)$-clusters, we have found two
points with distance in $[u_{\ell},v_{\ell}]$. We can therefore interpolate
all the other $(\ell-1)$-clusters into a single interval using \eqref{eq-right}.
Since a single interval is in particular an $(\ell-1)$-cluster, we have written our
$\ell$-cluster as a sum of $m(\rho+1)+1\le m(\rho+2)$ many $(\ell-1)$-clusters:
\begin{lemma}
	Let $\rho,m\ge 1$, $\vec{t}\in\ZZ^{2m+1}$ and let $\ell\in[1,m+1]$. Each
	$\ell$-cluster $f\in\B[X,X^{-1}]$ is
	$\equiv_{\rho,\vec{t}}$-equivalent to a sum of at most $m(\rho+2)$ many
	$(\ell-1)$-clusters.
\end{lemma}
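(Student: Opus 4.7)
My plan is to formalize the paragraph immediately preceding the lemma statement. Given an $\ell$-cluster $f$, I would decompose it canonically as a sum $f = f_1 + \cdots + f_r$ of its maximal $(\ell-1)$-subclusters, ordered left to right; by definition consecutive $f_j$ and $f_{j+1}$ are separated by a gap strictly greater than $v_{\ell-1}$ but at most $v_\ell$. If $r \le m(\rho+2)$ there is nothing to do. Otherwise, assuming WLOG that $I_\ell < I_1$ (the case $I_1 < I_\ell$ is symmetric via \eqref{eq-left}, and $\ell = 1$ is immediate since $u_1 = 0$ makes \eqref{eq-left} usable without the ``two-points'' witness), I would invoke \cref{two-points-good-distance} on $f_1 + \cdots + f_{m(\rho+1)}$ to obtain two terms $X^h, X^i$ with $i - h \in [u_\ell, v_\ell]$, taking $i - h$ equal to the lower end $u_\ell$ and $X^i$ as far right as possible inside $f_{m(\rho+1)}$.

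The core step is an iterative application of \eqref{eq-right} to absorb every term of $f_{m(\rho+1)+1}, \ldots, f_r$ (and any suffix of $f_{m(\rho+1)}$ past $X^i$) into a single contiguous interval $X^{[i, b]}$ that grows rightward one term at a time. At each step the current right endpoint $b$ and the next still-missing term $X^{b'}$ of $f$ satisfy $b' - b \le v_\ell$ (since $f$ is an $\ell$-cluster), so \eqref{eq-right} of the form $X^{h'} + X^b + X^{b'} \equiv X^{h'} + X^{[b, b']}$ introduces every point in $[b, b']$ provided a legal anchor $X^{h'}$ with $b - h' \in [u_\ell, v_\ell]$ is present. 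I choose the anchor in two phases: while the merged interval is short, $b - i < u_\ell$, use the original $X^h$; once $b - i \ge u_\ell$, switch to an anchor located inside $X^{[i, b]}$. The arithmetic inequality $v_\ell \ge 2 u_\ell$ from \cref{uv-properties} guarantees that the two anchor windows overlap, so the phase switch is always legal. After the iteration terminates, $f$ has been rewritten as $f_1 + \cdots + f_{m(\rho+1)-1}$ together with one long interval (itself an $(\ell-1)$-cluster), for a total of at most $m(\rho+1) + 1 \le m(\rho+2)$ many $(\ell-1)$-clusters.

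The main obstacle will be verifying that the original anchor $X^h$ remains inside the window $[b - v_\ell, b - u_\ell]$ throughout the short phase, i.e.\ up until the merged interval first reaches length $u_\ell$. This reduces to the inequality $(b - i) + (i - h) \le v_\ell$ while $b - i \le u_\ell$; with $i - h$ chosen equal to $u_\ell$, this becomes $u_\ell + u_\ell \le v_\ell$, which is exactly \cref{uv-properties}. Once this single arithmetic check is in place, the phase switch is clean and the rest of the proof is a straightforward finite induction on the number of tail terms absorbed.
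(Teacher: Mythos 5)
The core of your argument rests on the assumption that the two points supplied by \cref{two-points-good-distance} can be chosen with $i-h$ \emph{exactly} $u_\ell$ (and $X^i$ at a position of your choosing). That lemma only guarantees a pair with distance somewhere in $[u_\ell,v_\ell]$; in fact, inspecting its proof, the pair produced in the ``$i-s\ge u_\ell$'' sub-case is a pair of adjacent terms separated by a gap, and that gap can have any size up to $v_\ell$. With $i-h$ close to $v_\ell$, your short-phase invariant $b-h=(b-i)+(i-h)\le v_\ell$ fails already for $b-i\ge 1$: the inequality you reduce to is not ``$u_\ell+u_\ell\le v_\ell$'' but ``$(u_\ell-1)+(i-h)\le v_\ell$'', which is false whenever $i-h>v_\ell-u_\ell+1$, and \cref{uv-properties} gives you no control here beyond $v_\ell\ge 2u_\ell$. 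Even the ``good'' sub-case of \cref{two-points-good-distance} only bounds $i-h<2u_\ell$, which combined with $b-i\le u_\ell-1$ can still overshoot $v_\ell$ when $v_\ell=2u_\ell$ and $u_\ell\ge 3$.

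This is not a cosmetic issue: one can write down an $\ell$-cluster whose gaps alternate between $v_\ell$ and a small value $<u_\ell$, e.g.\ a term at $0$ followed by blocks $\{25j{+}20,\ 25j{+}25\}$ with $u_\ell=10$, $v_\ell=20$, $v_{\ell-1}=10$. Here the only anchor pairs of distance in $[u_\ell,v_\ell]$ are the adjacent pairs at distance exactly $v_\ell$, and applying \eqref{eq-right} for that single $\ell$ can only merge each small block to a length-$5$ interval; it can never bridge the length-$20$ gaps (there is simply no term in the required window $[b-v_\ell,b-u_\ell]$). So your iteration, restricted to \eqref{eq-right} at level $\ell$, produces roughly $r/2$ separate $(\ell{-}1)$-clusters, not $m(\rho+2)$. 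The paper's own prose (``use \eqref{eq-right} to collapse all remaining $(\ell{-}1)$-clusters into a single interval'') is equally a single-sentence gloss of this iteration, so you were right to flag the anchor-maintenance as the crux; but to actually close it one cannot fix $i-h=u_\ell$, and one must bring in equations at other levels $\ell'$ (notably $\ell'=m{+}1$, where $v_{m+1}=\infty$ makes the anchor condition vastly more permissive) or otherwise argue that the restart pairs are few. As written, the two-phase argument does not go through.
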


Finally, \cref{small-representation} follows by induction. First,
the previous lemma implies that each $\ell$-cluster can be written as a sum of at most
$(m\rho+2m)^{\ell-1}$ intervals.  Since every polynomial is an $(m+1)$-cluster, any
polynomial can be written as a sum of $(m\rho+2m)^{m}$ intervals.

\subsection{Proof of Proposition \ref{building-block-ac1}}\label{app:building-block-ac1}
\buildingBlockACOne*
Motivated by \cref{small-representation}, we say that a polynomial
$f'\in\B[X,X^{-1}]$ is a \emph{$(\rho,\vec{t})$-representation} of
$f\in\B[X,X^{-1}]$ if $f'\equiv_{\rho,\vec{t}} f$ and also $f$ is a sum of at
most $(m\rho+2m)^{m}$ intervals.

Let $C\in\Mat(n\times n,\sr{\rho}{\vec{t}})$ be the matrix of equivalence
classes of $B$, i.e.\ $C_{i,j}=[B_{i,j}]$. Then since $\equiv_{\rho,\vec{t}}$
is a congruence, we have
\[ (C^n)_{i,j}=[(B^n)_{i,j}], \]
in other words, the $(i,j)$ entry of $B^n$ is the equivalence class of the
$(i,j)$ entry of $B^n$. Thus, to decide in $\AC^1$ whether we can reach a
number in $I_1\cup\cdots\cup I_{m+1}$, it suffices to compute the matrix $C^n$.

For this, we employ repeated squaring. We define $C^{(1)}:=C$ and
$C^{(r+1)}:=(C^{(r)})^2$.  Then clearly, $C^{(\lceil \log n\rceil)}=C^n$ (note
that, as every path has length $\le n$, $C^{n+1}=C^n$). 
This means, it
suffices to show that given a matrix \hbox{$E\in\Mat(n\times n,\sr{\rho}{\vec{t}})$},
we can compute $E^2$ in $\AC^0$. Then, applying this circuit $\lceil\log
n\rceil$ times, and starting with $C$, will yield $C^{(\lceil \log n\rceil)}$ in a circuit of logarithmic height, hence in $\AC^1$.
Thus, it remains to show:
\begin{proposition}\label{square-matrix-ac0}
	Let $\rho,m\ge 1$ be constants. Given $\vec{t}\in\ZZ^p$ and a matrix
	$E\in\Mat(n\times n,\sr{\rho}{\vec{t}})$, we can compute $E^2$ in
	$\AC^0$.
\end{proposition}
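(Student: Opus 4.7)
The plan is to establish $E^2\in\AC^0$ by splitting the computation into two $\AC^0$ stages: first, compute an unreduced polynomial for each entry of $E^2$ by entrywise multiplication and addition of $(\rho,\vec{t})$-representations; and second, apply the cluster-collapse procedure implicit in the proof of~\cref{small-representation} to bring each entry back into a $(\rho,\vec{t})$-representation of length at most $r_0\coloneqq (m\rho+2m)^m$.

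For the first stage, each entry of $E$ is stored as a $(\rho,\vec{t})$-representation, i.e.\ a list of at most $r_0$ pairs of binary-encoded integers denoting intervals $X^{[a,b]}$. Since $X^{[a,b]}\cdot X^{[c,d]}=X^{[a+c,b+d]}$, each product $E_{i,k}\cdot E_{k,j}$ expands into at most $r_0^2=O(1)$ intervals whose endpoints are obtained by binary addition in $\AC^0$, and the sum $(E^2)_{i,j}=\sum_{k=1}^n E_{i,k}E_{k,j}$ is obtained by concatenating these $n$ lists into a polynomial $f_{i,j}$ represented as a sum of at most $nr_0^2=O(n)$ intervals. This stage is clearly in $\AC^0$.

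The second stage reduces each $f_{i,j}$ to a constant-size $(\rho,\vec{t})$-representation, following the inductive argument in the proof of~\cref{small-representation}. I plan to process cluster levels $\ell=1,2,\ldots,m+1$ in sequence. At level $\ell$, the current polynomial naturally splits into $\ell$-clusters, and any $\ell$-cluster containing more than $m(\rho+2)$ many $(\ell-1)$-clusters is compressed as follows: locate the pair of points at distance in $[u_\ell,v_\ell]$ (guaranteed by the proof of~\cref{small-representation} to exist within the first $m(\rho+1)$ of the $(\ell-1)$-clusters, using the growing assumption justified by~\cref{make-growing}), and apply~\eqref{eq-left} or~\eqref{eq-right} to collapse all intervening $(\ell-1)$-clusters into a single interval. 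Since every polynomial is an $(m+1)$-cluster (because $v_{m+1}=\infty$), after level $m+1$ each $f_{i,j}$ consists of at most $(m(\rho+2))^m=r_0$ intervals, as required.

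The main obstacle will be performing the cluster identification and the collapse step in $\AC^0$. A naive reading treats ``same $\ell$-cluster'' as a reachability relation on the intervals of $f_{i,j}$, which would in general require at least $\NL$. The key observation---already flagged in the proof sketch of~\cref{building-block-ac1}---is that everything lives inside the linearly ordered set $\ZZ$: two positions $x<y$ of the support of $f_{i,j}$ lie in the same $\ell$-cluster iff every maximal gap of the support within $[x,y]$ has width at most $v_\ell$. This is an $\AC^0$ predicate over the $O(n)$ endpoints of the input intervals of $f_{i,j}$; for each candidate pair of endpoints $c<d$, one locally checks whether they actually bound a maximal gap of the support (no other input interval intersects $(c,d)$) and whether $d-c\le v_\ell$. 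The leftmost and rightmost positions of each $\ell$-cluster, as well as the leftmost position at distance $\ge u_\ell$ from a given point, are similarly $\AC^0$-computable as minima and maxima over $O(n)$ binary-encoded integers. Since $m+1$ is a constant, stacking $m+1$ many $\AC^0$ passes yields $E^2\in\AC^0$ overall.
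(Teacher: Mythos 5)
Your proposal is correct and takes essentially the same route as the paper: entrywise polynomial products in $\AC^0$, followed by the cluster-collapse reduction of \cref{small-representation} made constructive in $\AC^0$ via the local "all gaps $\le v_\ell$" characterization of clusters (which is the paper's \cref{first-cluster-ac0}) and the anchor-pair argument from \cref{two-points-good-distance}. The only cosmetic difference is that you iterate the cluster levels bottom-up ($\ell=1,\ldots,m+1$) while the paper's \cref{compute-representation} presents a top-down recursion; since $m$ is a constant, both unroll to the same constant-depth circuit, and your phrase ``collapse all intervening $(\ell-1)$-clusters'' should read ``collapse all remaining $(\ell-1)$-clusters'' (those beyond the first $m(\rho+1)$), matching the paper.
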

\Cref{square-matrix-ac0} easily follows from the following:
\begin{lemma}\label{compute-representation}
	Let $\rho,m\ge 1$ be constants. Given $\vec{t}\in\ZZ^p$ and a
	polynomial $f\in\B[X,X^{-1}]$, we can compute a
	$\tuple{\rho, \vec{t}}$-representation of $[f]$ in $\AC^0$.
\end{lemma}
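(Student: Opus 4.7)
The plan is to implement the inductive argument from the proof of \cref{small-representation} in a constant number of $\AC^0$ layers. The input polynomial $f\in\B[X,X^{-1}]$ is a sum of polynomially many intervals $X^{[a_1,b_1]}+\cdots+X^{[a_r,b_r]}$ (since $f$ arises from the entry-wise multiplication of two matrices whose entries are each sums of at most $(m\rho+2m)^m$ intervals). As a preprocessing step I would compute in $\AC^0$ the canonical decomposition $J_1<\cdots<J_s$ of the set of exponents appearing in $f$: an endpoint $e$ of some input interval is a left endpoint of some $J_i$ iff $e$ is covered by some $[a_k,b_k]$ while $e-1$ is not, and similarly for right endpoints; both conditions are disjunctions over polynomially many integer comparisons.

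Next, for $\ell=1,2,\ldots,m+1$ in sequence I would perform a collapsing step, maintaining the invariant that after level $\ell$ every $\ell$-cluster contains at most $(m(\rho+2))^{\ell-1}$ of the current intervals. To execute level $\ell$, I first identify the $\ell$-clusters of the current polynomial: two intervals $J_i$ and $J_j$ lie in the same $\ell$-cluster iff every consecutive gap between them has size $\le v_\ell$, which is a conjunction of polynomially many binary comparisons and hence decidable in $\AC^0$ for every pair $(i,j)$. Then, within each $\ell$-cluster whose number of sub-intervals exceeds $m(\rho+2)$, I would carry out the collapsing step from \cref{two-points-good-distance}: locate two points in the first $m(\rho+1)$ sub-intervals whose distance lies in $[u_\ell,v_\ell]$ (guaranteed to exist by that lemma, and findable via max/min over polynomially many binary values) and apply \eqref{eq-left} or \eqref{eq-right} (depending on whether $I_1<I_\ell$ or $I_\ell<I_1$) to merge all remaining sub-intervals of the cluster into a single interval. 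The numbers $u_\ell,v_\ell$ are computed from $\vec{t}$ in $\AC^0$. Since $\ell$ ranges over a constant set and each level is a constant-depth composition of comparisons, additions, and max/min on polynomially many binary numbers, the whole circuit remains in $\AC^0$. After level $m+1$ the whole polynomial is a single $(m+1)$-cluster (as $v_{m+1}=\infty$) with at most $(m\rho+2m)^m$ intervals, which is a $\tuple{\rho,\vec{t}}$-representation of $[f]$.

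The main obstacle is to avoid the naive $\NL$ bound when decomposing a polynomial into clusters, since cluster membership formally looks like a reachability predicate. The key observation---anticipated already in the sketch of \cref{building-block-ac1}---is that clusters are \emph{contiguous} sub-ranges of the sorted interval list, so membership of $J_i,J_j$ in the same $\ell$-cluster is equivalent to a conjunction over the intermediate gaps, hence computable in $\AC^0$. The remaining verifications are routine: the collapsing operations preserve $\equiv_{\rho,\vec{t}}$-equivalence because they are direct instances of \eqref{eq-left} or \eqref{eq-right}, and all intermediate exponents have polynomial bit-length since the endpoints of the output intervals are a subset of those appearing in the input.
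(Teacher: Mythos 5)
Your proposal follows essentially the same route as the paper's proof: process cluster levels $\ell = 1, \ldots, m+1$ bottom-up, exploit that membership of two intervals in the same $\ell$-cluster reduces to a conjunction of gap comparisons (hence $\AC^0$, sidestepping the apparent $\NL$ bound), and shrink over-large clusters via \cref{two-points-good-distance} and \eqref{eq-left}/\eqref{eq-right}; the paper's \cref{first-cluster-ac0} plays the role of your cluster-identification step, and its recursive construction of circuits is the same thing as your iteration over levels. One small imprecision in the collapsing step: \cref{two-points-good-distance} guarantees a pair of points with distance in $[u_\ell, v_\ell]$ within the first $m(\rho+1)$ many \emph{$(\ell-1)$-clusters}, not within the first $m(\rho+1)$ \emph{intervals} of the $\ell$-cluster; for $\ell\ge 3$ a single $(\ell-1)$-cluster may already contain more than $m(\rho+1)$ intervals that are all too close together, so the search window (and the threshold that triggers a collapse) must be measured in $(\ell-1)$-clusters. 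Since each $(\ell-1)$-cluster comprises only a constant number of intervals by your invariant, this widened window is still constant-sized and the $\AC^0$ bound is unaffected.
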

Let us see how~\cref{square-matrix-ac0} follows from~\cref{compute-representation}. 
We are given $E=(e_{i,j})_{i,j}$, where each
$E_{i,j}$ is given as a $\tuple{\rho, \vec{t}}$-representation
$f_{i,j}\in\B[X,X^{-1}]$, i.e.\ $E_{i,j}=[f_{i,j}]$. And we want to compute
$E^2$. The $(i,j)$ entry in $E^2$ is
\[ \sum_{k=1}^n e_{i,k}e_{k,j}=[f_{i,k}f_{k,j}].\]
For each $k$, we can easily compute the polynomial $f_{i,k}f_{k,j}$. This is
because for polynomials $f,f'\in\B[X,X^{-1}]$ given as sums of intervals
$X^{[h,\ell]}$, their product is the sum of all products
$X^{[h,\ell]}X^{[h',\ell']}=X^{[h+h',\ell+\ell']}$, where $X^{[h,\ell]}$ is an
interval occurring in $f$, and $X^{[h',\ell']}$ is an interval occurring in
$f'$. Since the sum of two numbers can be computed in $\AC^0$, we can compute
each $f_{i,k}f_{k,j}$. Moreover, we can then compute the sum $g:=\sum_k
f_{i,k}f_{k,j}$ in $\B[X,X^{-1}]$. Then, \cref{compute-representation} allows
us to compute a $(\rho,\vec{t})$-representation of $g$. Doing this for all
entries of $E^2$ in parallel yields the $\tuple{\rho, \vec{t}}$-representation of all
entries in $E^2$ in $\AC^0$.

\begin{lemma}\label{first-cluster-ac0}
	Let $\rho,m\ge 1$ be constants and $\ell\in[0,m]$. Given a polynomial
	$f\in\B[X,X^{-1}]$, we can decide in $\AC^0$ whether a given interval in $f$
	belongs to the left-most (resp., right-most) $\ell$-cluster in $f$.
\end{lemma}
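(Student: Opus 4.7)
The plan is to characterize membership of $J$ in the leftmost $\ell$-cluster by the absence of a ``large'' empty gap in the support of $f$ that lies weakly to the left of $J$, and then to observe that only polynomially many candidate gaps need to be inspected, all of which are witnessed by endpoints of the given interval representation of $f$.

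Concretely, write $f = X^{[i_1,j_1]} + \cdots + X^{[i_r,j_r]}$ for the polynomially many intervals of the given representation, let $S = \bigcup_{k=1}^{r} [i_k,j_k] \subseteq \ZZ$ be the support of $f$, and suppose $J = X^{[i,j]}$. For $\ell \ge 1$ we have $v_\ell \ge 1$, so each interval in the representation lies entirely in one $\ell$-cluster, and ``$J$ is in the leftmost $\ell$-cluster'' is well-posed. I claim this holds iff there do \emph{not} exist $a, b \in S$ with $a < b \le i$, $(a,b) \cap S = \emptyset$, and $b - a > v_\ell$. The ``only if'' direction is immediate: such a pair witnesses that the leftmost term of $S$ is separated from $J$ by a gap larger than $v_\ell$, hence lies in a strictly earlier cluster. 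For ``if'', if no such pair exists, then walking rightwards from the minimum of $S$ through consecutive terms reaches (the leftmost term of) $J$ in steps of length $\le v_\ell$, so $J$'s cluster contains the minimum of $S$ and is therefore leftmost.

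The key structural observation is that whenever $a, b \in S$ are consecutive (i.e.\ $(a,b) \cap S = \emptyset$), $a$ must be the right endpoint $j_k$ of some interval $[i_k, j_k]$ of the representation with $j_k + 1 \notin S$, and symmetrically $b$ must be the left endpoint $i_{k'}$ of some interval $[i_{k'}, j_{k'}]$ with $i_{k'} - 1 \notin S$. Thus it suffices to test the $O(r^2)$ candidate pairs $(j_k, i_{k'})$. The $\AC^0$ circuit evaluates, in parallel over all such pairs, the conjunction of: (a)~$j_k+1 \notin [i_s, j_s]$ for every $s$; (b)~$i_{k'} - 1 \notin [i_s, j_s]$ for every $s$; (c)~$j_k < i_{k'}$ and $[i_s, j_s] \cap (j_k, i_{k'}) = \emptyset$ for every $s$; (d)~$i_{k'} \le i$; and (e)~$i_{k'} - j_k > v_\ell$. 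Each condition is a Boolean combination of comparisons of binary-encoded integers, and $v_\ell = t_\ell - s_\ell$ is computable from $\vec{t}$ in $\AC^0$; the unbounded fan-in ANDs inside (a)--(c) and the unbounded fan-in OR over the polynomially many pairs both stay in $\AC^0$ by the standard fact from the preliminaries. Negating the OR outputs the desired bit.

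The rightmost-cluster case is entirely symmetric, substituting condition (d) with ``$j_k \ge j$''. The boundary case $\ell = 0$ (interpreting $v_0 = 0$, so each term is its own cluster) reduces to checking that $J$ is a single point and equals the minimum (resp.\ maximum) of $S$, again in $\AC^0$ over the endpoints. I do not expect any significant obstacle: the only nontrivial point is the observation that maximal internal gaps of $S$ are always bracketed by endpoints of the input representation, which is immediate from the definition of $S$ as a union of the represented intervals and which keeps the search space polynomial.
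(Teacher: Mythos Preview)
Your proposal is correct and follows essentially the same approach as the paper: both argue that an interval belongs to the left-most $\ell$-cluster precisely when there is no gap of size exceeding $v_\ell$ in the support of $f$ lying to its left, and both observe that every such gap is bracketed by a right endpoint $j_k$ and a left endpoint $i_{k'}$ of the input representation, so that only $O(r^2)$ candidate pairs need to be tested via $\AC^0$ comparisons. Your conditions (a)--(e) are in fact a slightly more careful formalization of the paper's ``no other interval $[x_k,y_k]$ satisfies $x_k\le x_j$ and $y_k\ge y_i$'' criterion.
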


\begin{proof}
	The polynomial $f$ is represented as a sequence of pairs $\tuple{x_i,y_i}$ that represent terms $X^{[x_i, y_i]}$, which in turn is the sum $\sum_{k\in[x_i,y_i]} X^k$. If the interval $[x_j, y_j]$ is to the right 
of $[x_i, y_i]$, meaning $x_j > y_i$, we check whether the gap $x_j - y_i$ exceeds the threshold $v_\ell$ and no other interval $[x_k, y_k]$ satisfies $x_k \leq x_j$ and $y_k \geq y_i$.
 This means that no interval exists in the gap between $[x_i, y_i]$ and $[x_j, y_j]$. If these conditions hold, $[x_j, y_j]$ marks the start of a new $\ell$-cluster. The verification of these
  conditions involves evaluating the conjunction of the conditions for all intervals $[x_k, y_k]$, which can be done in $\AC^0$ using unbounded fan-in operations.

To determine whether a given interval belongs to the left-most $\ell$-cluster (resp., right-most), we first find the left-most pair $[x_i, y_i], [x_j, y_j]$ that satisfies the earlier conditions,
 which can also be done in $\AC^0$. Then, without loss of generality, we assume that $[x_j, y_j]$ is to the right of $[x_i, y_i]$. We verify that the given interval lies between the smallest 
 term of $f$ and $y_i$ (resp., between $x_j$ and the largest term of $f$).

Thus, we can decide in $\AC^0$ whether a given interval belongs to the left-most (resp., right-most) $\ell$-cluster in $f$.
\end{proof}

With \cref{first-cluster-ac0} in hand, we are now prepared to prove
\cref{compute-representation}:
\begin{proof}[Proof of \cref{compute-representation}]
	We show by induction on $\ell\in[0,m+1]$ that given a polynomial
	$f\in\B[X,X^{-1}]$ that is an $\ell$-cluster, we can compute a
	$\tuple{\rho, \vec{t}}$-representation of $f$. To make the induction work, we
	construct something slightly more specific: We want a
	\emph{$\tuple{\rho,\vec{t},\ell}$-representation}, which is a sum of at most
	$(m\rho+2m)^\ell$ intervals.

	For $\ell=1$, this is trivial, since as a $1$-cluster, $f$ is
	equivalent to $X^{[i,j]}$, where $i$ is the lowest number occurring in
	$f$, and $j$ is the highest number occurring in $f$. This can clearly
	be computed in $\AC^0$.

	Let us now suppose we have constructed the $\AC^0$-circuits for $\ell$
	and want to construct them for $\ell+1$. Let us assume $I_{\ell+1}<I_1$
	(the case $I_1<I_{\ell+1}$ is symmetric). Using
	\cref{first-cluster-ac0}, we first mark all those intervals in $f$ that
	belong to the left-most $\ell$-cluster. Then we repeat this to mark all
	intervals that belong to the second (from the left) $\ell$-cluster,
	etc., until we have distinct markings for the first $m(\rho+1)$ many
	$\ell$-clusters. Note that each invocation of \cref{first-cluster-ac0}
	incurs some circuit depth $h$, so in total, we arrive at circuit depth
	$h\cdot m(\rho+1)$, which is still a constant. Hence, we have
	$f=f_1+\cdots+f_{m(\rho+1)}+g$, where $f_1,\ldots,f_{m(\rho+1)}$ are
	the marked $\ell$-clusters (or empty), and $g\in\B[X,X^{-1}]$ contains
	the remaining intervals in $f$. 

	We now apply our $\AC^0$ algorithm for $\ell$ (which exists by
	induction) to each interval subset $f_j$, to obtain a
	$\tuple{\rho,\vec{t},\ell}$-representation of $f_j$. Then, if $g\ne 0$, we
	apply \cref{two-points-good-distance} to conclude that
	$f_1+\cdots+f_{m(\rho+1)}$ contain two points of distance
	$[u_\ell,v_{\ell}]$, which allows us to apply \cref{eq-right} to
	collapose all of $g$ (and perhaps part of $f_1+\cdots+f_{m(\rho+1)}$)
	into a single interval. This yields a
	$\tuple{\rho,\vec{t},\ell+1}$-representation of $f$.
\end{proof}

	\section{Additional material on Section~\ref{sec:natural-semantics}}
        \label{app:natural-semantics}
	\subsection{Proof overview}\label{app:natural-semantics-overview}
Here, we deduce \cref{main-result-natural,main-result-natural-decidable} from
the results in \cref{sec:integer-semantics}. In the following, let
$S\subseteq\ZZ^{p}\times\NN$ be semilinear, and let $S'\coloneqq S\cup
(\ZZ\times\ZZ_{<0})$. In other words, $S'[\vec{t}]$ is obtained from
$S[\vec{t}]$ by putting in all negative numbers. Our proofs will essentially
the results from \cref{sec:integer-semantics} to $S'$.  The first step is to
express local${}^+$ density in terms of local density:
\begin{restatable}{proposition}{denseNaturalVsInteger}\label{dense-natural-vs-integer}
	$\densityplus(S)>0$ if and only if $\density(S')>0$.
\end{restatable}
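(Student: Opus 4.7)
The plan is to compare the two densities slice-by-slice and obtain a two-sided quantitative bound, using only that $S$ and $S'$ agree on $\NN$ and that $\ZZ_{<0}\subseteq S'[\vec{t}]$ for every $\vec{t}$.

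First, I would establish the easy direction $\density(S')>0 \Rightarrow \densityplus(S)>0$ by proving $\densityplus(S)\ge \density(S')$. Fix $\vec{t}\in\ZZ^p$ and $x\in S[\vec{t}]\subseteq\NN$, and take any $n\ge 1$ with $x-kn\ge 0$. Then the progression $x+k\cdot[-n,n]$ lies entirely in $\NN$, so $S[\vec{t}]$ and $S'[\vec{t}]$ have identical intersections with it, giving $\densityplus_{k,n}(S[\vec{t}],x)=\density_{k,n}(S'[\vec{t}],x)\ge \density_k(S'[\vec{t}],x)$. Because the infimum defining $\densityplus_k$ ranges over a subset of the $n$ appearing in $\density_k$, this yields $\densityplus_k(S[\vec{t}],x)\ge \density_k(S'[\vec{t}],x)$, and taking global infima gives $\densityplus(S)\ge \density(S')$.

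For the reverse direction, I would prove the sharper bound $\density(S')\ge \min(\densityplus(S),\tfrac{1}{2})$. Fix $\vec{t}$, $k\ge 1$, $x\in S'[\vec{t}]$, and $n\ge 1$; I bound $\density_{k,n}(S'[\vec{t}],x)$ below by $\min(\densityplus(S),\tfrac{1}{2})$ by splitting on the sign of $x$. If $x<0$, then the $n+1$ values $x, x-k, \ldots, x-kn$ all lie in $\ZZ_{<0}\subseteq S'[\vec{t}]$, so $\density_{k,n}(S'[\vec{t}],x)\ge \tfrac{n+1}{2n+1}\ge \tfrac{1}{2}$. If $x\ge 0$, set $n_0\coloneqq\lfloor x/k\rfloor$. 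For $n\le n_0$ the window stays in $\NN$, so $\density_{k,n}(S'[\vec{t}],x)=\densityplus_{k,n}(S[\vec{t}],x)\ge \densityplus(S)$. For $n>n_0$, I split the window: the inner piece $x+k\cdot[-n_0,n_0]$ lies in $\NN$ and contributes at least $\densityplus(S)\cdot(2n_0+1)$ elements of $S=S'$, while the remaining $n-n_0$ outer points $x-k(n_0+1),\ldots,x-kn$ are all negative and therefore all in $S'$. This gives
\[
\density_{k,n}(S'[\vec{t}],x) \;\ge\; \frac{\densityplus(S)\,(2n_0+1) + (n-n_0)}{2n+1} \;=\; \alpha\cdot\densityplus(S) + \beta\cdot\tfrac{1}{2},
\]
where $\alpha=\tfrac{2n_0+1}{2n+1}$ and $\beta=\tfrac{2(n-n_0)}{2n+1}$ satisfy $\alpha+\beta=1$. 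As a convex combination of $\densityplus(S)$ and $\tfrac{1}{2}$, the right-hand side is at least $\min(\densityplus(S),\tfrac{1}{2})$. Taking infima over $n$, $x$, $k$, and $\vec{t}$ yields $\density(S')\ge \min(\densityplus(S),\tfrac{1}{2})$.

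There is no real obstacle here; the one step worth highlighting is the convex-combination identity above, which makes quantitative the intuition that extending a window past zero into the already-dense region $\ZZ_{<0}$ can only help, never hurt, the density estimate for $S'$.
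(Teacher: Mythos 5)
Your proof is correct and takes essentially the same approach as the paper: both decompose the arithmetic progression around $x\ge 0$ into the maximal sub-progression contained in $\NN$ (where $\densityplus$ bounds the hit rate) and the negative tail (all of which lies in $S'$), and both handle $x<0$ by noting at least half the progression is negative. The only difference is the final step: the paper does a case split on whether $\densityplus(S)\ge\tfrac{1}{2}$ to conclude $\density(S')\ge\tfrac{1}{2}\densityplus(S)$, whereas you observe the bound is a convex combination of $\densityplus(S)$ and $\tfrac{1}{2}$ and so is at least $\min(\densityplus(S),\tfrac{1}{2})$, which is both cleaner and slightly sharper.
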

Here, we show that
$\densityplus(A)\ge\density(A\cup\ZZ_{<0})\ge\tfrac{1}{2}\densityplus(A)$ for
every $A\subseteq\NN$. The first inequality is by definition, and the second
follows by taking an arithmetic progression centering at some $x\in A$ and
separately considering its part within $\ZZ_{<0}$ and its part within $A$.  For
details, see \cref{app:dense-natural-vs-integer}. The $\AC^1$ upper bound follows
from \cref{dense-natural-vs-integer} and
\cref{main-result-integer}:
\begin{restatable}{corollary}{naturalSemanticsUpperBound}\label{natural-semantics-upper-bound}
	If $\densityplus(S)>0$, then
	$\natreach(S)$ is in $\AC^1$.
\end{restatable}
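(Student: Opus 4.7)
The plan is to reduce $\natreach(S)$ to $\intreach(S')$ in logspace (in fact in $\AC^0$), where $S' \coloneqq S\cup(\ZZ\times\ZZ_{<0})$ is the enlarged set from \cref{dense-natural-vs-integer}. Since $\densityplus(S)>0$, that proposition yields $\density(S')>0$, and so $\intreach(S')$ lies in $\AC^1$ by \cref{main-result-integer}. Composing the reduction with the $\AC^1$ algorithm then gives the desired $\AC^1$ upper bound for $\natreach(S)$.

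The reduction is almost trivial. Given an instance $\tuple{\cA,\vec{t}}$ of $\natreach(S)$, I would produce the automaton $\cA^+$ obtained from $\cA$ by deleting every transition whose weight is negative; this is clearly computable in $\AC^0$. The resulting instance $\tuple{\cA^+,\vec{t}}$ is handed to the $\AC^1$ algorithm for $\intreach(S')$. Note that $S'$ is semilinear (union of two semilinear sets), so \cref{main-result-integer} is indeed applicable.

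For correctness, the main observation is that in the natural semantics, a transition with negative weight can never be used (the condition $x\le y$ rules it out), so $\cA$ has the same set of natural-semantics runs as $\cA^+$, and hence the same reachable counter values in $\NN$. Moreover, since all transitions of $\cA^+$ carry non-negative weights and we start with counter value $0$, the integer-semantics runs of $\cA^+$ coincide with its natural-semantics runs, and all reachable counter values lie in $\NN$. Consequently the value $x$ reached in the final state of $\cA^+$ under integer semantics lies in $S'[\vec{t}]=S[\vec{t}]\cup\ZZ_{<0}$ if and only if it lies in $S[\vec{t}]$, because $x\ge 0$. This gives the equivalence $\tuple{\cA,\vec{t}}\in\natreach(S) \iff \tuple{\cA^+,\vec{t}}\in\intreach(S')$, completing the reduction.

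The only non-routine ingredient is \cref{dense-natural-vs-integer}, which is stated separately and whose proof is sketched above the corollary; given that statement, there is no real obstacle. I do not expect any hidden subtlety, since the semilinearity of $S'$, the $\AC^0$-computability of the reduction, and the preservation of reachability are all immediate.
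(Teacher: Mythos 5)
Your proposal is correct and follows essentially the same route as the paper: enlarge $S$ to $S'=S\cup(\ZZ^p\times\ZZ_{<0})$, use \cref{dense-natural-vs-integer} to transfer positivity of local${}^+$ density to positivity of local density, and reduce $\natreach(S)$ to $\intreach(S')$ by observing that, with only non-negative updates and starting value $0$, natural- and integer-semantics runs coincide and only non-negative values are reachable. The one (harmless) refinement you add is to explicitly delete the negative-weight transitions before invoking the $\intreach(S')$ algorithm, which makes the "integer runs stay in $\NN$" claim fully rigorous; the paper states this reduction more tersely but with the same content.
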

Indeed, we know that $\density(S')>0$ and thus $\intreach(S')$ is in $\AC^1$ by
\cref{main-result-integer}. Moreover, an instance of $\natreach(S)$ can be viewed
as an instance of $\intreach(S')$.  For details, see
\cref{app:natural-semantics-upper-bound}. $\NP$-hardness also follows from
\cref{sec:integer-semantics}:
\begin{restatable}{corollary}{naturalSemanticsLowerBound}\label{natural-semantics-lower-bound}
	$\natreach(S)$ is $\NP$-complete if $\densityplus(S) = 0$.
\end{restatable}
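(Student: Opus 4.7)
The NP upper bound is immediate from \cref{np-upper-bound}. For NP-hardness, the strategy is to combine the integer-semantics analysis applied to the auxiliary set $S' := S \cup (\ZZ \times \ZZ_{<0})$ with a careful natural-semantics adaptation of the subset-sum reduction underlying \cref{intreach-hardness}.

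I would first reduce to a modulo-free residue class. An analogue of \cref{dense-residue-classes} for $\densityplus$ (proved by the same calculation, restricted to arithmetic progressions that stay within $\NN$) should yield some $\vec{b} \in [0, B-1]^{p+1}$ with $\densityplus([S]_{B, \vec{b}}) = 0$, where $B$ comes from \cref{lem:modulo-free}. Writing $T := [S]_{B, \vec{b}}$ and $T' := T \cup (\ZZ \times \ZZ_{<0})$, \cref{dense-natural-vs-integer} applied to $T$ will give $\density(T') = 0$, and because $T'$ is modulo-free, \cref{integer-dichotomy-equivalence} will then show that $T'$ has unbounded isolation.

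The crucial observation will be that any isolation witness $\tuple{\vec{t}, x}$ for $T'$ at scale $\delta$ necessarily satisfies $x > \delta$: the condition $T'[\vec{t}] \cap [x-\delta, x-1] = \emptyset$ combined with $\ZZ_{<0} \subseteq T'[\vec{t}]$ forces $x - \delta > 0$. I would then transcribe the reduction from \cref{intreach-hardness} into the natural-semantics setting: given a subset-sum instance $\tuple{a_1, \ldots, a_n, b}$ with $b \leq \sum_i a_i$, I take $\delta := \ell \sum_i a_i$, use \cref{compute-presburger} to compute $\tuple{\vec{t}, x}$, and construct an automaton that nondeterministically adds $\ell a_i$ or $0$ for each $i$ and then adds the constant $x + k - \ell b$. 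All transition weights will be non-negative (the tail offset is positive because $x > \delta \ge \ell b$), and every reachable counter value will lie in $[x+k-\delta, x+k+\delta] \subseteq \NN$, a region where $T[\vec{t}]$ and $T'[\vec{t}]$ coincide. Hence the isolation of $T'$ at $x$ yields correctness of the reduction for $\natreach(T)$, and a final application of the natural-semantics analogue of \cref{complexity-transfer} propagates NP-hardness from $\natreach(T)$ back to $\natreach(S)$.

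The main obstacle is not any single step but the bookkeeping needed to thread the isolation witnesses through the modular reduction while maintaining non-negativity of transition weights; once the inequality $x > \delta$ is in hand, the classical subset-sum construction carries over essentially verbatim from the integer-semantics case.
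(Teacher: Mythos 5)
Your proof plan is correct and reaches the same conclusion via essentially the same high-level strategy (augment with negatives, pass to a modulo-free residue class, extract unbounded isolation, adapt the subset-sum gadget of \cref{intreach-hardness} to natural semantics). The notable difference is the \emph{order} of the first two reduction steps: you take the residue class of $S$ first and then adjoin $\ZZ\times\ZZ_{<0}$, whereas the paper first forms $S'=S\cup(\ZZ^p\times\ZZ_{<0})$ and then passes to $S''=[S']_{B,\vec b}$. This swap has two consequences.

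First, your route needs a $\densityplus$-analogue of \cref{dense-residue-classes} that is not stated in the paper. You correctly anticipate this; the analogue does hold and is proved by the same two calculations (the bijection $w\mapsto Bw+y$ identifying the arithmetic progressions, and \cref{positive-density-multiply} in place of \cref{density-multiply}), with the observation that for integer $z,k,n$ and $y\in[0,B-1]$, the constraint $(Bz+y)-Bkn\ge 0$ is equivalent to $z-kn\ge 0$. You should state and prove this as a lemma rather than gesture at it; the paper avoids it by instead invoking the already-proved $\density$-version on $S'$.

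Second, your ordering buys a cleaner isolation inequality: because all of $\ZZ_{<0}$ is in $T'[\vec t]$, the isolation condition forces $x\ge\delta$, so with $\delta=\ell\sum_i a_i$ the offset $x+k-\ell b$ is non-negative immediately (given $b\le\sum_i a_i$, which is harmless to assume). The paper, working with $S''=[S']_{B,\vec b}$, only knows that $S''[\vec t]$ contains all integers $\le -B$, so it obtains $x-\delta\ge -B$ and must pad $\delta$ by $\ell b + B$ to conclude $x\ge\ell b$. The remaining case analysis showing that a negative subset-sum instance lands the counter inside the two isolation gaps is identical in both versions (you need $\delta\ge\ell b$ and $k\le\ell-1$, both of which hold). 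One small wording note: \cref{complexity-transfer} already includes the $\natreach$ case in its statement, so no additional analogue is required for the final propagation step.
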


We know from \cref{dense-natural-vs-integer} that $\density(S')=0$. By \cref{dense-residue-classes}, $S'$ must have a residue class with local density zero, and thus with unbounded isolation by \cref{integer-dichotomy-equivalence}. One then observes that in a residue class of $S'$, unbounded isolation must occur within the natural numbers (up to a bounded overlap with the negative integers). This permits a similar reduction as in \cref{intreach-hardness}. For details, see \cref{app:natural-semantics-lower-bound}.

Finally, \cref{main-result-natural-decidable} follows directly from \cref{main-result-integer-decidable} and \cref{dense-natural-vs-integer}.
\subsection{Proof of Proposition \ref{dense-natural-vs-integer}}\label{app:dense-natural-vs-integer}
\denseNaturalVsInteger*
\begin{proof}
	Note that clearly $\densityplus(S)\ge \density(S')$, because every
	expression that participates in the infimum for $\densityplus(S)$ also
	participates in the infimum for $\density(S')$. For the converse, we
	show that $\density(S')\ge \tfrac{1}{2}\densityplus(S)$. For this, it suffices to show that for $A\subseteq\NN$, we have 
	\[ \density(A\cup\ZZ_{<0})\ge \tfrac{1}{2}\densityplus(A). \]
	Consider and $k,n\ge 1$. If $x\in\ZZ_{<0}$, then $\density_{k,n}(A,x)\ge\tfrac{1}{2}$ and in particular $\density_{k,n}(A,x)\ge \tfrac{1}{2}\densityplus(A)$. Now suppose $x\in A$. Let $n^+$ be the largest number in $[0,n]$ such that $[x-kn^+,x+kn^+]\subseteq\NN$. We decompose $(A\cup\ZZ_{<0})\cap [x-kn,x+kn]$ into three sets:
	\begin{align*} 
		(A \cup \ZZ_{<0}) \cap [x-kn,x+&kn] = \underbrace{\ZZ_{<0} \cap [x-kn,x+kn]}_{B_0} \\
		&~\cup~ \underbrace{A\cap [x-kn^+,x+kn^+]}_{B_1} \\
	&~\cup~\underbrace{A\cap[x+kn^++k,x+kn]}_{B_2}. 
	\end{align*}
	Now observe 
	\begin{align*}
		\frac{|B_0\cap (x+k\ZZ)|}{2n+1}&=\frac{n-n^+}{2n+1}, \\
		\frac{|B_1\cap (x+k\ZZ)|}{2n+1}&=\densityplus_{k,n^+}(A,x)\cdot\frac{2n^++1}{2n+1}
	\end{align*}
	Therefore,
	\begin{align*}
		\density_{k,n}(A\cup\ZZ_{<0},x)&=\frac{|B_0\cap (x+k\ZZ)|}{2n+1}+\frac{|B_1\cap (x+k\ZZ)|}{2n+1} \\
		&~~~~~~~+\frac{|B_2\cap (x+k\ZZ)|}{2n+1} \\
		&\ge \frac{n-n^+}{2n+1}+\frac{2n^++1}{2n+1}\cdot \densityplus_{k,n^+}(A,x)\\
		&\ge \frac{n-n^+}{2n+1}+\frac{2n^++1}{2n+1}\densityplus(A).
	\end{align*}
	We claim that $\density_{k,n}(A\cup\ZZ_{<0})\ge \tfrac{1}{2}\densityplus(A)$. We distinguish two cases.
	\begin{itemize}
		\item If $\densityplus(A)\ge\tfrac{1}{2}$.
			\begin{align*}
				\density_{k,n}(A\cup\ZZ_{<0})&\ge \frac{n-n^+}{2n+1}+\frac{2n^++1}{2n+1}\densityplus(A) \\
				&\ge \frac{n-n^+}{2n+1}+\frac{2n^++1}{2n+1}\cdot\frac{1}{2} \\
				&=\frac{n-n^+ + n^+ +\tfrac{1}{2}}{2n+1}\ge \frac{1}{2}\ge \tfrac{1}{2}\densityplus(A).
			\end{align*}
		\item If $\densityplus(A)<\tfrac{1}{2}$. We claim that the expression
			\begin{equation} \frac{n-n^+}{2n+1}+\frac{2n^++1}{2n+1}\densityplus(A)\label{exp-positive} \end{equation}
		is at least $\densityplus(A)$. This follows by considering the possible values for $n^+$: If $n^+=n$, then the expression is exactly $\densityplus(A)$. When we diminish $n^+$ by $1$, then the expression grows by $\tfrac{1}{2n+1}-\tfrac{2}{2n+1}\densityplus(A)$. But the latter can be written as
			\[ \frac{1-2\densityplus(A)}{2n+1} \]
			and since $\densityplus(A)<\tfrac{1}{2}$, that quantity is positive. Thus, for smaller values of $n^+$, the expression \eqref{exp-positive} can only grow. Hence the expression is indeed at least $\densityplus(A)$. Since $\densityplus(A)\ge\tfrac{1}{2}\densityplus(A)$, this completes the case of $\densityplus(A)<\tfrac{1}{2}$.
	\end{itemize}
	This proves the claim, which in turn
	implies $\density(A\cup\ZZ_{<0})\ge\tfrac{1}{2}\densityplus(A)$.
\end{proof}

\subsection{Proof of Corollary \ref{natural-semantics-upper-bound}}\label{app:natural-semantics-upper-bound}
\naturalSemanticsUpperBound*
\begin{proof}
	Consider $S'\coloneqq S\cup (\ZZ^p\times\ZZ_{<0})$. By
	\cref{dense-natural-vs-integer}, we know that $\density(S')>0$ and
	hence \cref{main-result-integer} implies that $\intreach(S')$ is in
	$\AC^1$. However, $\natreach(S)$ logspace-reduces to $\intreach(S')$.
	Indeed, given an automaton with natural updates and $\vec{t}\in\ZZ^p$,
	we can use the same instance for $\intreach(S')$: Since the updates are
	natural, it can reach a number in $S[\vec{t}]$ iff it can reach a
	number in $S'[\vec{t}]=S[\vec{t}]\cup \ZZ_{<0}$. 
\end{proof}
\subsection{Proof of Corollary \ref{natural-semantics-lower-bound}}\label{app:natural-semantics-lower-bound}
\naturalSemanticsLowerBound*
\begin{proof}
	Let $S'=S\cup (\ZZ^{p}\times\Z_{<0})$. Then by
	\cref{dense-natural-vs-integer}, we have $\density(S')=0$.  By
	\cref{lem:modulo-free}, there is a $B\ge 1$ such that for each
	$\vec{b}\in[0,B-1]^{p+1}$, the set $[S']_{B,\vec{b}}$ is defined by a
	modulo-free Presburger formula. Moreover, \cref{dense-residue-classes},
	there must be a $\vec{b}\in[0,B-1]^{p+1}$ such that
	$\density([S']_{B,\vec{b}})=0$.  By \cref{integer-dichotomy-equivalence},
	this means that $[S']_{B,\vec{b}}$ has unbounded isolation. 

	By \cref{complexity-transfer}, we know that $\NP$-hardness of
	$\natreach([S']_{B,\vec{b}})$ implies $\NP$-hardness of
	$\natreach(S')$, which is the same problem as $\natreach(S)$. Hence, it
	remains to show $\NP$-hardness of $\natreach(S'')$, where $S''=[S']_{B,\vec{b}}$.

	Notice that since $S'[\vec{t}]=S[\vec{t}]\cup\Z_{<0}$ contains all
	negative numbers, this means $S''[\vec{t}]=[S']_{B,\vec{b}}[\vec{t}]$
	contains all negative numbers, except perhaps an interval
	$I\subseteq\Z_{<0}$ of size $|I|\le B$.

	Let us now exploit unbounded isolation of $S''$. We know that for each $\delta\ge 1$, there exist $\vec{t}\in\ZZ^p$ and
	$x\in\ZZ$ such that $S''[\vec{t}]\cap[x,x+\ell-1]\ne\emptyset$,
	$S''[\vec{t}]\cap[x-\delta,x-1]=\emptyset$, and
	$S''[\vec{t}]\cap[x+\ell,x+\ell+\delta]=\emptyset$. This implies that
	there exists a $k\in[0,\ell-1]$ such that for every $\delta\ge 1$,
	there are $\vec{t}\in\ZZ^p$ and $x\in\ZZ$ such that
	\begin{enumerate}
		\item $x+k\in S''[\vec{t}]$,
		\item $S''[\vec{t}]\cap[x-\delta,x-1]=\emptyset$, and
		\item $S''[\vec{t}]\cap[x+\ell,x+\ell+\delta]=\emptyset$.
	\end{enumerate}
	We call such a pair $\tuple{\vec{t},x}$ a \emph{hard pair} for $\delta$.

	Consider a Presburger-definable well-order $\preceq$ on $\ZZ^{p+1}$.
	For example, within each orthant, we use the lexigraphic ordering
	w.r.t.\ absolute values of the entries; and then we pick an arbitrary linear
	ordering on the $2^{p+1}$ orthants. Since we know that for each
	$\delta\ge 1$, there exists a hard pair, we can define a function
	$f\colon\NN\to\ZZ^p\times\ZZ$ in Presburger arithmetic such that for
	each $\delta\ge 1$, the pair $f(\delta)$ is a hard pair for $\delta$:
	The function picks the smallest hard pair w.r.t.\ the ordering
	$\preceq$. Since $f$ is Presburger-definable, it is computable in
	logarithmic space by \cref{compute-presburger}.

	Consider a subset sum instance $\tuple{a_1,\ldots,a_n,b}$ with
	$a_1,\ldots,a_n,b\in\NN$ and take $\delta=\ell a_1+\cdots+\ell a_n+\ell
	b+B$ and compute a hard pair $\tuple{\vec{t},x}$ for $\delta$. Now observe
	that since $S''[\vec{t}]$ contains all numbers $\le -B$, and we have
	$S''[\vec{t}]\cap [x-\delta,x-1]=\emptyset$, this implies that
	$x-\delta\ge -B$, and thus $x+k\ge x\ge \ell b$.  Consider the
	automaton in~\cref{fig:vas-np-hard}, and use it with $x_i \coloneqq \ell a_i$
	for $i\in[1,n]$, $y \coloneqq 0$, and $v \coloneqq x+k-\ell b$. 
	Then all transition weights are indeed non-negative, since $x+k\ge \ell b$. 
	Moreover, use the vector
	$\vec{t}$ as the parameter vector for our instance of $\natreach(S'')$.
	We claim that the automaton can reach $S$ if and only if the instance
	$\tuple{a_1,\ldots,a_n,b}$ is positive.

	If $\tuple{a_1,\ldots,a_n,b}$ is a positive instance, then this automaton can
	clearly reach $x+k$. Conversely, if $\tuple{a_1,\ldots,a_n,b}$ is a negative
	instance of subset sum, then any sum of a subset of $\{\ell
	a_1,\ldots,\ell a_n\}$ will be either $\le \ell b-\ell$ or $\ge \ell
	b+\ell$. Therefore, any number reached in the final state of our
	automaton will belong to $[x+k-\delta,x+k-\ell]$ or to
	$[x+k+\ell,x+k+\delta]$. However, both of these sets have an empty
	intersection with $S''[\vec{t}]$, by the choice of $\vec{t}$ and $x$.
	Hence, the instance of $\natreach(S'')$ is also negative.
\end{proof}

        \section{Additional material on Section~\ref{sec:vass-semantics}}
        \label{app:vass-semantics}
        \subsection{Proof of Lemma \ref{vass-reduction}}

In this subsection, we will prove:
\vassReduction*
The first step to proving \cref{vass-reduction} is that, intuitively, removing
a bounded set of points will not increase complexity. 
\begin{lemma}\label{remove-bounded-set}
	Suppose $S,T\subseteq\ZZ^{p+1}$ are Presburger-definable and there is a
	constant $K\ge 0$ so that for every $\vec{t}\in\Z^p$, we have
	$T[\vec{t}]\subseteq S[\vec{t}]$ and $|S[\vec{t}]\setminus
	T[\vec{t}]|\le M$. Then $\vasreach(T)$ logspace-reduces to
	$\vasreach(S)$.
\end{lemma}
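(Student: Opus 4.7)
I would construct a logspace reduction from an instance $\tuple{\cA, \vec{t}}$ of $\vasreach(T)$ to an instance $\tuple{\cA', \vec{t}}$ of $\vasreach(S)$ by making $\cA'$ simulate $\cA$ but additionally exclude, via a state-encoded check, the constantly many ``bad'' counter values in $F[\vec{t}] \coloneqq S[\vec{t}] \setminus T[\vec{t}]$. Since $|F[\vec{t}]| \le M$ for a constant $M$, the elements of $F[\vec{t}]$, listed in some canonical order (and padded arbitrarily when $|F[\vec{t}]| < M$), are given by Presburger-definable functions $\gamma_1, \ldots, \gamma_M \colon \ZZ^p \to \NN$ of $\vec{t}$, hence logspace-computable via \cref{compute-presburger}; let $f_i \coloneqq \gamma_i(\vec{t})$.

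The key idea is the prime-guessing trick indicated in the proof sketch of \cref{vass-reduction}. I would pick a polynomial $N = N(n)$ in the input size $n$ so that for every reachable counter value $x$ of $\cA$ and every $i$, the number of distinct prime divisors of $|x - f_i|$ is strictly less than $\pi(N)$. This is possible because the polynomial-size existential Presburger formulas for $1$-VASS reachability (used in \cref{np-upper-bound}) imply that every reachable $x$ has polynomial bit length, hence at most polynomially many distinct prime divisors. Then $\cA'$ begins with a non-deterministic choice of primes $p_1, \ldots, p_M \le N$ (polynomially many choices, realised by an initial branching gadget), simulates $\cA$ step by step while tracking in its state the residue of the counter modulo each $p_i$, and transitions to the final state only if the tracked residue modulo $p_i$ differs from $f_i \bmod p_i$ for every $i$. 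The state space of $\cA'$ has size $|Q| \cdot N^M$, which is polynomial since $M$ is a constant, and the whole construction is logspace.

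For correctness, note that $\cA'$ is just $\cA$ plus state-level tracking, so counter values and their non-negativity are preserved. If $\cA'$ reaches $x \in S[\vec{t}]$, then $\cA$ reaches the same $x$, and the residue checks imply $x \not\equiv f_i \pmod{p_i}$, so $x \ne f_i$ and hence $x \in T[\vec{t}]$. Conversely, if $\cA$ reaches some $x \in T[\vec{t}]$, then for each $i$ we have $x - f_i \ne 0$ with at most $\pi(N)-1$ distinct prime divisors, so some prime $p_i \le N$ avoids $|x - f_i|$; the corresponding branch of $\cA'$ then reaches $x \in S[\vec{t}]$. The main technical point of the plan is pinning down the polynomial bound on bit lengths of reachable counter values; once that is in hand, the prime-counting argument making a polynomial $N$ suffice is classical.
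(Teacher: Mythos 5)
Your proposal is essentially the same as the paper's proof: the prime-guessing trick, state-level tracking of residues modulo $p_1,\ldots,p_M$, and acceptance only when the final counter value is incongruent to each $f_i$ modulo $p_i$. The paper phrases the reduction as producing polynomially many automata $\cA_{\vec{u}}$ (one per prime tuple) and asking whether some is positive, while you fold the choice of $\vec{u}$ into a single automaton via an initial nondeterministic branching gadget; these are equivalent.

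Two small points deserve attention. First, your justification for the polynomial bound $N$ asserts that ``every reachable $x$ has polynomial bit length'', which is false: a $1$-VASS can have cycles of positive effect and thus reach arbitrarily large counter values. What is true, and what you need, is that \emph{if} $\cA$ reaches $T[\vec{t}]$ at all, then it reaches some $x\in T[\vec{t}]$ of polynomial bit length---this follows from the small-solution bound for the existential Presburger formula encoding $1$-VASS reachability. The converse direction of your correctness argument should be restated to use such a witness $x$ rather than an arbitrary reachable one; with that restatement the argument is sound. (The paper is itself slightly imprecise here, bounding the primes by $O(\log f_i)$, which is off when $f_i$ is small but the witness $x$ is large; your bound based on the witness $x$ is actually the right one.) Second, ``padded arbitrarily'' when $|F[\vec{t}]|<M$ is not safe: padding with a value that happens to lie in $T[\vec{t}]$ would wrongly exclude it. One must pad with a repeated element of $F[\vec{t}]$ (or handle the case $F[\vec{t}]=\emptyset$ separately), which is what the paper achieves by first computing $m=|S[\vec{t}]\setminus T[\vec{t}]|$ and using the associated function $\gamma_m\colon R_m\to\ZZ^m$.
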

\begin{proof}
	Note that for each $m\in[0,M]$, the set $R_m=\{\vec{t}\in\ZZ^p \mid |S[\vec{t}]\setminus T[\vec{t}]|=m\}$ is Presburger-definable. 
	For each $m\in[0,M]$, consider the Presburger-definable function \hbox{$\gamma_m\colon
	R_m\to \Z^m$}, which yields for each $\vec{t}$ the unique tuple $\tuple{f_1,\ldots,f_m}$ such that $S[\vec{t}]=T[\vec{t}]\cup\{f_1,\ldots,f_m\}$, and also
	$f_1<\cdots<f_m$.

	The reduction works as follows. 
	Given $\vec{t} \in \ZZ^p$ and an automaton $\cA$, we first compute $m\in[0,M]$ so that $\vec{t}\in R_m$. 
	This can be done in	logspace by~\cref{compute-presburger}. Second we use
	\cref{compute-presburger} to compute $\gamma_m(\vec{t}) = \tuple{f_1,\ldots,f_m}$. 
	Then, for every tuple $\vec{u} = \tuple{p_1,\ldots,p_m}$ of primes with \hbox{$p_i\in[0,O(\log(f_i))]$}, we
	construct a one-counter automaton $\cA_{\vec{u}}$ as follows. 
	$\cA_{\vec{u}}$	tracks the residue of the counter value modulo $p_i$ for each $i$ in
	its state. This only leads to a polynomial blow-up, because $m$ is a
	constant and $p_i$ is polynomial in the input. Finally, $\cA_{\vec{u}}$
	accepts only if the final counter value is $\not\equiv f_i\pmod{p_i}$. 

	We claim that $\cA$ can reach a value in $T[\vec{t}]$ if and only if
	for some $\vec{u}$, the automaton $\cA_{\vec{u}}$ can reach a value in
	$S[\vec{t}]$. 
	Indeed, if some $\cA_{\vec{u}}$ can reach $S[\vec{t}]$, then the target
	value must be distinct from each $x_i$. Conversely, by the Prime Number
	Theorem, the constant in the expression $O(\log(f_i))$ can be chosen so
	that for each tuple $\tuple{f_1,\ldots,f_m}$ and each value in $T[\vec{t}]$, there
	exists a tuple $\vec{u}$ so that $\cA_{\vec{u}}$ can reach a number in
	$S[\vec{t}]$.
\end{proof}

	\begin{proof}[Proof of \cref{vass-reduction}]
	Suppose $S[\vec{t}]$ is $\langle \delta,M \rangle$-upward closed for every $\vec{t}\in\N^p$. 
		Note that for every $m\in\{0,\ldots,M\}$, there is a Presburger formula $\psi_m$ for the set $T_m$ of all $\vec{t}$ such that $S[\vec{t}]$ is $\langle \delta,m \rangle$-upward closed, but not $\langle \delta,m-1\rangle$-upward closed. 
		Consider the function
		$\varphi_m\colon T_m \to\N^m$ which yields for each $\vec{t}$ the tuple $\langle f_1,\ldots,f_m \rangle$ of $m$ numbers such that $f_1<\cdots<f_m$ and $S'[\vec{t}]\coloneqq S[\vec{t}]\cup\{f_1,\ldots,f_m\}$ is $\delta$-upward closed. Then $\varphi_m$ is also Presburger-definable, and thus also the set $S'\subseteq\ZZ^{p+1}$ with
		\[ S'[\vec{t}]\coloneqq S[\vec{t}]\cup \{f_1,\ldots,f_m\}, \]
		where $\vec{t}\in T_m$ and $\varphi_m(\vec{t}) = \langle f_1,\ldots,f_m \rangle$. Now clearly, \hbox{$S'\supseteq S$} is $\delta$-upward closed and we have $|S'[\vec{t}]\setminus S[\vec{t}]|\le M$ for each $\vec{t}\in\ZZ^p$. Thus, by \cref{remove-bounded-set}, $S$ logspace-reduces to $S'$. Thus, it suffices to show that $\vasreach(S')$ logspace-reduces to coverability in $1$-VASS.

	For each $i\in\{0,\ldots,\delta-1\}$ consider the function $\mu_{i}\colon \ZZ^p\to\NN\cup\{-1\}$ with
		\[ \mu_{i}(\vec{t})=\begin{cases} \min [S'[\vec{t}]]_{\delta,i} & \text{if $[S'[\vec{t}]]_{\delta,i}\ne\emptyset$} \\
		-1 & \text{otherwise}
	\end{cases} \]
		In other words, $\mu_i(\vec{t})$ is the minimal number of the $i$-th residue class modulo $\delta$ that occurs in $S'[\vec{t}]$.

		Observe that since $S'[\vec{t}]$ is $\delta$-upward closed, for know that: For every $x\in\NN$ with $x\equiv i\bmod{\delta}$, we have $x\in S'[\vec{t}]$ if and only if $x\ge\mu_{i}(\vec{t})$. Since each function $\mu_{i}$ is Presburger-definable, it is computable in logspace by \cref{compute-presburger}. Thus, given an automaton and $\vec{t}$, we can consider each $i\in\{0,\ldots,\delta-1\}$ and check whether there is a run that covers $\mu_{i}(\vec{t})$ and reaches $i$'s congruence class modulo $\delta$.
		This reduces to coverability in $1$-VASS, because we can track the residue modulo $\delta$ of the current counter value in the state (recall that $\delta$ is a constant), and make only those states accepting where the residue is $i$.
	\end{proof}

\subsection{Proof of Claim~\ref{clm:functions-are-tables}}
\label{app:functions-are-tables}
\functionsAreTables*
\begin{proof}
	We can observe, by induction, that if $\pi$ is a path in $P$ and $i \in \NN$, then $f_\pi(i)$ is the greatest counter value that one can attain when executing $\pi$starting with counter value $i$; or $-\infty$ if $\pi$ cannot be executed from the configuration $p(i)$. 
	Therefore, for a finite set of paths $P$ from $p$ to $q$, and $i \in \NN$, $f_P(i)$ is therefore the greatest counter value one can attain at $q$ when starting at the configuration $p(i)$ and taking a path in $P$.
	If there does not exist a path in $P$ that can be executed from $i$ counter value, then $f_P(i) = -\infty$.
\end{proof}

\subsection{Proof of Lemma \ref{sigma-properties}}\label{app:sigma-properties}
\sigmaProperties*
\begin{proof}
	First, we know that $\sigma(f)\le f$ for every $f\in\calF$ by definition. 

	Let us now show that $\sigma$ is a homomorphism w.r.t.\ $+$. Recall that for any $f\in\calF$, we have $X^i\le f$ if and only if $f(0)\ge i$. Similarly, we have $\bar{X}^i\le f$ if and only if $f(i)\ge 0$.

	We need to check that $\sigma(f+g)=\sigma(f)+\sigma(g)$. Here, it is clear that
	$\sigma(f)\le \sigma(f+g)$ and $\sigma(g)\le\sigma(f+g)$, since $f\le f+g$. In particular, we have $\sigma(f)+\sigma(g)\le \sigma(f+g)$. It thus remains to show that $\sigma(f+g)\le\sigma(f)+\sigma(g)$. We distinguish three cases.
	\begin{enumerate}
	\item If $\sigma(f+g)=0$. Then $f+g=0$, and thus $f=0$ or $g=0$. In either case, $\sigma(f)+\sigma(g)=0$, hence $\sigma(f+g)\le \sigma(f)+\sigma(g)$.
	\item If $\sigma(f+g)=X^i$ for some $i\ge 0$. Then $f(i)\ge 0$ or $g(i)\ge 0$, and thus $X^i\le f$ or $X^i\le g$. This means $X^i\le\sigma(f)$ or $X^i\le\sigma(g)$, which in turn implies $\sigma(f)(i)\ge 0$ or $\sigma(g)(i)\ge 0$, and thus $(\sigma(f)+\sigma(g))(i)\ge 0$. This means, we have $X^i\le \sigma(f)+\sigma(g)$.
	\item If $\sigma(f+g)=\bar{X}^i$ for some $i\ge 0$. Then $f(0)\ge i$ or $g(0)\ge i$, and thus $\bar{X}^i\le f$ or $\bar{X}^i\le g$. This means $\bar{X}^i\le\sigma(f)$ or $\bar{X}^i\le\sigma(g)$, which in turn implies $\sigma(f)(0)\ge i$ or $\sigma(g)(0)\ge i$, and thus $(\sigma(f)+\sigma(g))(0)\ge i$. This means, we have $\bar{X}^i\le \sigma(f)+\sigma(g)$.\qedhere
	\end{enumerate}
\end{proof}

\subsection{Proof of Lemma \ref{update-algebraically}}\label{app:update-algebraically}
\updateAlgebraically*
\begin{proof}
	Note that $A_kA_kA_k$ contains, in entry $\langle i,j\rangle$, the sum of all $f_\pi$, where $\pi \in P_{i,j}^{=3}$ (where $P_{i,j}^{=3}$ is the set of all paths from state $i$ to state $j$ of length three).
	Since $\sigma$ is a homomorphism by \cref{sigma-properties}, we have
	\begin{equation*}
		(\sigma(A_kA_kA_k))_{i,j}
		= \sigma\left(\sum_{\pi \in P_{i,j}^{=3}} f_\pi\right) 
		= \sum_{\pi \in P_{i,j}^{=3}} \sigma(f_{\pi}).
	\end{equation*}
	Moreover, for any run $\pi$, it is easy to observe that $\sigma(f_\pi)$ is
	essentially the amplitude of $\pi$. More precisely, if $a$ is the amplitude of
	$\pi$, then $\sigma(f_\pi)=X^a$ if $a\ge 0$ and $\sigma(f_\pi)=\bar{X}^{|a|}$
	if $a<0$. Thus, the adjacency matrix of $\cA_{k+1}$, which consists of simple
	elements corresponding to amplitudes of runs of length $3$ in $\cA_k$, is
	exactly $\sigma(A_kA_kA_k)$.
\end{proof}

\subsection{Hardness and decidability for the VASS semantics}\label{app:vass-hardness-decidability}

In this subsection, we prove the second statement of \cref{main-result-vass}, and we prove \cref{main-result-vass-decidable}.

\begin{figure}
	\begin{tikzpicture}
	\node[circle, draw=black, line width = 0.3mm, minimum size = 6mm] (p) at (0,0) {\small$p$};
	\node[circle, draw=black, line width = 0.3mm, minimum size = 6mm] (q1) at (1.5,0) {};
	\node[circle, minimum size = 5mm] (q21) at (2.6, 0.3) {$\cdots$};
	\node[circle, minimum size = 5mm] (q22) at (2.6,-0.3) {$\cdots$};
	\node[circle, draw=black, line width = 0.3mm, minimum size = 6mm] (q3) at (3.8,0) {};
	\node[circle, draw=black, line width = 0.3mm, minimum size = 6mm] (q4) at (5.3,0) {\small$$};
	\node[circle, draw=black, line width = 0.3mm, minimum size = 6mm] (q) at (6.7,0) {\small$q$};
	\node[circle, draw=black, line width = 0.3mm, minimum size = 6mm] (r) at (8.1,0) {\small$r$};

	\path[draw=black, -{Stealth[width=1.5mm, length=2mm]}, line width = 0.3mm] (p) edge[bend left = 20] node[above]{\small$x_1$} (q1);
	\path[draw=black, -{Stealth[width=1.5mm, length=2mm]}, line width = 0.3mm] (p) edge[bend right = 20] node[below]{\small$0$} (q1);

	\path[draw=black, line width = 0.3mm] (q1) edge[bend left = 15] (q21);
	\path[draw=black, line width = 0.3mm] (q1) edge[bend right = 15] (q22);

	\path[draw=black, -{Stealth[width=1.5mm, length=2mm]}, line width = 0.3mm] (q21) edge[bend left = 15] (q3);
	\path[draw=black, -{Stealth[width=1.5mm, length=2mm]}, line width = 0.3mm] (q22) edge[bend right = 15] (q3);

	\path[draw=black, -{Stealth[width=1.5mm, length=2mm]}, line width = 0.3mm] (q3) edge[bend left = 20] node[above]{\small$x_n$} (q4);
	\path[draw=black, -{Stealth[width=1.5mm, length=2mm]}, line width = 0.3mm] (q3) edge[bend right = 20] node[below]{\small$0$} (q4);

	\path[draw=black, -{Stealth[width=1.5mm, length=2mm]}, line width = 0.3mm] (q4) edge node[above]{\small$-y$} (q);
	\path[draw=black, -{Stealth[width=1.5mm, length=2mm]}, line width = 0.3mm] (q) edge node[above]{\small$v$} (r);	
\end{tikzpicture}
	\caption{
		The $1$-VASS $\Vv$; here $v \geq 0$ is a variable.
		The instance of reachability in $\Vv$ from $\config{p}{0}$ to $\config{r}{v}$ is positive if and only if the instance of subset sum $\tuple{x_1, \ldots, x_n, y}$ is positive; see~\cref{clm:subset-sum-vas}.
	}
	\label{fig:vas-np-hard}
\end{figure}

For the hardness proof, we begin with an example.
\begin{example} \label{ex:vas-np-hard}
	We shall argue that the example semilinear target set $S \subseteq \NN^{p+1}$ with $p=1$ and 
	\begin{equation*}
		S[t] \coloneqq \set{0} \cup \set{ x \in \NN : x \geq t }
	\end{equation*}
	yields an \NP-hard reachability problem $\vasreach(S)$. 
	This example highlights the core combinatorial reason why $\vasreach(S)$ is \NP-hard for semilinear sets $S$ for which there exists $\delta \geq 1$, $M \geq 0$, and $\vec{t} \in \NN^p$ such that $S[\vec{t}]$ is not $\tuple{\delta, M}$-upward closed.

	Let $\tuple{x_1, \ldots, x_n, y}$ be an arbitrary instance of subset sum.
	Consider selecting $t > \sum_{i=1}^n x_i$ so that there is a `large' gap between $\set{0}$ and $\set{ x \in \NN : x \geq t }$ in $S[t]$.
	Also, consider the 1-VASS $\Vv$ in~\Cref{fig:vas-np-hard} with $v = 0$.
	With $v = 0$, it is clear that the configuration $\config{r}{x}$ that is reachable from $\config{p}{0}$ in $\Vv$ with the greatest counter values has $x < \sum_{i=1}^n x_i - y + v \leq \sum_{i=1}^n x_i$.
	Since $t$ was chosen so that $t > \sum_{i=1}^n x_i$, we know that the only potential counter value $x \in S[t]$ such that $\config{r}{x}$ is reachable from $\config{p}{0}$ is when $x=0$.
	However, it is clear that $\config{r}{0}$ is reachable from $\config{p}{0}$ in $\Vv$ if and only if there exists $I \sset \set{1, \ldots, n}$ such that $\sum_{i\in I} x_i = y$; for more details, see~\cref{clm:subset-sum-vas}.
	It is therefore \NP-hard to decide $\vasreach(S)$.
\end{example}

\begin{restatable}{claim}{subsetSumVas}\label{clm:subset-sum-vas}
	Let $x_1, \ldots, x_n, y, v \geq 0$.
	Consider 1-VASS $\Vv$ that is defined in~\Cref{fig:vas-np-hard}.
	There exists a run from $\config{p}{0}$ to $\config{r}{v}$ in $\Vv$ if and only if $\tuple{x_1, \ldots, x_n, y}$ is a positive instance of subset sum.
\end{restatable}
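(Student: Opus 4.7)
The plan is to verify both directions of the biconditional directly from the linear structure of $\Vv$ in Figure~\ref{fig:vas-np-hard}. The key structural observation is that every path from $p$ to $r$ in $\Vv$ must traverse each of the $n$ gadgets in sequence (each offering a binary choice between a transition of weight $x_i$ and a transition of weight $0$), then the unique $-y$ transition into $q$, and finally the single $+v$ transition into $r$. Consequently, every run from $\config{p}{0}$ to a configuration at $r$ is uniquely encoded by a subset $I \sset \set{1,\ldots,n}$ recording which gadgets selected the $x_i$-branch, and its total effect on the counter equals $\sum_{i \in I} x_i - y + v$. The only point at which the counter decreases along such a run is the $-y$ transition.

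For the ``if'' direction, I would take an $I$ with $\sum_{i \in I} x_i = y$ and exhibit the corresponding run. Since every transition before the $-y$ transition has non-negative weight, the counter increases monotonically from $0$ to $\sum_{i \in I} x_i = y$ by the time it reaches the state immediately before the $-y$ transition; hence that transition is enabled under VASS semantics and leaves the counter at $0$. The subsequent $+v$ transition (with $v \ge 0$) produces the configuration $\config{r}{v}$. For the ``only if'' direction, any run from $\config{p}{0}$ to $\config{r}{v}$ induces a subset $I$ via the encoding above, and the effect equation $\sum_{i \in I} x_i - y + v = v$ immediately yields $\sum_{i \in I} x_i = y$, so $I$ witnesses a positive subset-sum instance.

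There is essentially no obstacle in the argument: the claim is a routine subset-sum reduction whose correctness is forced by the chain structure of $\Vv$. The only minor point worth stating carefully is that VASS enablement of the $-y$ transition requires $\sum_{i \in I} x_i \ge y$; in the ``if'' direction this holds with equality by construction, while in the ``only if'' direction it is automatically implied by the counter-effect equation.
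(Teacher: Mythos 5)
Your proof is correct and follows essentially the same approach as the paper: both directions extract the subset $I$ from the run (or build the run from $I$) and use the total counter-effect equation to conclude. Your write-up is, if anything, slightly more careful about the VASS enablement constraint (non-negativity of the counter before the $-y$ step), which the paper leaves implicit.
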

\begin{proof}
	First, we shall assume that there is a run from $\config{p}{0}$ to $\config{r}{v}$ in $\Vv$.
	From this run, construct a set $I \sset \set{1, \ldots, n}$ by adding $i$ to $I$ if the transition adding $x_i$ is taken (otherwise, $i$ is not added to $I$).
	Given that, at the end of the run, $\config{r}{v}$ is reached, we know that $\config{q}{0}$ must have been reached.
	Therefore, \hbox{$\sum_{i \in I} x_i - y = 0$}; this implies that the instance of subset sum is positive.

	Conversely, we shall assume that the instance of subset sum is positive.
	There exists $I \sset \set{1, \ldots, n}$ such that \hbox{$\sum_{i \in I} x_i = y$}.
	Now, from this set $I$, construct a run from $p(0)$ to $r(v)$ that takes, for each $i \in I$, the transition that adds $x_i$, and takes, for each $i \notin I$, the transition with zero effect.
\end{proof}

Let us slightly generalize the example. We say that \hbox{$S\subseteq\ZZ^{p+1}$} \emph{has unbounded gaps} if for every gap size $g\in\NN$, there is a parameter vector $\vec{t} \in \ZZ^p$ and a $u\in\NN$ such that $[u,u+g]\cap S[\vec{t}]=\{u\}$. In other words, $S[\vec{t}]$ contains $u$, but above $u$, there is a gap of size $g$.
\begin{lemma}\label{unbounded-gaps-np-hard}
	If $S\subseteq\ZZ^{p+1}$ has unbounded gaps, then $\vasreach(S)$ is $\NP$-hard.
\end{lemma}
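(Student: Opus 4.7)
The plan is to generalize Example~\ref{ex:vas-np-hard} (and Figure~\ref{fig:vas-np-hard}) to obtain a logspace reduction from subset sum. Given an instance $\langle x_1,\ldots,x_n,y\rangle$ with $x_i,y\in\NN$, I set $g := x_1+\cdots+x_n$ (the case $y>g$ is trivially a ``no'' instance) and invoke the unbounded gaps hypothesis: for this $g$ there exist $\vec{t}\in\ZZ^p$ and $u\in\NN$ with $[u,u+g]\cap S[\vec{t}] = \{u\}$. The set of valid triples $\langle g,\vec{t},u\rangle$ is Presburger-definable, so I can pick, for each $g$, the $\preceq$-minimal such pair $\langle\vec{t},u\rangle$ under some Presburger-definable well-order $\preceq$ on $\ZZ^p\times\NN$; this yields a Presburger-definable map $g\mapsto\langle\vec{t},u\rangle$ that is logspace-computable by Lemma~\ref{compute-presburger}.

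Next I would build the $1$-VASS of Figure~\ref{fig:vas-np-hard} with $v := u$ and use the just-computed $\vec{t}$ as the parameter vector. Starting from $\langle p,0\rangle$, the automaton successively chooses, for each $i\in[1,n]$, either to add $x_i$ or to add $0$, reaching $q_4$ with counter $\sum_{i\in I}x_i$ for some $I\sset[1,n]$; it then takes the single transition of weight $-y$ to $q$ (which, under VASS semantics, is enabled only when $\sum_{i\in I}x_i\ge y$), and finally a transition of weight $u$ to $r$. The set of counter values reachable at $r$ is therefore exactly
\begin{equation*}
	\left\{ u + \sum_{i\in I}x_i - y ~:~ I\sset[1,n],~\sum_{i\in I}x_i\ge y \right\}.
\end{equation*}

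Finally, I would verify correctness. If subset sum is positive via some $I$ with $\sum_{i\in I}x_i = y$, then $r$ is reached with counter value $u\in S[\vec{t}]$. Conversely, every reachable value at $r$ has the form $u+k$ with $k\in[0,g-y]\sset[0,g]$; if $k=0$ we recover a subset summing to $y$, while if $k\in[1,g]$ then $u+k\in[u+1,u+g]$ which, by the choice of $\vec{t}$ and $u$, is not in $S[\vec{t}]$. The decisive point, and the one that most directly exploits VASS semantics, is that the non-negativity constraint on the $-y$ transition automatically rules out all $I$ with $\sum_{i\in I}x_i<y$, so no counter value strictly below $u$ is ever reachable at $r$; this obviates the need for a gap below $u$, which the unbounded gaps hypothesis does not provide. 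The reduction is clearly logspace (constantly many states, binary-encoded weights, and $\vec{t},u$ computed by Lemma~\ref{compute-presburger}), so $\NP$-hardness follows.
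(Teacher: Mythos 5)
Your proof is correct and follows essentially the same approach as the paper's: compute the $\preceq$-minimal witness $\tuple{\vec{t},u}$ for gap size $g=x_1+\cdots+x_n$ via \cref{compute-presburger}, instantiate the gadget of \cref{fig:vas-np-hard} with $v:=u$, and observe that VASS non-negativity forces every reachable value at $r$ to lie in $[u,u+g]$, where the gap hypothesis pins the only element of $S[\vec{t}]$ to $u$. The identification with the paper's argument is exact, down to the use of \cref{clm:subset-sum-vas}-style reasoning for the $k=0$ case.
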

\begin{proof}
	Since $S$ has unbounded gaps, there exists a function $\tau \colon\NN\to\ZZ^{p}\times\N$, which yields for each given $g\in\NN$, a vector $\vec{t} \in \ZZ^p$ and a value $u\in \NN$ such that $[u,u+g] \cap S[\vec{t}]=\set{u}$. 
	Moreover, if $\tau$ is the function that always picks the least vector $\tuple{\vec{t}, u}$ (w.r.t.\ some Presburger-definable well-order on $\ZZ^{p+1}$) with the above property, then $\tau$ is clearly 	Presburger-definable. 
	By~\cref{compute-presburger}, the function $\tau$ is computable in logspace. 

	This allows us to reduce subset sum to $\vasreach(S)$.
	Given a subset sum instance $\tuple{x_1, \ldots, x_n, y}$, we compute $\tuple{\vec{t},u}=\tau(x_1+\cdots+x_n)$ in logspace using~\cref{compute-presburger}.  
	Consider the $1$-VASS $\Vv$ that is presented in~\Cref{fig:vas-np-hard}, this time with \hbox{$v:=u$} set to the value given by $\pi$.
	Let \hbox{$R = \set{s : \Run{p(0)}{*}{\Vv}{r(s)}}$} be the set of reachable values.
	Observe that $s \in R$ then both $s \geq u$ and $s < u+x_1+\cdots+x_n$.
	Therefore, $R \cap S[\vec{t}] \sset [u, u+x_1+\cdots+x_n - 1] \cap S[\vec{t}] = \set{u}$.
	Finally, \cref{clm:subset-sum-vas} tells us that $u\in R$ if and only if $\tuple{x_1, \ldots, x_n, y}$ is a positive instance of subset sum.
\end{proof}

Let us now see that if $S$ is not uniformly quasi-upward closed, then it
contains unbounded gaps within some residue class. Note that since
$\vasreach([S]_{B,\vec{b}})$ reduces to $\vasreach(S)$, and \cref{unbounded-gaps-np-hard}
implies $\NP$-hardness of $\vasreach{S}$, meaning the next lemma
completes the \hbox{$\NP$-hardness} in \cref{main-result-vass}.

\begin{restatable}{lemma}{uniformlyQuasiUpwardClosedResidueClasses}\label{uniformly-quasi-upward-closed-residue-classes}
	Given a semilinear $S\sset\ZZ^p\times\NN$, we can compute a constant
	$B\ge 1$ such that $S$ is \emph{not} uniformly quasi-upward closed if
	and only if for some $\vec{b}\in[0,B-1]^{p+1}$, the set
	$[S]_{B,\vec{b}}$ has unbounded gaps.
\end{restatable}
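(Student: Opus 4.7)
I will take $B\ge 1$ to be the constant from \cref{lem:modulo-free}, so that each residue class $[S]_{B,\vec{b}}$ is modulo-free; this $B$ is computable from a formula for $S$. Both directions of the equivalence will be established relative to this $B$.

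\textbf{Backward direction.} Suppose some $[S]_{B,\vec{b}}$ has unbounded gaps; write $\vec{b}=\tuple{\vec{c},c}$. I will show $S$ is not $\tuple{\delta,M}$-uniformly upward closed for any $\delta,M$. Fix $\delta,M$, set $g = (M+1)\delta$, and use unbounded gaps to pick $\vec{s}\in\ZZ^p$ and $u\in\NN$ with $[u,u+g]\cap[S]_{B,\vec{b}}[\vec{s}]=\{u\}$. For $\vec{t}=B\vec{s}+\vec{c}$ this means $Bu+c\in S[\vec{t}]$ but no other point of $c+B\ZZ$ inside $[Bu+c, Bu+c+Bg]$ lies in $S[\vec{t}]$. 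If $S[\vec{t}]\cup F$ were $\delta$-upward closed with $|F|\le M$, then it would be $B\delta$-upward closed, so the $M+1$ points $Bu+c+iB\delta$ for $i=1,\dots,M+1$ would all lie in $S[\vec{t}]\cup F$. Since $iB\delta\le Bg$ for these $i$, each such point lies in the forbidden interval and hence is not in $S[\vec{t}]$, forcing all $M+1$ into $F$---a contradiction.

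\textbf{Forward direction (contrapositive).} Suppose every $[S]_{B,\vec{b}}$ has gaps bounded by some constant $G_{\vec{b}}$; let $G=\max_{\vec{b}} G_{\vec{b}}$. Since $[S]_{B,\vec{b}}$ is modulo-free, each non-empty $[S]_{B,\vec{b}}[\vec{s}]$ is a finite union of intervals that must be unbounded above (else there is an infinite gap above its maximum); so its canonical decomposition ends with an interval $[a,\infty)$ and every inter-interval gap is of size at most $G$. Applying \cref{pro:interval-uniform-decomposition} to $[S]_{B,\vec{b}}$ yields a uniform bound $m_{\vec{b}}$ on the number of intervals in $[S]_{B,\vec{b}}[\vec{s}]$ (the type-length of each interval-uniform piece is constant, and the $\oplus$-decomposition selects exactly one piece at each $\vec{s}$). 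Consequently, each non-empty $[S]_{B,\vec{b}}[\vec{s}]$ has the form $[a_{\vec{b}}(\vec{s}),\infty)\setminus E_{\vec{b}}(\vec{s})$ with $|E_{\vec{b}}(\vec{s})|\le m_{\vec{b}} G$. Unwrapping residue classes, for $\vec{t}=B\vec{s}+\vec{c}$ and $\vec{b}=\tuple{\vec{c},c}$, the set $S[\vec{t}]\cap(c+B\ZZ)$ equals $(c+B\cdot[a_{\vec{b}}(\vec{s}),\infty))\setminus (c+B\cdot E_{\vec{b}}(\vec{s}))$. Setting $F$ to be the union over $c\in[0,B-1]$ of these exceptional sets yields $|F|\le B\cdot(\max_{\vec{b}} m_{\vec{b}})\cdot G$ and makes $S[\vec{t}]\cup F$ $B$-upward closed; hence $S$ is uniformly $\tuple{B, B\cdot(\max_{\vec{b}} m_{\vec{b}})\cdot G}$-upward closed.

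\textbf{Main obstacle.} The delicate step is establishing the uniform bound on the number of intervals (and thus on $|E_{\vec{b}}(\vec{s})|$) in the forward direction, since a priori the number of intervals in $[S]_{B,\vec{b}}[\vec{s}]$ could depend on $\vec{s}$. This is exactly where \cref{pro:interval-uniform-decomposition} is essential: it lets us decompose into finitely many interval-uniform pieces, each with a fixed number of intervals, yielding a uniform bound independent of $\vec{s}$.
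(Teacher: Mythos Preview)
Your proof is correct. It differs from the paper's argument mainly in structure: the paper factors the equivalence into two separate claims---first that $S$ is uniformly quasi-upward closed iff every residue class $[S]_{B,\vec{b}}$ is, and second that for a modulo-free set, uniform quasi-upward closedness is equivalent to having bounded gaps---and then composes them. You instead prove the composite statement directly in each direction, which is arguably cleaner; your backward direction in particular is more explicit than the paper's (which just asserts that ``injecting any set of $M$-many points will still leave a gap of $\delta+1$'').

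One remark on the forward direction: the ``main obstacle'' you flag is not really an obstacle, and you are reaching for heavier machinery than needed. \Cref{lem:intervals} already yields, for any modulo-free $T\subseteq\ZZ^{p+1}$, a single constant $m$ (depending only on the formula) together with functions $\alpha_1,\beta_1,\ldots,\alpha_m,\beta_m$ such that every $T[\vec t]$ is the union of the (possibly empty) intervals $(\alpha_i(\vec t),\beta_i(\vec t))$. So the number of intervals in $[S]_{B,\vec{b}}[\vec{s}]$ is bounded by this $m$ uniformly in $\vec{s}$, with no need to pass through the interval-uniform decomposition of \cref{pro:interval-uniform-decomposition}. This is exactly what the paper does: it works directly with the $\alpha_i,\beta_i$ and argues that if $\beta_m(\vec t)=\infty$ always and the maximal gap $\max_i(\alpha_{i+1}(\vec t)-\beta_i(\vec t))$ is bounded, then the set is $\langle 1, mM\rangle$-upward closed. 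Your route via \cref{pro:interval-uniform-decomposition} works too, but it is a detour.
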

Here, $B$ is the constant from \cref{lem:modulo-free}: One just needs to
observe that $S$ is uniformly quasi-upward closed if and only if each
$[S]_{B,\vec{b}}$ is uniformly quasi-upward closed. Moreover, a modulo-free set
is uniformly quasi-upward closed if and only if it has unbounded gaps. 
\begin{proof}[Proof of \cref{uniformly-quasi-upward-closed-residue-classes}]
	Let $B\ge 1$ be the constant from \cref{lem:modulo-free}. Then for every $\vec{s} \in [0, B-1]^{p+1}$, $[S]_{B, \vec{b}}$ is defined by a modulo-free Presburger formula.
	
	We first prove:
	\begin{claim}
		$S$ is uniformly quasi-upward closed if and only if for every $\vec{b}\in[0,B-1]^p$, the set $[S]_{B,\vec{b}}$ is uniformly quasi-upward closed.
	\end{claim}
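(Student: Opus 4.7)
The plan is to prove the claim by verifying both directions directly, exploiting the fact that $\delta$-upward closedness is preserved under passing to multiples, i.e.\ any $\delta$-upward closed set is also $k\delta$-upward closed for every $k\ge 1$. The key dictionary is that a jump of size $\delta$ in the quotient set $[S]_{B,\vec{b}}$ corresponds to a jump of size $B\delta$ in $S$, and conversely a jump of size $B\delta$ in $S$ that lands within the residue class $\vec{b}$ pulls back to a jump of size $\delta$ in $[S]_{B,\vec{b}}$.

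For the forward direction, suppose $S$ is uniformly $\tuple{\delta,M}$-upward closed. Fix $\vec{b}=\tuple{\vec{s},r}\in[0,B-1]^{p+1}$ and $\vec{u}\in\ZZ^p$, and set $\vec{t}=B\vec{u}+\vec{s}$. Pick $F\subseteq\NN$ with $|F|\le M$ so that $S[\vec{t}]\cup F$ is $\delta$-upward closed, and define $F'\coloneqq\{y\in\NN\mid By+r\in F\}$, which satisfies $|F'|\le M$. Given $y\in[S]_{B,\vec{b}}[\vec{u}]\cup F'$, we have $By+r\in S[\vec{t}]\cup F$; since this set is in particular $B\delta$-upward closed, $B(y+\delta)+r=By+r+B\delta$ also lies in $S[\vec{t}]\cup F$, which means $y+\delta\in[S]_{B,\vec{b}}[\vec{u}]\cup F'$. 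Hence $[S]_{B,\vec{b}}$ is uniformly $\tuple{\delta,M}$-upward closed with the same parameters.

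For the backward direction, since $[0,B-1]^{p+1}$ is finite, choose parameters $\delta^*$ (taken to be a common multiple of the individual $\delta_{\vec{b}}$) and $M^*$ (the maximum of the individual $M_{\vec{b}}$) that work uniformly for every $[S]_{B,\vec{b}}$. Given any $\vec{t}\in\ZZ^p$, write $\vec{t}=B\vec{u}+\vec{s}$ with $\vec{s}\in[0,B-1]^p$. For each $r\in[0,B-1]$, let $F_r\subseteq\NN$ with $|F_r|\le M^*$ witness that $[S]_{B,\tuple{\vec{s},r}}[\vec{u}]\cup F_r$ is $\delta^*$-upward closed, and assemble $F\coloneqq\bigcup_{r\in[0,B-1]}\{By+r\mid y\in F_r\}$, which has size at most $BM^*$. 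For any $x\in S[\vec{t}]\cup F$, decompose $x=By+r$ uniquely with $r\in[0,B-1]$ and $y\in\NN$; in both cases ($x\in S[\vec{t}]$ or $x\in F$), the element $y$ lies in $[S]_{B,\tuple{\vec{s},r}}[\vec{u}]\cup F_r$, and hence so does $y+\delta^*$, giving $x+B\delta^*=B(y+\delta^*)+r\in S[\vec{t}]\cup F$. Thus $S[\vec{t}]$ is $\tuple{B\delta^*,BM^*}$-upward closed with constants independent of $\vec{t}$.

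No step is a genuine obstacle: the argument is pure bookkeeping once one writes down the two translations between $\delta$-jumps in $[S]_{B,\vec{b}}$ and $B\delta$-jumps in $S$. The only mild subtlety is making sure, in the backward direction, to take a common $\delta^*$ that is a multiple of all $\delta_{\vec{b}}$ (rather than a maximum), so that the individual witness sets $F_r$ remain valid after replacing $\delta_{\vec{b}}$ by $\delta^*$.
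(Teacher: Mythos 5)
Your proof is correct and follows essentially the same approach as the paper: decompose $S$ into residue classes, use the dictionary ``$\delta$-jump in $[S]_{B,\vec{b}}$ $\leftrightarrow$ $B\delta$-jump in $S$'', and in the backward direction take a common $\delta^*$ that is a \emph{multiple} of all the $\delta_{\vec{b}}$ (as you rightly flag) so that the witness sets stay valid. One small point in your favour: the paper states that in the backward direction one can take $\delta$ to be the product of all $\delta_{\vec{b}}$, but that $\delta$ need not be a multiple of $B$, so a step of size $\delta$ can leave the residue class and the argument through residue classes breaks (e.g.\ $B=2$, $[S]_{2,0}=\NN$, $[S]_{2,1}=\emptyset$, so $S=2\NN$ is not $1$-upward closed even though both residue classes are). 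Your version with $B\delta^*$ and $BM^*$ handles this correctly; the paper's stated constants would need the extra factor of $B$ in $\delta$.
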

	First, suppose $S$ is uniformly quasi-upward closed, say it is $\langle \delta,M\rangle$-upward closed. Then clearly for every $\vec{b}\in[0,B-1]^p$, the set $[S]_{B,\vec{b}}$ is also $\langle\delta,M\rangle$-upward closed.

	Conversely, suppose all residue classes are uniformly quasi-upward closed, then $S$ is uniformly quasi-upward closed. 
	Indeed, if $[S]_{B,\vec{b}}$ is $\tuple{\delta_{\vec{b}},M_{\vec{b}}}$-upward closed for some $\delta_{\vec{b}}\ge 1$ and $M_{\vec{b}}\ge 0$, then it is easy to observe that $S$ is $\tuple{\delta,M}$-upward closed, where $\delta$ is the product of all $\delta_{\vec{b}}$, and $M$ is the sum of all $M_{\vec{b}}$. 

	This proves our claim.

	It now remains to prove:
	\begin{claim}
		Suppose $T$ is modulo-free. Then $T$ is not uniformly
		quasi-upward closed if and only if $T$ has unbounded gaps.
	\end{claim}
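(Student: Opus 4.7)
The plan is to prove the two directions separately, using the modulo-free hypothesis only in the harder direction. For the easy ``only if'' direction (uniformly quasi-upward closed implies no unbounded gaps), suppose $T$ is $\tuple{\delta,M}$-upward closed, so for each $\vec{t}$ there is $F_{\vec{t}}$ with $|F_{\vec{t}}|\le M$ such that $T[\vec{t}]\cup F_{\vec{t}}$ is $\delta$-upward closed. For any $u\in T[\vec{t}]$, each of $u+\delta, u+2\delta, \ldots, u+(M{+}1)\delta$ lies in $T[\vec{t}]\cup F_{\vec{t}}$, and by pigeonhole at least one of them must lie in $T[\vec{t}]$. Hence every element of $T[\vec{t}]$ has another element of $T[\vec{t}]$ within distance $(M{+}1)\delta$ above it, uniformly in $\vec{t}$, contradicting unbounded gaps with $g\ge(M{+}1)\delta$.

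For the ``if'' direction, the key is to exploit modulo-freeness to get a uniform bound on the number of intervals in each slice. By \cref{lem:intervals}, there is a fixed $m$ and Presburger-definable $\alpha_1,\beta_1,\ldots,\alpha_m,\beta_m$ with $T[\vec{t}]=(\alpha_1(\vec{t}),\beta_1(\vec{t}))\cup\cdots\cup(\alpha_m(\vec{t}),\beta_m(\vec{t}))$ for every $\vec{t}$; in particular the canonical decomposition of each $T[\vec{t}]$ has at most $m$ intervals. Assume $T$ has no unbounded gaps, so there is a fixed $g\ge 1$ with $[u,u+g]\cap T[\vec{t}]\neq\{u\}$ for every $\vec{t}$ and every $u\in T[\vec{t}]$. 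I would first observe that whenever $T[\vec{t}]\neq\emptyset$, $T[\vec{t}]$ must be unbounded above: otherwise the maximum element of $T[\vec{t}]$ would have no companion within distance $g$, contradicting the hypothesis. Consequently the top interval of the canonical decomposition of $T[\vec{t}]$ has the form $[s,\infty)$.

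The remaining step is to bound the gaps between consecutive intervals. Write the canonical decomposition as $I_1<\cdots<I_k$ with $k\le m$. For $i<k$, if $t$ is the top endpoint of $I_i$, the gap hypothesis forces some element of $T[\vec{t}]$ in $[t{+}1,t{+}g]$, and this element can only be the bottom endpoint of $I_{i+1}$; hence each such gap contributes at most $g-1$ missing points. Setting $F_{\vec{t}}\coloneqq[\min T[\vec{t}],\infty)\setminus T[\vec{t}]$, we obtain $|F_{\vec{t}}|\le (m-1)(g-1)$, and $T[\vec{t}]\cup F_{\vec{t}}=[\min T[\vec{t}],\infty)$, which is $1$-upward closed. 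Thus $T$ is uniformly $\tuple{1,(m-1)(g-1)}$-upward closed, completing the proof. The only nontrivial ingredient is the uniform bound $m$ from modulo-freeness; everything else reduces to pigeonhole and the structure of the canonical decomposition.
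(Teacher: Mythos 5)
Your proof is correct and takes essentially the same route as the paper: you invoke \cref{lem:intervals} to obtain the uniform bound $m$ on the number of intervals, then bound the inter-interval gaps by the no-unbounded-gaps hypothesis to exhibit $\tuple{1,(m-1)(g-1)}$-upward closedness; the paper phrases this as a direct implication (with a case split on whether some $\beta_m(\vec{t})$ is finite), while you prove the two contrapositives, but the content is the same. One small point in your favor: in the ``only if'' direction, your pigeonhole among $u+\delta,\ldots,u+(M{+}1)\delta$ gives the correct threshold $g=(M{+}1)\delta$, whereas the paper's sketch uses a gap of size $\delta+M+1$, which does not suffice to force an element of $F_{\vec{t}}$ at every multiple of $\delta$ when $\delta>1$.
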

	First, suppose $T$ is not uniformly quasi-upward closed. Since $T$ is defined by a modulo-free Presburger formula, let $\alpha_1, \beta_1, \ldots, \alpha_m, \beta_m: \ZZ^p \to \NN \cup \set{\infty}$ be the Presburger-definable functions as per~\cref{lem:intervals}.
	Recall that these functions have the following two properties.
	Firstly, for all $\vec{t} \in \ZZ^p$, 
	\begin{equation*}
		\alpha_1(\vec{t}) \leq \beta_1(\vec{t}) \leq \alpha_2(\vec{t}) \leq \ldots \leq \beta_{m-1}(\vec{t}) \leq \alpha_m(\vec{t}) \leq \beta_m(\vec{t}).
	\end{equation*}
	Secondly,
	\begin{equation*}
		T[\vec{t}] = (\alpha_1(\vec{t}), \beta_1(\vec{t})) \cup \cdots \cup (\alpha_m(\vec{t}), \beta_m(\vec{t})).
	\end{equation*}

	\begin{enumerate}[(1)]
		\item Suppose there is a $\vec{t}\in\ZZ^p$ such that $\beta_m(\vec{t}) \in \NN$. 
		In that case, $T=[S]_{B,\vec{b}}$ clearly has unbounded gaps. 
		Indeed, there is even a single infinite gap; that is $(\beta_m(\vec{t}), \infty)$.

		\item Suppose $\beta_m(\vec{t}) = \infty$ for every $\vec{t} \in \ZZ^p$. 
		Now consider, for any given $\vec{t} \in \ZZ^p$, the ``largest gap size'':
		\begin{equation*}
			M_{\vec{t}} = \max_{i\in[1,m-1]} \set{\alpha_{i+1}(\vec{t})-\beta_i(\vec{t})}.
		\end{equation*}
		Now if the function $\vec{t} \mapsto M_{\vec{t}}$ were bounded by some $M \geq 0$, then $T[\vec{t}]$ would be $\tuple{1, mM}$-upward closed for every $\vec{t} \in \ZZ^p$; this cannot be the case as we assumed that $S$ is not uniformly quasi-upward closed. 
		Therefore, for every $M \geq 0$, there exist $\vec{t} \in \ZZ^p$ and $i \in [1, m-1]$ such that $\alpha_{i+1}(\vec{t}) - \beta_i(\vec{t}) \geq M$.
		This means that $T = [S]_{B,\vec{b}}$ has unbounded gaps. 
	\end{enumerate}

	Conversely, if $T$ has unbounded gaps, then it cannot be $\tuple{\delta,M}$-upward closed for any $\delta\ge 1$, $M\ge 0$. 
	This is true because $T[\vec{t}]$ has a gap of size $g\coloneqq\delta+M+1$, injecting
	any set of $M$-many points into $T$ will still leave a gap of
	$\delta+1$, so that the result cannot be $\delta$-upward closed. This
	completes the proof of the second claim.

	Together the two claims establish the \lcnamecref{uniformly-quasi-upward-closed-residue-classes}.
\end{proof}

Finally, \cref{uniformly-quasi-upward-closed-residue-classes} implies
\cref{main-result-vass-decidable}, since it reduces checking uniformly
quasi-upward closedness to deciding the unbounded gaps property. The latter, in
turn, is directly expressible in Presburger arithmetic and thus decidable.

\label{afterbibliography}
\newoutputstream{pagestotal}
\openoutputfile{main.pagestotal.ctr}{pagestotal}
\addtostream{pagestotal}{\getpagerefnumber{afterbibliography}}
\closeoutputstream{pagestotal}

\newoutputstream{todos}
\openoutputfile{main.todos.ctr}{todos}
\addtostream{todos}{\arabic{@todonotes@numberoftodonotes}}
\closeoutputstream{todos}

\end{document}